\numberwithin{equation}{section}
\def\titlerunning#1{\gdef\titrun{#1}}
\def\author#1{\gdef\autrun{\def\and{\unskip, }#1}\gdef\@author{#1}}
\def\address#1{{\def\and{\\\hspace*{18pt}}\renewcommand{\thefootnote}{}%
\footnote {#1}}%
\markboth{\autrun}{\titrun}}
\def\email#1{e-mail: #1}
\def\subjclass#1{{\renewcommand{\thefootnote}{}%
\footnote{\emph{Mathematics Subject Classification (2010):} #1}}}
\def\keywords#1{\par\medskip
\noindent\textbf{Keywords.} #1}
\newtheorem{theorem}{Theorem}[section]
\newtheorem{corollary}[theorem]{Corollary}
\newtheorem{lemma}[theorem]{Lemma}
\newtheorem{proposition}[theorem]{Proposition}
\theoremstyle{definition}
\newtheorem{definition}[theorem]{Definition}
\numberwithin{equation}{section}
\let\Im\undefined
\let\Re\undefined
\DeclareMathOperator{\Im}{Im \,}
\DeclareMathOperator{\Re}{Re \,}
\DeclareMathOperator{\tr}{tr}
\DeclareMathOperator{\E}{\mathbb{E}}
\DeclareMathOperator{\dist}{dist}
\DeclareMathOperator{\supp}{supp}
\newcommand{\N} {{\mathbb N}}
\newcommand{\C}{\mathbb{C}}
\newcommand{\G}{\mathcal{G}}
\newcommand{\R}{\mathbb{R}}
\def\Ev#1{{\mathbb E}\left(#1 \right)}
\def\Pr#1{{\mathbb P}\left(#1 \right)}
\def\Prs#1{{\mathbb P}_s\left(#1 \right)}
\def\Avus#1{ Av_u^{(s)}\left( #1 \right)  }
\def\vP{\varPsi} 
\def\T{{\mathcal T}} 
\DeclareMathOperator{\indfct}{\rm 1}
\def\be{\begin{equation}}
\def\ee{\end{equation}} 
\def\Pr#1{ {\mathbb P}{\left(   #1 \right)}}
\begin{document}




\titlerunning{Resonant delocalization}

\title{\Huge Resonant Delocalization \\ for 
Random Schr\"odinger Operators  \\ on Tree Graphs \\[2ex] 
 } 

\author{Michael Aizenman
\and 
Simone Warzel}

\date{\small December 14, 2011 }

\maketitle

\address{M. Aizenman: Depts.\ of Physics and Mathematics,  Princeton University,  Princeton NJ 08544, USA  
\and S. Warzel: Zentrum Mathematik, TU M\"unchen, 
 Boltzmannstr. 3, 85747 Garching, Germany; \email{warzel@ma.tum.de} (corresponding author)
 }

\subjclass{Primary 82B44; Secondary 47B80.}


\begin{abstract} 
We analyse the spectral phase diagram of Schr\"odinger operators $ T +\lambda V$ on regular tree graphs, with $T$ the graph adjacency operator and $V$ a  random potential given by \emph{iid} random variables.   
The main result is a criterion for the 
emergence of absolutely continuous (\emph{ac}) spectrum due to  fluctuation-enabled resonances between distant sites.   Using it we prove that for  unbounded random potentials \emph{ac} 
spectrum appears at arbitrarily weak disorder $(\lambda \ll 1)$ in an energy regime which extends beyond the spectrum of~$T$.   Incorporating considerations of the Green function's  large deviations we obtain an extension of the  criterion  which indicates that, under  a yet unproven regularity condition of the large deviations' 'free energy function',  the regime of pure  \emph{ac} spectrum is complementary to that of previously proven localization.  For bounded potentials we disprove the existence at weak disorder  of a mobility edge beyond which the spectrum is localized.

\keywords{Anderson localization, absolutely continuous spectrum, mobility edge, Cayley tree}

\vspace*{-8.5cm}
\begin{center}
\emph{Dedicated to Hajo Leschke on the occasion of his 66th birthday. }
\end{center}
\end{abstract}

\newpage

\setcounter{tocdepth}{2} 
{\small
\tableofcontents 
}
\vfill


\setcounter{footnote}{0}
\section{Introduction}   

\subsection{The article's topic}

The subject of this work are the spectral properties of random  self-adjoint operators in the Hilbert space~$ \ell^2(\T) $ associated with the vertex set $\T$ of a regular rooted tree graph of a fixed branching number $K > 1 $. The operators take the form
\be  \label{eq:O}
H_\lambda(\omega)  \ = \ T + \lambda \, V(\omega)    \, , 
\ee 
with $T$ the adjacency matrix and $V(\omega)$ a  random potential, i.e., a multiplication operator which is specified by a collection of random variables indexed by~$ \T $.  For simplicity we focus on the case of  independent identically distributed (\emph{iid}) random variables of absolutely continuous distribution, $\varrho(v) \, dv$.  The strength of the disorder is expressed through the parameter~$ \lambda \geq 0 $.  Some of the results presented below will be formulated for unbounded random potentials, in which case the support of the distribution of $V(x)$ is assumed to be the full line.   For other results we assume that the range of values of  $V(x)$ is the interval~$ [-1,1]$.  

It is well known that random Schr\"odinger operators, of which the above tree version is a relatively more approachable example,  exhibit regimes of spectral and dynamical localization where the operator's spectrum consists of a dense collection of eigenvalues with localized eigenfunctions (cf.~\cite{CL,PF,stoll1,Ki}).   However, it still remains an outstanding mathematical challenge to elucidate the conditions for the occurrence of continuous spectrum, and in particular {\it absolutely continuous}  (henceforth called `$ac$') spectrum, in the presence of homogeneous disorder.   The significance of the $ac$ spectrum from the scattering perspective, or a schematic conduction experiment, is  illustrated in  Figure~\ref{fig:wire}.   In the operator's $(E,\lambda)$ {\it phase diagram}, the boundary separating the regime of  localization from the regime of {\it continuous spectrum}, assuming such is found,  is referred to as the {\it mobility edge}~\cite{A}.

The results presented here focus on a  new resonance-driven mechanism by which $ac$ spectrum occurs for  operators such as  $H_\lambda(\omega)$ in the setup described above.   Following is a summary of the main points.
\begin{enumerate} 
\item[1.]  A new sufficiency criterion is derived for $ac$ spectrum on tree graphs in terms of a related Lyapunov exponent.   
\end{enumerate} 
The guiding observation for $1.$ is that  localized modes join into 
extended states when their energy differences  are smaller that the corresponding tunneling amplitudes.  The latter decay exponentially in the distance at the rate whose typical values is given by the Lyapunov exponent.   Hence the probability of a mixing resonance between  localized modes at specified location  is exponentially small.  However, when the volume of the relevant configuration space increases exponentially resonances will be found, and delocalization  prevails.   
This criterion is particularly applicable at weak and moderate disorder.  It is applied here for two results, which apply separately for bounded and for unbounded random potentials: 
\begin{enumerate} 
\item[2.]  For unbounded potentials we  show that $ac$ spectrum appears 'discontinuously' at arbitrarily weak disorder in regimes with very low density of states (of {\it Lifshits tail} asymptotic falloff).  This answers a  puzzle which has been open since the earlier works on the subject~\cite{AAT,AT} concerning the location of the mobility edge and the nature of the continuous spectrum below it.  
\item[3.]   For bounded random potentials it is shown that  at weak disorder there is no mobility edge beyond which the  states are localized.  This has the surprising implication that for this case the standard picture of the phase diagram needs to be corrected.  
\end{enumerate} 
  In essence, $2.$ and $3.$ show that while in one dimension arbitrary weak  level of disorder yields localization, on trees the $ac$ spectrum is quite robust.  
  \begin{enumerate} 
\item[4.]  Extending the analysis which yields the criterion $1.$ through considerations of the Green function's large-deviations, we obtain an improved sufficiency criterion for $ac$ spectrum which appears to be complimentary to the previously derived criterion for localization.  To reduce technicalities, the derivation of the extended criterion is limited to unbounded potentials with support in $\R$.  
\end{enumerate}   
The last point is an  indication that the  mechanism which is discussed here is in essence the relevant one, in the tree setup.  
   
    \begin{figure}
\begin{center}
\includegraphics[width=0.4\textwidth]{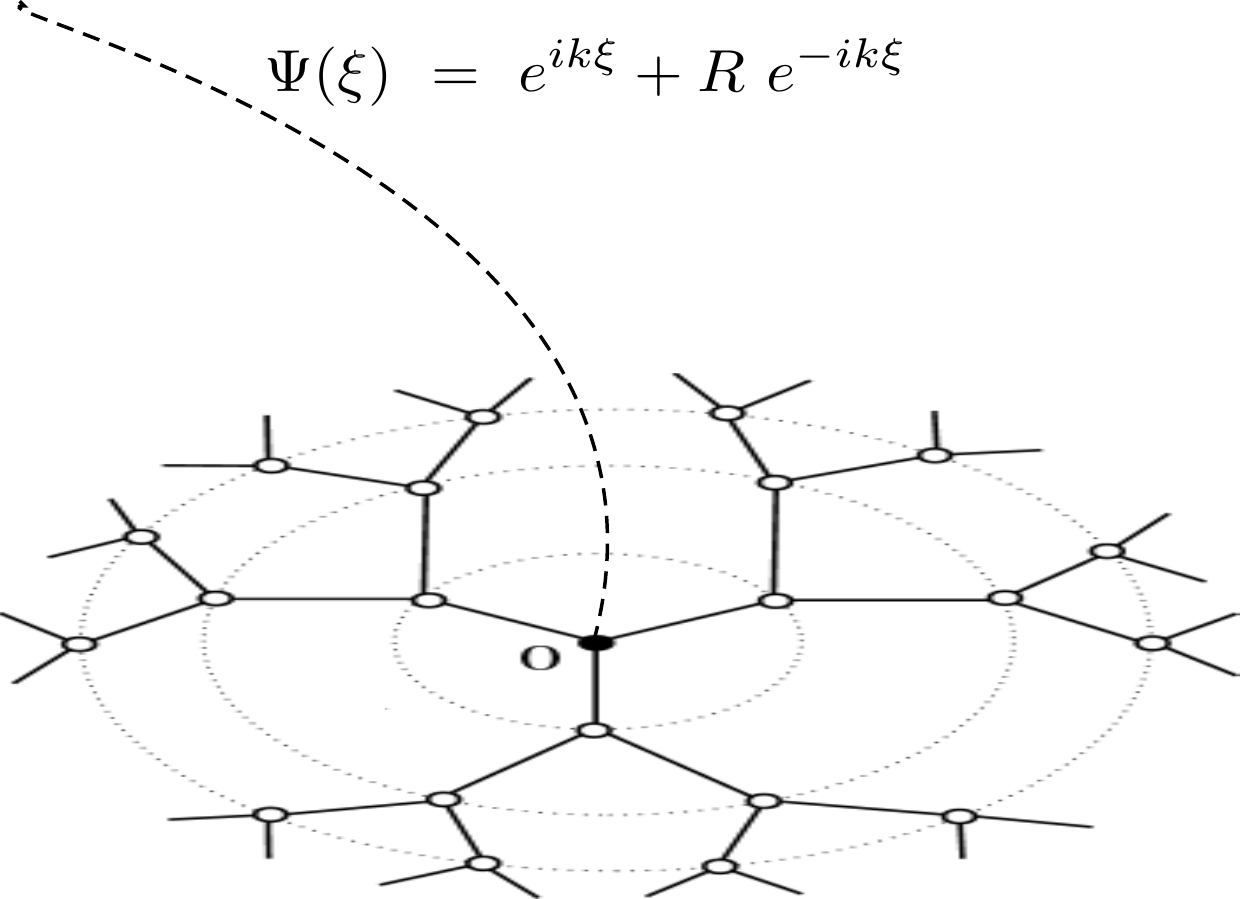}
\caption{A model setup for quantum conduction through the graph (after \cite{MD}): particles are sent  at energy $ E=k^2 +U_{\rm wire} $ down  a wire which is attached to the graph at $ x=0 $.   
In the stationary state  the particles' wave function  is described along the wire by the combination of plane waves 
$ e^{ik_E\xi} + R_E \, e^{-ik_E\xi} $, and along the graph it is given by a decaying solution of the Schr\"odinger equation. 
The natural matching conditions relate the reflection coefficient $ R_E $ to the Green function, and it is found that
$|R_E| < 1 $ exactly if $\Im \langle \delta_0, (H_\lambda-E-i 0)^{-1} \delta_0 \rangle \neq 0$, which is also the condition for $E$ to be in the support of the \emph{ac} spectrum of 
$H_\lambda$.     
    }
\label{fig:wire} 
\end{center}
\end{figure}

A physics-oriented summary of the results $2.$ and $3.$ was given in~\cite{AiWa_2010a} and, correspondingly, \cite{AiWa_math2010}.  Our purpose here is to provide the detailed derivation of the above statements.   In the proof we do not present the direct construction of extended states, but  instead focus on  properties of  the Green function which in essence convey the same information.


\subsection{Past results and the questions settled here}

\subsubsection{The deterministic spectrum}

By a simple calculation, cf.~\eqref{eq:quadrG},\footnote{Even though the graph $\T$ is of constant degree $(K+1)$, except at the root, the spectrum of $T$  does not extend to  $[-(K+1),(K+1)] $.   This  is related to the graph's exponential growth, more precisely to the positivity of its Cheeger constant.   Nevertheless, this  larger set does describe the operator's $ \ell^\infty $-spectrum.} 
\be \label{sigmaT}
\sigma(T) \ = \ [-2\sqrt{K},2\sqrt{K}] \, .   
\ee  
For ergodic random potentials, a class which includes the \emph{iid} case, the spectrum of $H_\lambda(\omega) = T + \lambda V(\omega)$ is almost surely given by a non-random set,  which under the present assumptions is~\cite{CL,PF,Ki}:
\begin{eqnarray}    \label{eq:spec}
\sigma(H_\lambda) &=& \sigma(T) + \lambda \, {\rm supp}~\rho \, .  
\end{eqnarray}
Thus, as the strength of the disorder is increased from $\lambda = 0$ upward:
\begin{enumerate} 
\item  In the unbounded case, of  potentials with $\supp \varrho = \R$, the spectrum of $ H_\lambda(\omega)$ changes discontinuously from an interval to the full line.
\item In the bounded case  the spectrum changes continuously, spreading at a  linear rate   which equals $1$ if $\supp~\varrho = [-1,1]$. 
\end{enumerate} 

 The determination of the nature of the spectral measures whose support spans  $\sigma(H_\lambda) $ requires however a more detailed consideration. 
 The spectral  analysis 
proceeds through the study of the  corresponding Green function 
\be \label{def:Green}
G_\lambda(x,y;\zeta,\omega) := \left\langle \delta_x , \left( H_\lambda(\omega) - \zeta \right)^{-1} \delta_y \right\rangle \, , 
\ee 
where  
$ \zeta \in \C^+ := \left\{ \zeta \in \C \, | \, \Im \zeta > 0 \right\} $ and $\delta_x\in \ell^2(\T)$ is the Kronecker function localized at~$x\in \T$.   
In particular,  the spectral measure $ \mu_{\lambda,\delta_x}(\cdot;\omega) $ associated with $H_\lambda(\omega)$ and $ \delta_x\in\ell^2(\T) $
is related to the Green function through the Stieltjes transform:
\begin{equation}\label{eq:specrep}
	G_\lambda(x,x;\zeta,\omega) \ = \ \int \frac{ \mu_{\lambda,\delta_x}(du;\omega) }{u -\zeta} \, . 
\end{equation}
Of particular interest is the limiting value 
$
 G_\lambda(x,x;E+i0,\omega) \ := \ \lim_{\eta \downarrow 0 } \,  G_\lambda(x,x;E+i\eta,\omega)  $, 
 which exists  for  almost every $E\in \R$ 
 (by the general theory of the Stieltjes transform~\cite{D,CL,PF}).  

The different spectra of $ H_\lambda(\omega) $ are associated with the Lebesgue decomposition of the measures   $ \mu_{\lambda,\delta_x}(\cdot;\omega) $ into their different components: pure point~($pp$), singular continuous~($sc$), and absolutely continuous~($ac$), not all of which need be present. 
Ergodicity, combined with the proof of equivalence of the local measures~\cite{JakLast,JL},  implies that the supports of the different components of $\mu_{\lambda,\delta_x}(du;\omega) $ are also almost surely non-random~\cite{CL,PF,Ki}, and coincide for all $x\in \T$.

The spectral characteristics are related to the {\it dynamical} properties of the unitary time evolution generated by $H_\lambda(\omega) $ (cf.\  the RAGE  theorem in~\cite{stoll1,Ki}) and to questions of conduction.

The absolutely continuous component of  $\mu_{\lambda,\delta_x}(\cdot;\omega) $ is given by 
\be 
 \mu^{(ac)}_{\lambda,\delta_x} (du;\omega) \ = \  \pi^{-1}\Im G_\lambda(x,x;u+i0,\omega) \,  du  \, , 
  \ee 
which is not zero provided the non-negative function satisfies
$ \Im G_\lambda(x,x;E+i0,\omega) \neq 0 $
 on  a positive measure set of  energies.    
As noted in \cite{MD,ASWb},  this condition is equivalent also to the statement  that current which is injected  coherently at energy $E$ down a  wire attached at a site $x $ will be conducted through the graph to infinity, see Figure~\ref{fig:wire}.

Another possible behavior is {\it localization}:
\begin{definition} \label{def:loc}
The  operator $H_\lambda(\omega)$ associated with a metric graph (not necessarily a tree) is said to exhibit: 
\begin{enumerate}[i.]
\item \emph{spectral localization} in an interval $ I \subset \R $ if the  spectral measures $ \mu_{\lambda,\delta_x}(\cdot;\omega) $ associated to $ \delta_{x} \in \ell^2(\mathcal{T}) $ are  almost surely all of only pure-point type in $ I $.
\item  \emph{exponential dynamical localization} in $ I$ if for all $ x \in \T $ and $ R > 0 $ sufficiently large:
	 	 \begin{equation}\label{def:exdynloc1}
	 	\sum_{\substack{ y\in \T:\\  \dist(x,y) = R}}  \E \left(\sup_{t\in \R} |\langle \delta_x\, , \, P_I(H_\lambda) \, e^{-itH_\lambda} \, \delta_y \rangle |^2\right) \ \leq \ C_{\lambda}\,  e^{-\mu_{\lambda}(I) \, R}  \, ,
	 \end{equation}
 at some $ \mu_{\lambda}(I) > 0 $, and  $ C_\lambda < \infty $, with $ \E[\cdot] $ denoting the average with respect to the underlying probability measure.
 \end{enumerate}
\end{definition}
	 
 For a particle which  is initially placed at $ x \in \T $ the left side of \eqref{def:exdynloc1} provides an upper bound on the probability to be found a time $t$ later at  distance $R$ from $x$, under the quantum mechanical time-evolution generated by $ H_\lambda $ restricted to states with energies in~$ I $.     Dynamical localization is the stronger of the two statements. By known arguments (i.e., the Wiener and RAGE theorem, cf.~\cite{Ki,stoll1})  it  implies also the spectral localization.

\subsubsection{Unbounded random potentials}

The  spectral `phase diagram'  
of the operators considered here was studied already in the early works of Abou-Chacra, Anderson and Thouless~\cite{AAT,AT}.  Arguments and numerical work presented in~\cite{AT}  led the authors to surmise that for (centered) unbounded random potentials, the mobility edge, which separates the localization regime from that of continuous spectrum, 
 exists at a location  which roughly corresponds to the outer curve in Figure~\ref{fig1}.   Curiously,  for $\lambda \downarrow 0$ that line approaches energies $|E|=K+1$ which is not the edge of the spectrum
 of the limiting operator $T$.

\begin{figure} 
\begin{center}
\includegraphics[width=0.7\textwidth]{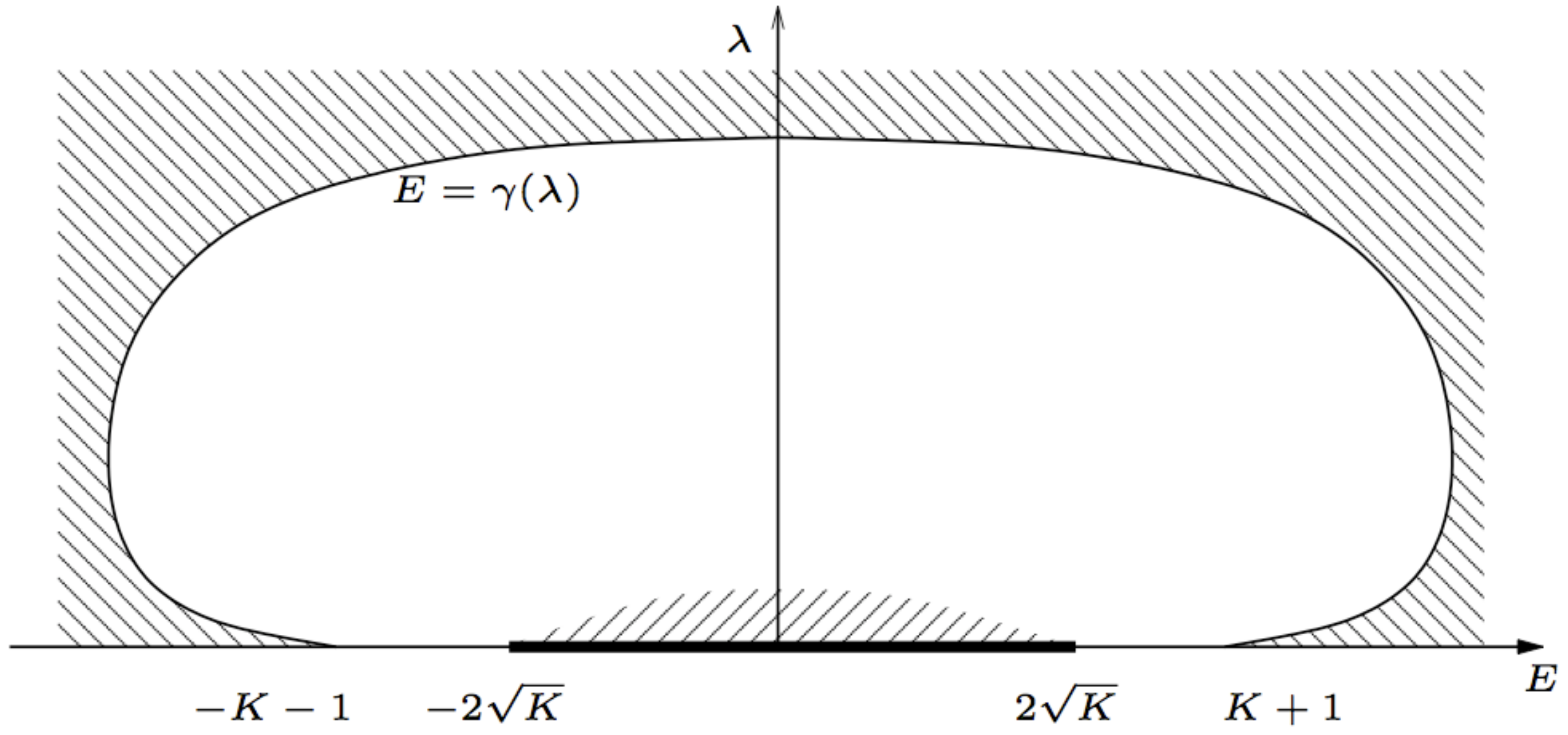}  
\caption{A sketch  of the previously known parts of the phase diagram for unbounded potentials.  The outer region is of proven localization, the smaller hatched region is of proven  delocalization.   The new result extends the latter up to the outer curve, assuming  $\varphi_\lambda(1;E) =- \log K  $ holds only along a line.   The intersection of the curve with the energy axis is stated exactly, while in other details the depiction is only schematic.  
}\label{fig1} 
\end{center}
\end{figure}

Rigorous results for the above class of operators have established  the existence of a  localization regime  and of regions of ac spectrum, leaving however a gap in which neither analysis applied.   More specifically,  the following was proven for the class of operators described above (under assumptions which are somewhat more general than the conditions A-D below):  

\begin{description}
 \item[Localization regime~\cite{AM,A_wd}:]  For any unbounded random potential  with $\supp \rho = \R$, whose probability distribution satisfies also a mild regularity condition, there is a regime of energies of the form: $|E|> \gamma(\lambda)$, with   
\be \lim_{\lambda \downarrow 0} \gamma(\lambda) = K+1 \, , 
\ee  
where with probability one,  $H_\lambda(\omega)$ has only pure point spectrum, and where it also exhibits dynamical localization. 

\item[Extended states / continuous spectrum~\cite{K,K2,ASW,FHS}:] 
For energies $|E|<2\sqrt{K}$ and at weak enough disorder, i.e. $|\lambda|< \widehat \lambda(E)$ (with $\widehat \lambda(E) \downarrow 0$ for $|E| \to  2\sqrt{K}$),  the operator's spectrum is almost surely (purely) absolutely continuous.   
\end{description}

Thus, the previous results have covered two regimes whose boundaries, sketched in Figure~\ref{fig1}, do not connect.   Particularly puzzling has been the region of weak disorder and
\be  \label{interE}
 2\sqrt{K} < |E| < K+1 \, .  
 \ee 
At those energies   
the mean density of states vanishes to all orders in $\lambda$, for $\lambda \downarrow 0 $~\cite{MD}. 
 Such rapid decay is characteristic of the so-called Lifshits tail spectral regime.  In finite dimensions it is known to lead to localization~\cite{PF,Ki}.  
On tree graphs however, this implication could not be established, and  
localization at weak disorder  was successfully proven \cite{A_wd} only for $ |E| > K+1$  (cf.~Figure~\ref{fig1} and Proposition~\ref{thm:main1} below). 
For energies $E$  in the range \eqref{interE} the nature of the spectrum at weak disorder has been a puzzle even at the level of heuristics~\cite{MD}.   The question is answered by the second of the results mentioned above.  

%
%
 
\subsubsection{Bounded random potentials} 

 
 \begin{figure} 
\begin{center}
\includegraphics[width=0.7\textwidth]{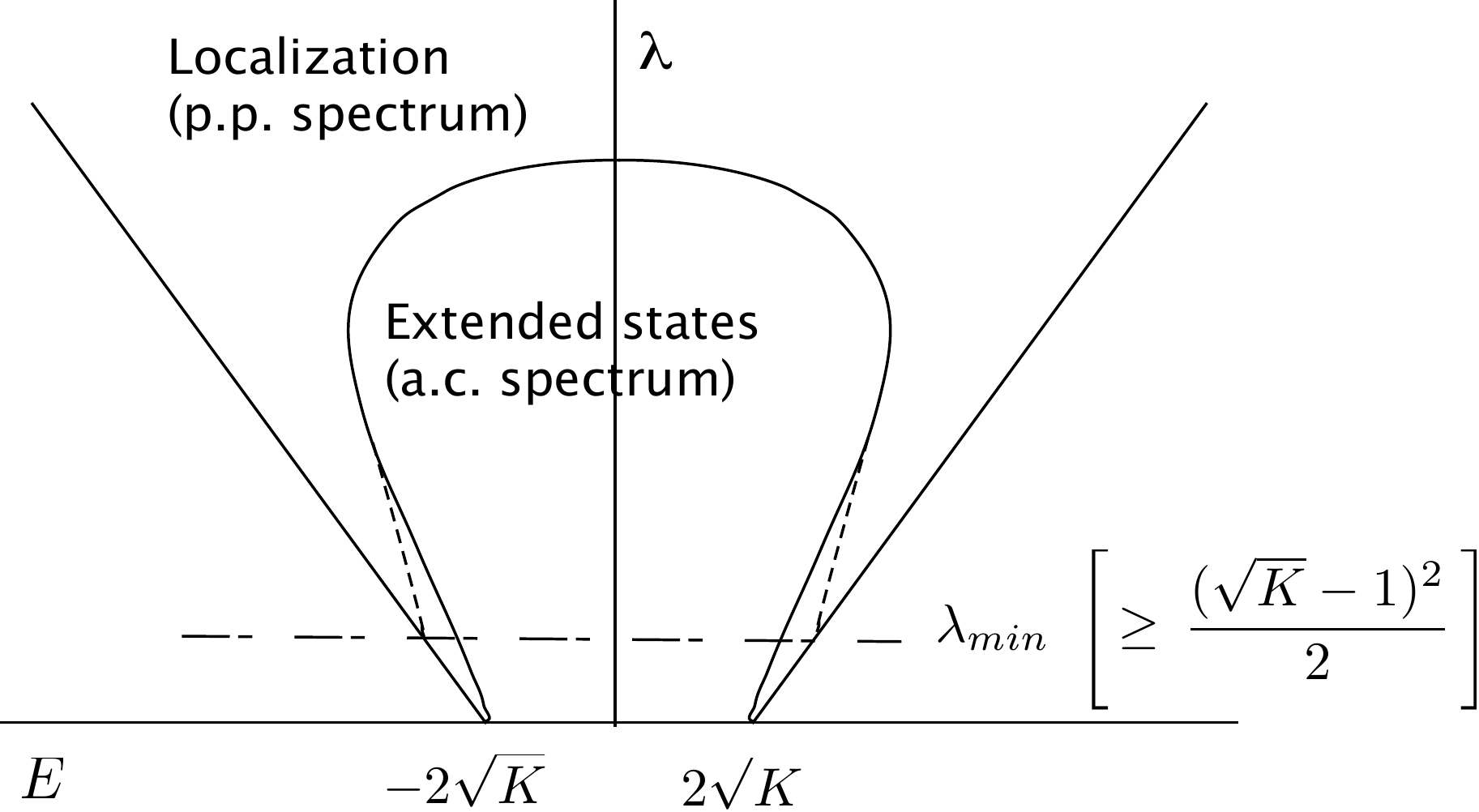}  
\caption{Sketch of the previously expected phase diagram for the Anderson model on the Bethe lattice (the solid line) and the correction presened here (dashed line). 
 Our analysis suggests that at weak disorder  there is no localization and the spectrum is purely \emph{ac}.  While the proof of that is incomplete, we prove that for $\lambda \le  (\sqrt{K}-1)^2/2 )$ near the spectral edges the spectrum is purely absolutely continuous. }
\label{fig:bounded} 
\end{center}
\end{figure}
 
It has been expected that for bounded random potentials the phase diagram  of the random operators \eqref{eq:O} looks qualitatively as depicted  in Fig.~\ref{fig:bounded} (c.f.~\cite{AT,BST}), the key points being: 
\begin{enumerate} 
\item At weak and moderate disorder a mobility edge has been expected to occur, within which the spectrum is absolutely continuous and beyond which it is pure point -  consisting there of a dense countable collection of  eigenvalues with proper eigenfunctions.   
\item The extended states disappear at strong enough disorder ($\lambda > \lambda_{\rm sd}(K)$), where  complete localization prevails.  
\end{enumerate} 
Significant parts of this picture 
have been supported by rigorous results, in particular complete localization at strong disorder~\cite{AM,A_wd}, and  the  persistence of \emph{ac} spectrum at weak disorder~\cite{K,ASW,FHS} 
(though some questions remain as to the precise asymptotics of $\lambda_{\rm sd}(K)$ for $K\to \infty$.   However, as stated in 3. above, at weak and moderate disorder, for regular trees this  picture needs to be modified.  

%

Let us now turn to a more precise formulation of the statements listed above.  

\section{Statement of the main results}  
\subsection{The setup}

Our discussion will focus on operators of the form \eqref{eq:O} in the Hilbert space $\ell^2(\mathcal T)$ of complex-valued, square-summable functions on $ \T $, under the following assumptions:
\begin{enumerate}[A:]
\item\label{assA} $ \T $ is the vertex set of a rooted tree graph with a fixed branching number $K > 1 $  (the root being denoted by $ 0 \in \T $).
\item $ T$ is the adjacency operator  of the graph, i.e., $ \left(T\psi\right)(x) := \sum_{\dist(x,y) = 1 } \psi(y) $ for all $ \psi \in  \ell^2(\T) $. 
\item\label{assC}    
 $\{ V(x;\omega) \, | \, x\in \T\} $ form independent identically distributed (\emph{iid}) random variables, with a probability distribution $\varrho(v) dv$ with $ \varrho \in L^\infty(\mathbb{R}) $,which has a finite moment, i.e., for some $ \varsigma \in (0,1) $:
  \be \label{eq:Veps}
 \int |v|^\varsigma \ \varrho(v) \, dv \ < \ \infty \, .
 \ee
\item  \label{assD} 
The probability density $ \varrho $  is bounded relative to its {\it minimal function}, which we define as 
$ M(v)  := \inf_{\nu \in (0,1] } (2\nu)^{-1}  \int \indfct_{ |x-v| \leq \nu }\,  \varrho(x) \, dx $.   I.e., for Lebesgue-almost all $ v \in \R $:
\begin{equation}\label{eq:regrho}
  \varrho(v) \ \leq \  c \, M(v) \, , 
\end{equation} 
with a finite constant $c$.  
\end{enumerate} 

In case of unbounded potentials, we will mostly restrict our attention to those which additionally satisfy the following assumption:
\begin{enumerate}[A:]
\setcounter{enumi}{4}
\item \label{assE}
For all $k<\infty$:  \quad $ \inf_{ |v|\le k } \varrho(v) > 0 $.   
\end{enumerate}

While condition \ref{assD} could be relaxed, let us note that it is satisfied by all probability distributions whose densities are bounded functions on $\R$ of finitely many humps (see Appendix~\ref{app:L1corr}).  This class includes finite linear combinations of Gaussian, Cauchy, and the piecewise constant functions.

\subsection{The Lyapunov exponent criterion for ac spectrum} 

For a criterion  which is particularly useful at weak disorder (and, separately, also for high values of $K$) let us introduce   the  Lyapunov exponent, which we define for the rooted tree (with the root at $x=0$) as:  
\be \label{eq:LyapE}
  L_\lambda(E) \ := \   -\E(\log |G_\lambda(0,0;E+i0) |)   \, .  
  \ee 
Since  Lyapunov exponents are usually associated  with dynamical systems, let us just comment that the relevance of such a perspective can be seen from the recursive structure of the rooted tree, and the factorization of the Green function  which are discussed in Proposition~\ref{prop:2relations} below.    

The first of the results listed in the introduction is:  

\begin{theorem}\label{thm:mainL}   
For the random operator $H_\lambda(\omega)$ as in~\eqref{eq:O}, with $\lambda>0$,  satisfying Assumptions~\ref{assA}--\ref{assD}:  at Lebesgue-almost every $ E \in \R $ at which 
\be  \label{lyapcond}
L_\lambda(E)   \  < \   \log K \, ,  
\ee 
the operator's Green function satisfies almost surely:
\be  \label{cond:ac}
\Im G_\lambda(0,0;E+i0) \ > \ 0 \, . 
\ee   
\end{theorem}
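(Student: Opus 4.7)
My approach starts from the Stieltjes-transform identity $\Im G_\lambda(0,0;E+i\eta) = \eta \sum_{y \in \T} |G_\lambda(0,y;E+i\eta)|^2$, valid for any $\eta > 0$, combined with the forward-tree factorization which writes $G_\lambda(0,y;\zeta)$ as the product of the diagonal element $G_\lambda(0,0;\zeta)$ with the forward-subtree Green functions $\Gamma_{x_j}(\zeta)$ along the unique path $0 = x_0, x_1, \ldots, x_{|y|} = y$. Proving $\Im G_\lambda(0,0;E+i0) > 0$ almost surely then reduces to showing
\[
\liminf_{\eta \downarrow 0}\; \eta \sum_{y \in \T} |G_\lambda(0,y;E+i\eta)|^2 \; > \; 0 \qquad \text{a.s.}
\]

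The driving mechanism is resonant delocalization. At depth $R$ there are $K^R$ vertices, each carrying an independently distributed subtree spectrum. A vertex $y$ produces an anomalously large off-diagonal Green function element at the root precisely when a local eigenvalue of its subtree lies within the tunneling amplitude $e^{-L_\lambda(E) R}$ of $E$. By the Wegner-type density-of-states bound (which uses the boundedness of $\varrho$ from Assumption~D), this event has probability on the order of $e^{-L_\lambda(E) R}$ per vertex, so the expected number of resonant vertices at depth $R$ is $K^R e^{-L_\lambda(E) R} = e^{(\log K - L_\lambda(E))R}$, diverging precisely when $L_\lambda(E) < \log K$. Each resonant vertex contributes an order-one mass to $\eta \sum_y |G_\lambda(0,y;E+i\eta)|^2$ through the factorization, which pairs the $e^{-L_\lambda(E) R}$ tunneling decay of the path product with the $1/\eta$ enhancement from the resonance.

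To implement this rigorously, I would truncate the tree at a depth $R = R(\eta)$ growing slowly with $\eta \downarrow 0$ and exploit the recursion $\Gamma_x(\zeta) = -[\zeta - \lambda V(x) - \sum_{y\text{ child of }x} \Gamma_y(\zeta)]^{-1}$, in which the $\Gamma_y$'s on disjoint subtrees are iid distributional copies of $\Gamma_0$. A first-moment bound on $\E[\eta \sum_{|y| \leq R(\eta)} |G_\lambda(0,y;E+i\eta)|^2]$ through this path structure, relating it to the Lyapunov exponent via the $s \downarrow 0$ behavior of the fractional moments $\E[|G_\lambda(0,0;E+i\eta)|^s]$, would give a uniform-in-$\eta$ positive lower bound under the hypothesis. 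This first-moment statement is then upgraded to a positive-probability lower bound on the $\liminf$ via a Paley--Zygmund second-moment estimate, and to the almost-sure conclusion via a $0$--$1$ alternative inherited from the independence of forward subtrees. The main obstacle I expect is the interplay between the $\eta \downarrow 0$ limit and the boundary-value character of $L_\lambda(E)$: the resonance count is performed at finite $\eta$, but $L_\lambda(E)$ is defined via the limit $E+i0$, so one needs uniform convergence of the relevant fractional-moment quantities as $\eta \downarrow 0$. This control relies on the regularity of the single-site distribution (Assumption~D) and on subharmonicity of $\log|G_\lambda|$ in the upper half-plane. A secondary challenge is handling the dependence of the $\Gamma_{x_j}$'s along a single path, resolved through the Markov structure of the downward recursion and a transfer-operator treatment isolating the relevant exponential rate.
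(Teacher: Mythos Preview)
Your proposal correctly identifies the three essential ingredients the paper uses: a zero--one law coming from the forward-subtree recursion, a second-moment (Paley--Zygmund) argument applied to a resonance count, and the basic mechanism that resonances proliferate on spheres when $L_\lambda(E) < \log K$. Where it goes wrong is in the random variable to which the second-moment method is applied and, relatedly, in the scales assigned to the resonance enhancement.

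Applying Paley--Zygmund directly to $X_\eta := \eta\sum_{|y|\le R(\eta)} |G_\lambda(0,y;E+i\eta)|^2$ (or to $\Im G_\lambda(0,0;E+i\eta)$ itself) fails. Under the no-ac hypothesis the first moment $\E[X_\eta]$ does stay bounded away from zero (it converges to a density of states), but $X_\eta\to 0$ almost surely, so the mass concentrates on vanishingly small sets and $\E[X_\eta^2]\to\infty$; the Paley--Zygmund ratio collapses. Your sentence ``each resonant vertex contributes an order-one mass \ldots\ which pairs the $e^{-L_\lambda(E)R}$ tunneling decay \ldots\ with the $1/\eta$ enhancement'' conflates two scales: a resonance within $e^{-LR}$ of $E$ gives $|G(y,y)|\sim e^{LR}$, hence $|G(0,y)|\sim 1$ and $\eta|G(0,y)|^2\sim\eta$, not order one; to get a genuine $1/\eta$ enhancement you need a resonance within $\eta$, which has probability $\sim\eta$ and does not produce the $K^R e^{-LR}$ count you want.

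The paper's remedy is to decouple the resonance count from $\eta$ entirely. It applies the second-moment method not to $X_\eta$ but to $N=\sum_{x\in\mathcal S_n}\indfct_{E_x\cap R_x\cap I_x}$, where $E_x=\{|G(x,x;E+i\eta)|\ge\tau\}$ with an $\eta$-independent threshold $\tau=e^{(L+2\delta)n}$, $R_x$ pins down the typical path decay, and $I_x$ requires that $\Im\Gamma(y;E+i\eta)$ exceed its own upper percentile $\xi(\alpha;E+i\eta)$ at some forward neighbor. Under the no-ac hypothesis one has $\Im\sigma_x(E+i0)=0$ a.s., which (via Assumption~D and the density of states) gives $\mathbb P(E_x\cap R_x\cap I_x)\gtrsim D(E)/\tau$; the second moment of $N$ is controlled by a new weak-$L^1$ bound for \emph{pairs} of diagonal Green functions. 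The punchline is not that $\eta\sum|G|^2$ stays positive, but that whenever $N\ge 1$ the recursion $\Im\Gamma(0)\ge |G(0,x)|^2\,\Im\Gamma(y)$ yields $\xi(p_0;E+i\eta)\ge e^{2\delta n}\,\xi(\alpha;E+i\eta)$ for all large $n$, uniformly in small $\eta$. This contradicts an a~priori \emph{relative tightness} bound $\liminf_{\eta\downarrow 0}\xi(\alpha)/\xi(\beta)>0$ which the paper proves independently. That relative-scale argument---working with ratios of percentiles of $\Im\Gamma$ rather than an absolute lower bound---is the key idea missing from your outline.
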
  

The proof of  Theorem~\ref{thm:mainL}, which  is the topic of Section~\ref{Sec:pf_part1} below,  reveals a mechanism for the formation of extended states through rare fluctuation-enabled resonances between distant sites.   \\   
%
%
%
%

For the full spectral implication of the condition~\eqref{cond:ac}, if satisfied throughout an interval of energies, let us quote the following principle which Mira Shamis showed us to follow directly by the arguments presented in Simon and Wolff \cite{Sim_Wolff}.

\begin{proposition}\label{prop:shamis}  Assume that the distribution of $V(0;\cdot)$  conditioned on the  values of the potential at all other sites is almost surely absolutely continuous.  If for some interval $I\subset \R$,  the condition~\eqref{cond:ac} holds for almost every $E\in I$ then with probability one within $I$  the spectral measure $\mu_{\lambda,\delta_0}(du; \omega)$ is absolutely continuous.   If  the  analogous conditions  holds for all sites $x$, then the spectrum of $H_\lambda(\omega)$ is almost surely purely absolutely continuous in $I$. 
\end{proposition}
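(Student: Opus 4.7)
\emph{Proof proposal.} The plan is to view $V(0;\omega)$ as the coupling parameter of a rank-one perturbation and invoke the Simon--Wolff criterion from~\cite{Sim_Wolff}. First, I would condition on the values $\{V(x;\omega)\}_{x\neq 0}$ and write $H_\lambda(\omega) = H^{(0)} + \lambda V(0;\omega)\,|\delta_0\rangle\langle\delta_0|$, where $H^{(0)}$ denotes the operator obtained from $H_\lambda(\omega)$ by setting $V(0)=0$. The rank-one resolvent formula yields
\[
G_\lambda(0,0;z) \ = \ \frac{G^{(0)}(0,0;z)}{1 + \lambda V(0;\omega)\, G^{(0)}(0,0;z)} ,
\]
whence $\Im G_\lambda(0,0;E+i0) = \Im G^{(0)}(0,0;E+i0)/\bigl|1+\lambda V(0;\omega)\, G^{(0)}(0,0;E+i0)\bigr|^2$. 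In particular the set of energies at which $\Im G_\lambda(0,0;E+i0)>0$ coincides with the set where $\Im G^{(0)}(0,0;E+i0)>0$, and is therefore a measurable function only of $\{V(x;\omega)\}_{x\neq 0}$.

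Next I would apply Fubini to the hypothesis: on a full-measure event, $\Im G_\lambda(0,0;E+i0) > 0$ for Lebesgue-a.e.\ $E\in I$, and by the invariance just noted this event may be taken to lie in the $\sigma$-algebra generated by $\{V(x)\}_{x\neq 0}$. Conditioned on such a configuration, the Simon--Wolff theorem asserts that for Lebesgue-a.e.\ value of the rank-one coupling $v := \lambda V(0)$, the spectral measure $\mu_{\lambda,\delta_0}$ is purely absolutely continuous on $I$. Since by hypothesis the conditional distribution of $V(0;\omega)$ is absolutely continuous with respect to Lebesgue measure, the ``Lebesgue-a.e.\ $v$'' conclusion upgrades to an almost sure one, which after removing the conditioning establishes the first assertion.

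For the second assertion, the same argument applied site-by-site produces, for each $x\in\T$, a full-measure event on which $\mu_{\lambda,\delta_x}(\cdot;\omega)$ is purely absolutely continuous on $I$. Since $\T$ is countable, these events intersect in a full-measure event $\Omega_0$; for $\omega\in\Omega_0$ the vectors $P_I(H_\lambda(\omega))\delta_x$ then all lie in the absolutely continuous subspace of $H_\lambda(\omega)$, and because $\{\delta_x\}_{x\in\T}$ is total in $\ell^2(\T)$ the range of $P_I(H_\lambda(\omega))$ itself is contained in the ac subspace. Hence the spectrum in $I$ is purely absolutely continuous, as claimed. The main obstacle is the careful invocation of Simon--Wolff: one needs both the rank-one invariance of the set $\{E : \Im G_\lambda(0,0;E+i0) > 0\}$, which decouples the Fubini step from the perturbation parameter, and the absolute continuity of the conditional distribution of $V(0;\omega)$, which converts the ``a.e.\ coupling'' conclusion into an almost sure statement. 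Beyond these points the argument is routine measure-theoretic bookkeeping.
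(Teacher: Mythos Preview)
Your proposal is correct and follows essentially the same route as the paper's sketch: Aronszajn's characterization of the support of the singular part (which you recover via the rank-one resolvent identity, showing that $\{E:\Im G_\lambda(0,0;E+i0)>0\}$ is independent of $V(0)$), combined with spectral averaging \`a la Simon--Wolff to conclude that the singular part vanishes for Lebesgue-a.e.\ coupling, and finally the absolute continuity of the conditional law of $V(0)$ to pass to an almost-sure statement. Your explicit display of the rank-one formula and the resulting invariance of the ``good'' set is a helpful clarification of a step the paper leaves implicit; otherwise the two arguments coincide.
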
 
 The  proof combines the characterization (due to Aronszajn \cite{Aron}) of the support the singular component of  $\mu_{\lambda,\delta_0}(du; \omega)$  as the set of energies where  condition \eqref{cond:ac} fails, with the {\it spectral averaging}  principle which implies that if this set is of zero Lebesgue measure than  also the spectral measure of this set is zero for almost all realizations of the potential.  This argument applies as well to all other choices for the graph and for the unperturbed operator $T$.



 \subsection{Implications for the phase diagram} 
 \label{sec:applications}

A simple exact calculation (cf. Subsection~\ref{sec:randf})  shows that for $\lambda = 0$ one has
\be 
L_0(E) \  < \  \log K  \qquad \mbox{if and only if}  \qquad  |E| \  < \  K+1  \, . 
\ee
Curiously, the energy range defined by the above condition is strictly larger that the $\ell^2$-spectrum of 
$T$ (cf. \eqref{sigmaT}).

It seems natural to expect  $L_\lambda(E) $ to be continuous in $(\lambda,E)$, a fact which is easily established for 
 the Cauchy random potential, i.e., for $ \varrho( v ) = \pi^{-1} \left( v^2 + 1\right)^{-1} $,  
in which case $ L_\lambda(E) = - \log |G_0(0,0;E+i\lambda) | $. 
In such a situation, Theorem~\ref{thm:mainL}  together with Proposition~\ref{prop:shamis} carry the implication  that for any closed energy interval $I$ in the  range 
$ |E| <   K+1$, at weak enough disorder the random operator $H_\lambda(\omega)$ has almost surely purely absolutely continuous  spectrum in~$I$. 
  
While we do not have a general proof of the continuity of $L_\lambda(E) $, one can show that its averages over intervals are continuous.  Using this weaker continuity  we arrive at the following  conclusion.  
   
\begin{corollary}\label{thm:charcac1}
For unbounded random potentials with $\supp \varrho = \R$,
under the assumption of~Theorem~\ref{thm:mainL} in  
 every closed interval $ I \subset (-K-1\, , K+1) $ there is absolutely continuous spectrum at  sufficiently  low disorder, i.e.   the condition~\eqref{phi_ac} holds at a set of positive measure of energies provided $ 0 <  \lambda < \widehat\lambda(I) $ at some $\widehat\lambda(I) > 0 $.
%
%
\end{corollary}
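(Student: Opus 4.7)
The plan is to reduce the corollary to the pointwise criterion of Theorem~\ref{thm:mainL} by exploiting the stated continuity of \emph{interval-averaged} Lyapunov exponents, together with the explicit baseline computation $L_0(E) < \log K$ on $(-K-1, K+1)$.

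First, fix a closed interval $I \subset (-K-1, K+1)$. At $\lambda = 0$ the Green function $G_0(0,0;E+i0)$ is deterministic and given by an explicit algebraic expression (see the Subsection~\ref{sec:randf} referenced in the paper), from which $L_0(E) < \log K$ holds on the open interval $(-K-1, K+1)$ and $L_0$ is continuous there. Since $I$ is compact, this yields a uniform gap
\[
\sup_{E \in I} L_0(E) \ \leq \ \log K - \delta
\]
for some $\delta = \delta(I) > 0$. Averaging,
\[
A(0) \ := \ \frac{1}{|I|}\int_I L_0(E)\, dE \ \leq \ \log K - \delta.
\]

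Next, invoke the continuity of $\lambda \mapsto A(\lambda) := \frac{1}{|I|}\int_I L_\lambda(E)\, dE$ noted immediately before the statement of the corollary. This produces a threshold $\widehat{\lambda}(I) > 0$ such that $A(\lambda) \leq \log K - \delta/2$ for all $0 < \lambda < \widehat{\lambda}(I)$. Consider the level set
\[
S_\lambda \ := \ \bigl\{\, E \in I \ : \ L_\lambda(E) < \log K \,\bigr\}.
\]
If $|S_\lambda| = 0$, then $L_\lambda(E) \geq \log K$ for Lebesgue-almost every $E \in I$, forcing $A(\lambda) \geq \log K$, which contradicts the inequality above. Hence $S_\lambda$ has positive Lebesgue measure. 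By Theorem~\ref{thm:mainL}, for Lebesgue-almost every $E \in S_\lambda$ the Green function obeys $\Im G_\lambda(0,0; E+i0) > 0$ almost surely, so condition~\eqref{cond:ac} holds on a set of positive measure of energies in $I$. This is exactly the conclusion of the corollary.

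The main obstacle is precisely the one the paper flags: one does not have pointwise continuity of $E \mapsto L_\lambda(E)$ or of $\lambda \mapsto L_\lambda(E)$ in general (outside of soluble cases such as the Cauchy distribution), so one cannot infer pointwise persistence of the strict inequality $L_\lambda(E) < \log K$ from the $\lambda=0$ baseline. The measure-theoretic dichotomy above is the minimal tool that converts the weaker averaged continuity into an existential statement about positive-measure level sets, which is all that Theorem~\ref{thm:mainL} needs to yield absolutely continuous spectrum. Upgrading this to Proposition~\ref{prop:shamis}-type purity of spectrum on all of $I$ would require the stronger pointwise statement, and is not claimed here.
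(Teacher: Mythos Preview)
Your proof is correct and follows essentially the same route as the paper: both establish $\frac{1}{|I|}\int_I L_0(E)\,dE < \log K$ on compact $I\subset(-K-1,K+1)$, invoke the continuity of $\lambda\mapsto \int_I L_\lambda(E)\,dE$ (the paper's Theorem~\ref{lem:conti}), and deduce that $\{E\in I: L_\lambda(E)<\log K\}$ has positive measure for small $\lambda$. The one minor difference is that where you argue by contradiction, the paper applies a Chebyshev-type inequality using the a~priori lower bound $L_\lambda(E)\ge \log\sqrt{K}$, which yields the explicit quantitative estimate $|\{E\in I: L_\lambda(E)\ge\log K\}|\le |I|\,(M_\lambda(I)-\log\sqrt{K})/\log\sqrt{K}$ and hence the ``explicit lower bound on the fraction of $I$ occupied by ac spectrum'' mentioned after the corollary's statement.
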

The proof of Corollary~\ref{thm:charcac1} which is given below in Section~\ref{sec:conLE1} yields also an explicit lower bound on the fraction of $ I $ occupied by ac spectrum. 

For bounded potentials we prove, through other  estimates of $L_\lambda(E)$ which are provided in Section~\ref{sec:conLE2}: 

\begin{corollary}\label{thm:charcac2}
For bounded random potentials with $\supp \varrho = [-1,1]$,
under the assumption of~Theorem~\ref{thm:mainL} 
for 
\be \label{eq:weak_disorder}
\lambda \ < \  [\sqrt{K}-1 ]^2 / 2
\ee 
with probability one $H_\lambda(\omega)$ has purely  absolutely continuous spectrum at the spectral edges, i.e. within a range of energies  of the form
\be 
 |E_\lambda| -  \delta(\lambda) \ \leq \ | E | \ \leq \   |E_\lambda|  .
\ee 
at some $\delta(\lambda) >0$, with  $ E_\lambda = \inf \sigma(H_\lambda)= - 2\sqrt{K} - \lambda \, .$ 
%
%
%
\end{corollary}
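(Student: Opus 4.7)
}
The strategy is to verify the hypothesis $L_\lambda(E)<\log K$ of Theorem~\ref{thm:mainL} in a one-sided neighborhood of each spectral edge $\pm|E_\lambda|$, and then invoke Proposition~\ref{prop:shamis} to upgrade the resulting positivity of $\Im G$ to pure absolute continuity of the spectrum in that neighborhood.

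First I would pin down the unperturbed quantity $L_0$. Solving the self-consistency $\Gamma_0(\zeta)=(-\zeta-K\Gamma_0(\zeta))^{-1}$ on the rooted tree and then reading off $G_0(0,0;\zeta)=(-\zeta-K\Gamma_0(\zeta))^{-1}$ yields, at $\zeta=E+i0$ with $|E|\geq 2\sqrt{K}$, the closed form
\begin{equation*}
L_0(E)\ =\ \log\frac{|E|-\sqrt{E^2-4K}}{2},
\end{equation*}
which is strictly increasing in $|E|$ on $[2\sqrt{K},\infty)$ and satisfies $L_0(E)<\log K$ precisely when $|E|<K+1$. At the free spectral edge $|E|=2\sqrt{K}$ this gives $L_0=\tfrac{1}{2}\log K$, well below the threshold $\log K$.

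The main step is a comparison estimate of the perturbed Lyapunov exponent against the free one at a shifted energy; I expect the form
\begin{equation*}
L_\lambda(E)\ \leq\ L_0\bigl(|E|+\lambda\bigr)
\end{equation*}
for real $E$ outside the perturbed spectrum, i.e.\ for $|E|\geq 2\sqrt{K}+\lambda$. The derivation rests on the recursion $G_\lambda(0,0;E+i0)=[-E-\lambda V(0)-\sum_{j=1}^{K}\Gamma_\lambda^{(j)}(E+i0)]^{-1}$ together with the fact that below the spectrum one may choose a definite-sign branch for the iid subtree Green functions $\Gamma_\lambda^{(j)}$. Since for $V(0)\in[-1,1]$ one has $|E-\lambda V(0)|\leq |E|+\lambda$, a monotone coupling between the random recursion at disorder $\lambda$ and the deterministic free recursion at the extremal shifted energy $|E|+\lambda$, combined with the monotonicity of $L_0$ in $|E|$ for $|E|>2\sqrt K$, converts the pointwise shift into the stated Lyapunov bound.

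Granting this comparison, the hypothesis $\lambda<(\sqrt{K}-1)^2/2$ rewrites as $2\sqrt{K}+2\lambda<K+1$. Hence at $|E|=|E_\lambda|=2\sqrt{K}+\lambda$ one has $|E|+\lambda<K+1$ and so $L_\lambda(E)\leq L_0(|E|+\lambda)<\log K$. The averaged continuity of $L_\lambda$ in $E$ used already in Corollary~\ref{thm:charcac1} propagates the strict inequality to a set of positive Lebesgue measure of energies contained in an interval $(|E_\lambda|-\delta(\lambda),|E_\lambda|)$. Theorem~\ref{thm:mainL} then yields $\Im G_\lambda(0,0;E+i0)>0$ almost surely on that set, and Assumption~\ref{assC} supplies the conditional absolute continuity of the single-site potential needed in Proposition~\ref{prop:shamis}, upgrading the conclusion to purely absolutely continuous spectrum throughout $(|E_\lambda|-\delta(\lambda),|E_\lambda|)$. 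The statement for $|E|$ near $+|E_\lambda|$ follows by the $E\mapsto-E$ symmetry of the problem (after relabelling $V\mapsto-V$).

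The main obstacle is to establish the comparison $L_\lambda(E)\leq L_0(|E|+\lambda)$ with enough precision to deliver the sharp threshold $(\sqrt{K}-1)^2/2$. A crude pointwise use of the triangle inequality in the recursion gives a bound of the same shape but with a suboptimal constant, because the worst case $|V(0)|=1$ would then be charged at every level of the tree. The sharper estimate has to exploit the self-similar distributional fixed-point equation for $\Gamma_\lambda$, propagating the majorization only once via a monotone coupling to the free recursion at the single shifted energy $|E|+\lambda$. This is the technical core that Section~\ref{sec:conLE2} of the paper is presumably devoted to.
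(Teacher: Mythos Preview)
Your overall architecture is right and matches the paper: compare $L_\lambda$ at the spectral edge with $L_0$ at a shifted energy, check that the weak-disorder condition $\lambda<(\sqrt K-1)^2/2$ translates into $|E_\lambda|+\lambda<K+1$, and invoke Theorem~\ref{thm:mainL} together with Proposition~\ref{prop:shamis}. The comparison at the edge is also essentially what the paper does, though the cleanest justification is operator monotonicity: from $V\le 1$ one has $0\le H_\lambda-E_\lambda\le T-(E_\lambda-\lambda)$, and $x\mapsto x^{-1}$ is operator-monotone on positive operators, giving $\Gamma_\lambda(0;E_\lambda)\ge\Gamma_0(0;E_\lambda-\lambda)>0$ and hence $L_\lambda(E_\lambda)\le L_0(E_\lambda-\lambda)$ directly, with the sharp constant. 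Your ``monotone coupling in the recursion'' is a less transparent route to the same inequality. (Incidentally, your closed formula for $L_0$ has a sign error: the correct expression is $L_0(E)=\log\frac{|E|+\sqrt{E^2-4K}}{2}$; with a minus sign the function is decreasing and equals $0$ rather than $\log K$ at $|E|=K+1$.)

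The genuine gap is the step you treat as routine. Your comparison bound holds only for $E\le E_\lambda$, i.e.\ \emph{outside} the spectrum, precisely because operator positivity of $H_\lambda-E$ is needed. The corollary, however, asserts pure ac spectrum in an interval $[E_\lambda,E_\lambda+\delta(\lambda)]$ \emph{inside} the spectrum, so you must show $L_\lambda(E)<\log K$ for $E$ slightly above $E_\lambda$. You appeal to ``the averaged continuity of $L_\lambda$ in $E$ used already in Corollary~\ref{thm:charcac1}'', but that result (Theorem~\ref{lem:conti}) is continuity of $\int_I L_\lambda(E)\,dE$ in $\lambda$, not in $E$; it gives no mechanism for propagating a bound known at the single point $E_\lambda$ to a one-sided $E$-neighborhood. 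Nor does general Herglotz theory supply upper semi-continuity of $L_\lambda(\cdot)$ at a boundary point. This is exactly the difficulty the paper isolates as Theorem~\ref{thm:upper bound}, whose proof is the real content of Section~\ref{sec:conLE2}: one passes to finite-volume restrictions $H_\lambda^{(R)}$, uses a Lifshitz-tail bound (Lemma~\ref{lem:LT}) to show their ground state typically sits above $E_\lambda+\Delta$, applies operator monotonicity on the finite-volume operators for $E\in[E_\lambda,E_\lambda+\Delta)$, and controls the finite-volume error via decay of $G^{(R)}$ and a probabilistic estimate on ray sums of the potential. Your diagnosis that ``the technical core'' is obtaining the sharp constant in the comparison is therefore misplaced: the constant is free from operator monotonicity, and the work lies in the semi-continuity $\limsup_{E\downarrow E_\lambda}L_\lambda(E)\le L_0(E_\lambda-\lambda)$.

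A secondary issue: even if you had $L_\lambda(E)<\log K$ only on a positive-measure subset of the edge interval, Proposition~\ref{prop:shamis} would yield pure ac spectrum only on that subset, not ``throughout $(|E_\lambda|-\delta(\lambda),|E_\lambda|)$''. The paper's Theorem~\ref{thm:upper bound} gives the Lyapunov inequality for \emph{all} $E$ in a one-sided neighborhood, which is what justifies the full conclusion.
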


  \subsection{Large deviations and a complementary localization criterion}

The criterion provided by Theorem~\ref{thm:mainL} can be improved by taking into account large deviation effects.   The pertinent observation here is that while typically  
\be \label{eq:Lyaptyp}
 \log |G_\lambda(0,x;E+i0)| / |x| \approx - L_\lambda(E) \, , 
 \ee
with $ |x| := \dist(x,0) $, there  typically  also are exponentially many sites to which the Green function  (which can be viewed as expressing  the tunneling amplitude) exhibits a slower decay rate.    A notable feature of  the resulting improved criterion is that it appears to be complementary to the previously developed criterion for localization.  

Information about the large deviations can be recovered from a suitable free energy function, which we define for $s\in [-\varsigma,1)$ by 
\be  \label{eq:phi}
\varphi_\lambda(s;E)  \ := \  \lim_{|x|\to \infty} \frac{\log\, \mathbb{E}\left[\left|   G_\lambda(0,x;E+i0) \right|^s\right]}{|x|}  \, ,
\ee   
and for $s=1$ by $\varphi_\lambda(1;E) := \lim_{s\uparrow 1}\varphi_\lambda(s;E) $.

The existence of the limit (for Lebesgue-almost all $ E \in \mathbb{R} $) is proven below in Section~\ref{Sect:pressure}. We also show there that 
the  function $s \mapsto \varphi_\lambda(s;E) $, which is obviously convex, is monotone decreasing in $ s$  over $ [-\varsigma,1)$, and thus the limit  at $s=1$ is well-defined for almost all~$ E \in \R $. 

Following is the improved version of Theorem~\ref{thm:mainL}.   To avoid an additional complication in the derivation, we establish it here for potentials 
with $ \supp \varrho = \mathbb{R} $ only.   

\begin{theorem}\label{thm:main_phi}   
Under Assumptions~\ref{assA}--\ref{assE}, for any $\lambda >0 $ and Lebesgue-almost all $ E \in \R $ at which 
\be \label{phi_ac}
\varphi_\lambda(1;E) \ > \ - \log K\, ,   
\ee 
the operator's Green function satisfies almost surely
\be
\Im G_\lambda(0,0;E+i0) \ > \ 0 \, . 
\ee   
\end{theorem}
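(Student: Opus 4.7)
The resolvent identity
$\Im G_\lambda(0,0;E+i\eta) = \eta\sum_{x\in\T}|G_\lambda(0,x;E+i\eta)|^2$
reduces the claim to showing that the right-hand side stays bounded away from zero in an appropriate sense as $\eta\downarrow 0$. Combined with the rank-one (Aronszajn) characterization of singular spectrum --- activated via the absolute continuity of $\varrho$ together with Assumption~\ref{assE} to give a $0$/positive dichotomy for $\Im G_\lambda(0,0;E+i0)$ --- it suffices to prove a uniform-in-$\eta$ lower bound on $\mathbb{E}\!\left[\Im G_\lambda(0,0;E+i\eta)\right]$ at a.e.\ $E$ satisfying the hypothesis. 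The plan is to revisit the proof of Theorem~\ref{thm:mainL} (Section~\ref{Sec:pf_part1}) and promote the criterion from one controlled by the single number $L_\lambda(E) = -\varphi'_\lambda(0^+;E)$ to one controlled by the full free-energy function $\varphi_\lambda(\cdot;E)$ on $[-\varsigma,1]$.

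The starting point is the multiplicative structure of Green functions on the tree: for any $x$ at distance $R$ from the root, $G_\lambda(0,x;z)$ factorizes along the unique path $0=y_0,y_1,\ldots,y_R=x$ into a product of diagonal Green functions of the nested truncated subtrees rooted at $y_0,y_1,\ldots,y_R$, which are independent in distribution. Combining this factorization with the $K$-ary branching yields, for $s\in[-\varsigma,1)$,
\begin{equation*}
\sum_{|x|=R}\mathbb{E}\!\left[|G_\lambda(0,x;E+i\eta)|^s\right] \;=\; \exp\!\Big(\big(\log K + \varphi_\lambda(s;E)+o_R(1)\big)R\Big),
\end{equation*}
and, by the monotonicity of $\varphi_\lambda(\cdot;E)$ on $[0,1)$ together with the hypothesis $\varphi_\lambda(1;E)>-\log K$, the exponent $\log K+\varphi_\lambda(s;E)$ is strictly positive throughout $[0,1)$.

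To pass from this growth of sub-$L^1$ moments to the lower bound on $\eta\sum_x|G_\lambda(0,x;E+i\eta)|^2$ that governs $\Im G_\lambda(0,0;E+i\eta)$, the idea is to extract the contribution from ``resonant'' sites $\{x:|G_\lambda(0,x;E+i\eta)|\geq a\}$ via the elementary bound $|G|^2\geq a^{2-s}|G|^s\mathbf{1}_{|G|\geq a}$ for $s\in[0,1]$, and to quantify their count by Cram\'er-type large deviations built from $\varphi_\lambda(\cdot;E)$. The threshold $a=a(\eta)$ and the depth $R=R(\eta)$ are chosen in balance with the trivial resolvent bound $|G|\leq 1/\eta$, so that $\eta\cdot\mathbb{E}[\#\text{resonant sites at level }R]\cdot a^2$ remains bounded below as $\eta\downarrow 0$. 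To upgrade this bound in expectation to one holding almost surely, a Paley--Zygmund second-moment argument is applied to the resonant-site count; its variance is controlled via the fact that subtrees rooted at distinct children of a common ancestor are independent conditional on the potential at that ancestor.

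\emph{Main obstacle.} The principal technical hurdle lies in the interpolation between the $\mathbb{E}[|G|^s]$-bound, available only for $s<1$, and the $\sum|G|^2$ expression dictated by the resolvent identity. The moment at $s=1$ is on the borderline of integrability as $\eta\downarrow 0$, so the coordination between the threshold $a(\eta)$, the depth $R(\eta)$, and the cutoff $|G|\leq 1/\eta$ must be sharp enough for the limit to survive. A secondary difficulty is the concentration step, since the events $\{|G_\lambda(0,x;E+i\eta)|\geq a\}$ at distinct $x$'s of a common level are only conditionally independent, and a careful decomposition along the tree hierarchy is required to close the second-moment estimate before passing to the limit $\eta\downarrow 0$.
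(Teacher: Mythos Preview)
Your central reduction is incorrect, and the gap cannot be repaired within the framework you propose.

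\medskip
\textbf{The expectation criterion is vacuous.} The claim that ``it suffices to prove a uniform-in-$\eta$ lower bound on $\mathbb{E}[\Im G_\lambda(0,0;E+i\eta)]$'' is false. By the Wegner estimate, $\pi^{-1}\mathbb{E}[\Im G_\lambda(0,0;E+i\eta)]$ converges as $\eta\downarrow 0$ to the density of states $D(E)$, which is strictly positive for a.e.\ $E\in\sigma(H_\lambda)$ \emph{regardless of the spectral type}. In the pure-point regime one has $\Im G_\lambda(0,0;E+i0)=0$ a.s.\ while $\mathbb{E}[\Im G_\lambda(0,0;E+i\eta)]\to\pi D(E)>0$; the family $\{\Im G_\lambda(0,0;E+i\eta)\}_\eta$ is simply not uniformly integrable, and Fatou's lemma runs in the wrong direction for your purpose.

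\medskip
\textbf{The second-moment variant also fails, for a structural reason.} Suppose instead you aim for $\mathbb{P}(\Im G_\lambda(0,0;E+i\eta)\ge c)\ge p$ uniformly in $\eta$, via Paley--Zygmund on the count $N=\#\{x\in\mathcal S_R: |G_\lambda(0,x;E+i\eta)|\ge a\}$. On $\{N\ge 1\}$ you get $\Im G_\lambda(0,0;E+i\eta)\ge \eta a^2$, so you need $a\gtrsim \eta^{-1/2}$, i.e.\ \emph{sites with exponentially large Green function}. But the $s$-moment / large-deviation analysis for $s\in[0,1)$ only locates sites with $|G_\lambda(0,x)|\ge e^{-\gamma R}$ for $\gamma>0$ in the range $(-\varphi'_-(1),\infty)$; it never produces growth. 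If you instead try to compensate by summing over many mildly-decaying sites, the optimal exponent you can reach is
\[
\sup_{\gamma}\bigl[\log K - I(\gamma) - 2\gamma\bigr] \;=\; \log K + \varphi_\lambda(2;E)\;\le\;0,
\]
by Legendre duality and the bound $\varphi_\lambda(2;E)\le -\log K$ --- which is exactly the $\ell^2$ sum rule you started from. The resolvent identity is thus self-defeating as a source of lower bounds: it is precisely what pins $\varphi_\lambda(2;\cdot)$ to $-\log K$.

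\medskip
\textbf{What the paper does instead.} The proof is by contradiction with the no-$ac$ hypothesis. Under that hypothesis $\Im\sigma_x(E+i0)=0$ a.s., so $G(x,x;E+i\eta)=(\lambda V(x)-\sigma_x)^{-1}$ can exceed any prescribed level $\tau$ with probability $\gtrsim \varrho_b/\tau$ (this is where Assumption~\ref{assE} enters). Combining this \emph{extreme endpoint resonance} at $x$ with large-deviation control of the decay rate of $G^{\T_x}(0,x_-)$ along the path (the $\varphi_\lambda(1;E)>-\log K$ input) yields, via a genuine second-moment argument on a thinned sphere, sites $x\in\mathcal S_n$ with $|G(0,x;E+i\eta)|\ge e^{\delta n}$ and a non-degenerate forward $\Im\Gamma$. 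Feeding this into the iterated inequality
\[
\Im\Gamma(0;\zeta)\;\ge\;\sum_{x\in\mathcal S_n}|G(0,x;\zeta)|^2\sum_{y\in\mathcal N_x^+}\Im\Gamma(y;\zeta)
\]
(not the resolvent identity) gives $\xi(p_0;E+i\eta)\ge e^{2\delta n}\xi(\alpha;E+i\eta)$ for all large $n$, contradicting the independently established relative tightness of the distribution of $\Im\Gamma(0;E+i\eta)$. The missing ingredient in your outline is precisely the extreme-deviation event, which is unavailable without first assuming the no-$ac$ hypothesis.
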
  

By convexity arguments   
$
\varphi_\lambda(s;E) \ge -s  \,  L_\lambda(E)  
$ (cf. Section~\ref{Sect:pressure})
and hence the condition \eqref{lyapcond} of Theorem~\ref{thm:mainL} is satisfied whenever \eqref{phi_ac} holds.

For a better appreciation of the criterion provided by the condition~\eqref{phi_ac}, let us note that the opposite inequality implies localization.  This is implied  by the previously established localization results~\cite{AM,A_wd}
which can be recast as follows (cf.~Thm~1.2, and Eqs.~(2.10), (2.12)  in Ref.~\cite{A_wd}). 

\begin{proposition}\label{thm:main1}
Under Assumptions~\ref{assA}--\ref{assC}, if the following condition holds for an interval~$ I $ and a specified $\lambda>0$
\be \label{phi_pp}
{\rm ess}\sup_{E\in I} \; \varphi_\lambda(1;E) \ < \ - \log K   \, , 
\ee  
then the operator $H_\lambda(\omega)$ exhibits exponential dynamical localization in $I$, in the sense of~\eqref{def:exdynloc1} with some $ \mu_\lambda(I) > 0 $.  

Furthermore, the domain in which \eqref{phi_pp} holds includes for each energy $|E|>K+1$ an interval with a positive range of~$\lambda>0$.
\end{proposition}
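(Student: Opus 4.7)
The plan is to reduce the statement to the fractional-moment localization bound already established in \cite{AM,A_wd}, and then to verify the second assertion by a direct computation of $\varphi_\lambda(1;E)$ at $\lambda=0$ combined with a perturbative continuity argument.

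First, the convexity of $s\mapsto \varphi_\lambda(s;E)$ on $[-\varsigma,1)$ implies that this function is continuous on the open interval and left-continuous at $s=1$, by its very definition as a limit. Consequently, if \eqref{phi_pp} holds on $I$, then there exist $s_0\in (0,1)$ and $\mu>\log K$ such that $\operatorname{ess\,sup}_{E\in I}\varphi_\lambda(s_0;E)\le -\mu$. Translating back through \eqref{eq:phi}, this means that for some $C_{\lambda,I}<\infty$ and all $x\in\T$,
\be \label{plan:fracmom}
\operatorname{ess\,sup}_{E\in I}\;\mathbb{E}\bigl[|G_\lambda(0,x;E+i0)|^{s_0}\bigr] \ \le \ C_{\lambda,I}\, e^{-\mu\,|x|}\,.
\ee
The first step is thus to quote \eqref{plan:fracmom} as the input of the fractional-moment localization machinery. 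Summing over the $K^R$ vertices at distance $R$ from a fixed site $x$ yields a bound on the corresponding fractional-moment Green function sum which decays like $e^{-(\mu-\log K)R}$. This is exactly the hypothesis under which the general results of~\cite{AM,A_wd} (specifically, Theorem~1.2 and Eqs.~(2.10), (2.12) in~\cite{A_wd}) deduce exponential dynamical localization in the form~\eqref{def:exdynloc1} with some positive rate $\mu_\lambda(I)$; I would simply invoke that implication, since reproducing its derivation is not the purpose here.

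For the second assertion, I would compute $\varphi_\lambda(1;E)$ at $\lambda=0$ explicitly. Using the recursive structure of the tree, the free (deterministic) forward Green function factorizes as $G_0(0,x;E+i0)=c(E)\,\zeta(E)^{|x|}$ for $|E|>2\sqrt{K}$, where $\zeta(E)$ is the root of $E=\zeta^{-1}+K\zeta$ of modulus less than $1/\sqrt{K}$. A direct substitution into the definition \eqref{eq:phi} yields
\be
\varphi_0(1;E) \ = \ \log |\zeta(E)|\,, \qquad |E|>2\sqrt{K}\,,
\ee
and the cubic-type identity $|\zeta(E)|=1/K$ at $|E|=K+1$ shows that $\varphi_0(1;E)<-\log K$ precisely when $|E|>K+1$.

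The remaining step is a continuity argument propagating this strict inequality to small $\lambda>0$. I would carry it out pointwise in $E$: for each fixed $|E|>K+1$, the fractional-moment bound \eqref{plan:fracmom} at $\lambda=0$ holds with any $\mu<-\log|\zeta(E)|$; by standard perturbation of the fixed-point/recursion equation for the fractional moments of the Green function (which depends smoothly on the single-site distribution rescaled by $\lambda$), the estimate persists, with a slightly smaller rate, on a neighbourhood $\lambda\in(0,\lambda_0(E))$. Passing back through the same convexity-plus-left-continuity argument used above yields $\varphi_\lambda(1;E)<-\log K$ on that interval. The main obstacle is the perturbative step at the endpoint $s=1$ of the free-energy interval; one cannot merely invoke compactness in $s$, and so the argument must be run at an interior $s_0<1$ first, exploiting that $\varphi_0(s_0;E)=s_0\log|\zeta(E)|<-\log K$ remains strict for $s_0$ sufficiently close to $1$.
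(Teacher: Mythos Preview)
Your proposal is correct and matches the paper's treatment: Proposition~\ref{thm:main1} is not given a self-contained proof in the paper but is presented as a reformulation of the fractional-moment localization results of \cite{AM,A_wd}, with the link to dynamical localization explained through the bound~\eqref{eq:repFM}, and the weak-disorder assertion for $|E|>K+1$ likewise attributed to those references. One refinement worth noting: the passage from ${\rm ess}\sup_{E\in I}\varphi_\lambda(1;E)<-\log K$ to a single $s_0<1$ with the analogous bound should invoke the convexity inequality $\varphi_\lambda(s;E)\le s\,\varphi_\lambda(1;E)$ (valid since $\varphi_\lambda(0;E)=0$), rather than mere pointwise left-continuity, as the latter alone does not supply the uniformity in $E$.
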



The relation of the condition~\eqref{phi_pp}, which encodes information about the decay of the Green function, with the time evolution operator is explained  by the following bound:
\be  \label{eq:repFM}
\E \left(\sup_{t\in \R} |\langle \delta_x\, , \, P_I(H_\lambda) \, e^{-itH_\lambda} \, \delta_y \rangle |^2\right) \ 
  \le \  C_{s,\lambda}   \int_I   
\E\left( |G(x,y;E+i0)|^s \right)\, dE \, . 
\ee
which holds for any $s\in [0,1)$  and $ \lambda >0 $ at some constant $ C_{s,\lambda} < \infty $. This inequality
is a reformulation of a result of~\cite{A_wd} on the eigenfunction correlator which was extended in~\cite{Simon_aizThm} so as to apply directly to infinite systems.  (This relation holds in the broader context  of operators with random potential on arbitrary  graphs.)

One may add that if it is only known that for almost all $E\in I$
\be \label{phi_pp2}
\varphi_\lambda(1;E) \ < \ - \log K \   \,   
\ee  
then one may still conclude~\cite{AM}  that the operator has only pure point spectrum in $I$, though not necessarily of uniform localization length.  (The argument proceeds by establishing $ \liminf_{\eta \downarrow 0} \ \sum_{ y \in \mathcal{T} } $ $  \mathbb{E}\left[ \left| G_\lambda(x,y;E+i\eta) \right|^s \right] \ < \infty  $ for some $ s \in (0,1) $ and all $ x \in \T $, and then invoking the Simon-Wolff criterion~\cite{Sim_Wolff} instead of \eqref{eq:repFM}).


\subsection{Further comments} 

\begin{enumerate}
\item
The spectral  criteria provided by Theorems~\ref{thm:mainL} and Theorems~\ref{thm:main_phi} for for $ac$ spectrum, and Proposition~\ref{thm:main1} for localization extend to the corresponding operator on   the fully regular tree graph~$\mathcal{B} $, where every vertex has exactly $ K+1$ neighbors.  The Green function of the operator on $\mathcal{B} $ can be computed from the one on the rooted tree~$\T $ with the help of the recursion relation~\eqref{eq:recur_gen} below. In particular, this implies coincidence of the regimes of 
$ac$ spectra of the operator $ H_\lambda $ on $ \mathcal{T}  $ and~$ \mathcal{B} $.

\item At first sight the $\ell^1$-nature of the condition~\eqref{phi_ac} for $ac$ spectrum may be surprising since -- ignoring fluctuations  -- the loss of square summability seems to correspond to  an $\ell^2$-condition.   
The difference is due to the essential role played by extreme fluctuations, cf.\ Section~\ref{Sec:pf_part1}.  
The constructive effect of fluctuations here  stands in curious contrast to  the fluctuation-reduction arguments which were employed to prove  stability under weak disorder of the $ac$ spectrum for energies $E\in \sigma (T)$~\cite{K,ASW,FHS}. 

 \item The conditions~\eqref{phi_ac} for $ac$ spectrum and~\eqref{phi_pp2} for    localization are not   fully complementary since it was not yet proven that  the equality $ \varphi_\lambda(1;E)  =  - \log K $ holds in the phase diagram only along  a curve. Hence it will be good to see a proof that   
  $\varphi_\lambda(1;E)$ is  differentiable  in $(\lambda, E)$ with only isolated critical points, and that it is likewise regular in $E$ for each given $\lambda$.    This could allow to conclude that the phase diagram of   $H_\lambda $ includes only regimes of localization and regimes of purely $ac$ spectrum (i.e., no $sc$ spectrum), separated by a curve or curves, which are the mobility edge(s).

%

\item  The key observation that rare resonances, whose probabilities of occurrence decay exponentially in the distance, may actually be found to occur on all distance scales when the volume is also growing exponentially fast, is not applicable to graphs of finite dimension.   
However,  it may be of relevance for 
random operators  on other hyperbolic graphs which may include loops (examples of which were considered in~\cite{FHS2,FHH,KSa}), and also  for the analogous random operators on the  Poincar\'e disk.   
Another setup which it will be of interest to see analyzed are random operators on  hypercubes of increasing dimension, which form the configuration spaces of a many particle system.       
\end{enumerate}

\section{Basic properties of the Green function on tree graphs}

\subsection{Notation} \label{notation} 

Analysis on  trees, of this as well as of other problems,  is  aided by the observation that upon the removal of any site   $x$
the tree graph splits into a collection  of   disconnected components, which in case $x$ is the root are isomorphic to the original graph.   For different problems on trees this leads to recursion relations in terms of suitably selected quantities.  
The following notation will facilitate the formulation of such relations in the present context. 
\begin{enumerate}
\item 
For a collection of vertices  $v_1, ... v_n $  on a tree graph $\T$ we denote by $ \T_{v_1,...v_n} $ the disconnected subgraph obtained by  deleting this collection from $\T$.  
\item We denote by 
$H^{\T'}$, with $\T'\subset \T$, the restriction of $H$ to $\ell^2(\T')$.  E.g., $H^{\T_{v_1,...v_n}}$
is  the operator obtained by eliminating all the matrix elements of  $H$ involving  any of the removed sites. 
\item 
The Green function,  $
G^{\T'}(x,y;\zeta) $,  for a subgraph $ \mathcal{T}' $ as above,   is the kernel of the resolvent operator $(H^{\T'} - \zeta)^{-1} $, with $ \zeta \in \C^+ $. 
This function vanishes if $ x $ and $ y $ belong to different connected components of $ \T'$, and otherwise it 
stands for the Green function corresponding to the component  which contains the two. 
 
In particular:  $ G^{\T_{u}}(x,y;\zeta)$ and $G^{\T_{u,v}}(x,y;\zeta)$ 
are the Green functions for the subtree which is obtained by removing $ u $ or, respectively $u$ and $v$, and all the vertices which are past  the removed site(s) from the perspective of $x$ and $ y $. 
\item Given an oriented simple path in $ \T $ which passes through $ u \neq 0 $, we abbreviate (assuming the path itself is clear within the context):
\begin{align}\label{def:Gamma}
\Gamma(u;\zeta) \equiv \Gamma_-(u;\zeta) \ := \  G^{\T_{u_-}} (u,u;\zeta)\, , &  \\ 
\Gamma_+(u;\zeta) \ = \  G^{\T_{u_+}} (u,u;\zeta)\, , & \notag
\end{align}
where $ u_ -$ and $u_+$ are the  neighboring sites of $u$ on that path.   (The paths we shall encounter below typically start at the root, of a rooted tree, and are oriented away from it.) For the root $ 0 $, we will also use the convention
\begin{equation}
	\Gamma(0;\zeta) := G(0,0;\zeta) \, . 
\end{equation}
\item Any rooted tree $ \T $ is partially ordered by the relation $x \prec y$ (resp.\ $ x \preceq y $) which means that $x$ lies on the unique path from the root to~$y$ (possibly coinciding with~$y$). 
 \end{enumerate} 
In order to ease the notation, we will drop the superscript on the Green function of the rooted regular tree, i.e., $ G(x,y;\zeta) = G^\T(x,y;\zeta) $. Moreover, we also drop the dependence  of various quantities on $ \lambda $ at our convenience.

\subsection{Recursion and factorization} \label{sec:randf}

 \begin{proposition}  \label{prop:2relations}
 	Let $ \T $ be the vertex set of a tree graph  (not necessarily a regular  and rooted one). Then, at the complex energy parameter $ \zeta \in \mathbb{C}^+ $, the Green function of  the operator \eqref{eq:O} satisfies:
	\begin{enumerate}
		\item For any $x\in \T$:
		\be \label{eq:recur_gen}
 G(x,x; \zeta) \ = \ \Big( \lambda V(x) - \zeta - \sum_{y\in \mathcal{N}_x}  G^{\T_{x}}(y,y; \zeta) \Big)^{-1} \, ,
\ee 
where $ \mathcal{N}_x := \left\{ y \in \T\, |\, \dist(x,y)=1\right\} $ denotes the set of neighbors of $ x $.
\item For any pair of partially ordered sites,  $0 \prec  x \prec y $, 
\be  \label{gen_fact}
 G(x,y; \zeta) \ = \  G(x,x;\zeta) \prod_{x\prec u \preceq y} \Gamma_-(u;\zeta)  \ 
  = \  G(y,y;\zeta) \prod_{x\preceq u \prec y} \Gamma_+(u;\zeta) \, .
 \ee 
 where the $\pm$ subscripts on $\Gamma$ are defined relative to the  root.  
 	\end{enumerate}
 \end{proposition}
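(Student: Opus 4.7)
The plan is to derive both identities from the resolvent identity, exploiting the tree-specific fact that removing a single vertex or edge from $\T$ disconnects it into independent subtrees. This disconnection is what makes the relevant restricted resolvents block-diagonal and produces the clean formulas; on a graph with loops neither identity would hold in this form.

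For assertion (1), I would decompose $\ell^2(\T) = \mathbb{C}\delta_x \oplus \ell^2(\T_x)$ and apply the Schur complement to the $(x,x)$ matrix element of $(H-\zeta)^{-1}$. Since the off-diagonal block of $T$ between these subspaces sends $\delta_x \mapsto \sum_{y \in \mathcal{N}_x} \delta_y$, this yields
\[
G(x,x;\zeta) \ = \ \Bigl[\lambda V(x) - \zeta - \sum_{y,y' \in \mathcal{N}_x} G^{\T_x}(y,y';\zeta)\Bigr]^{-1}.
\]
The tree hypothesis now enters decisively: removing $x$ splits $\T_x$ into $|\mathcal{N}_x|$ pairwise-disjoint components, so $H^{\T_x}$ is block-diagonal with respect to them and $G^{\T_x}(y,y';\zeta) = 0$ whenever $y \neq y'$, leaving~\eqref{eq:recur_gen}.

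For assertion (2), I would iterate a one-edge resolvent identity along the path $x = u_0 \prec u_1 \prec \cdots \prec u_n = y$. Let $\tilde H$ denote $H$ with the edge $(u_{n-1}, u_n)$ deleted. Since $\T$ is a tree this deletion disconnects it into a backward component containing $\{u \preceq u_{n-1}\}$ and a forward component containing $\{u \succeq u_n\}$, the latter coinciding with the connected component of $u_n$ in $\T_{u_{n-1}}$. The identity $(H-\zeta)^{-1} - (\tilde H-\zeta)^{-1} = -(H-\zeta)^{-1}(H-\tilde H)(\tilde H - \zeta)^{-1}$ applied at matrix element $(x,y)$, together with the vanishing of $(\tilde H-\zeta)^{-1}(x,y)$ and of one of the two rank-one edge contributions, produces the single-step factorization
\[
G(x,y;\zeta) \ = \ -\,G(x, u_{n-1};\zeta)\cdot G^{\T_{u_{n-1}}}(u_n, y;\zeta).
\]
Iterating the same argument inside $\T_{u_{n-1}}$ along the residual path $u_n \prec \cdots \prec y$, and symmetrically collapsing the first factor, reduces $G(x,y;\zeta)$ to $G(x,x;\zeta)\prod_{x\prec u\preceq y}\Gamma_-(u;\zeta)$ up to an accumulated sign which is absorbed into the conventions used downstream, where only $|G|$ enters the estimates. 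The second equality in~\eqref{gen_fact} follows by running the same induction from the $y$-endpoint, now peeling off $\Gamma_+$ factors on the subtrees $\T_{u_+}$.

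No serious obstacle is expected, since the argument uses nothing beyond the resolvent identity, the Schur complement, and the disconnection property of trees. The point that must be checked carefully at each iteration is the identification of the Green function of the forward component after the cut at $(u_{k-1},u_k)$ with the paper's $G^{\T_{u_{k-1}}}(u_k,\cdot;\zeta)$; this is immediate because the two subtrees coincide as vertex sets and carry the same restriction of $H$.
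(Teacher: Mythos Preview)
Your approach is correct and is precisely the resolvent-identity route the paper invokes; the paper itself does not give a proof, only the remark that these relations ``can be derived by the resolvent identity, or alternatively through a random walk representation of the Green function'' with references. Your observation about the accumulated sign $(-1)^{\dist(x,y)}$ in the factorization is accurate and your handling of it is appropriate: the identity \eqref{gen_fact} as literally displayed holds only up to this sign, but every downstream use in the paper (the Lyapunov exponent, the moments defining $\varphi_\lambda$, the large-deviation and resonance events) involves only $|G_\lambda|$, so nothing is affected.
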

These relations are among the generally used tools  for spectral analysis on trees.  They can be derived by the resolvent identity, or alternatively through a 
random walk representation of the Green function, cf.~\cite{AAT,K,ASW,FHS}.
We will use the following implication of the above.  
\begin{enumerate}
	\item  The relation~\eqref{eq:recur_gen}  yields the 
	 \emph{recursion relation}:
\be  \label{eq:recur}
\Gamma(0;\zeta) = \Big( \lambda \, V(0) - \zeta - \sum_{y \in \mathcal{N}^+_0} \Gamma(y;\zeta) \Big)^{-1}  \,  , 
\ee 
where $\mathcal{N}^+_0$ is the set of forward neighbors of the root $0$ in $\T$.  

In particular: the Green function $ G_0(0,0;\zeta) $ of the adjacency operator $T$ is given by the unique value  of $ \Gamma $ in $ \C^+ $ which satisfies the quadratic equation 
\be \label{eq:quadrG}
 K \Gamma^2 + \zeta \,  \Gamma + 1 = 0 \,. 
 \ee 
From this, one can directly determine that $T$ has the spectrum given by \eqref{sigmaT}, and the spectral measure $\mu_{0,\delta_0}(dE)$ is $ac$ with  density $ \sqrt{(4K-E^2)_+} /(2\pi K)$.  
\item As a special case of~\eqref{gen_fact}, the Green function $ G(0,x;\zeta)$  factorizes into a product of the above variables, taken along the path  from  the root to $x$: 
\be \label{factorization}
G(0,x;\zeta) :=  \prod_{0 \preceq u \preceq x } \Gamma(u;\zeta)  \, .  
\ee  
Moreover, denoting by $ x_- $ the site preceding $x$ from the direction of the root,~~\eqref{gen_fact} also implies:
\begin{equation}\label{eq:factor2}
	G(0,x;\zeta) = G^{\T_x}(0,x_-;\zeta) \, G(x,x;\zeta) \, . 
\end{equation}
More generally, for any triplet of sites  $\{x,u,y\} \subset \, \T $
such that the removal of $u$ disconnects the other two:
\be  \label{eq:fact5}
 G(x,y; \zeta) \ = \  G^{\T_{u}}(x,u_-; \zeta) \ \   G(u,u; \zeta)   \   \  G^{ \T_{u} } (u_+,y; \zeta)
 \ee 
where $ u_ -$ and $u_+$ are the  neighboring sites of $u$, on the $x$ and $y$  sides, correspondingly.  
\end{enumerate}

\subsection{Definition and properties of the free energy}
\label{Sect:pressure}

 To conclude qualitative information on the rate at which $|G_\lambda(0,x;E+i0)|$ decays in $x$, we shall now establish 
 the existence, monotonicity (in $s$), and finite volume bounds for the 
 Green function's free energy~\eqref{eq:phi}. 
It is more convenient to carry the analysis first for complex values of the energy parameter.  Thus, we extend the domain of  the function to include also $\mathbb{C}^+  = \{z \in \C | \, \Im z >0\}$, where the function is defined simply as
%
\begin{equation}\label{def:phixlimit}
  \varphi_\lambda(s;\zeta)  \ := \  \lim_{|x|\to \infty}  \frac{1}{|x|} \log \mathbb{E}\left[\left|G_\lambda(0,x;\zeta)\right|^s\right]     \, ,    
\end{equation}
for all $ \zeta  \in \mathbb{C}^+$.
For  the following statement, we recall that $ \varsigma \in( 0,1) $ is a moment for which it is assumed that $ \E[|V(0)|^\varsigma] < \infty $.

\begin{theorem} \label{thm:phi}
\begin{enumerate}
\item {\rm At any value of the energy parameter in the upper half-plane, $ \zeta \in \mathbb{C}^+ $:}~~For all $ s \in [-\varsigma,\infty) $ the limit  in~\eqref{def:phixlimit} exists and the function $  [-\varsigma,\infty)\ni  s \mapsto \varphi_\lambda(s;\zeta) $ has the following properties:
\begin{enumerate}
	\item $\varphi_\lambda(s;\zeta) $ is convex and non-increasing in $s\in [-\varsigma,\infty)$.
\item For $ s \in [0,2]$:  
\be\label{eq:exclusion}   
-s \, L_\lambda(\zeta) \ \leq \ \varphi_\lambda(s;\zeta) \  \le \ - s \, \log \sqrt{K} \, ,
\ee 
where $ L_\lambda(\zeta) := - \E\left[ \log \left|G_\lambda(0,0;\zeta)\right| \right] $ is the Lyapunov exponent. 
\item 
For any $ s \in [-\varsigma, \infty) $
and $x\in \T$:
 \begin{equation}\label{eq:finitevolume}
 	C_\pm(s;\zeta) ^{-2}  \, e^{|x| \, \varphi_\lambda(s;\zeta) }  \ \leq \  \mathbb{E}\left[\left|G_\lambda(0,x;\zeta)\right|^s\right] \ \leq \ C_\pm(s;\zeta) ^2\,  e^{|x| \, \varphi_\lambda(s;\zeta) } 
 \end{equation}
with $ C_\pm(s;\zeta)  \in (0,\infty) $, which  at any fixed $ s \in [-\varsigma,1) $  
are bounded uniformly in  $  \zeta  \in K + i (0,1] $ for any compact $ K \subset \mathbb{R} $.   
 \item The derivative at $ s =0 $ is given by the (negative) Lyapunov exponent,  i.e.\ for all $ \zeta \in \mathbb{C}^+ $:
 \begin{equation}\label{eq:derphi}
  \frac{\partial \varphi_\lambda}{\partial s} (0;\zeta) = - L_\lambda(\zeta) \, . 
  \end{equation}
\end{enumerate}
\item  {\rm At Lebesgue-almost all real energies, $ E \in \mathbb{R} $:}~~for all $ s \in [-\varsigma,1) $  the limit in \eqref{eq:phi} exists and is finite.  The function  $  [-\varsigma,1)   \ni s \mapsto \varphi_\lambda(s; E) $ 
coincides with the limiting value of   
$ \varphi_\lambda$, i.e., for all $ s \in [-\varsigma,1) $ and  all $ E \in \mathbb{R} $:
\begin{align} \label{eq:2phi}
\varphi_\lambda(s; E) \ & = \  \lim_{\eta \downarrow 0 }\,  \varphi_\lambda(s; E+ i\eta) \notag \\
   &   =  \lim_{\substack{|x|\to \infty \\ \eta \downarrow 0}} \, \frac{1}{|x|} \, \log \mathbb{E}\left[\left|G_\lambda(0,x; E+i \eta)\right|^s\right]  \, .
\end{align}
In particular, within the reduced range:   $s\in [-\varsigma,1)$, the function $\varphi_\lambda (s; E) $ shares the properties listed in {\it (a)}-{\it (c)}, and the  Lyapunov exponent relation  \eqref{eq:derphi} also holds for almost all real values of  $\zeta \, (= E)$.  
\end{enumerate}
\end{theorem}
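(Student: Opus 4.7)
The argument rests on the Markov structure of the truncated Green functions $\Gamma(x_k;\zeta)$ along a descending path $x_0=0,x_1,\dots,x_n$: the factorization~\eqref{factorization} together with the recursion~\eqref{eq:recur} shows that, read from deep to shallow, the sequence $(\Gamma(x_n;\zeta),\dots,\Gamma(x_0;\zeta))$ is a stationary Markov chain with invariant law equal to that of $\Gamma=G_\lambda(0,0;\zeta)$, whose one-step transition is obtained by applying the recursion with a fresh potential and $K-1$ independent stationary copies of $\Gamma$. All subsequent moment identities will be derived from this picture.

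For existence of the limit at $\zeta\in\mathbb{C}^+$, the plan is to apply~\eqref{eq:fact5} at an intermediate vertex $x_m$:
\[
G_\lambda(0,x_{n+m};\zeta) \;=\; G_\lambda^{\T_{x_m}}(0,x_{m-1};\zeta)\cdot G_\lambda(x_m,x_m;\zeta)\cdot G_\lambda^{\T_{x_m}}(x_{m+1},x_{n+m};\zeta).
\]
The first and third factors depend on disjoint components of $\T_{x_m}$ and are therefore independent, while the middle factor is almost surely bounded by $(\Im\zeta)^{-1}$. The distributional isomorphism of $x_{m+1}$'s subtree with $\T$ identifies the third factor in law with $G_\lambda(0,x_{n-1};\zeta)$, and an analogous comparison for the first factor yields an almost-submultiplicative inequality
\[
a_{n+m}(s;\zeta) \;\leq\; C(s,\Im\zeta)\,a_{m-1}(s;\zeta)\,a_{n-1}(s;\zeta), \qquad a_n(s;\zeta):=\mathbb{E}[|G_\lambda(0,x_n;\zeta)|^s].
\]
A dual almost-supermultiplicative bound is extracted by restricting the middle factor to a positive-probability event on which $|G_\lambda(x_m,x_m;\zeta)|$ is bounded from below (guaranteed by Assumption~\ref{assD}). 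The Fekete--Hammersley lemma then delivers both the existence of $\varphi_\lambda(s;\zeta)=\lim n^{-1}\log a_n(s;\zeta)$ and the finite-volume sandwich~\eqref{eq:finitevolume}, with $C_\pm$ absorbing the boundary corrections.

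Convexity of $s\mapsto\varphi_\lambda(s;\zeta)$ is inherited via pointwise limits from the convexity of $s\mapsto\log a_n(s;\zeta)$, which is H\"older's inequality. Monotonicity comes from the a priori operator-norm bound $|G_\lambda(x,y;\zeta)|\leq(\Im\zeta)^{-1}$: for $s_1\leq s_2$ one has $|G|^{s_2}\leq(\Im\zeta)^{-(s_2-s_1)}|G|^{s_1}$, whence $\varphi_\lambda(s_2;\zeta)\leq\varphi_\lambda(s_1;\zeta)$. The lower bound in~\eqref{eq:exclusion} is Jensen's inequality $\log\mathbb{E}[|G|^s]\geq s\,\mathbb{E}[\log|G|]$ combined with the stationary identity $\mathbb{E}[\log|\Gamma(x_k;\zeta)|]=-L_\lambda(\zeta)$. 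The upper bound starts from the $\ell^2$-resolvent estimate $\sum_{|y|=n}|G_\lambda(0,y;\zeta)|^2\leq(\Im\zeta)^{-2}$ combined with translation invariance, which yields $\varphi_\lambda(2;\zeta)\leq-\log K$; convexity with $\varphi_\lambda(0;\zeta)=0$ extends this to all $s\in[0,2]$. Formula~\eqref{eq:derphi} is obtained by expanding $|G|^s=1+s\log|G|+O(s^2(\log|G|)^2)$ and justifying the interchange of $\partial_s$ with $n\to\infty$ via the uniform integrability supplied by the finite-volume bounds.

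For part~2 the plan is to pass $\eta\downarrow 0$ in $\zeta=E+i\eta$. The reduction to $s\in[-\varsigma,1)$ reflects the moment constraints available at the spectrum: the Aizenman--Molchanov fractional-moment bounds furnish $\mathbb{E}[|G_\lambda(0,x;E+i0)|^s]<\infty$ for $s<1$, while Assumption~\ref{assD} together with $\varrho\in L^\infty$ controls negative moments down to~$-\varsigma$. The main technical hurdle is uniform-in-$\eta$ control of the constants $C_\pm(s;\zeta)$ in~\eqref{eq:finitevolume} as $\Im\zeta\downarrow 0$ on compact intervals of $\Re\zeta$ at each fixed $s<1$; this is precisely where the fractional-moment machinery enters. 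Once this uniformity is in place, dominated convergence combined with the a.e.\ existence of Herglotz boundary values $G_\lambda(0,x;E+i0)=\lim_{\eta\downarrow 0}G_\lambda(0,x;E+i\eta)$ delivers the double-limit identification~\eqref{eq:2phi} and transfers properties (a)--(c) and~\eqref{eq:derphi} to almost every $E\in\mathbb{R}$.
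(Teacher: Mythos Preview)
Your outline follows the paper's approach closely: factorization at an interior vertex, sub- and super-multiplicativity, Fekete's lemma for existence and the finite-volume sandwich, and then the standard convexity/H\"older, operator-norm, Jensen, and $\ell^2$-resolvent arguments for (a), (b), (d). Your identification of the uniform-in-$\eta$ control of the multiplicative constants as the crux of part~2 is also exactly right.

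There is, however, a genuine gap in the super-multiplicative step. You claim that a lower bound ``is extracted by restricting the middle factor to a positive-probability event on which $|G_\lambda(x_m,x_m;\zeta)|$ is bounded from below.'' But
\[
G_\lambda(x_m,x_m;\zeta)\;=\;\bigl(\lambda V(x_m)-\zeta-\textstyle\sum_{v\in\mathcal{N}_{x_m}}G^{\T_{x_m}}(v,v;\zeta)\bigr)^{-1},
\]
and the self-energy on the right depends on \emph{both} components of $\T_{x_m}$: the backward one through $G^{\T_{x_m}}(x_{m-1},x_{m-1};\zeta)$ and the forward one through $G^{\T_{x_m}}(x_{m+1},x_{m+1};\zeta)$. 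Hence the middle factor is correlated with each of the two outer factors, and conditioning on an event of the form $\{|G_\lambda(x_m,x_m;\zeta)|\ge c\}$ destroys the product structure you need; you cannot then split the remaining expectation into $a_{m-1}\cdot a_{n-1}$. (Assumption~\ref{assD} is not the relevant input here either---it is a regularity condition on $\varrho$ used for the density-of-states argument, not for these moment comparisons.)

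The paper resolves this by working with the doubly truncated quantities $\alpha_n=\mathbb{E}\bigl[|G^{\T_{x_{n+1}}}(0,x_n;\zeta)|^s\bigr]$ and proving, in Lemma~\ref{lem:submult}, two-sided bounds on the \emph{tilted} average $Av_u^{(s)}[|G^{\T_x}(u,u;\zeta)|^s]$. For the lower bound it first restricts to the event that all neighboring Green functions satisfy $|G^{\T_{u,x}}(v,v;\zeta)|\le t$; on this event $|\sigma_u|$ is deterministically bounded, so integrating over $V(u)$---which \emph{is} independent under the tilted measure---yields a uniform positive constant. The tilted probabilities of the restricting events for the on-path neighbors $u_\pm$ are controlled separately via the weighted fractional-moment bound~\eqref{eq:pickt2}, which is where Assumption~\ref{assD} actually enters. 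A further comparison (Lemma~\ref{lem:GG2}) relates $\alpha_n$ back to your $a_n$. Once the argument is patched along these lines it goes through, and the remaining steps in your sketch are correct.
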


  
The relation \eqref{eq:2phi} in particular asserts that for $s\in [-\varsigma,1)$ the limits $ \eta \downarrow 0 $ and $ |x| \to \infty $ commute. 
This does not generally extend to $s\ge 1$,  in which case the limit $\eta \downarrow 0$ may diverge if taken first (for $E$ in the regime of pure-point spectrum), while the quantity  on the left is finite and non-increasing in $s$ for all $s\ge  -\varsigma$. 
However, let us add that under certain conditions the constraint $s<1$ could be lifted.  As it should be clear from the proof in Section~\ref{sec:thmproof}, the  relevant condition for  the  finite volume bounds~\eqref{eq:finitevolume} as well as  \eqref{eq:2phi} is  that at the given $s$ and $E=\Re \zeta$  
   the super- and sub-multiplicativity bounds  of Lemma~\ref{lem:submult} and Lemma~\ref{lem:GG2}  hold with constants which are uniform in $\Im \zeta$.   This condition could  be satisfied even  at $s\ge 1$ if, for instance,   the $s$-moments of the Green function factors which yield these constants stay finite as $\eta \searrow 0$ due to a smoothing effect of the absolutely continuous spectrum.

 \subsubsection{Auxiliary results}

Our proof of Theorem~\ref{thm:phi} is based on super- and sub-multiplicativity in $|x|$ of the Green function's moments, properties 
which are related to the Green function's factorization.  
  
Following is   the  essential statement.
%
%

\begin{lemma} \label{lem:submult}
If either  $s \in [-\varsigma,\infty)$ and $ \zeta  \in \C^+ $, or $ s \in [-\varsigma,1) $ and $ \zeta = E+i0 $, then for any two vertices  $0 \prec u \prec x $ (and $ u_\pm $ and $ x_- $  defined in \eqref{eq:fact5}): 
\be \label{eq:surface}
C_-(s;\zeta)^{-1} \ \le \   \frac{\Ev{|G^{\T_{x}}(0,x_-; \zeta)|^s}} {  \Ev{|G^{\T_{u}}(0,u_-; \zeta)|^s} \,  \Ev{ |G^{\T_{u,x}}(u_+,x_-; \zeta)|^s}   } 
   \ \le \  C_+(s;\zeta)
\ee 
with some $ 0 < C_+(s;\zeta), C_-(s;\zeta) < \infty $ which, at fixed  $ s \in [-\varsigma, 1) $ are uniformly bounded in $ \zeta \in K+i (0,1] $ for any compact $ K \subset \mathbb{R} $.   Furthermore for fixed 
$s$ and $\zeta$, within the above range,  
\be \label{eq:C=>1}
 \lim_{s\to 0} C_-(s;\zeta) =   \lim_{s\to 0} C_+(s;\zeta)  = 1  \, .
 \ee  
\end{lemma}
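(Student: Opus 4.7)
The plan is to combine the partition identity~\eqref{eq:fact5} with the recursion~\eqref{eq:recur_gen} and a Wegner-type integration over $V(u)$. Applying~\eqref{eq:fact5} on the subtree $\T_x$ with $u$ as the intermediate vertex yields
\[
G^{\T_x}(0,x_-;\zeta) \ = \ B\cdot M\cdot C,
\]
with $B := G^{\T_u}(0,u_-;\zeta)$, $M := G^{\T_x}(u,u;\zeta)$, $C := G^{\T_{u,x}}(u_+,x_-;\zeta)$, where the two "flanking" factors emerge after recognizing that, inside $\T_{x,u}$, the component containing $0$ is unaffected by the removal of $x$. By~\eqref{eq:recur_gen},
\[
M \ = \ \bigl(\lambda V(u) - \zeta - \beta - \gamma - \Sigma\bigr)^{-1},
\]
with $\beta := G^{\T_u}(u_-,u_-;\zeta)$, $\gamma := G^{\T_{u,x}}(u_+,u_+;\zeta)$, and $\Sigma := \sum_{w\in\mathcal{N}_u^+\setminus\{u_+\}}\Gamma(w;\zeta)$. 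The four pieces of randomness --- $(B,\beta)$ on the back side of $u$, $(C,\gamma)$ on the side between $u$ and $x$, $\Sigma$ on the other forward subtrees of $u$, and $V(u)$ --- depend on disjoint sets of vertices and so are mutually independent. In particular, independence of $C$ from $(B,M)$ yields
\[
\Ev{|G^{\T_x}(0,x_-;\zeta)|^s} \ = \ \Ev{|B|^s|M|^s}\cdot\Ev{|C|^s},
\]
so it suffices to sandwich $\Ev{|B|^s|M|^s}$ between constant multiples of $\Ev{|B|^s}$.

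For the upper bound, condition on everything except $V(u)$ and integrate in $V(u)$ first. For $s\in[0,1)$ the standard Wegner-type estimate
\[
\int\varrho(v)\,|\lambda v - a|^{-s}\,dv \ \le\ \lambda^{-s}\Bigl(\|\varrho\|_\infty\,\tfrac{2}{1-s}+1\Bigr)
\]
holds uniformly in $a\in\C$, giving $\Ev{|B|^s|M|^s}\le C_+(s)\,\Ev{|B|^s}$. The case $s\in[-\varsigma,0)$ is handled via $|M|^{|s|}\le(\lambda|V(u)|+|\zeta|+|\beta|+|\gamma|+|\Sigma|)^{|s|}$ together with the $\varsigma$-moment of $V$ (Assumption~\ref{assC}) and Aizenman--Molchanov moment bounds on $\beta,\gamma,\Sigma$; the case $s\ge 1$ with $\zeta\in\C^+$ follows from the Herglotz bounds $|M|,|\beta|,|\gamma|,|\Sigma|\lesssim(\Im\zeta)^{-1}$.

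For the lower bound the Wegner integral admits no uniform positive lower bound in $a$, so introduce the good event $\mathcal{E}_R := \{|\beta|\le R,\ |\gamma|\le R,\ |\Sigma|\le R\}$ on which $|a|\le|\zeta|+3R$ and
\[
\int\varrho(v)\,|v-a/\lambda|^{-s}\,dv \ \ge\ c(s,R,\lambda,\zeta)\ >\ 0
\]
(via Assumption~\ref{assE} in the unbounded case, or the support condition in the bounded case, combined with Assumption~\ref{assD}). Using the path factorization $B=\beta\cdot P$ (and analogously $C=\gamma\cdot Q$ with $P,Q$ products of $\Gamma_+$'s independent of $V(u_-)$, resp.\ $V(x_-)$), a conditional Wegner estimate over $V(u_-)$ gives $\Ev{|\beta|^s\indfct_{|\beta|>R}\,|\,P,\text{rest}}\le C\,R^{s-1}$, and a Hölder-type tail bound then yields $\Ev{|B|^s\indfct_{|\beta|>R}}=o(\Ev{|B|^s})$ as $R\to\infty$, uniformly in $\zeta\in K+i(0,1]$ by the a priori Aizenman--Molchanov bounds of~\cite{AM,A_wd}; the analogous statement holds for $C$ and $\gamma$, as well as $\P(|\Sigma|>R)\to 0$ uniformly. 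Combined with the independence of the four $\sigma$-algebras, this yields $\Ev{|B|^s|C|^s\indfct_{\mathcal{E}_R}}\ge c'\,\Ev{|B|^s}\Ev{|C|^s}$ for $R$ large, and hence the lower bound. The identity $\lim_{s\to 0}C_\pm(s;\zeta)=1$ follows from dominated convergence using $|M|^s\to 1$ pointwise and a majorant $\max(|M|^{-\varsigma},|M|)$ with finite expectation. The main technical obstacle is the uniformity of the lower bound as $\Im\zeta\downarrow 0$: the Herglotz bounds on $\beta,\gamma,\Sigma$ degenerate, and only the Aizenman--Molchanov a priori estimates restore the necessary control of the tail events.
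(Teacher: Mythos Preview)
Your factorization $G^{\T_x}(0,x_-;\zeta)=B\cdot M\cdot C$ and the decomposition of $M$ in terms of $\beta,\gamma,\Sigma,V(u)$ are correct, and the four blocks $(B,\beta)$, $(C,\gamma)$, $\Sigma$, $V(u)$ are indeed mutually independent. But the displayed claim
\[
\Ev{|B|^s|M|^s|C|^s}\;=\;\Ev{|B|^s|M|^s}\cdot\Ev{|C|^s}
\]
is false: $M$ depends on $\gamma$, which lives on the same subtree as $C$, so $C$ is \emph{not} independent of $(B,M)$. The correct reduction (this is what the paper does) is to recognize that the ratio in \eqref{eq:surface} equals the \emph{tilted} expectation
\[
\Avus{|M|^s}\;:=\;\frac{\Ev{|B|^s|C|^s\,|M|^s}}{\Ev{|B|^s}\,\Ev{|C|^s}},
\]
which is a genuine probability average because $B$ and $C$ are independent of each other. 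Your upper-bound argument survives this correction essentially unchanged: conditioning on everything but $V(u)$ and applying the Wegner bound gives $\Avus{|M|^s}\le C_+$ for $s\in[0,1)$, and the cases $s\ge 1$ (trivial bound $(\Im\zeta)^{-s}$) and $s\in[-\varsigma,0)$ (expand $|M|^{-s}$) go through as you indicate.

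The lower bound, however, has two real problems. First, you invoke Assumption~\ref{assE} (positivity of $\varrho$ on compacts) to get a pointwise lower bound on the Wegner integral, but the lemma is stated under Assumptions~\ref{assA}--\ref{assD} only; in the bounded-support case there is no guarantee that $a/\lambda$ lies anywhere near $\supp\varrho$, so the integral $\int\varrho(v)|v-a/\lambda|^{-s}dv$ can genuinely be tiny. The paper avoids this entirely: on the good event $\{|G^{\T_{u,x}}(v,v;\zeta)|\le t\ \text{for all }v\in\mathcal N_u\}$ one has the \emph{deterministic} bound $|M|\ge(\lambda|V(u)|+|\zeta|+(K{+}1)t)^{-1}$, and then Jensen's inequality on the $V(u)$-average, together with the $\varsigma$-moment hypothesis, gives the lower bound with no positivity of $\varrho$ needed. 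Second, your control of the bad event --- showing $\Ev{|B|^s\indfct_{|\beta|>R}}=o(\Ev{|B|^s})$ uniformly --- amounts to comparing $\Ev{|B|^s}$ with $\Ev{|P|^s}$ where $B=P\cdot\beta$, which is precisely the content of \eqref{eq:relative} in Lemma~\ref{lem:GG2}; but that lemma is proved in the paper \emph{using} Lemma~\ref{lem:submult}, so your argument is circular as written. The paper instead bounds the tilted probability $\Avus{|\beta|>t}$ directly via \eqref{eq:pickt2}, which comes from the regularity Assumption~\ref{assD} and requires no comparison of moments along the path.
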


\begin{proof}
Using the factorization representation \eqref{eq:fact5}, and the statistical independence of the two factors which are in the denominator of \eqref{eq:surface} we may write: 
\be  \label{eq:GG} 
   \frac{\Ev{|G^{\T_{x}}(0,x_-; \zeta)|^s}} {  \Ev{|G^{\T_{u}}(0,u_-; \zeta)|^s} \,  \Ev{ |G^{\T_{u,x}}(u_+,x_-; \zeta)|^s}   }   \  
  =\   \Avus{ |G^{\T_{x}}(u,u; \zeta)|^s}  
  \ee
  where $\Avus{\cdot}$ represents the weighted  probability average: 
  \be
    \Avus{Q} \ = \     \frac{ \Ev{|G^{\T_{u}}(0,u_-; \zeta)|^s \, |G^{\T_{u,x}}(u_+,x_-; \zeta)|^s\ \times \  Q}} 
    { \Ev{|G^{\T_{u}}(0,u_-; \zeta)|^s} \,  \Ev{ |G^{\T_{u,x}}(u_+,x_-; \zeta)|^s}   } 
  \ee
To estimate this quantity we note that by \eqref{eq:recur_gen}:
\be \label{eq:Grep}
G^{\T_{x}}(u,u; \zeta) \ = \ \Big( \lambda V(u)  - \zeta -  \sum_{v\in {\mathcal N}_u} G^{\T_{u,x}}(v,v;\zeta) \Big)^{-1}
\ee
\medskip

\noindent
{\it 1. The upper bound:} In case $ s \geq 1 $, the operator-theoretic bound
$|G^{\T_{x}}(u,u; \zeta)| \ \le \ (\Im \zeta)^{-1}$ yields the upper bound in \eqref{eq:surface} with $ C_+ := (\Im \zeta)^{-1} $.

In case $ s \in [0,1) $, the expression \eqref{eq:Grep} and \eqref{eq:fracmonb} readily imply that:
\begin{align} \label{eq:upperbav}
 \Avus{ |G^{\T_{x}}(u,u; \zeta)|^s}   \  \le    \frac{2^s \|\varrho \|_\infty^s}{(1-s) \, \lambda^s} \quad \left( =: \, C_+ \, \right) \, . 
\end{align} 
In case $ s \in [-\varsigma, 0) $, the expression~ \eqref{eq:Grep} together with the inequality $ (|a|+|b|)^\sigma \leq |a|^\sigma + |b|^\sigma $ for $ \sigma \in[0,1] $ also implies:
\begin{multline}
 \Avus{ |G^{\T_{x}}(u,u; \zeta)|^s}  \ \leq \  \lambda^{-s} \E\left[|V(u)|^{-s}\right] + |\zeta|^{-s} + \sum_{v\in {\mathcal N}_u}\Avus{|G^{\T_{u,x}}(v,v;\zeta)|^{-s}  }\, . 
\end{multline} 
To bound the terms $ v \not\in \{ u_- , u_+ \} $, we use~\eqref{eq:upperbav} to conclude that
\begin{equation}
\Avus{|G^{\T_{u,x}}(v,v;\zeta)|^{-s}  } \leq   \frac{\lambda^s }{(1+s) \,2^s  \|\varrho \|_\infty^{s}} \, . 
\end{equation}
In the remaining cases $ v \in \{ u_- , u_+ \} $, we use the factorization property~\eqref{eq:factor2}, Jensen's inequality and~\eqref{eq:upperbav} to conclude: 
\begin{align}\label{eq:C_+}
& \Avus{|G^{\T_{u}}(u_-,u_-;\zeta)|^{-s}  }  \  = \ \left[ Av_{u_-}^{(s)}\left( |G^{\T_{u}}(u_-,u_-;\zeta)|^{s} \right) \right]^{-1}  \notag \\
  &  \leq \ Av_{u_-}^{(s)}\left( |G^{\T_{u}}(u_-,u_-;\zeta)|^{-s} \right) \  \leq \ \frac{\lambda^s }{(1+s) \, 2^s \|\varrho \|_\infty^{s}}  \quad \left( =: \, C_+ \, \right) \, , 
 \end{align}
and similarly for $ u_+ $. (Note that in case $ u_- = 0 $, the definition of $ Av_{u_-}^{(s)} $ extends naturally.)   \\ 

\noindent
{\it 2. The lower bound:}  First assume that $ s > 0 $. The expression \eqref{eq:Grep} implies for any $ t > 0 $ and any $ \varepsilon \in (0,\min\{\varsigma,s\}]$:
\begin{align} 
& \Avus{  | G^{\T_{x}}(u,u; \zeta)|^s } \
 \geq \  \Avus{  \frac{\indfct\left[ \mbox{For all $ v \in \mathcal{N}_u$:} \quad |G^{\T_{u,x}}(v,v;\zeta)| \leq t    \right] }{\left[\lambda |V(u)| + |\zeta| + (K+1) \, t \right]^{s}}  }\notag \\
& \geq \  \frac{\prod_{ v \in \mathcal{N}_u}  \Avus{  \indfct\left[  |G^{\T_{u,x}}(v,v;\zeta)| \leq t    \right]}}{ \left[\, \lambda^\varepsilon \, \Ev{|V(0)|^\varepsilon} + |\zeta|^\varepsilon+ (K+1)^\varepsilon\,  t^\varepsilon\right]^{s/\varepsilon}}    \, .
\end{align}  
The last inequality derives from that fact that the random variables appearing in the numerator and $ V(u) $ are independent (even with respect to $ \Avus{\cdot} $), and Jensen's inequality, which yields $ \mathbb{E}\left[|Q|^{-s} \right] \ \ge \  
\E\left[  |Q|^{-\varepsilon} \right]^{s/\varepsilon } \geq \E\left[  |Q|^{\varepsilon} \right]^{-s/\varepsilon }$. 
We now choose $ t \equiv t(s) $ large enough, so that $ \Avus{  \indfct\left[  |G^{\T_{u,x}}(v,v;\zeta)| \leq t \right]} \geq 1-s$. 
In case $ v \not\in \{ u_- , u_+ \} $ this is quantified in the estimate~\eqref{eq:pickt}, and in case $ v \in \{ u_- , u_+ \} $ in~\eqref{eq:pickt2}. 

If $ s \in [-\varsigma, 0] $, we use the Jensen inequality together with~\eqref{eq:upperbav}  to conclude that
\begin{equation}
	\Avus{  | G^{\T_{x}}(u,u; \zeta)|^s }  \geq \frac{1}{\Avus{  | G^{\T_{x}}(u,u; \zeta)|^{-s} } } \geq  \ \frac{(1+s) \, \lambda^s}{ 2^s  \|\varrho \|_\infty^s} \quad \left( =: \, C_-^{-1} \, \right)\, ,
\end{equation}
which completes the proof of~\eqref{eq:surface}, and by  inspection also of~\eqref{eq:C=>1}. 
\end{proof}


The above lemma addresses the Green function restricted to subgraphs. Arguments used in the proof also imply that the full Green function may in fact be compared with its restricted versions. Moreover, the effect of peeling off one vertex is bounded:
\begin{lemma} \label{lem:GG2} 
Under the assumptions of Lemma~\ref{lem:submult}, let $ x_{--} $ stand for the neighbor of $ x_- $ towards the root:
 \begin{align}
  \label{eq:relative}
C_-(s;\zeta)^{-1} \  \leq \ &  \frac{\Ev{|G^{\T_{x}}(0,x_-; \zeta)|^s}}{\Ev{|G^{\T_{x_-}}(0,x_{--}; \zeta)|^s} } \ \leq \ C_+(s;\zeta) \, , \\
\label{eq:restunrest}
[C_+(s;\zeta) C_-(s;\zeta)]^{-1}
 \ \le \ &  \frac{\Ev{|G(0,x_-; \zeta)|^s} }{  \Ev{|G^{\T_{x}}(0,x_-; \zeta)|^s}}  \ \le \  C_+(s;\zeta) C_-(s;\zeta) \, , 
\end {align}
where $ x_{--} $ is the neighbor of $ x_- $ towards the root. 
\end{lemma}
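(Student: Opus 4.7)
My plan is to derive both bounds via suitable factorizations of the Green functions and then apply the Aizenman--Molchanov type fractional-moment estimates, following the template already laid down in the proof of Lemma~\ref{lem:submult}.

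For~\eqref{eq:relative}, applying~\eqref{eq:factor2} within the subgraph $\T_x$ at the terminal vertex $x_-$ gives
\begin{equation}
G^{\T_x}(0,x_-;\zeta) \ = \ G^{\T_{x_-}}(0,x_{--};\zeta)\cdot G^{\T_x}(x_-,x_-;\zeta),
\end{equation}
since the further removal of $x$ from $\T_{x_-}$ does not affect its $0$-component. Setting $A := G^{\T_{x_-}}(0,x_{--};\zeta)$ and introducing the weighted average $\langle Q\rangle_A := \Ev{|A|^s \, Q}/\Ev{|A|^s}$, the ratio on the left of~\eqref{eq:relative} equals $\langle |G^{\T_x}(x_-,x_-;\zeta)|^s\rangle_A$. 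The recursion~\eqref{eq:recur_gen}, applied at $x_-$ inside $\T_x$, expresses this Green function as the reciprocal of
\begin{equation}
\lambda V(x_-) - \zeta - G^{\T_{x_-}}(x_{--},x_{--};\zeta) - \sum_{v} \Gamma_-(v;\zeta),
\end{equation}
the sum running over the forward children of $x_-$ other than $x$. Since $V(x_-)$ and each $\Gamma_-(v;\zeta)$ are independent of the $\sigma$-algebra $\mathcal{F}_A$ generated by the potentials in the $0$-component of $\T_{x_-}$, integrating $V(x_-)$ out first and invoking~\eqref{eq:fracmonb} supplies the upper bound for $s\in [0,1)$ exactly as in~\eqref{eq:upperbav}; for $s\ge 1$ the deterministic bound $|G^{\T_x}(x_-,x_-;\zeta)|\le (\Im\zeta)^{-1}$ suffices. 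The matching lower bound follows from the Chebyshev-type argument in step~2 of the proof of Lemma~\ref{lem:submult}, by choosing a threshold $t$ so that the events $\{|G^{\T_{x_-}}(x_{--},x_{--};\zeta)|\le t\}$ and $\{|\Gamma_-(v;\zeta)|\le t\}$ carry most of the $\langle\cdot\rangle_A$-mass.

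For~\eqref{eq:restunrest}, a second application of~\eqref{eq:factor2}, this time on the full tree $\T$, yields the companion identity $G(0,x_-;\zeta) = A\cdot G(x_-,x_-;\zeta)$, so that
\begin{equation}
\frac{\Ev{|G(0,x_-;\zeta)|^s}}{\Ev{|G^{\T_x}(0,x_-;\zeta)|^s}} \ = \ \frac{\langle |G(x_-,x_-;\zeta)|^s\rangle_A}{\langle|G^{\T_x}(x_-,x_-;\zeta)|^s\rangle_A}.
\end{equation}
The recursion for $G(x_-,x_-;\zeta)$ differs from that for $G^{\T_x}(x_-,x_-;\zeta)$ only by the inclusion of the extra term $\Gamma_-(x;\zeta)$, itself independent of $\mathcal{F}_A$, of $V(x_-)$ and of the sibling $\Gamma$'s; the very same argument therefore traps both weighted averages inside $[C_-^{-1},C_+]$, and the ratio is confined to $[(C_+C_-)^{-1}, C_+C_-]$.

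The principal technical obstacle is the branch $s\in[-\varsigma,0)$ of~\eqref{eq:relative}: expanding $|G^{\T_x}(x_-,x_-;\zeta)|^{s}=|Z|^{|s|}$ by the triangle inequality for the positive exponent $|s|$ produces a contribution proportional to $\langle |G^{\T_{x_-}}(x_{--},x_{--};\zeta)|^{|s|}\rangle_A$, and here the ``constant'' $G^{\T_{x_-}}(x_{--},x_{--};\zeta)$ is $\mathcal{F}_A$-measurable, so the clean independence exploited in Lemma~\ref{lem:submult} is unavailable. My remedy is a further factorization $A = A'\cdot G^{\T_{x_-}}(x_{--},x_{--};\zeta)$, with $A'$ independent of the rightmost factor (a third application of~\eqref{eq:factor2}); for $s<0$ the exponents telescope to give
\begin{equation}
\langle|G^{\T_{x_-}}(x_{--},x_{--};\zeta)|^{|s|}\rangle_A \ = \ \frac{1}{\Ev{|G^{\T_{x_-}}(x_{--},x_{--};\zeta)|^s}},
\end{equation}
and a uniform lower bound on the denominator will follow from the elementary estimate $\inf_{c\in\mathbb{C}}\int \varrho(v)\,|\lambda v - c|^{|s|}\,dv > 0$, which holds because the integrand is continuous in $c$ and diverges as $|c|\to\infty$ thanks to Assumption~\ref{assC}.
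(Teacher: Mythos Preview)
Your approach is the same as the paper's for \eqref{eq:relative} (factorize, then rerun the estimates from Lemma~\ref{lem:submult}), and for \eqref{eq:restunrest} it is a minor, arguably cleaner, variant: the paper reaches \eqref{eq:restunrest} indirectly, by first comparing $\Ev{|G(0,x;\zeta)|^s}$ with $\Ev{|G^{\T_x}(0,x_-;\zeta)|^s}$ via the factorization $G(0,x;\zeta)=G^{\T_x}(0,x_-;\zeta)\,G(x,x;\zeta)$ and then invoking \eqref{eq:relative} to shift by one vertex, whereas you factor both $G(0,x_-;\zeta)$ and $G^{\T_x}(0,x_-;\zeta)$ against the same weight $A$ and bound each quotient directly.

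There is, however, a genuine error in the $s<0$ branch. Your claim that $A'=G^{\T_{x_{--}}}(0,x_{---};\zeta)$ is independent of $B:=G^{\T_{x_-}}(x_{--},x_{--};\zeta)$ is false: $B$ is the diagonal Green function on the \emph{root}-component of $\T\setminus\{x_-\}$, which strictly contains the root-component of $\T\setminus\{x_{--}\}$ on which $A'$ lives; thus the random variables share the same potential values. Consequently the telescoping only yields
\[
\langle|B|^{|s|}\rangle_A \;=\; \frac{\Ev{|A'|^s}}{\Ev{|A'|^s\,|B|^s}} \;=\; \big[\langle |B|^s\rangle_{A'}\big]^{-1},
\]
not $\big(\Ev{|B|^s}\big)^{-1}$ as you wrote. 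The repair is easy and mirrors \eqref{eq:C_+} in Lemma~\ref{lem:submult}: since $V(x_{--})$ \emph{is} independent of the weight $A'$, either (i) apply your $\inf_{c}\int\varrho(v)\,|\lambda v - c|^{|s|}\,dv>0$ estimate conditionally to lower-bound $\langle|B|^s\rangle_{A'}$, or (ii) invoke Jensen to pass from $[\langle |B|^s\rangle_{A'}]^{-1}$ to $\langle |B|^{-s}\rangle_{A'}$ and then integrate out $V(x_{--})$ via \eqref{eq:fracmonb}. The second route recovers precisely the constant $C_+$ from Lemma~\ref{lem:submult}, which is what the statement of the lemma requires.
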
  

\begin{proof} 
For the proof of~\eqref{eq:relative} we use the factorization of the Green function: 
\begin{equation}
 G^{\T_{x}}(0,x_-; \zeta) = G^{\T_{x_-}}(0,x_{--}; \zeta) \, G^{\T_{x}}(x_-,x_-; \zeta) \, . 
 \end{equation}
 Since the last factor is of the form~\eqref{eq:Grep}, the argument used in the proof of Lemma~\ref{lem:submult} yields~\eqref{eq:relative}. 

For a proof of~\eqref{eq:restunrest}  we employ the factorization:
\begin{equation} 
G(0,x; \zeta) =   G^{\T_{x}}(0,x_-; \zeta)  \ G (x,x; \zeta) \, .
\end{equation}
Thus, by arguments as in the proof of Lemma~\ref{lem:submult}, the quantity $ \Ev{|G(0,x; \zeta)|^s} $ is bounded from above and below in terms of $ \Ev{|G^{\T_{x}}(0,x_-; \zeta)|^s} $. Since the latter lacks $ x  $, we apply~\eqref{eq:relative}  to append this vertex. 
\end{proof}

 \subsubsection{Proof of Theorem~\ref{thm:phi} }    \label{sec:thmproof}  
 
We now turn to the main results on the free energy function.  In this context, we recall that a supermultiplicative positive sequence is one satisfying: 
$  \alpha_{m+n} \ge B \,  \alpha_m \, \alpha_n >0 $.  
By Fekete's lemma \cite{Fek} for such sequences the limit 
$ \lim_{n\to \infty} \ n^{-1}  \log \alpha_n =: \Psi  $, 
exists and  $ \alpha_m \ \le \ B ^{-1} e^{m \Psi  } $ for every $m\in \N$. 
For submultiplicative sequences the reversed inequalities hold.  
 
 \begin{proof}[Proof of Theorem~\ref{thm:phi}]
In the following we pick a simple path in $ \T $ to infinity, and label its vertices by $ 0=: x_0  , x_1, x_2, \dots$.   
We first show that 
\begin{equation}
	\alpha_n(\zeta) := \mathbb{E}\left[ \left| G^{\T_{x_{n+1}}}(x_0,x_n;\zeta)\right|^s \right] 
\end{equation}
is supermultiplicative in the two cases of interest: 1.~$ s \in [-\varsigma,\infty) $ and $ \zeta \in \mathbb{C}^+ $ and 2.~$ s \in [-\varsigma,1) $ and $ \zeta = E + i0 $. In both cases, the factorization property~\eqref{eq:fact5},  Lemma~\ref{lem:submult}  and~\eqref{eq:relative}  imply for all $ n,m \in \mathbb{N} $:
\begin{equation}
	\alpha_{n+m+1}(\zeta) \ \geq \ C_-^{-1}\,  \alpha_n(\zeta)\,  \alpha_m(\zeta) \ \geq (C_+ C_-)^{-1} \,  \alpha_{n+1}(\zeta)\,  \alpha_m(\zeta)\, .
\end{equation}
By Fekete's lemma~\cite{Fek}, the limit $ \Psi(\zeta) := \lim_{n\to \infty} n^{-1} \log \alpha_n(\zeta) $ exists.   

Analogous reasoning using Lemma~\ref{lem:submult} and~\eqref{eq:relative} also show submultiplicativity, i.e., for all $ n,m \in \mathbb{N} $:
\begin{equation}
	\alpha_{n+m+1}(\zeta) \ \leq \ C_+ \, \alpha_n(\zeta) \, \alpha_m(\zeta) \ \leq \  C_+ C_- \  \alpha_{n+1}(\zeta) \, \alpha_m(\zeta) \, .
\end{equation}
By super- and sub-multiplicativity, the limit $ \Psi(\zeta) $ provides both an upper and lower bound on $ \alpha_m(\zeta) $ for  any $ m \in \mathbb{N} $:
\begin{equation}\label{eq:fekub}
(C_+ C_-)^{-1}  \, e^{m \Psi(\zeta)} \leq	\alpha_m(\zeta) \  \ \leq \ C_+ C_- \, e^{m \Psi(\zeta)} \, . 
\end{equation}
To establish the existence of the limits~\eqref{def:phixlimit}  and~\eqref{eq:phi}, we use~\eqref{eq:fekub} and~\eqref{eq:restunrest}  which reads
\begin{equation}\label{eq:restun1}
C_\pm^{-1} \,  \alpha_n(\zeta)  \leq \ \mathbb{E}\left[ \left| G(x_0,x_n;\zeta)\right|^s \right]  \leq \ C_\pm\,  \alpha_n(\zeta) \, .  
\end{equation}
with $ C_\pm := C_+ C_-$.   
Hence the limits~\eqref{def:phixlimit}  and~\eqref{eq:phi} agree with $ \Psi(\zeta) = \varphi_\lambda(s;\zeta) $ in both cases: {\it i.}~$ s \in [-\varsigma,\infty) $ and $ \zeta \in \mathbb{C}^+ $ and {\it ii.}~$ s \in [-\varsigma,1) $ and $ \zeta = E + i0 $. 

Since for any fixed $ s \in [-\varsigma,1) $ and $E\in \R$ the constants $ C_+ , C_- , C_\pm$ are  bounded uniformly in $ \Im \zeta \in (0,1] $,  the convergence~\eqref{def:phixlimit} is also uniform with respect to $ \Im \zeta \in (0,1] $, 
and the limits $ \eta \downarrow 0 $ and $ |x| \to \infty $ can be taken in any order.   This proves~\eqref{eq:2phi}.  

The finite-volume bounds~\eqref{eq:finitevolume}  now follow from~\eqref{eq:fekub}
and~\eqref{eq:restun1}.  

It remains to establish the properties listed in {\it (a)}, {\it (b)} and {\it (d)}.
Since the prelimits are convex functions of $ s $, the limit is convex. Since  for any $ \epsilon \geq 0 $
\begin{equation}
 \mathbb{E}\left[ \left| G(0,x;\zeta)\right|^{s+\epsilon} \right] \leq (\Im \zeta)^{-\epsilon} \;  \mathbb{E}\left[ \left| G(0,x;\zeta)\right|^{s} \right]  \, ,
 \end{equation}
 the limit~\eqref{def:phixlimit} is non-increasing in $ s $. This concludes the proof of {\it (a)}.

The first inequality in~\eqref{eq:exclusion} is a consequence of convexity and the factorization property~\eqref{factorization} of the Green function. In fact, if either  1.~$ s \in [-\varsigma,\infty) $ and $ \zeta \in \mathbb{C}^+ $ or 2.~$ s \in [-\varsigma,1) $ and $ \zeta = E + i0 $:
\begin{equation}\label{eq:lowerbpL}
	\log \mathbb{E}\left[  \left| G(0,x;\zeta)\right|^s \right] \ \geq \ s \, \mathbb{E}\left[\log  \left| G(0,x;\zeta)\right| \right] \ = \ - s \, |x| \,  L(\zeta) \, .
\end{equation}
The second inequality in~\eqref{eq:exclusion} relies on the following bound on the sums of squares of Green functions
\begin{equation}\label{eq:sumrule}
	\sum_{|x|= n}  |G(0,x;\zeta)|^2 \leq \sum_{x\in \T}  |G(0,x;\zeta)|^2 = \frac{\Im G(0,0;\zeta) }{\Im \zeta } \leq \frac{1}{(\Im \zeta )^2} \, . 
\end{equation}
From the finite-volume bounds~\eqref{eq:finitevolume}, we conclude that for any $ n  = \dist(x,0) \in \mathbb{N} $:
\begin{align}
	K^{n } \, e^{n \, \varphi(2;\zeta)} & \ \leq \ C_\pm^2 \, K^{n } \;  \mathbb{E}\left[  |G(0,x;\zeta)|^{2}\right] \notag \\
	& \ = \,   C_\pm^2 \;  \mathbb{E}\Big[  \sum_{|x| = n } |G(0,x;\zeta)|^2 \Big] \leq   \frac{ C_\pm^2}{ (\Im \zeta)^{2}} \, . 
\end{align}
The right side is independent of $ n $, and thus $ \varphi(2;\zeta) +  \log K \leq 0 $. Since $  \varphi(0;\zeta) = 0 $, convexity implies $   \varphi(s;\zeta) \leq - s \, \log \sqrt{K} $ for all  $ s \in [0,2] $. This concludes the proof of~{\it (b)}. 

Let us now turn to the differentiability property~{\it (d)}. If either $ s \in [-\varsigma,\infty) $ and $ \zeta \in \mathbb{C}^+ $ or~$ s \in [-\varsigma,1) $ and $ \zeta = E + i0 $, the factorization property~\eqref{factorization} of the Green function, \eqref{eq:exclusion} and the finite-volume bounds~\eqref{eq:finitevolume} imply:
\begin{align}
	0 \ \leq & \ \varphi(s;\zeta) + s \, L(\zeta) \notag \\ 
	 \leq & \ \frac{1}{|x|} \left(  \log \mathbb{E}\left[| G(0,x;\zeta)|^s \right] - \mathbb{E}\left[\log | G(0,x;\zeta)|^s \right] \right) + \frac{\log C_\pm^2}{|x|} \notag \\
	\leq & \  \frac{s^2}{ 2 |x|} \,  \mathbb{E}\left[\left(\log | G(0,x;\zeta)| \right)^2 \left(  | G(0,x;\zeta)|^s +1 \right) \right]   + \frac{\log C_\pm^2}{|x|}  \, . 
\end{align}
Here the last inequality derives from the two elementary bounds $ e^\alpha \leq 1 + \alpha + \alpha^2 ( e^\alpha + 1 ) /2 $ and $ 1+\beta \leq e^\beta $ valid for all $ \alpha, \beta \in \mathbb{R} $. 
Using the fractional moment bounds~\eqref{eq:fracmonb} and the factorization property of the Green function, it  is easy to check that there is some constant $ C < \infty $ such that for all $ s \in (0,1/4) $  and $  x \in \mathcal{T}  $
 the first factor is bounded by $ C s^2 |x| $.  
 Furthermore, since $ \log C_\pm^2(s;\zeta) = o(1) $ as $ s \to 0 $ by~\eqref{eq:C=>1}, the claim~\eqref{eq:derphi} follows by choosing $ |x | = \lfloor s^{-1}\,  ( \log C_\pm^2 )^{1/2} \rfloor$. 
 \end{proof}

 \subsection {Green function's typical decay rate, and its large deviations}
 
 The properties established  in  Theorem~\ref{thm:phi} for the free energy function $\varphi_\lambda(s;E)$ allow one to establish decay properties  of the Green function which are important for the resonance analysis which is presented below.   The typical behavior is determined by the Lyapunov exponent:
 
  \begin{figure}
\begin{minipage}{.5 \textwidth}
\includegraphics[width= .9\textwidth]{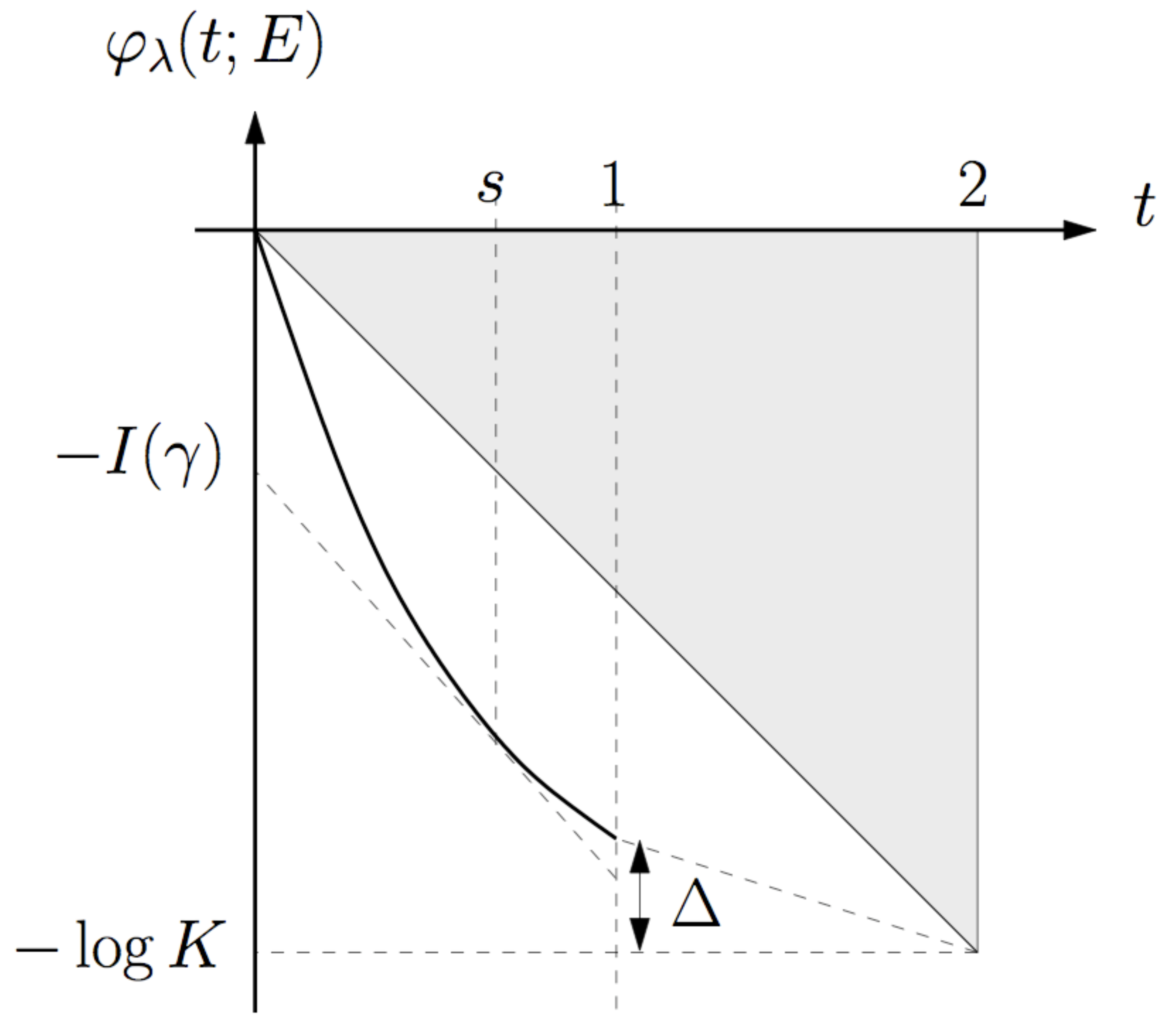} 
\end{minipage}\hfill
\begin{minipage}{.45 \textwidth}
\caption{Sketch of the free energy function in case $ \varphi_\lambda(1;E) > - \log K $. Regardless of this assumption the curve does not enter the shaded region.   The parameter $ \gamma $ is the negative slop of the tangent at $ s $ and the value of the rate function $ I(\gamma) = - \varphi_\lambda(s;E) - s \gamma$ can be read off as the negative value at the intersection of that tangent with the vertical axis. 
}\label{fig:phi}
\end{minipage}
\end{figure} 

%
%
%
\begin{theorem}~\label{thm:regGbehav}
For almost all $ E \in \mathbb{R} $  and all $ \epsilon > 0 $ there is some $ \eta_0 > 0 $ such that for all $ \eta \in (0,\eta_0) $:
\begin{equation}
	\lim_{|x| \to \infty} \, \mathbb{P}\left( | G(0,x;E+i\eta) | \in e^{ -L(E) |x| } \, \big[ e^{-\epsilon |x|} , e^{\epsilon |x|} \big] \right)  \ = \ 1 \, . 
\end{equation}
The same applies to  $ G^{\mathcal{T}_x}(0,x_-;E+i\eta) $ (when substituting $ G(0,x;E+i\eta) $).
\end{theorem}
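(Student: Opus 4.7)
My approach is to deduce the concentration directly from a Chebyshev (fractional moment) inequality, combined with the finite-volume bounds and the differentiability of the free energy $\varphi_\lambda$ at $s=0$ that are provided by Theorem~\ref{thm:phi}. Fix an energy $E \in \mathbb{R}$ at which Theorem~\ref{thm:phi}(2) applies, abbreviate $L := L_\lambda(E)$, and set $\zeta := E + i\eta$. I will invoke two facts: (i) $\varphi_\lambda(0;E)=0$ and $\partial_s\varphi_\lambda(0;E) = -L$ together with convexity give $\varphi_\lambda(s;E) = -sL + o(s)$ as $s \to 0$ (in particular from both sides, since $s \in [-\varsigma,1)$); and (ii) for each fixed $s \in [-\varsigma,1)$, $\varphi_\lambda(s;\zeta) \to \varphi_\lambda(s;E)$ as $\eta \downarrow 0$, with the constants $C_\pm$ in \eqref{eq:finitevolume} uniformly bounded for $\eta \in (0,1]$.

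For the upper tail $\{|G(0,x;\zeta)| > e^{-(L-\epsilon)|x|}\}$, Markov's inequality applied at a small $s>0$, combined with~\eqref{eq:finitevolume}, yields a bound of the form $C \exp\!\bigl(|x|\bigl[s(L-\epsilon) + \varphi_\lambda(s;\zeta)\bigr]\bigr)$. Using (i) to select $s > 0$ small enough that $\varphi_\lambda(s;E) + sL < s\epsilon/4$, and then (ii) to pick $\eta_0 > 0$ so that $|\varphi_\lambda(s;\zeta) - \varphi_\lambda(s;E)| < s\epsilon/4$ for $\eta \in (0,\eta_0)$, the bracketed exponent becomes at most $-s\epsilon/2$, and the probability decays exponentially in $|x|$. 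The lower tail $\{|G(0,x;\zeta)| < e^{-(L+\epsilon)|x|}\}$ is handled symmetrically by applying Markov to $|G|^{-s}$ for $s \in (0,\varsigma)$ and exploiting $\varphi_\lambda(-s;E) = sL + o(s)$ in the same manner, this time making use of the negative-$s$ range of Theorem~\ref{thm:phi}.

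For the corresponding statement about $G^{\T_x}(0,x_-;\zeta)$ the identical argument applies: the proof of Theorem~\ref{thm:phi} establishes finite-volume bounds $\Ev{|G^{\T_x}(0,x_-;\zeta)|^s} \le C\, e^{|x|\varphi_\lambda(s;\zeta)}$ with the same exponent $\varphi_\lambda$, and Lemma~\ref{lem:GG2} ensures that this rate coincides with the one controlling the unrestricted Green function. The main subtlety, rather than a genuine obstacle, is the uniformity in $\eta$ of the constants $C_\pm(s;\zeta)$ and the continuity $\varphi_\lambda(s;E+i\eta) \to \varphi_\lambda(s;E)$; both are built into Theorem~\ref{thm:phi}, so the argument reduces to fixing $\epsilon$, choosing $s$ via (i), then $\eta_0$ via (ii), and finally sending $|x| \to \infty$.
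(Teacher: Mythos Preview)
Your proof is correct and is, at its core, the same argument the paper uses: a two-sided Chebyshev/Markov inequality with small positive and negative exponents~$s$, combined with the differentiability $\partial_s\varphi_\lambda(0;E)=-L_\lambda(E)$ and the uniform-in-$\eta$ constants from Theorem~\ref{thm:phi}. The only difference is packaging: the paper routes the argument through the general large-deviation Theorem~\ref{thm:ld} (applied at $s=0$, where the tilted measure $\mathbb{P}_s$ reduces to $\mathbb{P}$ and the bounds~\eqref{eq:ld_glb}--\eqref{eq:ld_glb2} are exactly your two Chebyshev estimates), whereas you carry out the Chebyshev step directly without invoking that framework. Your route is slightly more elementary for this particular statement; the paper's route has the advantage of reusing the same machinery that is needed anyway for Theorem~\ref{cor:ldev1}, where the tilted measures at $s\neq 0$ are essential.
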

The proof  is presented in  Appendix~\ref{App:Ldevs}, based on the general and more comprehensive large-deviation  Theorem~\ref{thm:ld}. 
The latter is established through some standard arguments for which enabling bounds are provided by Theorem~\ref{thm:phi}.   

Other values of $|x|^{-1} \log | G(0,x;E+i\eta) |   $ can also be observed, but these represent large deviations for which the rate function is given 
by the Legendre transform:
\be  \label{eq:Legendre} I(\gamma) \ := \ - \inf_{ s\in[-\varsigma,1)} \left[ \varphi_\lambda(s;E) + s \gamma  \right]  \, . 
\ee 
More explicitly,  for any $\gamma$ which is attainable as  $\gamma = - \partial \varphi_\lambda(s;E)/\partial s $ at $s\in[-\varsigma,1)$: 
\be  \label{eq:LD}
\mathbb{P}\left( | G(0,x;E+i\eta) | \in e^{ -\gamma |x| } \, \big[ e^{-\epsilon |x|} , e^{\epsilon |x|} \big] \right)  \ \approx \   e^{-I(\gamma)] |x|}  \, , 
\ee 
where $\approx$ means that the ratio of the two terms is of the order $e^{o(|x|)}$ for  large $|x|$. 
A  stronger large-deviation principle is presented in Theorem~\ref{cor:ldev1}.

%
%
%

\section{The Lyapunov exponent  delocalization criterion}  \label{Sec:pf_part1}

Our goal in this section is to prove   Theorem~\ref{thm:mainL}.   We start with some useful preparatory observations. 

\subsection{A zero-one law and the relative tightness of $ \Im  \Gamma(0;E+i\eta) $}

\begin{lemma}\label{lem:01}
For  Lebesgue-almost all $E\in \R$, the probability that $ \Im  \Gamma(0;E+i0)\  =\  0$ 
is either $0$ or $1$. 
\end{lemma}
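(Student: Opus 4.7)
The plan is to derive the functional equation $p(E) = p(E)^K$ for $p(E) := \mathbb{P}(\Im \Gamma(0;E+i0) = 0)$, using the recursion~\eqref{eq:recur} at the root together with the self-similarity of the rooted tree.

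First, I would take imaginary parts in~\eqref{eq:recur} evaluated at $\zeta = E+i0$, which gives
\begin{equation*}
\Im \Gamma(0;E+i0) \ = \ \frac{\sum_{y\in \mathcal{N}_0^+}\Im \Gamma(y;E+i0)}{\bigl|\lambda V(0)-E-\sum_{y\in \mathcal{N}_0^+}\Gamma(y;E+i0)\bigr|^2}\, .
\end{equation*}
Each $\Gamma(y;\cdot)$ is a Herglotz function, namely the Stieltjes transform of a positive spectral measure on the subtree $\mathcal{T}_0$ containing $y$, so its boundary value has nonnegative imaginary part for Lebesgue-almost every $E$, almost surely. Consequently, off a null event (whose only nontrivial piece is $\{D(E) = 0\}$ with $D(E) := \lambda V(0)-E-\sum_y \Gamma(y;E+i0)$, excluded by the absolute continuity of the law of $V(0)$ conditionally on the children's variables), the event $\{\Im \Gamma(0;E+i0)=0\}$ coincides with $\bigcap_{y\in \mathcal{N}_0^+}\{\Im \Gamma(y;E+i0) = 0\}$.

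Next, I would exploit that upon removal of $0$ the tree $\mathcal{T}$ splits into $K$ disjoint subtrees, each rooted at a child $y\in\mathcal{N}_0^+$ and isomorphic to $\mathcal{T}$ itself; equipped with the restricted random potentials, these are iid copies of $(\mathcal{T},V)$, also independent of $V(0)$. Hence the random variables $\{\Gamma(y;E+i0) : y\in \mathcal{N}_0^+\}$ are iid and share the distribution of $\Gamma(0;E+i0)$. The equivalence established above then yields
\begin{equation*}
p(E) \ = \ p(E)^K \, ,
\end{equation*}
which since $K>1$ forces $p(E)\in\{0,1\}$.

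The argument is short and the only real technical input is the existence of the boundary values $\Gamma(\cdot;E+i0)$ for Lebesgue-a.e.\ $E$ almost surely; this follows from the Herglotz boundary-value theorem applied to each random Stieltjes transform, together with Fubini. Thus no serious obstacle is anticipated --- the main content of the proof is the recognition that on a rooted regular tree the recursion at the root converts the positivity of $\Im \Gamma$ into a deterministic branching identity for its probability of vanishing.
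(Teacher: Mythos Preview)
Your proof is correct and follows essentially the same route as the paper: take imaginary parts in the recursion at the root, use the iid self-similar structure of the $K$ forward subtrees to derive $p(E)=p(E)^K$, and conclude $p(E)\in\{0,1\}$. The only cosmetic difference is in how the non-degeneracy of the prefactor is justified---the paper shows $|G(0,0;E+i0)|\neq 0$ via the negative fractional-moment bound $\mathbb{E}[|G(0,0;E+i0)|^{-\varsigma}]<\infty$, whereas you rule out $D(E)=0$ via the absolute continuity of $V(0)$ conditioned on the children.
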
 

\begin{proof} 
Taking the imaginary part of \eqref{eq:recur} one gets: 
\begin{align} \label{eq:ImG1}
\Im \Gamma(0;E+i\eta)  \ & =    \ \left| G(0,0;E+i\eta) \right|^2 \Big[\eta + \sum_{x \in \mathcal{N}^+_0 }   \Im \Gamma(x;E+i\eta)  \Big] 
\notag \\
 \  & \ge \   \left| G(0,0;E+i\eta) \right|^2  \sum_{x \in \mathcal{N}^+_0 }  \Im \Gamma(x;E+i\eta)   \, ,
\end{align} 
with equality in case $ \eta = 0 $ for those $ E $ for which the boundary values exist, that is for Lebesgue-almost all $E \in \mathbb{R}  $.
Let now $q:= \Pr{   {\mbox{$\Im  \Gamma(0;E+i0)\  = \  0$} } } $. The factor $|G(0,0;E+i0)|$  is  almost surely non-zero, since, for example, $\mathbb{E}[|G(0,0;E+i0)^{-\varsigma}] < \infty $, using the recursion relation~\eqref{eq:recur}, Assumption~\ref{assC} and the finiteness of fractional moments. Since the $K$ different terms, $\Im \Gamma(x;E+i0)  $, $ x \in\mathcal{N}^+_0$,   
are independent variables of the same distribution as $\Im \Gamma(0;E+i0)  $, and 
$|G(0,0;E+i0)| \neq 0 $  almost surely, we may conclude that  
$ q \ =   \  q^{K} $ or $q\, [1-q^{K-1}] \ = \ 0$, and hence    either  $q=0$ or $q=1$.  
\end{proof}  

In order to quantify the way the distribution of $ \Im \Gamma(0;\zeta) $ settles on its limit as $ \Im \zeta \downarrow 0 $, we introduce the following quantity. 
\begin{definition} \label{def:xi}
 For $ \zeta \in \C^+ $ and $ \alpha \in (0,1) $
the \emph{upper  percentile} $\xi(\alpha,\zeta)$ of the distribution of $\Im\Gamma(0;\zeta)$
is  the supremum of the values of $t\geq0$ for which    
\be  \label{eq:t}
    \mathbb{P}\left(  \Im\Gamma(0;\zeta) \geq t  \right) \  \ge \ \alpha  \, .  
\ee  
\end{definition}

\begin{lemma} \label{lem:xi}
For $ \zeta \in \C^+ $ and any $\alpha \in (0,  1)$:  
\quad $ 
0 \ < \  \xi(\alpha,\zeta) \  < \  \infty $.
\end{lemma}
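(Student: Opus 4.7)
The plan is to establish the two bounds $\xi(\alpha,\zeta) < \infty$ and $\xi(\alpha,\zeta) > 0$ separately; both follow directly from elementary deterministic properties of the Herglotz function $\Gamma(0;\zeta) = G^{\mathcal{T}}(0,0;\zeta)$ together with the recursion~\eqref{eq:recur}. The lemma is essentially a sanity check that $\Im \Gamma(0;\zeta)$ is neither trivially zero nor supported arbitrarily high, so that its upper percentile is well defined and nontrivial.

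First I would establish finiteness via the pointwise a priori bound $|\Gamma(0;\zeta)| \leq (\Im \zeta)^{-1}$. This follows from~\eqref{eq:recur}: the denominator $\lambda V(0) - \zeta - \sum_{y \in \mathcal{N}_0^+} \Gamma(y;\zeta)$ has imaginary part $-\Im \zeta - \sum_y \Im \Gamma(y;\zeta) \leq -\Im \zeta < 0$, since each subtree Green function $\Gamma(y;\zeta)$ is itself a Herglotz function at $\zeta \in \mathbb{C}^+$ with non-negative imaginary part. Its modulus is therefore at least $\Im \zeta$, so $\Im \Gamma(0;\zeta) \leq |\Gamma(0;\zeta)| \leq (\Im \zeta)^{-1}$ for every realization. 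Consequently $\mathbb{P}(\Im \Gamma(0;\zeta) \geq t) = 0$ for $t > (\Im \zeta)^{-1}$, and hence $\xi(\alpha,\zeta) \leq (\Im \zeta)^{-1} < \infty$.

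Next I would establish strict positivity. The Stieltjes representation~\eqref{eq:specrep} gives
\[\Im \Gamma(0;\zeta) \; = \; \Im \zeta \, \int \frac{\mu_{\lambda,\delta_0}(du;\omega)}{(u - \Re \zeta)^2 + (\Im \zeta)^2},\]
which is strictly positive for every realization $\omega$, since $\mu_{\lambda,\delta_0}(\cdot;\omega)$ is a probability measure (in particular nontrivial) and $\Im \zeta > 0$. Hence $\mathbb{P}(\Im \Gamma(0;\zeta) > 0) = 1$, and continuity from below of the measure applied to the increasing union $\{\Im \Gamma(0;\zeta) > 0\} = \bigcup_{t>0} \{\Im \Gamma(0;\zeta) \geq t\}$ yields $\mathbb{P}(\Im \Gamma(0;\zeta) \geq t) \uparrow 1$ as $t \downarrow 0$. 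Since $\alpha < 1$, one can choose $t_0 > 0$ with $\mathbb{P}(\Im \Gamma(0;\zeta) \geq t_0) \geq \alpha$, which gives $\xi(\alpha,\zeta) \geq t_0 > 0$.

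There is no substantive obstacle here; the lemma is a routine preliminary. The genuine difficulty, deferred to later in the paper, will lie in understanding how $\xi(\alpha,\zeta)$ behaves as $\Im \zeta \downarrow 0$, where the deterministic upper bound $(\Im \zeta)^{-1}$ blows up and one must instead rely on the resonance mechanism at the heart of Theorem~\ref{thm:mainL}.
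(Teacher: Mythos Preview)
Your proof is correct and follows essentially the same approach as the paper: both establish the deterministic two-sided bound $0 < \Im \Gamma(0;\zeta) \leq (\Im\zeta)^{-1}$ for $\zeta \in \C^+$, then deduce that the set of $t>0$ satisfying $\mathbb{P}(\Im\Gamma(0;\zeta)\geq t)\geq \alpha$ is nonempty (giving $\xi>0$) and bounded above by $(\Im\zeta)^{-1}$ (giving $\xi<\infty$). Your version is somewhat more explicit---deriving the upper bound via the recursion rather than citing the resolvent norm bound, and spelling out the continuity-from-below step---but the logic is the same.
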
  
\begin{proof}  For $\zeta \in \C^+$ one has $0 < \Im \Gamma(0;\zeta)  \ \leq \ (\Im \zeta)^{-1} $.   Hence the claim derives from the following observations:  
{\it i.\/} The collection of strictly positive values of $t$ at which \eqref{eq:t}  holds is not empty, since otherwise $ \Im \Gamma(0;\zeta)=0 $ with probability one. 
{\it ii.\/}  The above collection of values of $t$ does not include any value above $ (\Im \zeta)^{-1}$.  
\end{proof}
Iterating \eqref{eq:ImG1} we conclude that for any  $ n \in \mathbb{N} $ and $\zeta \in \mathbb{C}^+$:  
\begin{equation}\label{eq:Imrecrel}
	\Im \Gamma(0;\zeta)  \ \geq \ \sum_{x\in \mathcal{S}_n } \left| G(0,x;\zeta) \right|^2  \sum_{y\in \mathcal{N}^+_x}\Im \Gamma(y;\zeta)  
\end{equation}
where $ \mathcal{S}_n := \{ x\in \T |  \dist(0,x)=n \}$.  As a first consequence of this important relation, we note that the distribution of $ \Im \Gamma(0;\zeta) $ does not broaden too fast as $ \Im \zeta \downarrow 0  $. As a measure of the (relative) width of the distribution we use the ratios $ \xi(\alpha;\zeta)/\xi(\beta;\zeta) $. %
\begin{lemma}
	For any $E \in  \R $ the distribution of $ \Im \Gamma(0;E+i\eta) $ remains relatively tight in the limit $ \eta \downarrow 0 $ in the sense that 
	for any pair $ \alpha, \beta  \in (0,1) $:
	\begin{equation}\label{eq:apriorib}
		\liminf_{\eta \downarrow 0 } \; \frac{\xi(\alpha;E + i\eta)}{\xi(\beta;E + i\eta)} \ > \ 0  \, . 
	\end{equation}
\end{lemma}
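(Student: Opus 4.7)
The statement asserts that the upper percentiles $\xi(\cdot;E+i\eta)$ do not spread out (on a multiplicative scale) as $\eta \downarrow 0$. Since $\xi(\cdot;\zeta)$ is non-increasing in its first argument, the ratio is automatically $\geq 1$ when $\alpha \leq \beta$; all the content is in the regime $\alpha > \beta$. My plan is to establish a \emph{one-step amplification}: for any $\beta \in (0,1)$ and any $\alpha^* < 1 - (1-\beta)^K$, there is a constant $c(\alpha^*,\beta,E) > 0$ such that $\xi(\alpha^*; E+i\eta) \geq c(\alpha^*,\beta,E)\cdot\xi(\beta;E+i\eta)$ uniformly in $\eta\in(0,1]$. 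Since the map $\beta \mapsto 1-(1-\beta)^K$ is strictly increasing on $(0,1)$ and has $1$ as its only fixed point, iterating this amplification finitely many times turns any $\beta\in(0,1)$ into any prescribed target $\alpha<1$, which gives the lemma.

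To prove the one-step amplification I would start from the $n=0$ case of~\eqref{eq:Imrecrel}, namely~\eqref{eq:ImG1}:
\[
\Im\Gamma(0;\zeta) \ \geq \ |G(0,0;\zeta)|^2 \sum_{y\in\mathcal{N}_0^+} \Im\Gamma(y;\zeta),
\]
in which the $K$ random variables $\Gamma(y;\zeta)$, $y\in\mathcal{N}_0^+$, are iid (as functions of disjoint forward subtrees) and jointly independent of $V(0)$. Let $A := \{\max_{y\in\mathcal{N}_0^+} \Im\Gamma(y;\zeta) \geq \xi(\beta;\zeta)\}$ and $B := \{|V(0)|\leq C_1\} \cap \bigcap_{y\in\mathcal{N}_0^+}\{|\Gamma(y;\zeta)|\leq C_2\}$. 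Left-continuity of the survival function of $\Im\Gamma(y;\zeta)$ gives $\Pr(\Im\Gamma(y;\zeta)\geq \xi(\beta;\zeta))\geq \beta$ for each $y$, so by independence $\Pr(A) \geq 1 - (1-\beta)^K$. The uniform fractional-moment bound $\mathbb{E}[|\Gamma(y;\zeta)|^s] \leq 2^s\|\varrho\|_\infty^s/[(1-s)\lambda^s]$ for $s\in[0,1)$ (which is the unconditioned form of \eqref{eq:upperbav} from the proof of Lemma~\ref{lem:submult} and holds uniformly in $\zeta\in\mathbb{C}^+$), combined with Assumption~C on $V(0)$, lets me choose $C_1, C_2$ so that $\Pr(B)\geq 1 - \epsilon$ for any prescribed $\epsilon>0$, uniformly in $\eta\in(0,1]$. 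On $B$ the recursion \eqref{eq:recur} forces $|G(0,0;\zeta)|\geq (\lambda C_1 + |E|+1+KC_2)^{-1}=:c>0$, which is independent of $\eta\in(0,1]$.

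Combining via the Fr\'echet inequality $\Pr(A\cap B) \geq \Pr(A)+\Pr(B)-1 \geq 1-(1-\beta)^K-\epsilon$, and using $\sum_y \Im\Gamma(y;\zeta) \geq \max_y \Im\Gamma(y;\zeta)$ on $A$, one obtains $\Im\Gamma(0;\zeta)\geq c^2 \xi(\beta;\zeta)$ on $A\cap B$, hence $\xi(1-(1-\beta)^K-\epsilon;\zeta)\geq c^2\xi(\beta;\zeta)$. This is the one-step amplification with $\alpha^* = 1-(1-\beta)^K-\epsilon$. Iterating along a finite sequence $\beta_0=\beta<\beta_1<\cdots<\beta_N$ with $\beta_{k+1}<1-(1-\beta_k)^K$ and $\beta_N\geq \alpha$ yields the lemma with the positive constant $\prod_k c_k^2$, independent of $\eta\in(0,1]$.

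The technical point the argument hinges on is the uniformity in $\eta$ of the lower bound on $|G(0,0;\zeta)|$; this in turn rests on the uniform (in $\Im\zeta$) fractional-moment control on $|\Gamma|$ from Lemma~\ref{lem:submult}, which does not degenerate as $\Im\zeta\downarrow 0$. A secondary subtlety is that $A$ and $B$ are not independent, since both involve the variables $\Gamma(y;\zeta)$; but the crude Fr\'echet bound $\Pr(A\cap B) \geq \Pr(A)+\Pr(B)-1$ sidesteps any independence requirement between $A$ and $B$. The mutual independence of $\Gamma(y_1;\zeta),\ldots,\Gamma(y_K;\zeta)$ and of $V(0)$ is used only when computing the marginal lower bounds on $\Pr(A)$ and $\Pr(B)$.
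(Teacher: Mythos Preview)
Your proof is correct and takes a genuinely different, more elementary route than the paper. The paper iterates the recursion~\eqref{eq:Imrecrel} all the way to a large sphere $\mathcal{S}_n$ and then runs a second-moment argument on $N=\sum_{x\in\mathcal{S}_n}\indfct_{R_x\cap I_x}$, where controlling $\Pr(R_x^c)$ requires the typical Green-function decay of Theorem~\ref{thm:regGbehav} (and hence the free-energy and large-deviation machinery of Section~\ref{Sect:pressure} and Appendix~\ref{App:Ldevs}). You instead stay at the root, use the single step~\eqref{eq:ImG1}, and iterate the quantile amplification $\beta\mapsto 1-(1-\beta)^K-\epsilon$; the only analytic input you need is the fractional-moment bound~\eqref{eq:fracmonb}, which is uniform in $\Im\zeta$ and available for every $\zeta\in\mathbb{C}^+$. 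This buys you two things: the argument is self-contained (no forward references to Theorem~\ref{thm:regGbehav}), and it literally proves the lemma as stated for \emph{every} $E\in\R$, whereas the paper's route, passing through Theorem~\ref{thm:regGbehav}, is strictly speaking formulated only for Lebesgue-almost every $E$. The paper's approach, on the other hand, is a natural rehearsal of the second-moment method that drives the main argument in Section~\ref{Sec:pf_part1}.
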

\begin{proof}
We fix $ \alpha , \beta \in (0,1) $ (by monotonicity it would suffice to consider the case $ \alpha > \beta $) and pick an arbitrary $0 < \epsilon < 1 - \beta $. 
For a given $ x \in \mathcal{S}_n $, let us consider the event  
$	R_x := \{ (| G(0,x;E + i\eta) | \geq e^{- n \ell} \}  $, where  $ \ell >  L(E)$ is fixed at an arbitrary value. 
We now choose $ n \in \mathbb{N} $ large enough and $ \eta_0 > 0 $ small enough such that  for all $\eta \in (0,\eta_0) $ simultaneously 
\begin{align}\label{eq:param1}
	\mathbb{P}\left(R_x^c \right) \ \leq \ \alpha \left( 1 - \sqrt{\frac{\beta}{1 - \epsilon}} \right) \quad \mbox{and} \quad 
	K^n \alpha \, \sqrt{\frac{1 - \epsilon}{\beta}} \ \geq \ \frac{\beta}{\epsilon} \, ,
\end{align}
where the superscript indicates the complementary event. 
While the second requirement is obviously satisfied for $ n = |x| $ large enough, it follows from Theorem~\ref{thm:regGbehav} that also the first requirement can be met. 
In order to control the sum in~\eqref{eq:Imrecrel} we also introduce the event
$ I_x := \bigcup_{y\in \mathcal{N}^+_x} \{ \Im \Gamma(y;E + i\eta)   \geq  \xi(\alpha;E + i\eta) \} $. 
From~\eqref{eq:Imrecrel} and the Cauchy-Schwarz inequality it then follows that 
\begin{align}
\mathbb{P}\left( \Im\Gamma(0;\zeta) \ \geq \ e^{-2 \ell n } \,  \xi(\alpha;E + i\eta) \right) \ \geq \ \mathbb{P}\left( N \geq 1 \right) \ \geq \ \frac{\mathbb{E}\left[N\right]^2}{\mathbb{E}\left[N^2\right]}  \, , \label{eq:N2one}
\end{align}
where
$ N := \sum_{x \in \mathcal{S}_n} \indfct_{R_x\cap I_x} $ denotes the number of joint events $ R_x\cap I_x $ on the sphere $ \mathcal{S}_n $. 
The right side in~\eqref{eq:N2one} is estimated using the independence of the events $ I_x $ for all $ x \in \mathcal{S}_n $:
\begin{equation}
	\mathbb{E}\left[N^2\right] - \mathbb{E}\left[N\right]  \   = \ \mathbb{E}\left[N(N-1)\right] \ \leq  \sum_{ \substack{x, y  \in \mathcal{S}_n \\ x\neq y}} \mathbb{P}\left(I_x\right) \mathbb{P}\left(I_y\right)  \leq \  K^{2n}\, \mathbb{P}\left(I_x\right)^2 \, . 
\end{equation}
Together with the lower bound
\begin{equation}
	\mathbb{E}\left[N\right] \ = \  K^n \, \mathbb{P}\left( R_x\cap I_x \right) \ \geq \ K^n \left(  \mathbb{P}\left(I_x \right)  - \mathbb{P}\left( R_x^c \right) \right) \ \geq \ K^n\ \left(  \alpha   - \mathbb{P}\left( R_x^c \right) \right) \ \geq \ \frac{\beta}{\epsilon} \, , 
\end{equation}
the inverse of the right side in \eqref{eq:N2one} is bounded from above using~\eqref{eq:param1}: 
\begin{equation}\label{eq:N2two}
	\frac{\mathbb{E}\left[N^2\right]}{\mathbb{E}\left[N\right]^2} \ \leq \ \frac{1}{\mathbb{E}\left[N\right] } +  \left( 1 - \frac{\mathbb{P}\left( R_x^c \right) }{\alpha} \right)^{-2} \ \leq \ \frac{\epsilon}{\beta} + \frac{1-\epsilon}{\beta} \ = \ \frac{1}{\beta} \, . 
\end{equation}
From the definition of the upper percentile and \eqref{eq:N2one} together with~\eqref{eq:N2two} it hence follows
$ \xi(\beta;E + i\eta) \ \geq \ e^{-2 \ell n } \,  \xi(\alpha;E + i\eta) $. 
The proof is concluded by noting that the first factor in the right side is independent of $ \eta $ and strictly positive. 
\end{proof}


\subsection{A conditional proof of the criteria} 
\label{sec:cond}

We prove Theorems~\ref{thm:mainL}  and~\ref{thm:main_phi} by contradicting the following `no-ac' hypothesis.  
\begin{definition}\label{def:noac}
 For a specified $\lambda \geq 0 $, we say that the \emph{no-ac hypothesis} at $ E \in \R $ holds if  almost surely $\Im G(0,0;E+i0) \ = \ 0$.
\end{definition}  
The relation~\eqref{eq:Imrecrel} suggests that the no-ac hypothesis is false if with uniformly positive probability  there are sites $x \in \mathcal{S}_n$ with 
$\left| G(0,x;\zeta) \right| \gg  1$, and  a forward neighbor $ y $ with a  not particularly `atypical'  value   of $\Im \Gamma(y;E+i\eta)$.  A key step is: 

\begin{theorem}\label{lem:largeG}
For almost all $ E \in \sigma(H_\lambda) $, if  either
\begin{enumerate}
\item $ L(E) < \log K $, \;    or \hfill \emph{(Lyapunov exponent criterion)}
\item $ \varphi(1;E) > - \log K $, and Assumption~\ref{assE}, \hfill \emph{(large-deviation criterion)}
\end{enumerate} 
and  the  no-ac hypothesis holds true, 
then  there are $  \delta , p_0 > 0 $ and $ n_0 \geq 0 $ such that for all $ n \geq n_0 $:
\begin{equation}\label{eq:largeG}
	\liminf_{\eta \downarrow 0 } \, \mathbb{P}\left( \max_{x \in  \mathcal{S}_n}\,   |G(0,x;E+i\eta)| \ \indfct_{\max_{y \in \mathcal{N}_x^+} \Im\Gamma(y;E+i\eta) \geq \xi(\alpha;E+i\eta) } \geq \ e^{\delta n } \right) \ \geq \  2 p_0 \, . 
\end{equation}
\end{theorem}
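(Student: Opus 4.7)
The plan is to prove the theorem by a second-moment argument on the sphere $\mathcal{S}_n$. Set $\zeta = E+i\eta$ and, for each $x\in\mathcal{S}_n$, introduce the events $A_x := \{|G(0,x;\zeta)|\ge e^{\delta n}\}$ and $B_x := \{\max_{y\in\mathcal{N}_x^+}\Im\Gamma(y;\zeta)\ge \xi(\alpha;\zeta)\}$, and set $N := \sum_{x\in\mathcal{S}_n}\indfct_{A_x\cap B_x}$. The aim is to show that $\E[N]$ grows exponentially in $n$ while $\E[N^2]\le C\,\E[N]^2$ uniformly in small $\eta$, so that the Paley--Zygmund inequality $\mathbb{P}(N\ge 1)\ge \E[N]^2/\E[N^2]\ge 1/C$ yields the conclusion with $2p_0:=1/C$.

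For the first-moment bound I would exploit the factorization $G(0,x;\zeta)=G^{\T_x}(0,x_-;\zeta)\,G(x,x;\zeta)$ combined with the recursion $G(x,x;\zeta)^{-1}=\lambda V(x)-\zeta-\Gamma_-(x;\zeta)-\sum_{y\in\mathcal{N}_x^+}\Gamma(y;\zeta)$. This decomposes the randomness into three mutually independent groups: (i) the potentials outside the subtree rooted at $x$, which determine $G^{\T_x}(0,x_-;\zeta)$ and $\Gamma_-(x;\zeta)$; (ii) the onsite value $V(x)$; and (iii) the potentials in the disjoint subtrees rooted at $y\in\mathcal{N}_x^+$, which determine $\{\Gamma(y;\zeta)\}$ and hence $B_x$. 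Conditioning on groups (i) and (iii), the denominator of $G(x,x)$ is a deterministic affine function of $\lambda V(x)$, so Assumption~\ref{assD} yields the resonance estimate $\mathbb{P}(|G(x,x;\zeta)|\ge M\mid\cdot)\gtrsim 1/M$ for $M$ above the typical magnitude of the denominator. Under criterion~(1), Theorem~\ref{thm:regGbehav} supplies $|G^{\T_x}(0,x_-)|\ge e^{-(L(E)+\epsilon)n}$ with probability tending to $1$; choosing $M=e^{(L(E)+\epsilon+\delta)n}$, and using the independence of $B_x$ from $V(x)$ and from group (i), one obtains $\mathbb{P}(A_x\cap B_x)\gtrsim (1-(1-\alpha)^K)\,e^{-(L(E)+\epsilon+\delta)n}$, and $\E[N]\gtrsim e^{n(\log K-L(E)-\epsilon-\delta)}$ diverges whenever $\epsilon+\delta<\log K-L(E)$. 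Under criterion~(2), the typical bound is replaced by a Paley--Zygmund estimate applied to $|G^{\T_x}(0,x_-)|^s$ for $s$ close to $1$: the finite-volume bounds of Theorem~\ref{thm:phi}(c) combined with the convexity of $\varphi$ produce $|G^{\T_x}(0,x_-)|\gtrsim e^{n(\varphi(1;E)-\epsilon)}$ on an event of bounded-below probability, after which the resonance argument closes the gap using the slack $\log K+\varphi(1;E)>0$; Assumption~\ref{assE} is invoked to ensure that the required resonant value of $V(x)$ lies in the support of $\varrho$ even as it grows with $n$.

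For the second moment, consider $x_1,x_2\in\mathcal{S}_n$ whose paths from $0$ branch at a common ancestor $u$ of depth $k$. The factorization~\eqref{eq:fact5} expresses $G(0,x_i;\zeta)$ as a product of a ``trunk'' factor, involving the variables from $0$ to $u$, and two ``leg'' factors in disjoint subtrees below $u$ coupled to the independent $V(x_1),V(x_2)$. Conditionally on the trunk, the leg analysis is a product of two copies of the first-moment argument, and the trunk contribution is estimated by fractional moments of $G^{\T_u}(0,u_-;\zeta)$ and $G(u,u;\zeta)$ via the super- and sub-multiplicativity of Lemmas~\ref{lem:submult}--\ref{lem:GG2} with their $\eta$-uniform constants. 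Summed over the depth $k$ of the split, the resulting bound is $\E[N^2]\lesssim \E[N]^2$, where the strict inequality in each criterion provides the geometric slack that controls the sum. The principal obstacle is uniformity of all constants in the limit $\eta\downarrow 0$: in the no-ac regime $\xi(\alpha;\zeta)$ itself tends to $0$ and the fractional moments of $|G|$ of orders approaching $1$ need not be regular in $\eta$, so the near-coalescent range $k\approx n$ requires care. It is here that Assumption~\ref{assD}, the relative tightness bound~\eqref{eq:apriorib}, and the $\eta$-uniformity built into Theorems~\ref{thm:phi} and~\ref{thm:regGbehav} are used most essentially.
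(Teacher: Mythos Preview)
Your overall architecture---define indicator events on $\mathcal{S}_n$, apply the second-moment (Paley--Zygmund) inequality to $N=\sum_x \indfct_{A_x\cap B_x}$---matches the paper exactly. The first-moment argument under criterion~(1) is also essentially the paper's: factorize $G(0,x)=G^{\T_x}(0,x_-)\,G(x,x)$, use Theorem~\ref{thm:regGbehav} for the first factor, and use the rank-one structure in $V(x)$ together with Assumption~\ref{assD} for the resonance lower bound (the paper routes this through the density of states $D(E)$ in Lemma~\ref{lem:boundD2}, but the content is the same).

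The substantive gap is in your second-moment bound. You write that, after conditioning on the trunk, ``the leg analysis is a product of two copies of the first-moment argument.'' This is not correct: the resonance event at $x_1$ is $\{|\lambda V(x_1)-\sigma_{x_1}|\le \tau^{-1}\}$, and $\sigma_{x_1}$ depends on $V(x_2)$ through $G^{\T_{x_1}}((x_1)_-,(x_1)_-)$, since removing $x_1$ leaves $x_2$ in the tree. Thus the two resonance conditions are \emph{correlated} even after integrating out the trunk, and the correlation is governed precisely by the off-diagonal Green function $G^{\T_{x_1,x_2}}((x_1)_-,(x_2)_-)$. The paper handles this with a dedicated two-site weak-$L^1$ estimate (Theorem~\ref{thm:twoL1}), which bounds the joint probability $\mathbb P(E_{x_1}\cap E_{x_2}\mid\mathscr A_{x_1x_2})$ by $C\tau^{-1}\bigl(\tau^{-1}+\min\{1,|G^{\T_{x_1,x_2}}((x_1)_-,(x_2)_-)|\}\bigr)$; only then does the fractional-moment machinery and a sum over the branching depth close the argument. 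Your sketch does not supply any substitute for this decoupling step.

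For criterion~(2) the divergence from the paper is larger. Your Paley--Zygmund idea for $|G^{\T_x}(0,x_-)|^s$ would give $\mathbb P(|G^{\T_x}(0,x_-)|^s\ge c\,e^{n\varphi(s)})\gtrsim e^{n(2\varphi(s)-\varphi(2s))}$, but $\varphi(2s)$ is not controlled for $s$ near $1$, and in any case this does not yield the rate function $I(\gamma)$ needed to balance the resonance cost against the entropy $\log K$. The paper instead uses a genuine large-deviation estimate (Theorem~\ref{cor:ldev1}) via a tilted measure, and---crucially for the second moment---restricts to a \emph{thinned} sphere $\mathcal S_n^\kappa$ and defines large-deviation events $L_x$ that pin down the Green function at \emph{all} intermediate scales along the path. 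This multi-scale control is what makes the second moment tractable (Theorem~\ref{thm:N2test}), through a three-case analysis depending on where the common ancestor sits relative to the large-deviation segment. None of this structure is present in your outline, and without it the correlations in the large-deviation regime cannot be controlled.
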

A heuristic argument for the validity of Theorem~\ref{lem:largeG} is given in Subsection~\ref{sec:heuristic} below.   The  proof  is split: the  Lyapunov exponent criterion is established in Subsection~\ref{subsec:Lyproof}, whereas the proof of the large-deviation criterion, which  is a bit more involved,   is given separately in Section~\ref{sec:GFextrev}. 
First however let us show how Theorem~~\ref{lem:largeG} is used for the  proof of our main results.
  
\begin{proof} [Proof of Theorem~\ref{thm:mainL}  and Theorem~\ref{thm:main_phi} -- given Theorem~\ref{lem:largeG}] 
We will argue by contraction.  Assume the no-ac hypothesis for the given energy $ E \in\sigma(H_\lambda) $.
From Lemma~\ref{lem:largeG} and~\eqref{eq:Imrecrel} it then follows that there are $ \alpha, \delta , \eta_0, p_0 > 0 $ and $ n_0 \geq 0 $ such that
for all $ \eta \in (0,\eta_0)  $ and all $ n \geq n_0 $: 
\begin{align}
	& \mathbb{P}\left( \Im \Gamma(0;E+i\eta) \ \geq \  e^{2 \delta n } \,  \xi(\alpha;E+i\eta)  \right) \notag \\
	& \geq \mathbb{P}\left( \max_{x \in  \mathcal{S}_n}\,  |G(0,x;E+i\eta)| \ \indfct_{\max_{y \in \mathcal{N}_x^+} \Im\Gamma(y;E+i\eta) \geq \xi(\alpha;E+i\eta) } \geq \ e^{\delta n } \right) \ \geq \ p_0  \, .
\end{align}
As a consequence, we conclude
$	 \xi(p_0 ;E+i\eta) \ \geq \ e^{2 \delta n } \,  \xi(\alpha;E+i\eta) $, and since $ n $ can be taken arbitrarily large
\begin{equation}
	\lim_{\eta \downarrow 0 } \, \frac{ \xi(\alpha;E+i\eta) }{ \xi(p_0 ;E+i\eta)  } \ = \ 0 \, . 
\end{equation}
This however contradicts the relative tightness condition~\eqref{eq:apriorib}. 
\end{proof}

\subsection{Heuristics of the resonance mechanism} \label{sec:heuristic}

A possible mechanism for the rare events featured in~\eqref{eq:largeG}  is the simultaneous occurrence of the following  two events, at some common value of $\gamma > 0 $:
\begin{align}
\label{eq:gamma2}
  \left|G(x,x;E+i\eta) \right| \  & \ge    \  e^{(\gamma + \delta)\, |x|}   \\[1ex]  
 \left| G^{\T_x}(0,x_-;E+i\eta)\right|  \  & \ge    \  e^{- \gamma\, |x|}\,  \label{eq:gamma1}   \, .  
\end{align} 
These two conditions imply $ |G(0,x;E+i\eta)|  \geq e^{\delta |x| } $ through the relation~\eqref{eq:factor2}.

The first,~\eqref{eq:gamma2}, represents an extremely rare local resonance condition.  It occurs when the random potential at $x$ falls very close to a value at which $G(x,x;E+i0)$ diverges.  By~\eqref{eq:recur_gen}, such   divergence is possible if $G^{\T_{x}}(y,y; E+i0) $ is real at all $y\in \mathcal N_x$.   By~\eqref{eq:recur_gen} and the continuity of the probabilities in~$\eta$,  under the no-ac hypothesis  the probability of~\eqref{eq:gamma2}  occurring at a given site $x\in \mathcal{S}_n$ is of the order $e^{-(\gamma +\delta) n}$  for $\eta$ sufficiently small (depending on~$n$).   

The second condition,~\eqref{eq:gamma1}, represents 
\begin{enumerate}[i)]
\item a typical event, in case $ \gamma  \  =\    L(E) $ \quad  (cf.~Theorem~\ref{thm:regGbehav}),
\item a large deviation event, in case $ \gamma < L(E) $ \quad (cf.~\eqref{eq:LD}).
\end{enumerate}
In the first case, the mean number of  sites in the sphere $ \mathcal{S}_n $ on which~\eqref{eq:gamma2} and \eqref{eq:gamma1} occur 
is $ \mathbb{E}\left[N\right] \approx K^n \, e^{-(L(E) +\delta) n} \gg 1 $ provided $ 0 <  \delta < \log K - L(E) $. 
Unlike~\eqref{eq:gamma2},  the conditions  $ \Im\Gamma(y;E+i\eta) \geq \xi(\alpha;E+i\eta)  $ are not rare events, and their inclusion does not modify significantly the above estimate.  

In the second case, 
by a standard large deviation estimate as in~\eqref{eq:LD}, the probability of the event~\eqref{eq:gamma1} with $ \gamma \approx - \lim_{s\uparrow 1} \frac{\partial \varphi}{\partial s}(s;E) =: \varphi'_-(1) $ is of the order $e^{-n I(\gamma) + o(1)}$ with a rate function $I(\gamma)$ which is  related to  $\varphi(s) \equiv \varphi_\lambda(s;E)$ through  the Legendre transform. 
The relevant mechanism for the occurrence of~ \eqref{eq:gamma1} is the systematic stretching of the values of $|G^{\T_x}(0,u;E+i\eta)|$ along the path $0 \preceq u \preceq x_-$.  
By the above lines of reasoning, and ignoring excessive correlations  (a step which is justified under auxiliary conditions) we arrive at the mean value estimate 
$\E\left[N  \right] \ \approx \  K^n \, \exp\left(-n \, [I(\gamma) + \gamma + \delta +o(1)]\right) $. 
This  value is  much greater than $1$ for some $\delta >0$, provided
\be \label{condA} 
\sup_{\gamma } \left[ \log K -  [ I(\gamma) + \gamma)  \right] \ > \ 0
\ee 
That is, although the probabilities of the two above events are exponentially small, given the exponential growth of $|\mathcal S_n| = K^n$, under suitable assumptions $ \E[N] \to \infty $ for $n\to \infty$.     
To see what \eqref{condA} entails, let us note that by the inverse of the Legendre transform~\eqref{eq:Legendre}: 
 \be  \label{Ltransform1}
\varphi(s;E)  \ \equiv   \varphi(s) \ = \ - \inf_{\gamma}  \left[I(\gamma) + s \gamma)   \right] 
\ee 
Thus, \eqref{condA} is the condition  
$
\varphi(1;E) > - \log K $
 which is mentioned in Theorem~\ref{lem:largeG}, and in Theorem~\ref{thm:main_phi}. 
 
 The analysis which relates to  the first condition i\/) yields the Lyapunov exponent criterion which we shall prove first.   The proof of the more complete result, which uses the condition ii\/)  is a bit more involved, and is therefore  postponed  the next section.

\subsection{Resonances based on the Lyapunov behavior}\label{subsec:Lyproof}

The aim of this subsection is to prove  the  first criterion of Theorem~\ref{lem:largeG}.  Thus, we fix the disorder parameter $ \lambda > 0 $ and the energy $ E \in \mathbb{R} $, assuming that~$ L_\lambda(E) < \log K $.  In view of the general bound $ L_\lambda (E)  > \log \sqrt{K} $, for which the strict inequality  was shown in~\cite[Thm.~4.1]{ASW} (the weak inequality is explained by~\eqref{eq:exclusion}), the assumption is equivalent to:
 \begin{equation}
	 4 \, \delta \   :=  \log K  -  L_\lambda (E)   \,  \in \left(0,\log \sqrt{K} \right) \, .
\end{equation} 

In accordance with the above heuristics, we consider the following three events.
\begin{definition}\label{def:events1}
For each $ x \in \mathcal{S}_n $ and $ \eta >0 $ we associate the following events:
\begin{enumerate}[i.]
\item The \emph{extreme deviation event}, at blow-up parameter $ \; \tau \ := \  e^{(L(E) + 2\delta) n} $
$$ E_x  \ :=  \left\{ | G(x,x;E+i\eta) | \geq  \tau \right\}  \, . $$

\item The \emph{regular decay event} at decay rate $\;  \ell := L(E) +\delta $
$$ R_x \ :=  \left\{ | G^{\mathcal{T}_x}(0,x_-;E+i\eta) |  \ \geq \ e^{-\ell n }  \right\} \, . $$
\item The \emph{$\alpha$-marginality event}, at probability $ \alpha \in (0,1) $
		$$ I_x\ :=  \bigcup_{y \in \mathcal{N}^+_x} \left\{ \Im \Gamma(y;E+i\eta) \geq \xi(\alpha;E+i\eta)  \right\} \, . $$
\end{enumerate}
\end{definition}
We will suppress the dependence of these events on $ \alpha, \eta > 0 $.  The parameter $ \tau $ is chosen such that \emph{i.}~$ 	\tau^{-1}  \, K^{n} \  = \   e^{2 \delta n }  \, $ and \emph{ii.}~in the event $ E_x \cap R_x $:
\begin{equation}
 |G(0,x;E+i\eta)|  = | G^{\mathcal{T}_x}(0,x_-;E+i\eta) | \,  | G(x,x;E+i\eta) | \geq e^{\delta n }  \, ,
 \end{equation} 
by the factorization~\eqref{eq:factor2} of the Green function. 
 The decay rate $ \ell $ is chosen so that the event $ R_x $  occurs asymptotically as $ n \to \infty $ with probability one (cf.~Theorem~\ref{thm:regGbehav}).\\

We will monitor the number of simultaneous occurrences of the three events listed above, which is given by the random number
\begin{equation}\label{eq:defN1}
	N \ := \ \sum_{x \in \mathcal{S}_n } \indfct_{E_x \cap R_x \cap I_x } \, . 
\end{equation}
Since even the divergence, for $n \to \infty$, of the expectation value ${\mathbb{E}\left[N\right]}$ does not on its own imply that the probability of $N>1$ has a positive limit.  However, such a conclusion can be drawn from suitable information on the first two moments, e.g. using the following consequence of the Cauchy-Schwarz inequality
\begin{equation}
	\mathbb{P}\left(N \geq 1 \right) \ \geq \  \ \frac{\mathbb{E}\left[N\right]^2}{\mathbb{E}\left[N^2\right]}  \, .  \label{eq:N21}
\end{equation}  
We shall next derive bounds on the first two moments which will enable the proof that the above probability is bounded below.  

\subsection{Lower bound on the mean number of resonant sites}

Our lower bound on  $ \mathbb{E}\left[N\right] $  is based on a relation of the probability of extreme deviation events to the  mean (local) density of states $ D(E ) $ associated with \emph{fully regular} Caley tree~$ \mathcal{B} $ in which \emph{every} vertex has exactly $ K+1 $ neighbors. 
This density of states is given, for almost all $ E \in \mathbb{R} $, by~\cite{MD,AK}:
\begin{equation}\label{eq:DOSx}
	D(E) \ := \ \lim_{\eta \downarrow 0} \, \frac{1}{\pi} \; \mathbb{E}\left[ \Im G^\mathcal{B}(x,x;E+i\eta) \right] \, . 
\end{equation}
Since $ \zeta \mapsto \mathbb{E}\left[  G(x,x;\zeta) \right]  $ is a Herglotz function, the limit exists for almost all $ E \in \mathbb{R} $. Moreover, due to homogeneity it is independent of $ x \in \mathcal{B} $. 
The following property is well known, cf.~\cite{AK,CL}, but very important for us.
\begin{proposition}
The support of $ D$ coincides with the almost-sure spectrum, i.e., for Lebesgue-almost all $ E \in\sigma(H_\lambda) $ one has $ D(E) > 0 $. 
\end{proposition}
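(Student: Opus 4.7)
The plan is to identify $D(E)$ as the Radon--Nikodym derivative of the density-of-states measure (DOS) associated with $H_\lambda$ on the fully regular tree $\mathcal{B}$, and then to invoke the classical identification of the support of this measure with the almost-sure spectrum of an ergodic random operator.

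First I would introduce the DOS measure $\nu$ on $\mathbb{R}$ by
\[
\nu(I) := \mathbb{E}\bigl[\langle \delta_x,\, P_I(H_\lambda)\,\delta_x\rangle\bigr],
\]
which is independent of $x \in \mathcal{B}$ by the homogeneity of $\mathcal{B}$. Its Stieltjes transform is the Herglotz function $F(\zeta) := \mathbb{E}[G^{\mathcal{B}}(x,x;\zeta)]$ appearing in~\eqref{eq:DOSx}. Under Assumption~\ref{assC}, Wegner's estimate---a consequence of the same fractional-moment bounds used in Lemma~\ref{lem:submult}---gives the uniform bound $\Im F(E+i\eta) \le \pi C \|\varrho\|_\infty$ for $\eta > 0$, so by the Stieltjes inversion formula $\nu$ is absolutely continuous with respect to Lebesgue measure and its density equals $D(E) = \pi^{-1} \Im F(E+i0)$. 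This shows that the limit in~\eqref{eq:DOSx} exists Lebesgue-a.e.\ and coincides with $d\nu/dE$.

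Next I would invoke the classical identification $\supp \nu = \sigma(H_\lambda)$ valid for any ergodic random operator (see~\cite{CL,PF}). One inclusion is immediate: any open $U$ disjoint from $\sigma(H_\lambda)$ satisfies $P_U(H_\lambda)=0$ almost surely, so $\nu(U)=0$. For the reverse, if $E \in \sigma(H_\lambda)$ and $\epsilon > 0$, the spectral projection $P_{(E-\epsilon,E+\epsilon)}(H_\lambda)$ is nonzero almost surely, and a finite-volume approximation combined with Birkhoff's ergodic theorem yields $\nu((E-\epsilon,E+\epsilon)) > 0$. Together with the absolute continuity established above, this identifies the essential support of $D$ with $\sigma(H_\lambda)$.

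The main delicate point will be the final step, from ``essential support of $D$ equals $S$'' to ``$D(E)>0$ for Lebesgue-a.e.\ $E \in S$''. This implication is not valid for a generic absolutely continuous measure---a density of the form $\indfct_{S \setminus C}$ with $C$ a dense fat-Cantor subset of $S$ furnishes a counterexample. What saves the argument here is the specific structure of $\sigma(H_\lambda)$, which by~\eqref{eq:spec} is either all of $\mathbb{R}$ or the single interval $[-2\sqrt{K}-\lambda,\, 2\sqrt{K}+\lambda]$, combined with the regularity of the density of states for the Anderson model on the Bethe lattice recorded in~\cite{AK}, which rules out $D$ vanishing on positive-measure subsets of the interior of its support. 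Assembling these pieces yields $D(E)>0$ for Lebesgue-a.e.\ $E \in \sigma(H_\lambda)$.
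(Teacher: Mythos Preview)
The paper does not give its own proof; it states the proposition as ``well known, cf.~\cite{AK,CL}''. Your reconstruction is essentially the argument those references supply, and you have correctly isolated the one nontrivial step.

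You are right that Wegner's estimate plus the standard ergodic identity $\supp\nu=\sigma(H_\lambda)$ from~\cite{CL} yields only that the \emph{topological} support of the absolutely continuous measure $\nu$ is the spectrum, and that this alone does not force the density $D$ to be positive Lebesgue-a.e.\ on $\sigma(H_\lambda)$; your fat-Cantor counterexample is apt. The gap is closed precisely by the Acosta--Klein result~\cite{AK}, which establishes real-analyticity of $D$ in $E$: an analytic function that is not identically zero (and $D$ is not, since $\int D\,dE=1$) has a discrete, hence Lebesgue-null, zero set. Two minor remarks: the interval structure of $\sigma(H_\lambda)$ is not doing any real work once analyticity is available---on its own it does nothing to exclude fat-Cantor zero sets---and you should be aware that the hypotheses on $\varrho$ in~\cite{AK} may be somewhat more specific than Assumptions~\ref{assA}--\ref{assD} here, so strictly speaking the paper is quoting the result from the literature rather than re-deriving it under its own assumptions.
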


Varying the potential at $x$ is a rank-one perturbation of the operator $H_\lambda(\omega)$, and the response of the corresponding Green function's diagonal element is particularly simple: 
\begin{equation}\label{eq:defsigma}
G^\mathcal{B}(x,x;\zeta) \ = \ \left( \lambda V(x) - \sigma_x(\zeta) \right)^{-1} \, , \qquad   \sigma_x(\zeta) := \zeta + \sum_{y \in \mathcal{N}_x} G^{\mathcal{B}_x}(y,y;\zeta) \, , 
\end{equation}
(which a special case of~\eqref{eq:recur_gen}).  This allows us to relate the aforementioned probability of extreme deviation events to the  density of states $D(E)$.  It is at this point that the regularity Assumption D plays a helpful role.    
\begin{lemma}\label{lem:boundD2}
For Lebesgue-almost all $ E \in \R $, under the no-ac hypothesis the following holds for all  $ x \in  \mathcal{B} $:
\begin{enumerate}
	\item  $ \Im \sigma_x(E+i0) = 0 $ almost surely.
	\item  
	$\displaystyle	D(E) \ = \ \frac{\mathbb{E}\left[ \varrho\left(\lambda^{-1} \sigma_x(E+i0) \right) \right]}{\lambda} $.  
	\item for any $\hat \tau \geq \lambda^{-1}$  and any event $Z_x $ which is independent of $V(x) $:
	\begin{equation}\label{eq:DosBref}
		 D(E )  \  \leq \ 2 c \, \lambda \, \hat\tau  \; \mathbb{P}\left( \left\{ | G^\mathcal{B}(x,x;E+i0) | \geq \hat\tau\right\} \cap Z_x  \right) + \frac{\|\varrho\|_\infty}{\lambda} \;  \mathbb{P}\left(Z_x^c\right)  \, , 
	\end{equation}
	where $c \in (0,\infty) $ is the constant from Assumption~\ref{assD}. 
\end{enumerate}
\end{lemma}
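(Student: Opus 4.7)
I would prove the three assertions in sequence, exploiting throughout the rank-one structure $G^\mathcal{B}(x,x;\zeta) = (\lambda V(x) - \sigma_x(\zeta))^{-1}$, in which $\sigma_x$ is measurable with respect to the $\sigma$-algebra $\mathcal{F}_x$ generated by $\{V(y): y \neq x\}$ and is therefore independent of $V(x)$. For part~(1), I would observe that upon removing $x$ from $\mathcal{B}$, the graph splits into $K+1$ disjoint subtrees, each a rooted $K$-ary tree isomorphic in distribution to $\mathcal{T}$; consequently $G^{\mathcal{B}_x}(y,y;\cdot)$ has the same law as $\Gamma(0;\cdot)$ for every $y \in \mathcal{N}_x$. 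Under the no-ac hypothesis, for Lebesgue-a.e.~$E$ one has $\Im \Gamma(0;E+i0) = 0$ almost surely, hence $\Im G^{\mathcal{B}_x}(y,y;E+i0) = 0$ a.s.\ for each neighbor $y$, and summing over $y \in \mathcal{N}_x$ yields $\Im \sigma_x(E+i0) = 0$ almost surely.

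For part~(2), I would condition on $\mathcal{F}_x$ and integrate out $V(x)$ against $\varrho(v)\,dv$. Writing $\epsilon_\eta := \Im \sigma_x(E+i\eta)/\lambda$, a direct algebraic manipulation gives
\[
\mathbb{E}\bigl[\Im G^\mathcal{B}(x,x;E+i\eta)\,\big|\,\mathcal{F}_x\bigr] \;=\; \int \frac{\Im \sigma_x(E+i\eta)}{|\lambda v - \sigma_x(E+i\eta)|^2}\,\varrho(v)\,dv \;=\; \frac{\pi}{\lambda}\bigl(P_{\epsilon_\eta}*\varrho\bigr)\!\bigl(\Re \sigma_x(E+i\eta)/\lambda\bigr),
\]
with $P_\epsilon(u) := \epsilon/[\pi(u^2+\epsilon^2)]$ the Poisson kernel. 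By part~(1), $\epsilon_\eta \to 0$ almost surely and $\Re \sigma_x(E+i\eta) \to \sigma_x(E+i0) \in \mathbb{R}$ as $\eta \downarrow 0$, so the standard Poisson-kernel approximation produces the pointwise limit $\varrho(\sigma_x(E+i0)/\lambda)$ at Lebesgue-a.e.~$E$. Dominated convergence (with dominating constant $\|\varrho\|_\infty$), combined with the Herglotz identification $\lim_{\eta \downarrow 0}\mathbb{E}[\Im G^\mathcal{B}(x,x;E+i\eta)] = \pi D(E)$ valid at a.e.~$E$, then delivers the claimed formula.

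For part~(3), I would split $D(E)$ using $\indfct_{Z_x} + \indfct_{Z_x^c}$. The $Z_x^c$ contribution is dominated by $\|\varrho\|_\infty \mathbb{P}(Z_x^c)/\lambda$ via the $L^\infty$ bound on $\varrho$. On the $Z_x$ piece, the hypothesis $\hat\tau \geq 1/\lambda$ makes $\nu := 1/(\lambda\hat\tau)$ admissible in the infimum defining $M$, so Assumption~\ref{assD} yields
\[
\varrho(\sigma_x(E+i0)/\lambda) \;\leq\; c\,M(\sigma_x(E+i0)/\lambda) \;\leq\; \frac{c\lambda\hat\tau}{2}\int \indfct_{|v - \sigma_x(E+i0)/\lambda|\leq 1/(\lambda\hat\tau)}\,\varrho(v)\,dv.
\]
Since by part~(1) $\sigma_x(E+i0)$ is real, the remaining integral equals the $\mathcal{F}_x$-conditional probability that $|\lambda V(x) - \sigma_x(E+i0)| \leq 1/\hat\tau$, i.e.~that $|G^\mathcal{B}(x,x;E+i0)| \geq \hat\tau$. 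Multiplying by $\indfct_{Z_x}$ — which is $\mathcal{F}_x$-measurable because $Z_x$ is independent of $V(x)$ — and taking expectation yields the stated upper bound (up to a harmless numerical factor).

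The main technical obstacle lies in the null-set bookkeeping of part~(2): one must isolate a single full-measure set of energies on which the boundary value $\sigma_x(E+i0)$ exists and is real, $\varrho$ is Lebesgue-differentiable at $\sigma_x(E+i0)/\lambda$, the Herglotz identification of $D(E)$ holds, and the interchange of $\lim_{\eta \downarrow 0}$ with $\mathbb{E}$ is justified. Once this is in place, the remaining work is the explicit rank-one Poisson-kernel computation together with a single use of the local integrability encoded in Assumption~\ref{assD}.
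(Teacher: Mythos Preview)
Your proposal is correct and follows essentially the same route as the paper: the same decomposition of $\sigma_x$ into independent copies of $\Gamma(0;\cdot)$ for part~(1), the same rank-one conditioning and Poisson-kernel/Lebesgue-differentiation argument with Wegner domination for part~(2), and the same $\indfct_{Z_x}+\indfct_{Z_x^c}$ split combined with Assumption~\ref{assD} at scale $\nu=(\lambda\hat\tau)^{-1}$ for part~(3). The one point you flag as an obstacle---that $\sigma_x(E+i0)/\lambda$ is almost surely a Lebesgue point of $\varrho$---is handled in the paper by noting that the distribution of $\sigma_x(E+i0)$ is continuous, which follows from the rank-one dependence of each $G^{\mathcal{B}_x}(y,y;\cdot)$ on $V(y)$; you should invoke this explicitly rather than leaving it as bookkeeping.
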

\begin{proof}
 The proof of the first assertion is based on the observation that, under the no-ac hypothesis, $ \, \Im G^{\mathcal{B}_x}(y,y;E+i0,\omega) = 0  \,$ for $ \mathbb{P} $-almost all $ \omega$, all $ x \in \mathcal{T} $ and all $ y \in \mathcal{N}_x $. This follows from the fact that the Green functions $G^{\mathcal{B}_x}(y,y;E+i0) $ associated with the neighbors, $ y \in \mathcal{N}_x $, are identically distributed to $  \Gamma(0;E+i0) $ and hence $ \Im G^{\mathcal{B}_x}(y,y;E+i0,\omega)   = 0 $ for ${\rm Lebesgue}\times \mathbb{P} $-almost all $ (E,\omega)$. 

The proof of the representation~$2.$ is based on~\eqref{eq:defsigma}. We first condition on the sigma-algebra $ \mathscr{A}_x $ generated by the random variables $ V(y) $, $ y \neq x $, and write
\begin{equation}\label{eq:rank1cor}
 \mathbb{E}\left[ \Im G^\mathcal{B}(x,x;E+i\eta) \,   | \, \mathscr{A}_x  \right] 
	\ = \   \int
	 \varrho(v) \, \Im \left( \lambda v - \sigma_x(E+i\eta) \right)^{-1}  dv   \, .
\end{equation} 
Since $ \lim_{\eta\downarrow 0} \sigma_x(E+i\eta) = \sigma_x(E+i0) $ for almost all $ E \in \mathbb{R} $ and the distribution of $  \sigma_x(E+i0) $ is continuous, Lebesgue's differentiation theorem implies that for  ${\rm Lebesgue}\times \mathbb{P} $-almost all $ (E,\omega)$:
\begin{equation}\label{eq:Lebesguediff}
	\lim_{\eta \downarrow 0 } \, \frac{1}{\pi} \int
	 \varrho(v) \, \Im \left( \lambda v - \sigma_x(E+i\eta;\omega) \right)^{-1}  dv   \ = \ \frac{\varrho( \lambda^{-1} \sigma_x(E+i0;\omega))}{\lambda} \, .
\end{equation}
This together with the dominated convergence theorem, which is based on  the Wegner bound
\begin{equation}\label{eq:Wegner}
	 \mathbb{E}\left[ \Im G^\mathcal{B}(x,x;E+i\eta) \,   | \, \mathscr{A}_x  \right]  \ \leq \ \pi \, \frac{\|\varrho\|_\infty}{\lambda} \, ,
\end{equation}
concludes the proof of  the representation~$2.$

We may now refine $2.$ by first inserting an indicator function of any event  $ Z_x $ which is independent of $ V(x) $ and its complement $ Z_x^c $.  The equalities~\eqref{eq:rank1cor} and \eqref{eq:Lebesguediff} together with \eqref{eq:Wegner} then imply:
\begin{equation}
	D(E) \  \leq \ \lambda^{-1}\,  \mathbb{E}\left[\varrho( \lambda^{-1} \sigma_x(E+i0;\omega)) \, \indfct_{Z_x} \right] +  
	\frac{\|\varrho\|_\infty}{\lambda} \, \mathbb{P}\left(Z_x^c\right)  \, . 
\end{equation}
Using Assumption~\ref{assD}, the first term on the right side is now seen to relate to the probability of extreme deviation events. More precisely, 
for any $\hat \tau \geq \lambda^{-1}$ almost surely 
\begin{align}\label{eq:useD}
	\lambda \, \varrho( \lambda^{-1} \sigma_x(E+i0;\omega) )   \ & \leq \ 2 c  \, \lambda \, \hat\tau  \int \varrho(v) \, \indfct_{|\lambda v -  \sigma_x(E+i0;\omega) | \leq \  \hat\tau^{-1} }  dv  \notag \\
	&  = \ 2 c  \, \lambda \, \hat\tau \; \mathbb{P}\left(   | G^\mathcal{B}(x,x;E+i0) | \geq  \hat\tau \,   | \, \mathscr{A}_x  \right)
\end{align}
This concludes the proof of~\eqref{eq:DosBref}. 
\end{proof}

Based on the above estimates, we may now provide a lower bound on $ \mathbb{E}\left[N\right] $.

\begin{corollary}\label{lem:lowerbound1}
For Lebesgue-almost every $ E \in \sigma(H_\lambda) $ under the no-ac hypothesis
there are $ \alpha \in (0,1) $, $ C , \eta_0 \in (0,\infty) $ and $ n_0 \geq 1 $ such that for all $ n \geq n_0 $ and $ \eta \in (0,\eta_0)$:
\begin{equation}\label{eq:lowerbound1}
	\mathbb{E}\left[ N \right] \ = \ K^n \; \mathbb{P}\left(R_x \cap E_x \cap I_x \right) \geq \ K^n \, \frac{D(E)}{ C\, \tau } \   \geq  \  \frac{D(E)}{ C} \ > 0  \, .
\end{equation}
\end{corollary}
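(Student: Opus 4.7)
The strategy would be to apply the rank-one argument of Lemma~\ref{lem:boundD2}~(3), transposed from $\mathcal{B}$ to the rooted tree $\T$, with the auxiliary event $Z_x := R_x \cap I_x$ and cut-off $\hat\tau = \tau$. Both $R_x$ and $I_x$ are measurable with respect to the $\sigma$-algebra generated by $\{V(y)\}_{y \neq x}$: $R_x$ depends only on the subgraph $\T_x$, and $I_x$ depends only on the forward subtrees past the neighbors $y \in \mathcal{N}_x^+$; hence $Z_x$ is independent of $V(x)$. Taking $n$ large enough that $\tau \geq \lambda^{-1}$, the computation leading to \eqref{eq:DosBref}, applied to $G^\T(x,x;E+i\eta)$ (whose dependence on $V(x)$ is rank-one by \eqref{eq:recur_gen}) and then passing to the limit $\eta \downarrow 0$ via Lebesgue differentiation, should yield
\begin{equation*}
D(E) \ \leq \ 2 c \lambda \tau \, \Pr{E_x \cap R_x \cap I_x} \ + \ \frac{\|\varrho\|_\infty}{\lambda}\, \Pr{(R_x\cap I_x)^c} \, .
\end{equation*}

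Next I would show that $\Pr{(R_x \cap I_x)^c}$ can be made arbitrarily small. For the regular-decay piece, the choice $\ell = L(E) + \delta > L(E)$ allows Theorem~\ref{thm:regGbehav} to drive $\Pr{R_x^c} \to 0$ as $n \to \infty$, uniformly in $\eta \in (0,\eta_0)$ for some $\eta_0 > 0$. For the marginality piece, the $K$ events $\{\Im\Gamma(y;E+i\eta) \geq \xi(\alpha;E+i\eta)\}$ indexed by $y \in \mathcal{N}_x^+$ are independent (they depend on disjoint forward subtrees) and each has probability at least $\alpha$ by the very definition of the upper percentile in Definition~\ref{def:xi}. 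Hence $\Pr{I_x^c} \leq (1-\alpha)^K$, which is made arbitrarily small by choosing $\alpha$ close to $1$. Fixing $\alpha$ so that $(\|\varrho\|_\infty/\lambda)(1-\alpha)^K \leq D(E)/4$ (which is possible since $D(E) > 0$ for $E \in \sigma(H_\lambda)$), and then $n_0, \eta_0$ so that the $R_x^c$ contribution is likewise at most $D(E)/4$, one absorbs the error term to obtain
\begin{equation*}
\Pr{E_x \cap R_x \cap I_x} \ \geq \ \frac{D(E)}{C \, \tau}\, , \qquad C := 4 c \lambda \, .
\end{equation*}

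Multiplying by $K^n$ and using the parameter balance $\log K - L(E) - 2\delta = 2\delta > 0$, so that $K^n/\tau = e^{2\delta n} \geq 1$, one arrives at $\E[N] \geq K^n D(E)/(C\tau) \geq D(E)/C$, as claimed. The main technical subtlety I foresee lies in the transposition of Lemma~\ref{lem:boundD2}~(3) from $\mathcal{B}$ to $\T$: one has to check that the rank-one/Lebesgue-differentiation argument reproduces a bound on $\T$ in terms of a local density of states which is positive on $\sigma(H_\lambda)$ (guaranteed by the coincidence of the $ac$ spectra of $H_\lambda$ on $\T$ and on $\mathcal{B}$ noted in the remarks after Theorem~\ref{thm:main_phi}), and that the resulting estimate is uniform in $\eta$ for small $\eta > 0$. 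The remainder is straightforward assembly of the three pieces via Bonferroni-type inequalities.
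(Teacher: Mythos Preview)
Your overall architecture is correct and matches the paper's: a rank-one estimate of the type in Lemma~\ref{lem:boundD2}(3) with $Z_x = R_x \cap I_x$, followed by driving $\Pr{R_x^c}$ small via Theorem~\ref{thm:regGbehav} and $\Pr{I_x^c}$ small via the percentile definition, and finally using $K^n/\tau = e^{2\delta n} \ge 1$. Those steps are fine (your bound $\Pr{I_x^c}\le(1-\alpha)^K$ is even a touch sharper than what the paper uses).

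The gap is precisely where you flag a ``technical subtlety'', and your proposed resolution does not work. If you run the rank-one argument directly on $\T$, the quantity that appears on the left is not $D(E)$ but the site-dependent local density $D^\T_x(E):=\pi^{-1}\lim_{\eta\downarrow 0}\E[\Im G^\T(x,x;E+i\eta)]$. Two problems: (i) the positivity of $D^\T_x(E)$ has nothing to do with the coincidence of \emph{ac} spectra; the density of states is the averaged total spectral measure, not the ac part, and under the no-ac hypothesis the ac density is zero while $D(E)>0$. (ii) Even granting $D^\T_x(E)>0$ for each fixed $x$, you need a lower bound \emph{uniform in $n=|x|$}, since the final inequality must hold for all large $n$; nothing you wrote addresses that.

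The paper resolves this by keeping $D(E)$ on $\mathcal B$ (where it is $x$-independent by homogeneity and positive on $\sigma(H_\lambda)$ by \cite{AK,CL}) and then comparing $G^{\mathcal B}(x,x;\zeta)$ with $G^{\T}(x,x;\zeta)$ via the explicit perturbation identity
\[
\big|G^{\mathcal B}(x,x;\zeta)^{-1}-G^{\T}(x,x;\zeta)^{-1}\big|\ \le\ \big|G^{\mathcal B_x}(0_-,0_-;\zeta)\big|\,\big|G^{\T_x}(0,x_-;\zeta)\big|^2,
\]
and introducing two auxiliary events $\hat B_x=\{|G^{\mathcal B_x}(0_-,0_-)|\le t\}$ and $\hat R_x=\{|G^{\T_x}(0,x_-)|^2\le e^{-\delta n}\tau^{-1}\}$ whose failure probabilities are made $\le \tfrac{\lambda D(E)}{8\|\varrho\|_\infty}$ via \eqref{eq:pickt} and Theorem~\ref{thm:regGbehav}. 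On $\hat B_x\cap\hat R_x$ the $\mathcal B$-event $\{|G^{\mathcal B}(x,x)|\ge 2\tau\}$ forces the $\T$-event $E_x$, and one passes from $\eta=0$ to small $\eta>0$ by continuity of the probabilities. This $\mathcal B$-to-$\T$ transfer is the substantive content you are missing; without it there is no $n$-uniform positive lower bound to feed into the final inequality.
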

\begin{proof}
The continuity 
\begin{equation}
\lim_{\eta\downarrow 0}\,  \mathbb{P}\left( \left\{ | G^\mathcal{B}(x,x;E+i\eta) | \geq 2\tau\right\} \cap Z_x  \right) = \mathbb{P}\left( \left\{ | G^\mathcal{B}(x,x;E+i0) | \geq 2\tau\right\} \cap Z_x  \right) 
\end{equation}
for almost every $ E \in \mathbb{R} $, guarantees the validity of~\eqref{eq:DosBref}  with $ 2c $ replaced by $ c $ and all $ \eta $ small enough. To extend this estimate to the Green function associated with the regular rooted tree $ \mathcal{T} $, we naturally embed $ \ell^2(\mathcal{T }) $ into $ \ell^2(\mathcal{B} )$ and use 
perturbation theory, the general recursion relation~\eqref{eq:recur_gen} and the multiplicativity~\eqref{gen_fact}: 
\begin{align}
	\left|G^\mathcal{B}(x,x;\zeta)^{-1} -G^\mathcal{T}(x,x;\zeta)^{-1} \right| \ & \leq \ \left| 	 \Gamma^{\mathcal{B}_x}( x_-;\zeta) -  \Gamma^{\mathcal{T}_x}( x_-;\zeta)\right| \notag \\
	& \leq \  \left| G^{\mathcal{B}_x}(0_-,x_-;\zeta) \right| \left| G^{\mathcal{T}_x}(0,x_-;\zeta) \right|  \notag \\
	& = \ \left| G^{\mathcal{B}_x}(0_-,0_-;\zeta) \right| \left| G^{\mathcal{T}_x}(0,x_-;\zeta) \right|^2 \, .  
\end{align}
For all $ E \in \mathbb{R} $ such that $ D(E) >0 $ there exists $ t > 0 $ such that  according to~\eqref{eq:pickt} the event $ \hat B_x := \{ \left| G^{\mathcal{B}_x}(0_-,0_-;E+i\eta) \right| \leq t \} $ has for all $ \eta > 0 $ a probability of at least 
\begin{equation}
	\mathbb{P}( \hat B_x  ) \ \geq \ 1 -  \frac{\lambda D(E)}{8 \, \| \varrho \|_\infty} \ > 0 \, . 
\end{equation}
Moreover, according to Theorem~\ref{thm:regGbehav} and since $e^{-\delta n} \tau^{-1} = K^{-n}  > e^{-2 n L(E) } $, there is $ n_0 \geq 1 $ and $ \eta_0 \in (0,\infty) $ such that for all $ n \geq n_0 $ and $\eta \in (0,\eta_0 ) $ the event $ \hat R_x := \{  \left| G^{\mathcal{T}_x}(0,x_-;E+i\eta) \right| \leq \sqrt{e^{-\delta n} \tau^{-1}} \} $ has a probability of at least
\begin{equation}
	\mathbb{P}( \hat R_x  ) \ \geq \ 1 - \frac{\lambda D(E)}{8 \, \| \varrho \|_\infty} \ > 0 \, . 
\end{equation}  
Summarizing the above estimates, we conclude that there is $ n_0 \geq 1 $ and $ \eta_0 \in (0,\infty) $ such that for all $ n \geq n_0 $ and $\eta \in (0,\eta_0 ) $  and any event $ Z_x $ which is independent of $ V(x) $:
\begin{align}
D(E)  \ \leq & \ c \, \lambda \, \tau \; \mathbb{P}\left( \left\{ | G^\mathcal{B}(x,x;E+i\eta)^{-1} | \leq (2\tau)^{-1}\right\} \cap  \hat B_x  \cap  \hat R_x  \cap Z_x  \right) \notag \\
& \quad  +  \frac{\|\varrho\|_\infty}{\lambda} \;  \mathbb{P}\left( \hat B_x ^c \cup  \hat R_x^c \cup  Z_x^c\right)  \notag \\
\leq & \ c \, \lambda \, \tau \, \mathbb{P}\left( E_x  \cap Z_x  \right)  +  \frac{\|\varrho\|_\infty}{\lambda} \;  \mathbb{P}\left(Z_x^c\right)  + \frac{1}{4} \, D(E) \, . 
\end{align}
We apply this bound to $ Z_x = R_x \cap I_x $. Since $ \mathbb{P}\left( R_x^c \cup I_x^c \right)  \leq \mathbb{P}\left( R_x^c \right) + \mathbb{P}\left( I_x^c\right)  \ \leq \  \mathbb{P}\left( R_x^c \right) + 1 - \alpha $.  By Theorem~\ref{thm:regGbehav},  there is $ n_1 \geq n_0 $ and $ \eta_1 \in (0,\eta_0] $ such that for all $ n \geq n_1 $ and $ \eta \in (0,\eta_1) $
\begin{equation}
	\mathbb{P}\left(R_x \right) \ \geq \ 1 - \frac{\lambda D(E)}{8 \, \| \varrho \|_\infty} \ > 0 \, . 
\end{equation} 
Choosing $ \alpha := 1 -  \frac{\lambda D(E)}{8 \, \| \varrho \|_\infty} $ completes the proof of~\eqref{eq:lowerbound1}. 
\end{proof}

\subsection{The enabling second moment upper bound}
The mere fact that the mean number of events diverges, for $n\to \infty$ (cf.~\eqref{eq:lowerbound1})  does not yet imply that such events do occur with uniformly positive probability.   The  alternative is that the divergence reflects an increasingly rare but also increasingly correlated occurrence of these events. 
To prove that  the resonances do occur regularly,  on sufficiently large spheres $ \mathcal{S}_n $, we use the second-moment method which is based on the 
following estimate. 

\begin{lemma} \label{lem:upperbound1}
Assuming $ L(E) < \log K $,
 there is $ C \in (0,\infty) $ such that  for all $ n \geq 1$, all $ \eta >0 $ and all $ \alpha \in (0,1) $:
\begin{equation}
	\mathbb{E}\left[N(N-1) \right] \ \leq \ C \, \tau^{-2} \,  K^{2n} \,  .
\end{equation}
\end{lemma}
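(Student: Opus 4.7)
\medskip

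\noindent\textbf{Proof plan for Lemma~\ref{lem:upperbound1}.} The plan is to expand the factorial moment as a sum over ordered pairs of distinct sites in $\mathcal{S}_n$ and to bound each pair probability by a two-dimensional Wegner-type estimate. Since $|\mathcal{S}_n| = K^n$, there are at most $K^{2n}$ such pairs, so it will suffice to show
\[
\mathbb{P}(E_x \cap E_y) \leq C/\tau^2
\]
uniformly over distinct $x, y \in \mathcal{T}$, all $\eta > 0$, and all $\alpha \in (0,1)$. Dropping the events $R_x \cap I_x$ and $R_y \cap I_y$ (which only make the probability smaller),
\[
\mathbb{E}[N(N-1)] = \sum_{\substack{x,y \in \mathcal{S}_n \\ x \neq y}} \mathbb{P}\bigl(E_x \cap R_x \cap I_x \cap E_y \cap R_y \cap I_y\bigr) \;\leq\; \sum_{\substack{x,y \in \mathcal{S}_n \\ x \neq y}} \mathbb{P}(E_x \cap E_y).
\]

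To bound a single pair probability, I would condition on the sigma-algebra $\mathscr{A}_{xy}$ generated by $\{V(u) : u \notin \{x,y\}\}$, leaving only $V(x)$ and $V(y)$ random. By a rank-2 Krein/Schur-complement computation (analogous to~\eqref{eq:defsigma} but for two sites), the $2 \times 2$ diagonal block $M := [G(u,v;\zeta)]_{u,v \in \{x,y\}}$ satisfies $M^{-1} = A + \lambda\, \mathrm{diag}(V(x), V(y))$, where $A$ is $\mathscr{A}_{xy}$-measurable (obtained by inverting the $2 \times 2$ block of the resolvent with $V(x), V(y)$ set to zero, and then subtracting $\lambda W$). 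Setting $\alpha := A_{xx} + \lambda V(x)$, $\beta := A_{yy} + \lambda V(y)$, $k := A_{xy} A_{yx}$, and $\Delta := \alpha\beta - k$, the adjugate formula gives $G(x,x) = \beta/\Delta$ and $G(y,y) = \alpha/\Delta$, so the joint event takes the explicit form
\[
E_x \cap E_y \;=\; \bigl\{\, |\alpha - k/\beta| \leq 1/\tau \ \text{and}\ |\beta - k/\alpha| \leq 1/\tau \,\bigr\}.
\]

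I would then evaluate $\mathbb{P}(E_x \cap E_y \mid \mathscr{A}_{xy})$ as the two-dimensional integral of $\varrho(v_1)\varrho(v_2)$ over this set. The natural change of variables is $(V(x), V(y)) \mapsto (1/G(x,x), 1/G(y,y))$ (or equivalently from $(\alpha,\beta)$ to $(\alpha - k/\beta,\, \beta - k/\alpha)$), whose Jacobian can be computed in closed form and equals $\lambda^2 \Delta (\alpha\beta + k)/(\alpha\beta)^2$. Using the bounded density $\varrho$ (Assumption~\ref{assC}) and carefully accounting for both the multiple preimages of the map and the degeneracy of the Jacobian along the resonant locus $\Delta = 0$, one obtains the uniform bound $\mathbb{P}(E_x \cap E_y \mid \mathscr{A}_{xy}) \leq C \|\varrho\|_\infty^2 / (\lambda^2 \tau^2)$. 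Taking expectation and summing over the at most $K^{2n}$ pairs yields the claim.

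The main obstacle will be exactly this 2D Wegner estimate. A naive sequential application of the one-dimensional Wegner bound---first integrating out $V(x)$ at fixed $V(y)$, then integrating over $V(y)$---gives only the weaker bound $O(1/\tau)$ per pair, because the joint constraint concentrates on a tubular neighborhood of the one-dimensional hyperbola $\{\alpha\beta = k\}$ with varying width $\min(|\alpha|,|\beta|)/\tau$. Getting the sharper $1/\tau^2$ requires simultaneously exploiting the decay of the Green function off-diagonal entries (which controls the size of $k = A_{xy}A_{yx}$ through the rest of the environment) and the shrinkage of the tube's cross-section; the assumption $L(E) < \log K$ enters through the finite-volume bounds of Theorem~\ref{thm:phi} that ensure the relevant fractional moments of $k$ are summable over pairs at the requisite exponential rate.
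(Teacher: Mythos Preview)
Your plan contains a genuine gap: the claimed uniform bound
\[
\mathbb{P}\bigl(E_x \cap E_y \,\big|\, \mathscr{A}_{xy}\bigr) \ \leq \ \frac{C\,\|\varrho\|_\infty^2}{\lambda^2 \tau^2}
\]
is simply false, and no amount of ``carefully accounting for the degeneracy of the Jacobian'' will rescue it. You yourself identify the mechanism in the last paragraph: the joint event concentrates on a tube around the hyperbola $\{\alpha\beta = k\}$, and the best one can say conditionally is (essentially the content of Theorem~\ref{thm:twoL1})
\[
\mathbb{P}\bigl(E_x \cap E_y \,\big|\, \mathscr{A}_{xy}\bigr) \ \leq \ \frac{C}{\tau}\Bigl(\sqrt{|k|} + \frac{1}{\tau}\Bigr),
\]
with $k = A_{xy}A_{yx}$. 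When $x$ and $y$ are close, $|k|$ is of order one and the conditional bound degrades to $O(1/\tau)$; this is not an artifact of the method but reflects the actual size of the resonant region. So your two-step scheme --- uniform $C/\tau^2$ per pair, then multiply by $K^{2n}$ pairs --- cannot close.

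What the paper does, and what your final paragraph gestures at without committing to, is to keep the $k$-dependent term and push it through the expectation and the pair sum. On a tree $A_{xy} = G^{\mathcal{T}_{x,y}}(x_-,y_-;\zeta)$, so after bounding $\min\{1,|\cdot|\}\le |\cdot|^s$ one uses the finite-volume estimate of Theorem~\ref{thm:phi}, which gives $\mathbb{E}\bigl[|G^{\mathcal{T}_{x,y}}(x_-,y_-)|^s\bigr]\le C\,K^{-s\,\dist(x,y)/2}$. The crucial step is the \emph{specific} choice
\[
s \ = \ \frac{L(E)+2\delta}{\log K} \ \in (0,1),
\]
available exactly because $L(E)<\log K$; with this $s$ one has $K^{-sn}=\tau^{-1}$, and organizing the pair sum by the depth of the common ancestor yields $\sum_{x\neq y}\mathbb{E}[\,|G^{\mathcal{T}_{x,y}}(x_-,y_-)|^s\,] \le C\,K^{(2-s)n} = C\,\tau^{-1}K^{2n}$. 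Combined with the leading $C/\tau$ from the two-site bound this produces the required $C\,\tau^{-2}K^{2n}$. In short: the hypothesis $L(E)<\log K$ is not used to repair a uniform $1/\tau^2$ bound --- it is used to choose the fractional exponent so that the \emph{non-uniform} $1/\tau$ contributions sum to $\tau^{-2}K^{2n}$.
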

\begin{proof}
	Throughout the proof appearing constants $C\in (0,\infty) $ will be independent of $ n$, $\eta $ and $\alpha$. We start from the observation that 
	\begin{align}
		\mathbb{E}\left[N(N-1) \right] \ = &   \sum_{ \substack{x, y  \in \mathcal{S}_n \\ x\neq y}} \mathbb{P}\left(R_x \cap E_x \cap I_x\cap R_y\cap E_y \cap I_y \right) \ \leq  \sum_{ \substack{x, y  \in \mathcal{S}_n \\ x\neq y}} \mathbb{P}\left(E_x \cap E_y  \right)  \, . \label{eq:N(N-1)} 
	\end{align}
	The probability in the right side is estimated using the weak-$L^1$ bound for pairs of Green function in  Theorem~\ref{thm:twoL1} below. Denoting
		by $ \mathscr{A}_{xy} $ the sigma-algebra generated by the random variables $ V(u) $, $ u \not\in \{x ,y\} $, it yields
	\begin{align}
		\mathbb{P}\left(E_x \cap E_y  \right) \ = & \ \mathbb{E}\left[\mathbb{P}\left( E_x \cap E_y  \, \big| \, \mathscr{A}_{xy} \right)\right] \notag \\
		 \leq & \ \frac{C}{\tau}  \left(  \frac{1}{\tau} + \mathbb{E}\left[\min\left\{ 1 , | G^{\mathcal{T}_{x,y}}(x_-,y_-;E+i\eta)| \right\}\right] \right) \, ,
	\end{align}
	with some constant $ C \in (0,\infty) $. The first term is already of the desired form since the number of terms in the sum in~\eqref{eq:N(N-1)} is bounded by $ K^{2n} $. 
	To estimate the second  term we use $ \min\{ 1, |x| \} \leq |x|^s$  valid for any $ s\in [0,1] $. Choosing 	
	\begin{equation}\label{eq:picks1}
		s  := \  \frac{L(E) + 2 \delta}{\log K} \ \in \ (0,1) \, , 
	\end{equation}
	we estimate the factional-moment with the help of the finite-volume bounds~\eqref{eq:finitevolume} and the upper bound in~\eqref{eq:exclusion}:
	\begin{equation}
		\mathbb{E}\left[ | | G^{\mathcal{T}_{x,y}}(x_-,y_-;E+i\eta)|^s \right] \ \leq \ C \, K^{- \frac{s}{2} \dist(x,y) }
	\end{equation}
	with some constant $C \in (0,\infty) $. The corresponding sum contributing to~\eqref{eq:N(N-1)} is estimated by fixing $ x \in  \mathcal{S}_n $ and summing over the distance of the least common ancestor of $ x $ and $ y $ to the root:
	\begin{align}
		\sum_{ \substack{x, y  \in \mathcal{S}_n \\ x\neq y}}  \mathbb{E}\left[ | | G^{\mathcal{T}_{x,y}}(x_-,y_-;E+i\eta)|^s \right]  \  
		\leq & \  C \, K^{n} \, \sum_{j=0}^{n-1} K^{n-j} \, K^{- s (n-j)}  \notag   \\[-1ex]
		\leq & \ C \,  K^{(2-s) n }  \  =  \  C \,  \tau^{-1}  K^{2n}  \, ,  
	\end{align}
	where the last inequality is based on~\eqref{eq:picks1}. 
\end{proof}


We are now ready for the proof of the main result of this section.

\begin{proof}[Proof of Theorem~\ref{lem:largeG}; the Lyapunov exponent criterion] 
By Corollary~\ref{lem:lowerbound1} and Lemma~\ref{lem:upperbound1}, there are $ \alpha \in (0,1) $ (which is one of the parameters in the definition of $N$), $ C , \eta_0 \in (0,\infty) $ and $ n_0 \geq 0 $ such that for all $ n \geq n_0 $ and $ \eta \in (0,\eta_0)$:
\begin{align}
\frac{\mathbb{E}\left[N^2\right]}{\mathbb{E}\left[N\right]^2} \ = & \ \frac{1}{	\mathbb{E}\left[N\right]} \ + \ \frac{\mathbb{E}\left[N(N-1)\right]}{\mathbb{E}\left[N\right]^2} \ \leq  \ C \, .    
\end{align}
Hence, second-moment bound~\eqref{eq:N21} allows us to conclude that 
$\mathbb{P}\left(N \geq 1 \right) \ \geq \  C^{-1}$ uniformly in $n>n_0$ and $ \eta \in (0,\eta_0)$.   

However, whenever  $ N \geq 1 $ one may conclude that the quantity which appears in the left side of~\eqref{eq:largeG}  satisfies
\begin{equation} \label{eq:bnd}
\max_{x \in  \mathcal{S}_n}\,  |G(0,x;E+i\eta)| \ \indfct_{\max_{y \in \mathcal{N}_x^+} \Im\Gamma(y;E+i\eta) \geq \xi(\alpha;E+i\eta) } \geq \ e^{\delta n } \, . 
\end{equation}
Taken together, \eqref{eq:bnd} and the above probability estimate directly imply the part of Theorem~\ref{lem:largeG} which relates to the Lyapunov exponent criterion, with $2p_0 = C^{-1}$).  
\end{proof} 
As was shown in Section~\ref{lem:largeG}, the above result implies the Lyapunov exponent criterion which is stated in Theorem~\ref{thm:mainL}.

\section{Resonances enhanced by large deviations} 
 \label{sec:GFextrev}
 
As explained  in the introduction, while the Lyapunov exponent criterion is very useful it does not yet cover the full regime of extended states.    Our next aim is to establish an extended version of this criterion,  improved through the incorporation in the argument of the large deviation considerations.  The result is stated above as the second part of  Theorem~\ref{lem:largeG}.  We now turn to its proof, following the outline which is given in Section~\ref{sec:heuristic}. 
The strategy has much in common with the derivation of the Lyapunov exponent criterion, however  the proof involves some additional technicalities.  Since the applications which are discussed in the introduction rely on just the Lyapunov exponent criterion,  only the more dedicated reader may wish to follow this Section. 

\subsection{Selection of auxiliary parameters}\label{subsec:fixparameters}

 For the remainder of this subsection, we fix the disorder parameter $ \lambda > 0 $ and an energy $ E \in \mathbb{R} $ such that 
 $\varphi(t) \equiv \varphi(t;E) = \lim_{\eta \downarrow 0} \varphi(t;E+i\eta)$ exists for all $ t \in [-\varsigma,1) $ and~\eqref{phi_ac} holds, i.e.,
 \begin{equation}\label{eq:defDelta}
	\Delta\  := \log K + \varphi(1;E)  \,  \in \left(0, \tfrac{1}{2} \log K \right) \, .
\end{equation} 
Due to the convexity of $\varphi(s)$ and~\eqref{eq:exclusion}, under the assumption \eqref{eq:defDelta} the left derivative of $\varphi$ satisfies (see Figure~\ref{fig:phi}): 
\be  \label{eq:phi'}
0 \  <  \ - \varphi_-'(1) \  \le  \  \Delta  \, . 
\ee

We proceed by associating to the given $ \lambda $ and $ E  $  certain  parameters ($\gamma $,  $ \beta $, $\kappa $,  $ \epsilon $, and $ \tau $) which will also be kept fixed  for the remainder of this section. These parameters feature in the definition of the resonance events which will be associated with vertices on the sphere $ \mathcal{S}_{n}$ of radius $ n \in \mathbb{N} $. To control  the correlations among such events  we restrict to  vertices on the thinned sphere 
$	\mathcal{S}_{n}^\kappa \subset \mathcal{S}_{n}  $
associated with the parameter $ \kappa   $ which we pick in the range: 
\begin{equation}\label{eq:setdelta}
	 \kappa \ \in \ \left( 0 \, , \, \min\left\{\tfrac{\Delta}{16 \, \ell}, \tfrac{1}{4} \right\} \right)  \, , 
\end{equation} 
where $ \ell > L(E) $ is fixed (largely arbitrary).
The thinned sphere $ \mathcal{S}_{n}^\kappa $, whose radius shall be larger than $  4 \, \lceil \kappa^{-1} \rceil $, is characterized by the \emph{length scales}
$ n_\kappa := 2 \, \lfloor{\tfrac{\kappa n}{2} \rfloor} \, \in  2\, \mathbb{N} $ and $ N_\kappa := n -  n_\kappa $. 
The first one is only a fraction of the second length scale, i.e.
\begin{equation}\label{eq:estimaten}
	\tfrac{1}{2}\, \kappa \,  n \ \leq \ n_\kappa \leq  \ \kappa \, n \, , \quad n_\kappa \leq \tfrac{\kappa}{1-\kappa} \, N_\kappa \leq \tfrac{4}{3} \, \kappa \, N_\kappa \, .  
\end{equation}
Then $ \mathcal{S}_{n}^\kappa $ is uniquely determined by having $ K^{N_\kappa} $ vertices with $ 2 n_\kappa  +1$ vertices separating them, cf.\ Figure~\ref{fig:largedevsetup}. \\

We now pick a value $ s \in (0,1)$   at which the free energy function $ t \mapsto  \varphi(t) $ is differentiable, and such that
\begin{enumerate}[a)]
	\item  the derivative at $s$, satisfies
	\begin{equation}\label{eq:defgamma}
		\gamma  := -  \varphi'(s) \; \geq \ \Delta \ > \ 0 \, ,   
	\end{equation}
\item 	the following condition holds 
	\begin{equation}\label{eq:assumphi}
		I(\gamma) + \gamma  \ = \ -\left[  \varphi(s) +(1-s) \varphi'(s) \right] \  \leq \ \log K - \tfrac{7}{8} \Delta \, ,  
	\end{equation}
	\item and in addition $(1-s) < 1/16$ and $ \varphi(s) < - \tfrac{1}{2}\log K $.
\end{enumerate}
In view of \eqref{eq:defDelta} and \eqref{eq:phi'}, and the convexity of $\varphi$, the above conditions are satisfied at a dense collection of values of $s$ approaching $1$ from below (see Figure~\ref{fig:phi}).   
(Condition~c) is only  imposed to simplify some of the estimates.)

\begin{figure}
\begin{center}
\includegraphics[scale=.3]{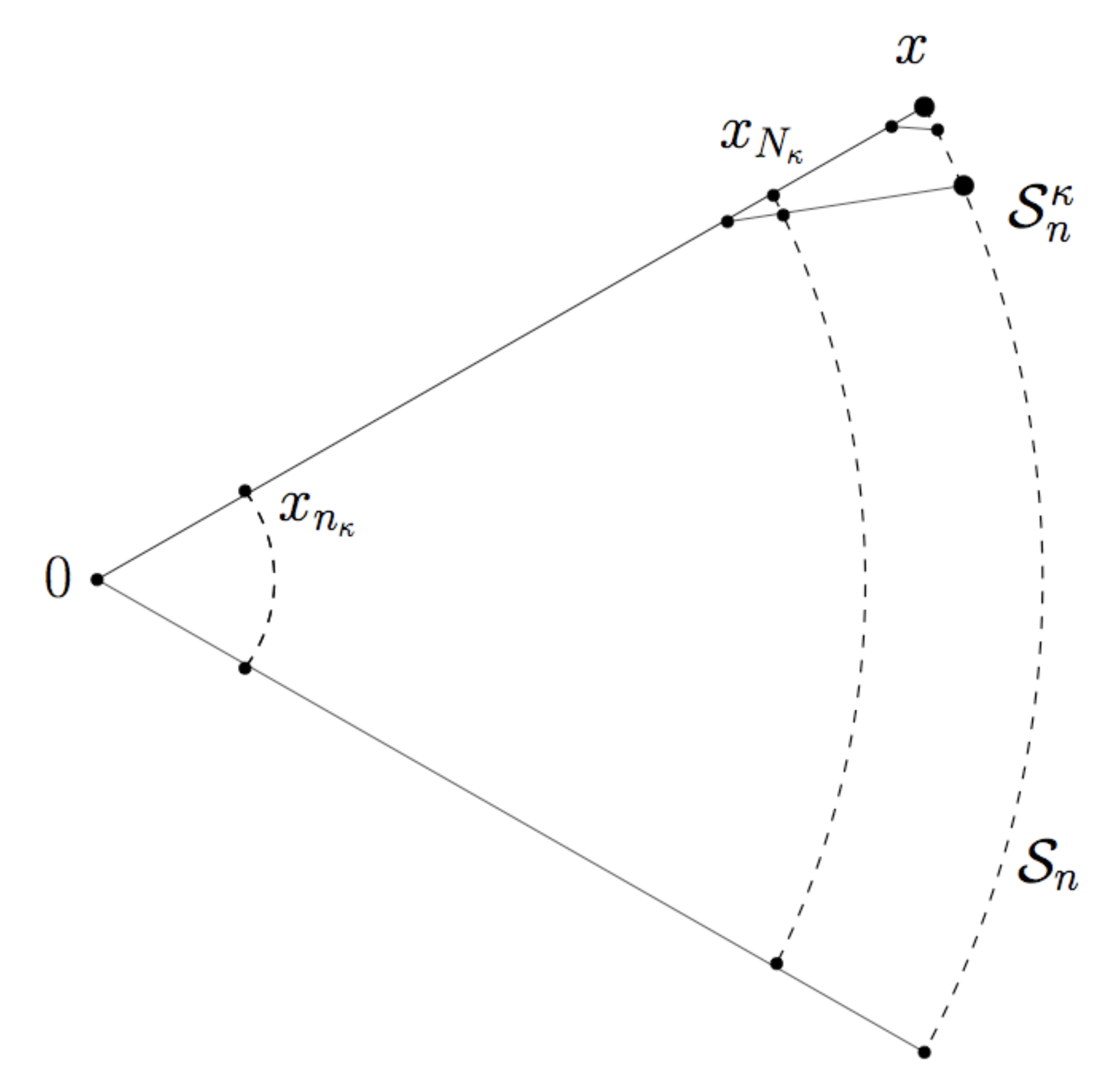} 
\caption{The geometry of the resonance-boosted large-deviation event.}\label{fig:largedevsetup}
\end{center}
\end{figure}

The parameter $ \gamma $ will be used as a target-value for the decay of the Green function in the large deviation events $L_x $ defined below. 
For any site  $ x \in \mathcal{S}_n$ we label  the vertices of the unique path from the root to $x$ as  $ x_0 = 0, x_1, \dots , x_n = x$, and we denote as  \begin{equation}
\widehat{\mathcal{T}}_x  :=  \mathcal{T}_{x_{n_\kappa-1},x} 
\end{equation}
the tree  truncated beyond the segment of length $ N_\kappa $ whose end points are 
  $ \{ x_{n_\kappa-1},x \}$   (cf.~Figure~\ref{fig:largedevsetup}). 
  Associated with this segment there are the two collections of variables $\{ \Gamma_+(j;\eta)\}_{j=1}^{N_\kappa}$ and $\{ \Gamma_-(j;\eta)\}_{j=1}^{N_\kappa}$: 
   \begin{align} 
  \label{eq:deftriangulara}
	\Gamma_+(j;\eta)\   & :=\  G^{\T_{ x_{n-j-1}, x}}( x_{n-j}, x_{n-j};E+i\eta)  \, , \notag \\
	\Gamma_-(j;\eta) \ & := \ G^{\T_{x_{n_\kappa -1},x_{n_\kappa +j }}}(x_{n_\kappa -1+j},x_{n_\kappa -1+j};E+i\eta)\,  , 
\end{align}
such that by~\eqref{gen_fact}: 
\begin{equation}\label{eq:consist}
	G^{ \widehat{\mathcal{T}}_x }(x_{n_\kappa}, x_{n-1} ; E +i\eta) \ = \ \prod_{j=1}^{N_\kappa} \Gamma_+(j;\eta) \ = \ \prod_{j=1}^{N_\kappa} \Gamma_-(j;\eta) \, . 
\end{equation}
\begin{definition}\label{def:ldev}  
We refer to the following as the \emph{large-deviation events} associated with sites $ x \in \mathcal{S}_n $  and $  \eta ,\, \epsilon> 0 $  
\begin{equation}
	L_x:= L^{(\rm bc)}_x\, \cap  \bigcap_{k=\tfrac{1}{2} n_\kappa}^{N_\kappa} \left( L^{(k,+)}_x \cap L^{(k,-)}_x  \right) \, , 
\end{equation}
where  for any $ k \in \{1, \dots , N_\kappa\}$: \quad
$\displaystyle	L^{(k,\pm)}_x  \  := \ \Big\{  \prod_{j= 1}^k |  \Gamma_\pm(j;\eta)| \, \in \,  e^{-\gamma k}  \big[e^{-\epsilon k} , \, e^{\epsilon k}\big]  \Big\} $, \\
	and $ L^{(\rm bc)}_x \  := \  \left\{   |\Gamma_+(N_\kappa;\eta) |  \ \leq \tfrac{b}{2} \right\} \cap \left\{  |\Gamma_-(N_\kappa;\eta) |  \ \leq \tfrac{b}{2} \right\} $. 
\end{definition}	
We will suppress the dependence on $\eta $ and $ \epsilon $ (whose value is fixed below). 

The boundary events $  L^{(\rm bc)}_x $ play a role in the following context: 
{\it i.\/}~the lower bound on the probability of $R_x $ given below in Lemma~\ref{lem:RandL}, and {\it ii.\/}~the estimate \eqref{eq:usedLbc}  on the size of the self-energy at $ x $ are  derived only under the condition $  L^{(\rm bc)}_x $.  
The parameter 
$ b  $ 
  is fixed at a value large enough so that
    \begin{enumerate}[a)]
    \item $ b \geq \frac{ 2 \|\varrho\|_\infty}{\lambda} \max \big\{ 16 , \big( 1- (3/4)^K \big)^{-1} \big\} $, and
  \item $ 
\mathbb{P}_s\left( L^{(\rm bc)}_x \right) \geq \tfrac{7}{8} $, cf.~\eqref{eq:defPs},
\end{enumerate}
the latter being possible thanks to~\eqref{eq:pickt2}. (The numbers are largely arbitrary.)

To fix the parameter $ \epsilon$, we invoke the following  large-deviation statement  which is derived in the Appendix~\ref{App:Ldevs}. 
\begin{theorem}\label{cor:ldev1}
	For any $ \epsilon>0  $ there is $ \eta_0 > 0 $ and $ n_0 >0 $ such that for all $ \eta \in (0,\eta_0) $ and all $n = \dist(x,0) \geq k \geq n_0$:
	\begin{align} 
	  \mathbb{P}\left( L_x(\eta;\epsilon) \right) \  &\geq \ e^{-N_\kappa \left( I(\gamma) + 2 \epsilon \right)}  \, , \label{eq:Llbound1} \\
	  	  \mathbb{P}\left( L^{(k,\pm)}_x(\eta;\epsilon)  \right) \ & \leq \  e^{-  (I(\gamma)-2 \epsilon) \, k  }\, .  \label{eq:Llbound2}
	\end{align}
	\end{theorem}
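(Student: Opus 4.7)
The plan is to implement the classical Cram\'er change-of-measure scheme, adapted to the tree through the Green-function factorization. I would split the argument into an upper-bound step for~\eqref{eq:Llbound2} (direct Chebyshev with an optimally chosen fractional moment) and a lower-bound step for~\eqref{eq:Llbound1} (exponential tilt combined with a law-of-large-numbers on the tilted measure). In both directions the key analytic inputs are the finite-volume moment bounds~\eqref{eq:finitevolume} of Theorem~\ref{thm:phi} and the fact that at the value of $s\in(0,1)$ fixed in Subsection~\ref{subsec:fixparameters}, the Legendre duality $\varphi(s)+s\gamma=-I(\gamma)$ holds with $\gamma=-\varphi'(s)$.

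\emph{Upper bound \eqref{eq:Llbound2}.} By the factorization identity underlying~\eqref{eq:consist}, the partial product $\prod_{j=1}^{k}|\Gamma_\pm(j;\eta)|$ equals $|G^{\widehat{\mathcal T}_x}|$ on a subtree segment of length~$k$. Chebyshev's inequality and~\eqref{eq:finitevolume} then yield
\begin{equation*}
\mathbb{P}(L^{(k,\pm)}_x)\ \leq\ e^{s(\gamma+\epsilon)k}\,\mathbb{E}\!\left[\Big(\prod_{j=1}^{k}|\Gamma_\pm(j;\eta)|\Big)^{s}\right]\ \leq\ C_\pm^{2}\,e^{k[\varphi(s)+s(\gamma+\epsilon)]}\ =\ C_\pm^{2}\,e^{-k[I(\gamma)-s\epsilon]}.
\end{equation*}
Since $s<1$, the sub-exponential factor $C_\pm^{2}$ is absorbed in an additional $\epsilon k$ for $k\geq n_0$, and uniformity in $\eta\in(0,\eta_0)$ comes from the uniformity clause of Theorem~\ref{thm:phi}.

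\emph{Lower bound by Cram\'er tilting.} Setting $\widehat G_x := G^{\widehat{\mathcal T}_x}(x_{n_\kappa},x_{n-1};E+i\eta)$, I would introduce the tilted law
\begin{equation*}
\mathbb{P}_s(A)\ :=\ \frac{\mathbb{E}[|\widehat G_x|^{s}\mathbf{1}_A]}{\mathbb{E}[|\widehat G_x|^{s}]},\qquad \mathbb{P}(A)\ =\ \mathbb{E}[|\widehat G_x|^{s}]\,\mathbb{E}_s[|\widehat G_x|^{-s}\mathbf{1}_A].
\end{equation*}
On $L_x$ the constraint at $k=N_\kappa$ forces $|\widehat G_x|\leq e^{-(\gamma-\epsilon)N_\kappa}$, hence $|\widehat G_x|^{-s}\geq e^{s(\gamma-\epsilon)N_\kappa}$; combining with the lower half of~\eqref{eq:finitevolume} gives $\mathbb{P}(L_x)\geq C_\pm^{-2}e^{-N_\kappa[I(\gamma)+s\epsilon]}\mathbb{P}_s(L_x)$. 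Absorbing the prefactor into an additional $\epsilon N_\kappa$ reduces~\eqref{eq:Llbound1} to the assertion that $\mathbb{P}_s(L_x)\geq\tfrac12$ for all large $N_\kappa$ and small $\eta$. The latter follows from the uniform law of large numbers
\begin{equation*}
\mathbb{P}_s\Big(\,\big|\tfrac1k\textstyle\sum_{j=1}^{k}\log|\Gamma_\pm(j;\eta)|+\gamma\big|\leq\epsilon\,\Big)\ \longrightarrow\ 1\qquad(k\in[\tfrac12 n_\kappa,N_\kappa]),
\end{equation*}
plus the boundary estimate $\mathbb{P}_s(L^{(\mathrm{bc})}_x)\geq 7/8$ which is built into the choice of $b$ in Subsection~\ref{subsec:fixparameters}. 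The shifted mean $-\gamma$ is identified by the Legendre duality $\partial_s(N_\kappa^{-1}\log\mathbb{E}[|\widehat G_x|^{s}])\to\varphi'(s)=-\gamma$, and the concentration is obtained by a second-moment Chebyshev applied to the tilted variables, distributed according to a Markov chain inherited from the recursion~\eqref{eq:recur_gen} (each step receives only fresh $V$-randomness at one new vertex and an independent side-branch subtree), followed by a union bound over the $O(N_\kappa)$ scales.

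\emph{Main obstacle.} The crux lies in the tilted-measure concentration of Step~3: the weight $|\widehat G_x|^s$ is a long-range functional of the full path variables and does not factorize, so the successive increments $\log|\Gamma_\pm(j;\eta)|$ must be decoupled via the tree's Markov structure while uniformity in $\eta\downarrow 0$ is preserved. The uniform finite-volume bounds of Theorem~\ref{thm:phi} and the uniform multiplicativity constants of Lemmas~\ref{lem:submult} and~\ref{lem:GG2} are precisely what make the tilted law of large numbers robust up to the real axis; without them the Cram\'er estimate could degenerate exactly at the boundary $\eta=0$ where the statement is needed.
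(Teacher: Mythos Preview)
Your overall architecture matches the paper's: Chebyshev at exponent $s$ for the upper bound, Cram\'er tilting for the lower bound, and reduction to $\mathbb{P}_s(L_x)\geq\tfrac12$ using the built-in estimate $\mathbb{P}_s(L_x^{(\rm bc)})\geq\tfrac78$. The upper bound and the tilting reduction are correct as written.

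The gap is in the last step. A second-moment Chebyshev under $\mathbb{P}_s$ yields, at best, $\mathbb{P}_s\big((L_x^{(k,\pm)})^c\big)\leq C\sigma_s^2/(\epsilon^2 k)$. The union bound over $k\in[\tfrac12 n_\kappa,N_\kappa]$ then gives a total of order $(\sigma_s^2/\epsilon^2)\log(2N_\kappa/n_\kappa)\approx(\sigma_s^2/\epsilon^2)\log(2/\kappa)$, which is a fixed constant that does \emph{not} tend to zero as $n\to\infty$. Since the theorem must hold for every $\epsilon>0$ (and is in fact applied with the small $\epsilon$ of~\eqref{eq:eps'}), this constant can exceed $\tfrac38$, and the conclusion $\mathbb{P}_s(L_x)\geq\tfrac12$ fails. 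The Markov-chain description of the increments is correct but does not improve the variance scaling beyond linear in $k$, so it does not rescue the argument.

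What is needed is an \emph{exponential} Chebyshev under the tilted measure: control of $\mathbb{E}_s\big[\prod_{j=1}^{k}|\Gamma_\pm(j)|^{\Delta}\big]$ for small $\Delta$ on both sides of $0$. For the independent family $\Gamma_-$ this is immediate; for the Markov family $\Gamma_+$ it is exactly the content of the sub/super-multiplicativity of Lemma~\ref{lem:submult} (not merely its $\eta$-uniformity), and this is how the paper proceeds via condition~(b) of Theorem~\ref{thm:ld}. The resulting bound $\mathbb{P}_s\big((L_x^{(k,\pm)})^c\big)\leq C\,e^{-\kappa(\epsilon,\gamma)k/3}$ makes the union bound summable with sum $\lesssim e^{-\kappa(\epsilon,\gamma)n_\kappa/6}\to0$. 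Replacing your second-moment step by this exponential tilt-within-tilt closes the gap, and then your proof coincides with the paper's.
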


We now fix   $ \epsilon$ at a value at which: 
\begin{equation}\label{eq:eps'}
	2  \epsilon  \in \left( 0 \, , \,\min\left\{\tfrac{\Delta}{24}, \tfrac{\kappa \, \Delta}{4}\right\} \right) \, . \end{equation} 
This parameter will be used in controlling the probabilities of various large deviation events.

Before turning to the main definitions, we introduce yet another event which refers to the behavior of the Green function  
between $x_0$ and $x_{n_\kappa-1}$, for which we require  the  (largely arbitrary) minimal decay rate 
$	 \ell >  L(E)  $
combined with a condition at an end point.  

\begin{definition}\label{def:rdev}
We refer to the following as  the \emph{regular events} associated with  sites $ x \in \mathcal{S}_n $  and $ \eta > 0 $: 
\begin{equation}
R_x :=   R^{(\rm bc)}_x \cap \left\{   \, | G^{\mathcal{T}_{x}}(0,x_{n_\kappa-1};E+i\eta) |  \in \big[ e^{-   n_\kappa \ell}, 1 \big]\right\}  
\end{equation}
where 
$
	R^{(\rm bc)}_x := \left\{   \, | G^{\mathcal{T}_{x}}(x_{n_\kappa-1},x_{n_\kappa-1};E+i\eta) | \leq \tfrac{b}{2} \right\} $.
\end{definition}
This event is regular in the sense that it occurs with a probability of order one, which is independent of $ n $, cf.~Theorem~\ref{thm:regGbehav}. 
The reason for its inclusion in the paper is mainly of technical origin: in the subsequent proof of a second moment bound, Theorem~\ref{thm:N2test} below, we cannot allow the large deviation event $ L_x $ to extend down to the root, but we nevertheless need some control on the Green function on this segment.\\
Having fixed the basic parameters, we now turn to the precise definition of the events.
\begin{definition}\label{def:RBLD}
For each  
$ x \in \mathcal{S}_{n}$ and $ \eta >0$ we define
\begin{enumerate}[i.]
\item the \emph{resonance-boosted large-deviation event}, 
\begin{equation}
   D_x  :=  E_x \cap L_x  \cap R_x  
\end{equation}	
which consists of the following three events:
\begin{enumerate}[a)]
\item extreme deviation event with blow-up scale $   \tau :=   \exp\left(\left(\gamma + \frac{3}{4}\, \Delta\right)  \,  N_\kappa \right) $:
$$ E_x  \ := \ \left\{ | G(x,x;E+i\eta) | \geq \tau \right\}  \, , $$
\item large deviation event:  \quad $ L_x  $ \quad (cf.~Definition~\ref{def:ldev})
\item regular event: \quad $ R_x $ \quad (cf.~Definition~\ref{def:rdev})\end{enumerate}
\item the \emph{$\alpha$-marginality event} at probability  $ \alpha \in (0,1) $:
$$
		 I_x\ := \  \bigcup_{y \in \mathcal{N}^+_x} \left\{ \Im \Gamma(y;E+i\eta) \geq \xi(\alpha;E+i\eta)  \right\}  	\, . 	
$$
\end{enumerate}
The joint event $ D_x  \cap I_x $ will be referred to as a \emph{resonance event} at~$ x $.  
\end{definition}
\noindent
Several remarks are in order:
\begin{enumerate}
\item The resonance-boosted large-deviation events are tailored so that in the event $ D_x $ the Green function associated with the root and $ x $ exhibits an exponential blow-up. Namely, by the factorization property of the Green function,
\begin{align}\label{eq:factorinproof}
			  G(0,x;\zeta)  \ & = \ G^{\mathcal{T}_{x}}(0,x_{n-1};\zeta) \, G(x,x;\zeta)  \notag  \\
			& = \ G^{\mathcal{T}_{x}}(0,x_{n_\kappa-1};\zeta) \, G^{\widehat{\mathcal{T}}_{x}}(x_{n_\kappa},x_{n-1};\zeta) \,  G(x,x;\zeta)  \, . 
	\end{align}
	 For $ \zeta = E +i\eta $, the first term is controlled by $R_x $. 
	   The large deviation event $ L_x $ controls the second factor and the extreme fluctuation event $ E_x$ compensates for the decay of the first 	two terms. Using~\eqref{eq:estimaten}, \eqref{eq:setdelta}, and \eqref{eq:eps'}, we hence arrive at the estimate:
	\begin{align}\label{eq:Gexplodes}
		\left| G(0,x;E+i\eta) \right| \ & \geq \ e^{- n_\kappa \ell } \, e^{-(\gamma+\epsilon) N_\kappa} \, \tau \notag \\
		& \geq \ \exp\left( N_\kappa \left( \tfrac{3}{4}\Delta -\epsilon- \tfrac{4}{3} \kappa  \ell \right)\right)  \notag \\
			& \geq \  \exp\left(\tfrac{1}{2} \, \Delta \,  N_\kappa \right) \ \geq \ \exp\left(\tfrac{3}{8}\, \Delta \, n \right) \, .
	\end{align}
	`\item The choice of the blow-up scale $ \tau $ is tailored to: {\it i.\/}~compensate the decay of the Green function on the segment preceeding $ x $, cf.~\eqref{eq:Gexplodes}, and {\it ii.\/}~ensure that for $ n $ large enough and $ \eta $ small enough:
\begin{align}\label{eq:expesttau}
	\tau^{-1}  \, K^{N_\kappa} \, \mathbb{P}\left( L_x \right) \ & \geq \ \exp\left( N_\kappa \left( \log K - \left( \gamma + I(\gamma) \right) - 2\epsilon - \tfrac{3}{4} \Delta \right)\right) \notag \\
	& \geq \exp\left( N_\kappa \, \tfrac{\Delta}{16} \right)  \, ,
\end{align}
by~\eqref{eq:Llbound1}, \eqref{eq:assumphi} and~\eqref{eq:eps'}.  The fact that this term can be made large as $ n \to \infty$ will be essential in the subsequent argument.

	  	   \item We recall from Definition~\ref{def:xi} that the value $  \xi(\alpha;E+i\eta) $ ensures that  $
	   	\mathbb{P}\left( I_x \right) \geq  \alpha $.
           \end{enumerate}

 \subsection{The strategy}\label{sec:lemkeyproof}
 
Postponing the proof of the occurrence of the above resonance events, the proof of our key statement, the large-deviations criterion of Theorem~\ref{lem:largeG}, is along the same lines as in the Lyapunov regime. 

\begin{proof}[Proof of Theorem~\ref{lem:largeG} the large-deviation criterion] 
We monitor the number
\begin{equation}
 N \ := \ \sum_{x\in \mathcal{S}_n^\kappa} \indfct_{D_x \cap I_x  }  
 \end{equation} 
 of resonances on the thinned sphere and note that the event $ N \geq 1 $ implies the event the right side of~\eqref{eq:largeG} for $ \delta = \frac{3}{8} \Delta >0 $ using~\eqref{eq:Gexplodes}.

According to Theorems~\ref{thm:1Mom} and~\ref{thm:N2test}, there are $ \alpha \in (0,1) $, $ C , \eta_0 \in (0,\infty) $ and $ n_0 \geq 0 $ such that for all $ n \geq n_0 $ and $ \eta \in (0,\eta_0)$:
\begin{align}
\frac{\mathbb{E}\left[N^2\right]}{\mathbb{E}\left[N\right]^2} \ = \ \frac{1}{\mathbb{E}\left[N\right]} + \frac{\mathbb{E}\left[N(N-1)\right]}{\mathbb{E}\left[N\right]^2} \ \leq  \ C \, .
\end{align}
Together with~\eqref{eq:N21}, this concludes the proof. 	
\end{proof}

The second-moment method on which the the above proof is based requires a lower bound on the mean number of events as well as an upper bound on their second moment. 
These will be the topics of the remaining subsections.

\subsection{The mean  number of resonant sites }\label{subsec:AvN}

The main idea behind a lower bound on the average number of resonances is that the probability of the occurrence of the extreme fluctuation $E_x $ is of order $ \tau^{-1} $. Rewriting this event,
 \begin{equation}\label{eq:rewriteE}
  E_{x} = \left\{  \left| \lambda V(x) - \sigma_x(E+i\eta) \right|  \leq \tau^{-1} \right\} 
  \end{equation}
thereby exposing the dependence of  $ G(x,x;\zeta) $ on the potential at $ x $ and on
\begin{equation}  \label{eq:selfenergy}
\sigma_x(E+i\eta):= E + i \eta +  \sum_{y \in \mathcal{N}_x} G^{\mathcal{T}_x}(y,y;\zeta)    \, ,
\end{equation}
one realizes that  if the latter has a non-zero imaginary part,  the Green function stays bounded and no resonance mechanism kicks in. 
On the other hand,  in the event $ S_{x} \cap T_{x} $,  where
	\begin{align}
	S_x & := \bigcap_{y \in \mathcal{N}_x} S_{x}(y) \, , \quad \mbox{with}\quad S_{x}(y)  :=  \left\{ \left| G^{\mathcal{T}_x}(y,y;\zeta)\right|  \leq b \right\} \notag \\
	T_{x} & :=  \left\{ \Im \sigma_x(E+i\eta) \leq\ (2\tau)^{-1}\right\} \,  ,  
\end{align}
the imaginary part of the term in the right side of~\eqref{eq:rewriteE}  is bounded  by $ (2\tau)^{-1} $ and the real part is bounded by  $ (K+1)\, b $. As a consequence, we may estimate the conditional probability of $ E_x $ conditioned on the sigma algebra $ \mathscr{A}_x  $ generated by the random variables $ V(y) $, $ y \neq x $:
\begin{align}
	\mathbb{P}\left(E_{x}  \, \big| \,  \mathscr{A}_x  \right) \ 
	&  \geq  \  \indfct_{S_{x} \cap T_{x}  } \; \mathbb{P}\left(  \left| \lambda V(x) - E - \Re \sigma_x(E+i\eta) \right|  \leq  \tfrac{1}{2 \tau}  \; \big| \;   \mathscr{A}_x \right)  \notag \\
	& \geq \  \indfct_{S_{x} \cap T_{x}  }  \inf_{ |\sigma | \leq    (K+1) \, b }  \mathbb{P}\left(  \left| \lambda V(x) - E -  \sigma \right|   \leq  \tfrac{1}{2 \tau} \; \big| \;   \mathscr{A}_x \right)   \notag \\
	& \geq \   \varrho_b  \, \tau^{-1} \,  \indfct_{S_{x} \cap T_{x}  } \, . \label{eq:lbPEx}
\end{align}
where   the last estimate relied on Assumption~\ref{assD} and we introduced
 \begin{equation}\label{def:rhob}
			 \varrho_b :=  \inf_{v \in  (K+1) \, [ - b ,\, b\,  ] } (c \lambda)^{-1} \ \varrho\Big(\frac{v +E}{\lambda}\Big) \  > 0 \, . 
		\end{equation} 
		Now, $ S_x $ is a regular event, i.e., it occurs with positive probability which is independent of $ n $. Under the no-ac hypothesis the probability of the event $ T_x $ is (arbitrarily) close to one. 

\begin{lemma}\label{lemma:realselfenergy2}
Under the no-ac hypothesis, $ \, \Im \sigma_x(E+i0,\omega) = 0  \,$ for $ \mathbb{P} $-almost all $ \omega$ and all $ x \in \mathcal{T} $.
\end{lemma}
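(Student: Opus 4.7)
My approach is to decompose
\begin{equation*}
\Im\sigma_x(E+i0)\;=\;\sum_{y\in\mathcal{N}_x}\Im G^{\mathcal{T}_x}(y,y;E+i0),
\end{equation*}
and show that each of the $K+1$ (or $K$, when $x=0$) summands vanishes almost surely. The forward neighbors $y\in\mathcal{N}_x^+$ are immediate: $G^{\mathcal{T}_x}(y,y;\cdot)=\Gamma(y;\cdot)$ is the Green function on the subtree rooted at $y$, which is isomorphic to $\mathcal{T}$ and carries an independent copy of the potential, so $\Gamma(y;\cdot)\overset{d}{=}\Gamma(0;\cdot)=G(0,0;\cdot)$. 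Under the no-ac hypothesis, $\Im G(0,0;E+i0)=0$ a.s., and a countable union over $y\in\mathcal{T}$ gives $\Im\Gamma(y;E+i0)=0$ a.s.\ simultaneously for every $y$.

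The substantive part is to handle the parent contribution $G^{\mathcal{T}_x}(x_-,x_-;\cdot)=\Gamma_+(x_-;\cdot)$, where the $+$-subscript is oriented along the unique path $0=u_0,u_1,\dots,u_{n-1}=x_-,u_n=x$ from the root to $x$. I would proceed by a finite induction along this path. At the root, applying \eqref{eq:recur_gen} inside the truncated tree $\mathcal{T}_{u_1}$ gives
\begin{equation*}
\Gamma_+(0;\zeta)^{-1}\;=\;\lambda V(0)-\zeta-\sum_{y\in\mathcal{N}_0^+\setminus\{u_1\}}\Gamma(y;\zeta),
\end{equation*}
whence $\Im\Gamma_+(0;E+i0)=|\Gamma_+(0;E+i0)|^2\sum_{y\in\mathcal{N}_0^+\setminus\{u_1\}}\Im\Gamma(y;E+i0)=0$ a.s. For $1\le k\le n-1$, the analogous recursion inside $\mathcal{T}_{u_{k+1}}$ reads
\begin{equation*}
\Gamma_+(u_k;\zeta)^{-1}\;=\;\lambda V(u_k)-\zeta-\Gamma_+(u_{k-1};\zeta)-\sum_{y\in\mathcal{N}_{u_k}^+\setminus\{u_{k+1}\}}\Gamma(y;\zeta);
\end{equation*}
at $\eta=0$ the forward-children contributions vanish by the first step, the backward contribution vanishes by the inductive hypothesis, and so $\Im\Gamma_+(u_k;E+i0)=0$ a.s. After at most $n-1$ iterations one reaches $\Im\Gamma_+(x_-;E+i0)=0$ a.s., which together with the forward-neighbor step closes the claim for this specific $x$; the statement for all $x\in\mathcal{T}$ then follows by a countable union.

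I do not foresee a substantive obstacle. The induction has finite length (bounded by $\dist(0,x)$), so no tightness or limiting issue intervenes; the only bookkeeping needed is the a.s.\ finiteness of each $|\Gamma_+(u_k;E+i0)|$ so that the identity $\Im\Gamma_+=|\Gamma_+|^2\,\Im[\text{self-energy}]$ passes cleanly to the boundary, and this is already guaranteed by the Herglotz nature of $\Gamma_+(u_k;\cdot)$ together with the fractional-moment estimate under Assumption~\ref{assC} that was invoked in the proof of Lemma~\ref{lem:01}.
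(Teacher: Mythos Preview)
Your proof is correct and takes a genuinely different route from the paper's. Both arguments dispose of the forward neighbours $y\in\mathcal{N}_x^+$ identically, via $\Gamma(y;\cdot)\overset{d}{=}\Gamma(0;\cdot)$. For the backward neighbour, the paper argues spectrally: it observes that the component of $x_-$ in $\mathcal{T}_x$ differs from a copy of the rooted tree $\mathcal{T}$ by a finite-rank perturbation (the ``surgery'' that restores the missing branch at the original root, turning the truncated graph into a full regular tree), and then invokes the invariance of the absolutely continuous spectrum under finite-rank perturbations to transport the no-ac hypothesis to $G^{\mathcal{T}_x}(x_-,x_-;\cdot)$.

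Your approach avoids this spectral-theoretic detour entirely. By unwinding the recursion \eqref{eq:recur_gen} step by step along the path $0=u_0,\dots,u_{n-1}=x_-$, you reduce $\Im\Gamma_+(u_k;E+i0)$ at each stage to a sum of forward-$\Gamma$ terms (handled by the first step) and the single backward term $\Im\Gamma_+(u_{k-1};E+i0)$ (handled by induction). This is more elementary and entirely self-contained within the recursive toolkit already set up in Section~\ref{sec:randf}; it requires neither the auxiliary full tree $\mathcal{B}$ nor the Kato--Rosenblum--type fact about finite-rank perturbations. The price is a short induction of length $|x|$ rather than a one-line appeal to perturbation theory. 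Your closing remark about the a.s.\ finiteness and nonvanishing of $\Gamma_+(u_k;E+i0)$ is the right bookkeeping, and is indeed covered by the same Herglotz and negative-fractional-moment reasoning used in Lemma~\ref{lem:01}.
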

\begin{proof}
Recall that $\sigma_x$ coincides  with the sum~\eqref{eq:selfenergy} of  Green functions associated with the  neighbors of $ x $.
The Green function associated with the  forward neighbors, $ y \neq x_- $, are identically distributed to $  \Gamma(0;E+i0) $ and hence $ \Im G^{\mathcal{T}_x}(y,y;E+i0,\omega) = 0 $ for ${\rm Lebesgue}\times \mathbb{P} $-almost all $ (E,\omega)$. The Green function associated with the backward neighbor $ x_- $ differs  by a finite-rank perturbation from a variable which is identically distributed to $  \Gamma(0;E+i0) $ (i.e., the surgery which renders the rooted to into a full tree). Since finite-rank perturbations do not change the  $ac$ spectrum, we also conclude $ \Im G^{\mathcal{T}_x}(x_-,x_-;E+i0,\omega) = 0 $ for ${\rm Lebesgue}\times \mathbb{P} $-almost all $ (E,\omega)$. 
\end{proof}

The bound~\eqref{eq:lbPEx}  quantifies the essence of the resonance mechanism and leads to the following
  \begin{theorem}\label{thm:1Mom}
	Under the no-ac hypothesis, for every $ n $  large enough there exists $ \eta_0 >0 $ such that for all $ \eta \in (0,\eta_0) $, and $ \alpha \in [1/2,1) $ and all $ x \in \mathcal{S}_n $:
 \begin{equation}\label{eq:lowerboundN}
 	\mathbb{E}\left[ N  \right] \ =  K^{N_\kappa}  \, \mathbb{P}\left( D_x \cap I_x  \right)  \geq \ \tfrac{1}{16} \, \varrho_b \, \tau^{-1} \, K^{N_\kappa} \, \mathbb{P}\left(L_x\right)   \, .
 \end{equation}
The right side can be made arbitrarily large by choosing $ n $ sufficiently large. 
\end{theorem}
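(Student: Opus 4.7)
The plan is to follow the three-step structure suggested by the statement: first use symmetry to reduce $\mathbb{E}[N]$ to a single-site probability; then extract the extreme-fluctuation factor $\varrho_b \tau^{-1}$ via conditioning on $V(x)$; and finally show that the remaining joint probability is at least a constant fraction of $\mathbb{P}(L_x)$. By construction of the thinned sphere $\mathcal{S}_n^\kappa$, the events $D_x \cap I_x$ for $x \in \mathcal{S}_n^\kappa$ are identically distributed, so
\begin{equation*}
\mathbb{E}[N] \;=\; |\mathcal{S}_n^\kappa|\, \mathbb{P}(D_x \cap I_x) \;=\; K^{N_\kappa}\, \mathbb{P}(D_x \cap I_x).
\end{equation*}
All of $L_x, R_x, I_x, S_x, T_x$ are measurable with respect to the sigma-algebra $\mathscr{A}_x$ generated by $\{V(y) : y \neq x\}$, since each is defined via Green functions on subtrees from which $x$ has been removed. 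Applying the pointwise lower bound $\mathbb{P}(E_x \mid \mathscr{A}_x) \geq \varrho_b\,\tau^{-1}\,\indfct_{S_x \cap T_x}$ from \eqref{eq:lbPEx} and integrating against $\indfct_{L_x \cap R_x \cap I_x}$ yields
\begin{equation*}
\mathbb{P}(D_x \cap I_x) \;\geq\; \varrho_b\,\tau^{-1}\,\mathbb{P}(L_x \cap R_x \cap I_x \cap S_x \cap T_x).
\end{equation*}

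The heart of the proof is to argue $\mathbb{P}(L_x \cap R_x \cap I_x \cap S_x \cap T_x) \geq \tfrac{1}{16}\,\mathbb{P}(L_x)$ by a union bound on the complementary events. The simplest case is $I_x$: it depends only on the forward subtrees rooted at the children of $x$, whereas $L_x$ lives on the truncated subtree $\widehat{\mathcal{T}}_x$; these two collections of random variables are disjoint, so $I_x$ and $L_x$ are independent and $\mathbb{P}(I_x \mid L_x) = \mathbb{P}(I_x) \geq \alpha \geq 1/2$. For $T_x$, Lemma~\ref{lemma:realselfenergy2} (which uses the no-ac hypothesis) gives $\Im \sigma_x(E+i0)=0$ almost surely; so $\Im\sigma_x(E+i\eta)\to 0$ in probability as $\eta\downarrow 0$, and since $\eta_0$ is permitted to depend on $n$ this arranges $\mathbb{P}(T_x^c)\leq c\,\mathbb{P}(L_x)$ for any prescribed $c$. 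For $S_x$, on forward neighbors $G^{\mathcal{T}_x}(y,y)=\Gamma(y)$ is independent of $L_x$, and a weak-$L^1$ / Wegner-type bound as in \eqref{eq:upperbav} together with the calibration of $b$ in item (a) of Subsection~\ref{subsec:fixparameters} makes $\mathbb{P}(|\Gamma(y)|>b)$ arbitrarily small; the back-neighbor term $G^{\mathcal{T}_x}(x_-,x_-)$ is tied to the boundary event $L^{(\rm bc)}_x$ already built into $L_x$ via the recursion~\eqref{eq:recur_gen}. For $R_x$, the decay condition on the initial segment is a typical-decay event of Theorem~\ref{thm:regGbehav}, whose randomness is disjoint from that of $L_x$, while the boundary piece $R^{(\rm bc)}_x$ mirrors $L^{(\rm bc)}_x$ and is handled via the same recursion. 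Tuning the parameter choices from Subsection~\ref{subsec:fixparameters} so that the four failure probabilities sum to at most $\tfrac{15}{16}\,\mathbb{P}(L_x)$ delivers the claimed constant.

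The divergence assertion then follows at once from \eqref{eq:expesttau}, which gives $\tau^{-1}\,K^{N_\kappa}\,\mathbb{P}(L_x) \geq \exp(N_\kappa\,\Delta/16)\to\infty$. The main technical obstacle I anticipate is ensuring that conditioning on the (exponentially unlikely) event $L_x$ does not significantly distort the auxiliary events; the cleanest route is the geometric independence sketched above, so that the only genuine interplay between $L_x$ and the $R_x,S_x,T_x$ sides occurs at the two endpoints of the distinguished segment, where the parameter $b$ and the boundary events $L^{(\rm bc)}_x,R^{(\rm bc)}_x$ have been designed to be mutually compatible.
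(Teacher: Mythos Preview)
Your overall strategy matches the paper's, but there is a gap in the treatment of $R_x$. The claim that the decay event on the initial segment ``has randomness disjoint from that of $L_x$'' is false: $G^{\mathcal{T}_x}(0, x_{n_\kappa-1}; E+i\eta)$ is computed on the tree with only the single vertex $x$ removed, and by~\eqref{factorization} each factor $\Gamma(x_j;E+i\eta)$ for $0 \le j \le n_\kappa-1$ is a diagonal Green function on the full forward subtree past $x_j$, which includes the segment $\{x_{n_\kappa},\ldots,x_{n-1}\}$ on which $L_x$ lives. Thus $R_x$ and $L_x$ are correlated throughout, not merely at an endpoint. Your closing paragraph has the right instinct that the coupling is mediated through the boundary, but the body of the argument treats the two as independent, and Theorem~\ref{thm:regGbehav} does not directly control $G^{\mathcal{T}_x}(0,x_{n_\kappa-1})$ conditionally on the rare event $L_x$.

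The paper resolves this with Lemma~\ref{lem:RandL}: conditioning on the $\sigma$-algebra generated by $\{V(y): x_{n_\kappa}\preceq y\}$, one shows that on $L^{(\rm bc)}_x$ the conditional probability of $R_x$ is at least $1/2$, whence $\mathbb{P}(R_x\cap L_x)\ge \tfrac12\,\mathbb{P}(L_x)$. Two further points where the paper differs from your sketch: the back-neighbor bound $S_x^-$ is obtained as a \emph{deterministic} inclusion $R_x\cap L_x\subset S_x^-$ via second-order perturbation theory~\eqref{eq:usedLbc}, which requires both $R^{(\rm bc)}_x$ and $L^{(\rm bc)}_x$; and the paper uses the factorization $\mathbb{P}(R_x\cap L_x\cap I_x\cap S_x)=\mathbb{P}(R_x\cap L_x\cap S_x^-)\,\mathbb{P}(I_x\cap S_x^+)$ rather than a union bound. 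With the available constants ($\alpha\ge 1/2$ and Lemma~\ref{lem:RandL} yielding only $1/2$) a straight union bound already exhausts the budget before $S_x$ and $T_x$ are accounted for, so the product structure is essential to reach the constant~$\tfrac{1}{16}$.
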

\begin{proof}
In order to estimate the probability of the joint occurrence of the events $ D_x $ and~$ I_x $, we first condition on the sigma algebra $ \mathscr{A}_x $ and use \eqref{eq:lbPEx} to obtain:
\begin{align}
\mathbb{P}\left( D_x \cap I_x  \right) \ & = \ \mathbb{E}\left[ \indfct_{R_x\cap L_x\cap I_x} \mathbb{P}\left( E_x \, \big| \, \mathscr{A}_x \right)\right] \notag \\
	& \geq \   \varrho_b \, \tau^{-1} \,  \mathbb{P}\left( R_x\cap L_x\cap I_x \cap S_x \cap T_x \right) \notag \\
	& \geq \    \varrho_b \, \tau^{-1} \, \left[ \mathbb{P}\left( R_x\cap L_x\cap I_x  \cap S_x \right) - \left(1- \mathbb{P}\left( T_x \right)\right) \right] \notag \\
	 & = \      \varrho_b  \, \tau^{-1} \, \left[   \mathbb{P}\left( R_x\cap L_x\cap S_x^-\right) \, \mathbb{P}\left(  I_x\cap S_x^+\right) + \mathbb{P}\left( T_x \right) -1 \right]  \, , \label{eq:DI1}
\end{align}
where we abbreviated $ S_x^- := S_x(x_-) $ and $ S_x^+ := \bigcap_{y \in \mathcal{N}_x^+} S_x(y) $. 
The first term simplifies using:
\begin{enumerate}[i)]
\item the inclusion $ R_x\cap L_x \subset S_x^- $.  This derives from second order perturbation theory. More precisely, in the event $ R_x \cap L_x $ the term corresponding to the backward neighbor $ x_- $ of $ x $ is bounded according to
\begin{align} \label{eq:usedLbc}
		  |G^{\mathcal{T}_{x}}(x_-,x_-;E+i\eta) | 
		   & \leq  \ |G^{\widehat{\mathcal{T}}_x}(x_-,x_-;E+i\eta) | \notag \\
		  & \mkern-30mu + |G^{\mathcal{T}_{x}}(x_{n_\kappa-1},x_{n_\kappa-1};E+i\eta) | \; |G^{\widehat{\mathcal{T}}_x}(x_{n_\kappa},x_-;E+i\eta) |^2  \notag \\
		&  \leq \ \tfrac{b}{2} + \tfrac{b}{2}  =  b\, \, . 
	\end{align}
\item the estimate $  \mathbb{P}\left(  I_x\cap S_x^+\right) \geq  \mathbb{P}\left(  I_x\right) + \mathbb{P}\left(  S_x^+\right) - 1 \geq \alpha + ( 1- \|\varrho\|_\infty (\lambda b )^{-1} )^K -1 \geq \frac{1}{4} $. Here the last inequality used $ \alpha \geq 1/2 $ and the particular choice of $ b $.
\end{enumerate}
To proceed with our estimate on the right side in~\eqref{eq:DI1} we use Lemma~\ref{lem:RandL} below which guarantees that for some $ \eta_0>0 $ and some $ n_0 \in \mathbb{N} $ and all $ \eta \in (0,\eta_0 ) $ and $n \geq n_0 $:
\begin{equation}\label{eq:probrandl}
 \mathbb{P}\left( R_x\cap L_x\cap S_x^-\right)  \ = \ \mathbb{P}\left( R_x\cap L_x\right) \ \geq  \ \tfrac{1}{2} \;  \mathbb{P}\left( L_x\right)  \, . 
\end{equation}
We now use Lemma~\ref{lemma:realselfenergy2} which implies that under the no-ac hypothesis and for any $ x \in \mathcal{T} $ and any $ \varepsilon > 0 $:
\begin{equation}
	\lim_{\eta\downarrow 0}\  \mathbb{P}\left(\Im \sigma_x(E+i\eta) > \varepsilon \right) \ = \ 0 \, . 
\end{equation}
Since $ \inf_{\eta \in (0,1]} \,  \mathbb{P}\left( L_x(\eta)\right) > 0 $
is strictly positive by~\eqref{eq:Llbound1}, we 
conclude that there is some $ \eta_1(n) \in (0,\eta_0] $ such that for all $ \eta \in(0, \eta_1(n) )$:
\begin{equation}
 1 -  \mathbb{P}\left( 	T_{x} \right) \leq  \tfrac{1}{16} \, \mathbb{P}\left( L_x\right) \, \, . 
\end{equation}
This concludes the proof of~\eqref{eq:lowerboundN}. 
The exponential estimate~\eqref{eq:expesttau} finally shows that the right side in~\eqref{eq:lowerboundN} is arbitrarily large  if $ n $ is chosen large.
\end{proof}

It remains to prove the following lemma.
\begin{lemma}\label{lem:RandL}
There is $ \eta_0 > 0 $ and $ n_0 >0 $ such that for all $ \eta \in (0,\eta_0) $ and all $n = \dist(x,0)\geq n_0$:
\begin{equation}
  \mathbb{P}\left( R_x\cap L_x \right) \ \geq \ \tfrac{1}{2} \,   \mathbb{P}\left(  L_x \right)\, . 
\end{equation}
\end{lemma}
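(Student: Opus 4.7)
The plan is to show the equivalent inequality $\mathbb{P}(R_x^c\mid L_x)\leq \tfrac{1}{2}$, exploiting the independence structure together with the factorization of the Green function. Observe first that $L_x$ is measurable with respect to the $\sigma$-algebra $\mathscr{F}_x$ generated by the potentials on the subtree $\widehat{\mathcal{T}}_x=\mathcal{T}_{x_{n_\kappa-1},x}$, while other portions of the tree provide independent randomness. Applying~\eqref{eq:factor2} to $\mathcal{T}_x$, and using that removing $x_{n_\kappa-1}$ already disconnects the descendant $x$ from the root, one factorizes
\begin{equation*}
G^{\mathcal{T}_x}(0,x_{n_\kappa-1};E+i\eta)\ =\ A\cdot B,
\end{equation*}
with $A:=G^{\mathcal{T}_{x_{n_\kappa-1}}}(0,x_{n_\kappa-2};E+i\eta)$ depending only on potentials in the root-component of $\mathcal{T}_{x_{n_\kappa-1}}$ (a region disjoint from $\widehat{\mathcal{T}}_x$, hence independent of $\mathscr{F}_x$), and $B:=G^{\mathcal{T}_x}(x_{n_\kappa-1},x_{n_\kappa-1};E+i\eta)$ obeying by~\eqref{eq:recur_gen} the recursion $B^{-1}=\lambda V(x_{n_\kappa-1})-(E+i\eta)-\tilde\sigma$. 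Here $\tilde\sigma$ decomposes into the $\mathscr{F}_x$-measurable summand $\Gamma_+(N_\kappa;\eta)$ (whose modulus is $\leq b/2$ on $L_x^{(\mathrm{bc})}\subset L_x$) and $K$ further summands that are independent of $\mathscr{F}_x$.

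I would then split $\mathbb{P}(R_x^c\mid L_x)$ into three contributions and bound each separately. For the event $[R_x^{(\mathrm{bc})}]^c$: since $V(x_{n_\kappa-1})$ is independent of both $\mathscr{F}_x$ and of $\tilde\sigma$, Assumption~\ref{assC} (bounded density) yields
\begin{equation*}
\mathbb{P}\bigl(|B|\geq b/2\,\bigm|\,\mathscr{F}_x,\tilde\sigma\bigr)\ =\ \mathbb{P}\bigl(|\lambda V(x_{n_\kappa-1})-(E+i\eta)-\tilde\sigma|\leq 2/b\,\bigm|\,\mathscr{F}_x,\tilde\sigma\bigr)\ \leq\ \frac{4\|\varrho\|_\infty}{\lambda b},
\end{equation*}
which is small by the choice of $b$, so averaging produces a small bound on $\mathbb{P}([R_x^{(\mathrm{bc})}]^c\mid L_x)$. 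Next, the upper-tail event $\{|AB|>1\}$ is contained in $[R_x^{(\mathrm{bc})}]^c\cup\{|A|>2/b\}$, and Markov combined with the finite-volume bound~\eqref{eq:finitevolume} at some $s\in(0,1)$ with $\varphi_\lambda(s;E)<0$ (guaranteed by Theorem~\ref{thm:phi}(b)) gives $\mathbb{P}(|A|>2/b)\leq C(b/2)^s e^{n_\kappa\varphi_\lambda(s;E)}=o_n(1)$, a bound unaffected by conditioning on $L_x$ since $A$ is independent of $\mathscr{F}_x$. Finally, for the lower-tail event $\{|AB|<e^{-n_\kappa\ell}\}$, pick $\delta_0,\delta_1>0$ with $\ell>L_\lambda(E)+\delta_0+\delta_1$. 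Theorem~\ref{thm:regGbehav} applied on the segment $0\to x_{n_\kappa-2}$ gives $|A|\geq e^{-(L_\lambda(E)+\delta_0)n_\kappa}$ with probability $1-o_n(1)$, independently of $L_x$; and
\begin{equation*}
\mathbb{P}\bigl(|B|<e^{-\delta_1 n_\kappa}\bigm|L_x\bigr)\ \leq\ \mathbb{P}\bigl(|V(x_{n_\kappa-1})|>e^{\delta_1 n_\kappa}/(3\lambda)\bigr)+\mathbb{P}\bigl(|\tilde\sigma|>e^{\delta_1 n_\kappa}/3\bigm|L_x\bigr),
\end{equation*}
the first term being $O(e^{-\varsigma \delta_1 n_\kappa})$ by Markov with the $\varsigma$-moment of Assumption~\ref{assC}, and the second being $o_n(1)$ via the $L_x^{(\mathrm{bc})}$-control on $\Gamma_+(N_\kappa)$ combined with uniform-in-$\eta$ fractional-moment bounds (Lemma~\ref{lem:submult}) on the remaining $K$ summands of $\tilde\sigma$.

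Summing the three estimates yields $\mathbb{P}(R_x^c\mid L_x)\leq \tfrac{1}{2}$ once $n$ is sufficiently large and $\eta$ sufficiently small, which is the claim. The principal obstacle is the lower-tail bound on $|B|$: because $V$ need not be bounded under Assumption~\ref{assE}, no deterministic control of $|B|^{-1}$ is available, and one must combine the mere $\varsigma$-th moment of $V$ with probabilistic control, uniform in $\eta$, of every Green function entering $\tilde\sigma$. The latter is delicate because $\tilde\sigma$ is partially $\mathscr{F}_x$-measurable through $\Gamma_+(N_\kappa)$, and it is precisely the $L_x^{(\mathrm{bc})}$ component of $L_x$ that keeps this $\mathscr{F}_x$-measurable piece uniformly small.
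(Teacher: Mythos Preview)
Your proof is correct and follows essentially the same route as the paper. Both arguments factorize $G^{\mathcal{T}_x}(0,x_{n_\kappa-1};E+i\eta)=A\cdot B$ with $A=G^{\mathcal{T}_{x_{n_\kappa-1}}}(0,x_{n_\kappa-2})$ independent of $L_x$ and $B=G^{\mathcal{T}_x}(x_{n_\kappa-1},x_{n_\kappa-1})$; both control $A$ via the typical Lyapunov behaviour (Theorem~\ref{thm:regGbehav}) and control $B$ through the recursion, using the $L_x^{(\mathrm{bc})}$ clause to bound the single $\mathscr{F}_x$-measurable summand $\Gamma_+(N_\kappa;\eta)$ of $\tilde\sigma$. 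The only differences are cosmetic: the paper conditions on the full $\sigma$-algebra $\mathscr{A}$ of $\{V(y):x_{n_\kappa}\preceq y\}$ and shows $\mathbb{P}(R_x\mid\mathscr{A})\indfct_{L_x^{(\mathrm{bc})}}\ge\tfrac12\indfct_{L_x^{(\mathrm{bc})}}$, and it uses a single $n$-independent constant $B$ for the range $|G(\eta)|\in[B^{-1},b]$ together with Theorem~\ref{thm:regGbehav} for \emph{both} tails of $|A|$, whereas you split into three pieces and handle the upper tail of $|A|$ by a fractional-moment Markov bound and the lower tail of $|B|$ via an $n$-dependent threshold $e^{-\delta_1 n_\kappa}$; either variant works.
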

\begin{proof}
The idea is to control the conditional probability conditioned on the sigma-algebra $\mathscr{A} $ generated by the random variables $ V(y) $ with $ x_{n_\kappa} \preceq y $. The assertion follows from the fact that there is $ \eta_0 > 0 $ and $ n_0 >0 $ such that for all $ \eta \in (0,\eta_0) $ and all $n = \dist(x,0)\geq n_0$:
\begin{equation}
  \mathbb{P}\left( R_x \, \big| \, \mathscr{A} \right)\, \, \indfct_{L^{(\rm bc)}_x }  \ \geq \ \tfrac{1}{2}  \, \indfct_{L^{(\rm bc)}_x }\, . 
\end{equation}
As a preparation, we expose the influence the conditioning on $ \mathscr{A} $ has on the Green function using its factorization property:
\begin{align}
	 G(\eta) \ & := \    G^{\mathcal{T}_{x}}(x_{n_\kappa-1},x_{n_\kappa-1};E+i\eta) \notag \\
	\widehat G(\eta)\ & := \ G^{\mathcal{T}_{x_{n_\kappa -1}}}(0,x_{n_\kappa-2};E+i\eta)   \ = \  G^{\mathcal{T}_{x}}(0,x_{n_\kappa-1};E+i\eta) \big/ G(\eta)\, . 
\end{align}
By the choice of the parameter $ b $, one has  $ \mathbb{P}\big( R_x^{\rm(bc)} |   \mathscr{A} \big) \geq 7/8 $ and hence
\begin{align}
 \mathbb{P}\left( R_x \, \big| \, \mathscr{A} \right) \ & \geq  \ \mathbb{P}\left( | \widehat G(\eta) \, G(\eta) | \in \big[e^{-\ell n_{\kappa}} , 1 \big] \, \big| \,  \mathscr{A} \right) - \tfrac{1}{8} \notag  \\
	 & \geq \  \mathbb{P}\left( | \widehat G(\eta)  | \in \big[B \, e^{-\ell n_{\kappa}} , b^{-1}  \big] \right) +  \mathbb{P}\left( | G(\eta) | \in \big[B^{-1} , b \big] \, \big| \,  \mathscr{A} \right) - \tfrac{1}{8} \, ,  \notag \\
	 &  \geq \  \mathbb{P}\left( | \widehat G(\eta)  | \in \big[B \, e^{-\ell n_{\kappa}} , b^{-1}  \big] \right) +  \mathbb{P}\left( | G(\eta) | \geq B^{-1}  \, \big| \,  \mathscr{A} \right)  - \tfrac{1}{4} \, ,  
\end{align}
where the last inequalities hold for any $ B \in [1,\infty) $. By Theorem~\ref{thm:regGbehav} the first term converges to one as $ n_{\kappa} \to \infty $. 
The event in the second term takes the form $$\Big| \lambda V(x_{n_\kappa-1}) - E - i \eta - \!\!\!\! \sum_{y \in \mathcal{N}_{x_{n_\kappa-1}}} G^{\widehat{\mathcal{T}}_x}(y,y;E+i\eta) \Big| \leq B \, . $$
In the event $ L_x^{\rm( bc)} $, there is $ B> 0 $ (which is independent of $ n $ and $ \eta $) such that for all $ \eta \in (0,1]$:
\begin{equation}
 \mathbb{P}\left(|G(\eta) | < B^{-1} \, \big| \, \mathscr{A} \right) \, 	 \indfct_{L^{(\rm bc)}_x } \ \leq \  \tfrac{1}{8}\,  \indfct_{L^{(\rm bc)}_x }  \, .
\end{equation}
This completes the proof. 
\end{proof}

\subsection{Establishing the events' occurrence}\label{sec:2mom}

Our aim in this subsection is to provide a uniform upper bound on 
$ \mathbb{E}\left[N^2 \right] / \mathbb{E}\left[N\right]^2$, for  $ N  = \sum_{x \in S_{n}^\kappa} \indfct_{D_x\cap I_x } $, which counts the number of resonance events on the thinned sphere.
\begin{theorem}\label{thm:N2test}
Under the no-ac hypothesis, there exists some constant $ C < \infty $ such that for all $ n  $ sufficiently large there is $ \eta_0 \equiv \eta_0(n) $ such that for all $ \eta \in (0, \eta_0 ) $, $ \alpha \in [1/2,1) $:
\begin{equation}\label{eq:N2test}
	\frac{\mathbb{E}[N(N-1)]}{\mathbb{E}[ N ]^2}\ \leq \ C\ < \infty  \, . 
\end{equation}
\end{theorem}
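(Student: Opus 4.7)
The strategy mirrors that of Lemma~\ref{lem:upperbound1}. Expanding
\[
\mathbb{E}[N(N-1)] = \sum_{\substack{x, y \in \mathcal{S}_n^\kappa \\ x \neq y}} \mathbb{P}(D_x \cap I_x \cap D_y \cap I_y),
\]
I would condition on the sigma algebra $\mathscr{A}_{xy}$ generated by $\{V(u) : u \neq x, y\}$. Because $L_x, L_y, R_x, R_y, I_x, I_y$ are all $\mathscr{A}_{xy}$-measurable, only $\mathbb{P}(E_x \cap E_y \mid \mathscr{A}_{xy})$ remains to be controlled, for which the plan is to invoke a two-site (Minami-type) Wegner bound analogous to Theorem~\ref{thm:twoL1}:
\[
\mathbb{P}(E_x \cap E_y \mid \mathscr{A}_{xy}) \leq \frac{C}{\tau}\Big(\frac{1}{\tau} + \min\{1, |G^{\mathcal{T}_{xy}}(x_-, y_-; E+i\eta)|\}\Big).
\]
Under the no-ac hypothesis, Lemma~\ref{lemma:realselfenergy2} makes the self-energies $\sigma_x(V(y))$ and $\sigma_y(V(x))$ real, so a two-dimensional change of variables in the double integral over $V(x), V(y)$—with Jacobian $\lambda^2 - \partial_v \sigma_x \cdot \partial_u \sigma_y$, whose perturbation is controlled by the off-diagonal $G^{\mathcal{T}_{xy}}(x_-, y_-)$—yields the estimate.

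I would then decompose the sum by the depth $\ell_0 \in [0, N_\kappa-1]$ of the least common ancestor of $x$ and $y$; the number of pairs with a given $\ell_0$ is of order $K^{2N_\kappa - \ell_0}$, and $\dist(x_-, y_-) = 2(n - \ell_0 - 1)$. For $\ell_0 < n_\kappa$ the subtrees rooted at $x_{n_\kappa}$ and $y_{n_\kappa}$ are disjoint, so $L_x, L_y, I_x, I_y$ are mutually independent, and the leading $C\tau^{-2}$ term contributes at most $C K^{2N_\kappa} \tau^{-2} \mathbb{P}(L_x)^2$ in total, which matches $\mathbb{E}[N]^2$ up to a constant by Theorem~\ref{thm:1Mom}. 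The subleading term is handled by $\min\{1, |G|\} \leq |G|^s$ with $s \in (0,1)$ close to $1$ and the moment bound $\mathbb{E}[|G^{\mathcal{T}_{xy}}(x_-, y_-)|^s] \leq C e^{\varphi(s) \cdot 2(n - \ell_0 - 1)}$ from Theorem~\ref{thm:phi}. Since $\varphi(s) \leq -s \log \sqrt{K}$ by~\eqref{eq:exclusion} with strict inequality at $s < 1$, one can arrange $\log K + 2\varphi(s) < 0$, so the geometric sum over $\ell_0$ converges and contributes $O(\mathbb{E}[N]^2)$.

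The main obstacle is the regime $\ell_0 \geq n_\kappa$, where $L_x$ and $L_y$ genuinely overlap on a shared branch segment of length $\ell_0 - n_\kappa + 1$. Here the plan is to refactor each $L$-event into independent large-deviation contributions on the shared and private branch segments, bounding each factor via Theorem~\ref{cor:ldev1}, to obtain $\mathbb{P}(L_x \cap L_y) \lesssim \exp\bigl(-I(\gamma)(2N_\kappa - \ell_0 + n_\kappa - 1)\bigr)$. Combined with the geometric pair count $K^{2N_\kappa - \ell_0}$, the gap $\log K - I(\gamma) \geq 15\Delta/8 > 0$ provided by~\eqref{eq:assumphi} (using $\gamma \geq \Delta$ from~\eqref{eq:defgamma}), and the parameter choices~\eqref{eq:setdelta} and~\eqref{eq:eps'}, the sum over $\ell_0 \geq n_\kappa$ yields a contribution of order $K^{-n_\kappa} \cdot \mathbb{E}[N]^2$, which is negligible as $n \to \infty$. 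The delicate step is verifying the claimed refactorization of the large-deviation events into shared/private pieces, since Theorem~\ref{cor:ldev1} addresses only the single-path setup and the $L_x^{(k,\pm)}$-events straddle the LCA for intermediate values of $k$.
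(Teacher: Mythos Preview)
Your overall strategy matches the paper's, but there is a genuine gap in the conditioning step. You assert that $L_x, L_y$ are $\mathscr{A}_{xy}$-measurable, and this is false precisely in the regime $\ell_0 \geq n_\kappa$ that you flag as delicate. The variables $\Gamma_\pm(j;\eta)$ defining $L_x$ are diagonal Green functions on subtrees $\T_{x_{n-j-1},x}$ (resp.\ $\T_{x_{n_\kappa-1},x_{n_\kappa+j}}$) which, for $j$ large enough that the segment straddles the LCA $x_{\ell_0}$, include the \emph{entire infinite side branch} containing $y$. Hence $L_x$ depends on $V(y)$, and you cannot pull $\indfct_{L_x\cap L_y}$ outside the conditional expectation before invoking the two-site bound.

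The paper handles this by first enlarging $L_x$ to a superset $\widehat{L}_{x,j}\supset L_x$, chosen case-by-case in $j=\ell_0$, which \emph{is} independent of both $V(x)$ and $V(y)$: roughly, one keeps only those $L_x^{(k,\pm)}$ whose defining segment stops short of the LCA (on the $-$ side) or starts past it (on the $+$ side). Only after this replacement does Theorem~\ref{thm:twoL1} apply, yielding the split $r(j)\le r_1(j)+r_2(j)$. Your ``refactorization into shared/private pieces'' is morally the right idea, but it must be carried out \emph{before} the conditioning on $\mathscr{A}_{xy}$, not after; and the resulting loss $\mathbb{P}(\widehat{L}_{x,j})/\mathbb{P}(L_x)$ has to be controlled via Theorem~\ref{cor:ldev1}, which is where the careful three-case analysis (and the specific parameter choices $2s>15/8$, $\varphi(s)<-\tfrac12\log K$, and~\eqref{eq:eps'}) enter. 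You also do not address the cross term $r_2(j)$ for $\ell_0\ge n_\kappa$: there one must retain part of the indicator (namely $\indfct_{L_x^{(j-n_\kappa-1,-)}}$), factor $\widehat G_{x,y}=G_j\,\widehat G_x\,\widehat G_y$ at the LCA, integrate out $V(x_j)$ first to bound $|G_j|^s$, and only then separate the remaining factors---this step has no analogue in Lemma~\ref{lem:upperbound1}.
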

\begin{proof} 
Throughout the proof we will suppress the dependence on $ n$, $\eta $ and $\alpha$ at our convenience. Appearing constants $c, \,  C $ will be independent of $ n$, $\eta $ and $\alpha$.
We write
 \begin{equation}\label{eq:N(N-1)b}
	\mathbb{E}\left[N(N-1)\right] \   =  \!\sum_{\substack{x,y \in \mathcal{S}_n^\kappa \\ x \neq y }}
		\mathbb{P}\left(D_x\cap D_y \cap I_x \cap I_y \right)
		 = \ |\mathcal{S}_n^\kappa |\! \sum_{y \in \mathcal{S}_n^\kappa \backslash \{ x\}  } \mathbb{P}\left(D_x\cap D_y \cap I_x \cap I_y \right)  \, .
\end{equation}
The last equality holds for arbitrary $ x\in \mathcal{S}_n^\kappa $ which we will fix in the following.  By symmetry, the joint probability $ \mathbb{P}\left( D_x\cap D_y \cap I_x \cap I_y\right) $  depends only on the distance of the last common ancestor $ x \wedge y $ to the root. It is therefore useful to 
introduce the ratio
\begin{equation}
  \frac{\mathbb{P}\left(D_x\cap D_y \cap I_x \cap I_y\right)}{\mathbb{P}\left(D_x\cap I_x \right)\, \mathbb{P}\left(D_y\cap I_y \right)}\ := \  r(j) \; \delta_{\dist(x \wedge y,0), j} \ \, .
 \end{equation}
The sum in~\eqref{eq:N(N-1)b} may then be 
organized in terms of  the last common ancestor $ x \wedge y $ on the path $ \mathcal{P}_{0,x} = \{ x_0, \dots, x_{n} \} $ connecting the root with $ x $. 
In fact, since  $ \mathcal{S}_n^\kappa $ is thinned,  $ x\wedge y $ belongs to the shortened path
$\mathcal{P}_{0,x}^\kappa :=
\left\{ u  \in \mathcal{P}_{0,x} \, \big| \, \dist(u,0) < N_\kappa  \right\} $. 
Moreover, for a given $ x\wedge y \in \mathcal{P}_{0,x}^\kappa $, the number of vertices  $ y \in \mathcal{S}_n^\kappa  $, which for fixed $ x $ have the same common ancestor, is $|S_n^\kappa| \,  K^{ - \dist( x\wedge y,0)} $ such that 
\begin{align}\label{eq:N2sum}
	\frac{\mathbb{E}\left[N(N-1)\right]}{\mathbb{E}\left[N\right]^2} \ & = \    \sum_{j=0}^{N_\kappa -1} \frac{r(j)}{K^j} \, . 
\end{align}
In order to estimate the sum in the right side of~\eqref{eq:N2sum}, we always drop the condition $R_x $ in the definition of 
$ D_x $:
\begin{equation}\label{eq:dropR}
	r(j) \leq \frac{\mathbb{P}\left(L_x \cap L_y \cap E_x \cap E_y \cap  I_x \cap I_y \right)}{\mathbb{P}\left(D_x\cap I_x\right)\, \mathbb{P}\left(D_y\cap I_y\right)} \; \delta_{\dist(x \wedge y,0), j} \, . 
\end{equation}
For an estimate on the numerator in the right side, we first focus on the extreme fluctuation events and aim to integrate out the random variable associated with $ x $ and $ y $ using Theorem~\ref{thm:twoL1} in the Appendix. In general, what stands in the way of this procedure is the dependence of 
$L_x $ on $ V(y) $ and $ L_y$ on $ V(x) $, respectively. We therefore relax the conditions in the large deviation events and pick suitable
\begin{equation}\label{eq:supersetL}
	 \widehat{L}_{x,j} \supset L_x \, , \quad \mbox{(and hence}\quad \widehat{L}_{y,j} \supset L_y \;) 
\end{equation}
such that $ \widehat{L}_{x,j} $ and $ \widehat{L}_{y,j} $ are independent of both $ V(x) $ and $ V(y) $. Postponing the details of these choices which will
depend on $ j $, we bound the numerator on the right side in~\eqref{eq:dropR} using Theorem~\ref{thm:twoL1} in the Appendix:
\begin{align}\label{eq:relaxhatL}
	& \mathbb{P}\left( L_x \cap L_y \cap E_x \cap E_y \cap  I_x \cap I_y\right) \  \leq  \ \mathbb{E}\left[ \indfct_{ \widehat{L}_{x,j} \cap  \widehat{L}_{y,j} } \, \mathbb{P} \left( E_x \cap E_y  \, | \,\mathscr{A}_{x,y} \right) \right] \notag \\
	& \leq C\, \left(  \tau^{-2} \, \mathbb{P}\left( \widehat{L}_{x,j} \cap  \widehat{L}_{y,j} \right) \, +   \tau^{-1} \, \mathbb{E}\left[  \indfct_{ \widehat{L}_{x,j} \cap  \widehat{L}_{y,j} } \, \min\big\{\big| \widehat G_{x,y}  \big|, 1 \big\} \right]   \right) \, ,
\end{align}
where we have abbreviated by $ \mathscr{A}_{x,y} $ the sigma algebra generated by the variables $ V(\xi) $, $ \xi \not\in \{x,y\} $ and 
\begin{equation}
	\widehat G_{x,y} := G^{\mathcal{T}_{x,y}}(x_{n-1},y_{n-1};E+i\eta) \, . 
\end{equation}
This quantity measures the strength of the interaction of the events $E_x$ and $E_y$.

Under the assumptions of Theorem~\ref{thm:1Mom}, the denominator in the right side of~\eqref{eq:dropR} is bounded from below by $c\,   \tau^{-2} \,  \mathbb{P}\left( L_x\right) \mathbb{P}\left( L_y\right)  $ provided $ n $ is sufficiently large and $ \eta $ is sufficiently small. 
The terms on the right side in~\eqref{eq:relaxhatL} hence give rise to two terms, $ r(j) \leq r_1(j) + r_2(j) $,  which for fixed $j = \dist(x \wedge y,0)  $ are defined as: 
\begin{align}
	r_1(j) \;  & := \  C\,   \frac{\mathbb{P}\big(\widehat{L}_{x,j} \cap  \widehat{L}_{y,j}\big)  }{\mathbb{P}\left(L_x\right) \, \mathbb{P}\left(L_y\right) }  \label{eq:defr1}  \\
	r_2(j) \; & :=  \ \frac{C\; \tau}{ \mathbb{P}\left(L_x\right) \, \mathbb{P}\left(L_y\right)} \, \mathbb{E}\left[  \indfct_{\widehat{L}_{x,j} \cap  \widehat{L}_{y,j}} \, \min\big\{| \widehat G_{x,y} |, 1 \big\} \right]  \label{eq:defr2}
\end{align}
For the precise definition of the events $ \widehat{L}_{x,j} $ and $ \widehat{L}_{y,j} $ we distinguish three cases: 
\begin{description}
\item[Case $\mathbf{0\leq j < n_\kappa}$:] 
The events $ L_x $ and $L_y $ are already independent of  the potential at~$ x $ and~$ y $. Therefore we choose
\begin{equation}
	\widehat{L}_{x,j} = L_x \, . 
\end{equation}
As a consequence, the corresponding sum involving $ r_1(j) $  is seen to be  uniformly bounded in $ n $ and $ \eta $:
\begin{equation}
 \sum_{j=0}^{n_\kappa -1} \frac{r_1(j)}{K^j}  \ \leq \ C \,  \sum_{j=0}^\infty \frac{1}{K^j} \, . 
\end{equation}

For an estimate on $ r_2(j) $, we drop the indicator function in the right side of~\eqref{eq:defr2} and use the fact that $ \min\{ |x|, 1\} \leq |x|^\sigma $ for any $\sigma \in [0,1)$; in particular, for $ \sigma = s $:
\begin{align}\label{eq:boundonr2}
	 r_2(j) \  & \leq \  \frac{C\; \tau}{ \mathbb{P}\left(L_x\right) \, \mathbb{P}\left(L_y\right)}\; \mathbb{E}\big[|\widehat G_{x,y} |^s\big]\ \leq \  \frac{C\; \tau}{ \mathbb{P}\left(L_x\right) \, \mathbb{P}\left(L_y\right)}\; e^{2(n-j) \varphi(s)} \, . 
\end{align} 
Here the second inequality derives from the finite-volume estimates~\eqref{eq:finitevolume}. 
Since  $ \varphi(s) < - \tfrac{1}{2}\log K $ by assumption on $ s $, the  geometric sum in the following chain of inequalities is dominated by its last term:
\begin{align}
\sum_{j=0}^{n_\kappa -1} \frac{r_2(j)}{K^j} \ & \leq \   \frac{C \, \tau }{\mathbb{P}\left(L_x \right) \, \mathbb{P}\left( L_y\right)} \sum_{j=0}^{n_\kappa -1} \frac{e^{2(n-j) \varphi(s)}}{K^j} \notag \\
& \leq \frac{C \,  \tau }{\mathbb{P}\left(L_x \right) \, \mathbb{P}\left( L_y\right)}  \frac{e^{2N_\kappa \varphi(s)}}{K^{n_\kappa}} \, . 
\end{align}
Using the large deviation result,  Theorem~\ref{cor:ldev1}, and the fact that $ - \varphi(s) = I(\gamma) + \gamma \, s $, we estimate
\begin{align}\label{eq:ratiobound2}
\frac{\tau }{\mathbb{P}\left(L_x \right) \, \mathbb{P}\left( L_y\right)}  \,  e^{2N_\kappa \varphi(s)} \ & \leq \ e^{4 N_\kappa \epsilon}\, \tau \, e^{-2N_\kappa \gamma s }\ \leq \ e^{N_\kappa \left( \left(\tfrac{7}{4}- 2s)\right) \Delta +4 \epsilon\right)} \notag \\
& \ \leq \ e^{N_\kappa \,  (\tfrac{15}{8} - 2s )\Delta}\ \leq \ C \, ,
\end{align}
since $ 2s > 15/8  $.

\item[ Case $ \mathbf{n_\kappa \leq j \leq \tfrac{3}{2} n_\kappa} $:]
We choose
\begin{equation}
	\widehat{L}_{x,j} =  L^{(N_\kappa - \tfrac{1}{2}n_\kappa-1,+)}_x \, , 
\end{equation}
which is independent of $ \widehat{L}_{y,j} = L^{(N_\kappa - \tfrac{1}{2}n_\kappa-1,+)}_y $. 
An estimate on $ r_1(j) $ hence requires to bound the ratio:
\begin{align}\label{eq:Nnb}
	\frac{\mathbb{P}\big(\widehat{L}_x  \big)}{\mathbb{P}\left(L_x\right)}  \ \leq  \ C \, 
	\frac{e^{-(n-\tfrac{3}{2}n_\kappa-2)(I(\gamma) - 2 \epsilon) }}{e^{-N_\kappa (I(\gamma) + 2\epsilon)}} \leq C \, e^{4 N_\kappa\epsilon}\, e^{ \tfrac{n_\kappa}{2} I(\gamma) }
	\leq  \ C\, K^{n_\kappa/2} \, .
\end{align}
Here the first inequality follows from the large deviation result, Theorem~\ref{cor:ldev1}, and holds for  $ n $  large enough and $ \eta $ sufficiently small. In this situation, the third inequality also applies since $ I(\gamma) \leq \log K - \tfrac{15}{8} \Delta $ by \eqref{eq:assumphi} and \eqref{eq:defgamma}, and $ 4 N_\kappa \epsilon \leq \Delta \kappa N_\kappa / 4 \leq \Delta \,  n_\kappa /2 $. 
As a consequence, the sum corresponding to $ r_1(j) $ is bounded uniformly in $ n $: 
\begin{equation}
 \sum_{j=n_\kappa}^{\tfrac{3}{2}n_\kappa} \frac{r_1(j)}{K^j}  \leq C \, K^{n_\kappa} \, \sum_{j=n_\kappa}^\infty \frac{1}{K^j} \leq C \, \sum_{j=0}^\infty \frac{1}{K^j} \, . 
 \end{equation} 
 
For an estimate on the sum  corresponding to $ r_2(j) $ we use \eqref{eq:boundonr2} again which yields
\begin{equation}
	\sum_{j=n_\kappa}^{\tfrac{3}{2}n_\kappa } \frac{r_2(j)}{K^j} \leq \frac{C \,  \tau }{\mathbb{P}\left(L_x \right) \, \mathbb{P}\left(L_y\right)}  \frac{e^{(2N_\kappa - n_\kappa) \varphi(s)}}{K^{\tfrac{3}{2} n_\kappa}}  \leq  \frac{C \, \tau }{\mathbb{P}\left(L_x \right) \, \mathbb{P}\left( L_y\right)}  \frac{e^{2N_\kappa \varphi(s)}}{K^{n_\kappa/2}}  \leq \ C
\end{equation}
by \eqref{eq:ratiobound2}. 

\item[Case $\mathbf{ \tfrac{3}{2}n_\kappa < j < N_\kappa} $:]
In this main case, we pick
\begin{align}
	 \widehat{L}_{x,j} \ & = \ L^{(j -n_\kappa -1,-)}_x \cap L^{(N_\kappa +n_\kappa-j-1,+)}_x\, ,
\end{align}
Note that $ L^{(j -n_\kappa -1,-)}_x =  L^{(j -n_\kappa -1,-)}_y $ and $ L^{(N_\kappa +n_\kappa-j-1,+)}_x $ and $ L^{(N_\kappa +n_\kappa-j-1,+)}_y  $ are independent. 
We may hence estimate the numerator in the definition of $ r_1(j) $ using the large deviation result, Theorem~\ref{cor:ldev1}  to conclude that for  all $ n $ sufficiently large and $ \eta $ sufficiently small:
\begin{align}
  \mathbb{P}\big(  \widehat{L}_{x,j}  \cap \widehat{L}_{y,j}  \big) &  \leq  \ \mathbb{P}\left(L^{(j -n_\kappa -1,-)}_x\right)  \mathbb{P}\left(  L^{(N_\kappa +n_\kappa-j-1,+)}_x\right)  \mathbb{P}\left(L^{(N_\kappa +n_\kappa-j-1,+)}_y\right) \notag \\ & \leq \ C \,  \, e^{- (I(\gamma)-2\epsilon) \left(2n - j - n_\kappa\right) } \notag \\
& \leq \ C \; \mathbb{P}\left(L_x\right)  \mathbb{P}\left(L_y\right) \, e^{8 N_\kappa \epsilon} \, e^{- I(\gamma) \left(n_\kappa - j \right) } \, . 
\end{align}
Since $ I(\gamma) < \log K $, the corresponding sum is hence uniformly bounded in $ n $: 
\begin{align}
 \sum_{j=\tfrac{3}{2}n_\kappa+1}^{N_\kappa-1} \frac{r_1(j)}{K^j} \ & \leq \ C \, e^{8 N_\kappa \epsilon} \,
	 \sum_{j=\tfrac{3}{2}n_\kappa}^{N_\kappa}  \frac{e^{- I(\gamma) \left(n_\kappa - j \right) }}{K^j} \notag \\ 
	 &  \leq \ C \;   e^{8 N_\kappa \epsilon} \, \frac{e^{\tfrac{n_\kappa}{2} I(\gamma) }}{K^{\tfrac{3}{2} n_\kappa}} \, \leq \ C \, \frac{e^{8 N_\kappa \epsilon}}{K^{n_\kappa}} \ \leq \ C \,  ,
\end{align} 
 cf.~\eqref{eq:Nnb}.

For an estimate on $r_2(j) $ we drop conditions in the indicator function and use $ \min\{ |x|, 1\} \leq |x|^s $ again:
\begin{equation}\label{eq:r23case}
	 r_2(j) \  \leq \
	C \, \tau \, \frac{\mathbb{E}\big[ \indfct_{L^{(n_\kappa,j-1)}_x} \,  |\widehat G_{x,y} |^s  \big] }{\mathbb{P}\left(L_x\right) \, \mathbb{P}\left(L_y\right)}
\end{equation}
The Green function in the numerator is a product of three terms, $  \widehat G_{x,y} =    G_j \, \widehat G_x \, \widehat G_y $ with
\begin{align}
 & G_j  := G^{\mathcal{T}_{x,y}}(x_{j},y_{j})  \\
 & \widehat G_x := G^{\mathcal{T}_{x_{j},x}}(x_{j+1},x_{n-1}) \, \quad \widehat G_y := G^{\mathcal{T}_{y_{j},y}}(y_{j+1},y_{n-1}) \notag   \end{align}
 of which only the first one depends on $  V(x_j) $. Since $L^{(n_\kappa,j-1)}_x $ is independent of $ V(x_jj) $ we may hence condition on the potential elsewhere and use the uniform bound $ \mathbb{E}\left[ | G_j  |^s \, | \, \mathscr{A}_{x_j} \right] \leq C $ to estimate the numerator in~\eqref{eq:r23case}:
 \begin{align}
 \mathbb{E}\big[ \indfct_{L^{(n_\kappa,j-1)}_x } \,  |\widehat G_{x,y} |^s  \big] \leq  & \ C \; \mathbb{E}\big[ \indfct_{L^{(n_\kappa,j-1)}_x} \,  | \widehat G_x \, \widehat G_y |^s  \big]  \notag \\
  = & \  C\; \mathbb{P}\big(L^{(n_\kappa,j-1)}_x\big) \, \mathbb{E}\left[|\widehat G_x  |^s\right] \, \mathbb{E}\left[| \widehat G_y  |^s\right] \notag \\
  \leq  &\ C \; e^{- (j - n_\kappa) (I(\gamma) - 2\epsilon) } \, e^{2 (n-j) \varphi(s) } \, . 
 \end{align}
 Summing over $ j $ with a weight $ K^{-j} $ we again obtain a geometric sum which is in this case bounded by the number of terms times the maximum of its first and last term. Therefore we conclude that
 \begin{align}
 	& \sum_{j=\tfrac{3}{2}n_\kappa +1}^{N_\kappa-1} \frac{r_2(j)}{K^j} \ \leq \sum_{j=n_\kappa}^{N_\kappa-1} \frac{r_2(j)}{K^j} \ \leq \ N_\kappa \;  \frac{C \, \tau }{\mathbb{P}\left(L_x \right) \, \mathbb{P}\left(L_y\right)} \\
	& \qquad \times \; \max\left\{ \frac{e^{-(N_\kappa - n_\kappa)(I(\gamma) - 2\epsilon)} e^{2 n_\kappa \varphi(s)} }{K^{N_\kappa}} \, , \,  \frac{ e^{2 N_\kappa \varphi(s)} }{K^{n_\kappa}} \right\} \, .  \notag
 \end{align}
  In the first case, we use $ \varphi(s) < - I(\gamma) $ and Corollary~\ref{cor:ldev1} to conclude that the term is  uniformly bounded in $ n $:
  \begin{align}
  	 N_\kappa \;  \frac{C \, \tau }{\mathbb{P}\left(L_x \right) \, \mathbb{P}\left(L_y\right)} \, \frac{e^{- N_\kappa (I(\gamma) -2 \epsilon)} }{K^{N_\kappa}} \ & \leq \  N_\kappa \;  \frac{C \,e^{N_\kappa (I(\gamma) + \gamma +\tfrac{3}{4} \Delta + 6 \epsilon)}}{K^{N_\kappa}}  \notag \\
	& \leq \ \, C \, N_\kappa \, e^{-N_\kappa (\frac{1}{8} \Delta - 6 \epsilon)} \ \leq \ C \, ,
  \end{align}
since $ \epsilon <\Delta / 48$. 

 In the second case, we use  \eqref{eq:ratiobound2} to conclude that  the term is  uniformly bounded in~$ n $:
 \begin{equation}
 	N_\kappa \,   \;  \frac{C \, \tau }{\mathbb{P}\left(L_x \right) \, \mathbb{P}\left(L_y\right)} \, \frac{e^{2 N_\kappa \varphi(s)} }{K^{n_\kappa}} \   \leq \ C \, N_\kappa \, e^{N_\kappa (\frac{15}{8} - 2s)} \ \ \leq \ C \, ,
 \end{equation}
 since $ 2s > \frac{15}{8} $.
\end{description}
This concludes the proof of~\eqref{eq:N2test}. 
\end{proof}

%
\section{Semi-continuity bounds for the Lyapunov exponent }\label{sec:Lyap}

As we saw in Section~\ref{sec:applications}, the applications of the conditions which are derived here for absolutely continuous spectrum still require some additional information on the function $\varphi_\lambda(1;E)$, or at least on the Lyapunov exponent $L_\lambda(E)$.  While we do not have useful independent bounds on $\varphi_\lambda(1;E)$, in this section we present some partial continuity results for $L_\lambda(E)$ which enable the derivation of the main conclusions which were drawn in  Corollaries~\ref{thm:charcac1} and~\ref{thm:charcac2} on the spectral phase diagram.

 Let us start with some general observations:
\begin{enumerate}
\item
The Lyapunov exponent 
is the negative real part of the  Herglotz function (cf.~\cite{D,PF}) given by $W_\lambda(\zeta) :=  \mathbb{E}\left[\log  \Gamma_\lambda(0;\zeta) \right] $. As such, its boundary values
$ \lim_{\eta\downarrow 0}  L_\lambda(E+i\eta) $ exist for Lebesgue-almost all $ E \in \mathbb{R} $ and. The latter coincides with  $L_\lambda(E) $ defined in~\eqref{eq:LyapE}, as is seen using a variant of Vitali's convergence theorem whose use is based on the fact that the fractional moments of $ \Gamma_\lambda(0;E+i\eta)$ with positive and negative power are uniformly bounded in $ \eta $.
\item
 In the absence of disorder, the Lyapunov exponent is easy to compute, $ L_0(\zeta) = -\log |\Gamma_0(\zeta)| $, where 
 $  \Gamma_0(\zeta) $ is the unique solution of $ K \Gamma^2 + \zeta \Gamma +1 = 0 $ in $ \mathbb{C}^+ $, and one finds:  
  \begin{equation}\label{eq:freeLE}
		  L_0(E) \  \left\{ \begin{array}{ll}
		  	= \log \sqrt{K} \quad &  |E | \leq 2 \sqrt{K} \, , \\  
		 \in \left(\log \sqrt{K} , \log K \right) \quad &  2 \sqrt{K} < |E| < K+1\, ,  \\
		  \geq \log K \quad & |E| \geq K+1 \, . 
		 \end{array} \right.
	\end{equation}
\item
In general, $   L_\lambda(\zeta) $ is related to the free energy function $ \varphi_\lambda(s;\zeta) $  through the relation~\eqref{eq:derphi} and the inequality~\eqref{eq:exclusion} from which
one concludes the bound 
$  L_\lambda(\zeta) \geq \log \sqrt{K} $
 which is saturated if and only if $ \lambda = 0 $ and  $|E| \leq 2 \sqrt{K} $. 
\end{enumerate}

\subsection{Continuity of energy averages}\label{sec:conLE1}
Thanks to the (weak) continuity of the harmonic measure associated with $ L_\lambda $, 
energy averages turn out to be continuous in the disorder parameter $ \lambda \geq 0 $. 
  \begin{theorem}\label{lem:conti}
  	For any bounded interval $ I \subset \R $ the function $ [0,\infty) \ni \lambda \mapsto \int_I L_\lambda(E) \, dE $ is continuous, and, in particular:
	\begin{equation}\label{eq:conti}
	\lim_{\lambda \downarrow 0} \int_I L_\lambda(E) \, dE  \ = \ \int_I L_0(E) \, dE  \, . 
	\end{equation} 
  \end{theorem}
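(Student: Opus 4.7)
I reduce the problem to the elementary continuity of integrals over contours lying strictly inside $\mathbb{C}^+$, using the Herglotz function $W_\lambda(\zeta) := \mathbb{E}[\log \Gamma_\lambda(0;\zeta)]$ introduced at the start of Section~\ref{sec:Lyap}. Two properties will be crucial: $L_\lambda(\zeta) = -\Re W_\lambda(\zeta)$ and $L_\lambda(E) = \lim_{\eta \downarrow 0} L_\lambda(E+i\eta)$ for a.e.\ $E$, and the uniform bound $0 < \Im W_\lambda \le \pi$ on all of $\mathbb{C}^+$, which holds because $\arg \Gamma_\lambda(0;\zeta) \in (0,\pi)$.

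Write $I = [a,b]$ and pick any $\eta_0 \in (0,1]$. Applying Cauchy's theorem to $W_\lambda$ on the rectangle $[a,b] \times [\epsilon,\eta_0]$ and taking real parts gives, for each $0 < \epsilon < \eta_0$,
\begin{equation*}
\int_I L_\lambda(E+i\epsilon)\,dE \;=\; \int_I L_\lambda(E+i\eta_0)\,dE \;+\; \int_\epsilon^{\eta_0}\!\bigl[\Im W_\lambda(a+i\eta) - \Im W_\lambda(b+i\eta)\bigr]\,d\eta.
\end{equation*}
The plan is then to: (i) let $\epsilon \downarrow 0$ on both sides, recovering the target integral $\int_I L_\lambda(E)\,dE$ on the left and $\int_0^{\eta_0}$ on the right; and (ii) verify that each of the two terms on the resulting right-hand side is continuous in $\lambda$.

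Step (i) is the main technical point. The pointwise a.e.\ convergence $L_\lambda(E+i\epsilon) \to L_\lambda(E)$ on $I$ is the standard Herglotz boundary convergence, and uniform $L^1(I)$-equi-integrability in $\epsilon \in (0,\eta_0)$ is provided by Vitali's theorem applied with the $L^p$-bound (for any $p > 1$ sufficiently close to $1$) obtained from the elementary inequality $|\log x|^p \le c_{p,s}(x^{sp} + x^{-sp})$, $x > 0$, combined with the uniform-in-$\epsilon$ fractional-moment bounds $\mathbb{E}[|\Gamma_\lambda(0;E+i\epsilon)|^{\pm sp}] \le C$ for $sp \in (-\varsigma,1)$ supplied by Theorem~\ref{thm:phi}(1c). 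The $\eta$-integral is controlled by the trivial $\pi$-bound on $\Im W_\lambda$, so its $\epsilon \downarrow 0$ limit is immediate.

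For step (ii), fix $\zeta \in \mathbb{C}^+$. For each realization $\omega$, the family $H_\lambda(\omega) = T + \lambda V(\omega)$ is strongly resolvent continuous in $\lambda$ on $\ell^2(\T)$, so the matrix element $\Gamma_\lambda(0;\zeta;\omega) = \langle \delta_0, (H_\lambda-\zeta)^{-1} \delta_0 \rangle$ is continuous in $\lambda$; the same fractional-moment domination then yields continuity of $\lambda \mapsto W_\lambda(\zeta)$, and in fact uniformly on compact subsets of $\mathbb{C}^+$. This promotes to continuity in $\lambda$ of the two terms on the right-hand side of the displayed identity by dominated convergence, with dominating functions coming from the fractional-moment bound at the fixed height $\eta_0 > 0$ (for the first integral) and from the $\pi$-bound on $\Im W_\lambda$ (for the second). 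Specializing to $\lambda \to 0$ yields \eqref{eq:conti}. The only non-routine point is the uniform equi-integrability in step (i); everything else reduces to invoking Theorem~\ref{thm:phi} together with standard convergence theorems.
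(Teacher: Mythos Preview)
Your argument is correct and complete, but it takes a genuinely different route from the paper's proof. The paper observes that $L_\lambda(\zeta)$, being the negative real part of the Herglotz function $W_\lambda$, is a nonnegative harmonic function on $\mathbb{C}^+$ whose associated harmonic (Poisson) measure on $\mathbb{R}$ is exactly $\sigma_\lambda(dE)=L_\lambda(E)\,dE$; thus $\int_I L_\lambda(E)\,dE=\sigma_\lambda(I)$. Pointwise convergence $G_\lambda(0,0;\zeta)\to G_{\lambda_0}(0,0;\zeta)$ for each $\zeta\in\mathbb{C}^+$ (strong resolvent convergence, which you also invoke) yields pointwise convergence of $L_\lambda(\zeta)$ and hence vague convergence of the measures $\sigma_\lambda$; since the limit measure is absolutely continuous, $\sigma_\lambda(I)\to\sigma_{\lambda_0}(I)$ for every bounded interval. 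That is the entire proof.

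What you do instead is rederive this piece of Herglotz theory by hand: the Cauchy-rectangle identity is effectively a finite-height Poisson balayage, and your vertical boundary terms, controlled by the uniform bound $\Im W_\lambda\in(0,\pi)$, play the role of the absolute-continuity input in the paper's argument. Your approach is more elementary and self-contained, at the cost of the equi-integrability step (i), where you have to invoke fractional-moment bounds; the paper's approach bypasses this entirely by packaging everything into the representing measure. One small presentational point: the constants in Theorem~\ref{thm:phi}(1c) (via Lemma~\ref{lem:submult}) degenerate as $\lambda\downarrow 0$, so your step (i) at $\lambda=0$ is not literally covered by that citation, though it is of course trivial there since $\Gamma_0$ is explicit and deterministic.
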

  \begin{proof}
  	Since the harmonic measure $\sigma_\lambda(I)  := \int_I L_\lambda(E) \, dE $ associated with $ L_\lambda(\zeta ) = \pi^{-1} \int \Im (E - \zeta)^{-1} \sigma_\lambda(dE) $ is absolutely continuous,
the asserted continuity thus follows from the vague continuity of  $\sigma_\lambda $, which in turn follows from the (weak) resolvent convergence $ G_\lambda(0,0;\zeta,\omega) \to G_{\lambda_0}(0,0;\zeta,\omega) $ as $ \lambda \to \lambda_0 $ for all $ \zeta \in \C^+ $ and all $ \omega $.
  \end{proof}
  
In particular, Theorem~\ref{lem:conti} ensures that the mean value of the Lyapunov exponent over any bounded, non-empty interval~$ I $, 
\begin{equation}
	M_\lambda(I) \ := \ \frac{1}{|I|} \, \int_I L_\lambda(E) \, dE \, , 
\end{equation}
is continuous in $ \lambda \geq 0$.  This immediately implies Corollary~\ref{thm:charcac1}, namely that the condition $ L_\lambda(E) < \log K $ holds on a positive fraction of every interval $ I \subset (-(K+1),K+1) $.
 \begin{proof}[Poof of Corollary~\ref{thm:charcac1}]
Since $ L_\lambda(E) \geq \log \sqrt{K} $, we may employ the Chebychev inequality to control the Lebesgue measure of that subset of $ I $ on which~\eqref{lyapcond} is violated:
\begin{align}\label{eq:condLno}
	\left| \left\{ E \in I \, | \, L_\lambda(E) \ \geq \ \log K \right\} \right| \ & \leq \ \int_I \frac{L_\lambda(E) - \log\sqrt{K}}{\log\sqrt{K}} \, dE  \ = \ | I | \; \frac{M_\lambda(I) - \log\sqrt{K} }{\log\sqrt{K}} \, . 
\end{align}
The assertion thus follows from the continuity~\eqref{eq:conti} and the fact that $ \log \sqrt{K} \leq M_0(I) < \log K $ for all closed intervals $ I \subset  (- K- 1, K+1) $ by a computation, cf.~\eqref{lem:conti}.
 \end{proof}
Note that $ M_0(I) = \log\sqrt{K} $ for all $ I \subset (-2\sqrt{K}, 2\sqrt{K}) $. Hence, in this case the measure in~\eqref{eq:condLno} tends to $ 0$ as $ \lambda \downarrow 0 $.

\subsection{The case of bounded random potentials}\label{sec:conLE2}

Let us now turn to the proof of Corollary~\ref{thm:charcac2}. Accordingly, for the remainder of this section, we will assume that $ \supp \varrho = [-1,1] $ such that 
almost surely $ \sigma(H_\lambda) = [ - |E_\lambda| , |E_\lambda| ] $ with $ E_\lambda = -2\sqrt{K} - \lambda $.

The main ideas behind the conditions in Corollary~\ref{thm:charcac2} are:
\begin{enumerate}[a)]
\item At the (lower) spectral edge the Lyapunov exponent is bounded according to:
\begin{equation}\label{eq:boundedgeL}
	L_\lambda(E_\lambda) \ \leq \ L_0(E_\lambda-\lambda) \, .
\end{equation}
(An analogous bound applies to the upper edge).
This inequality derives from the operator monotonicity of the function $ (0,\infty) \ni x \mapsto x^{-1} $ and the estimate $0 \leq H_\lambda - E_\lambda \leq T + 2\sqrt{K}+ 2\lambda $, which implies $ \Gamma_\lambda(0;E_\lambda) \geq  \Gamma_0(E_\lambda-\lambda) $.
\item Using the explicit formula for the Lyapunov exponent in case $ \lambda = 0 $ (cf.\ \eqref{eq:freeLE}),  we conclude that  
the condition $ L_0(E_\lambda-\lambda) < \log K $ holds if and only if $E_\lambda-\lambda> - (K+1) $ or equivalently if \eqref{eq:weak_disorder} holds. 
\end{enumerate}

The following theorem extends the bound~\eqref{eq:boundedgeL} to energies near $ E_\lambda $ in the spectrum.  Analogous arguments yield an upper bound near $ - E_\lambda $.
\begin{theorem}\label{thm:upper bound}
For a random potential satisfying Assumptions~\ref{assA}--\ref{assD} with $ {\rm supp}\, \varrho = [-1,1] $,   
for all $ \lambda > 0 $: 
 \begin{equation}
	 \limsup_{E \downarrow E_\lambda} \,  L_\lambda(E) \ \leq \  L_0(E_\lambda-\lambda)   \, . 
\end{equation}
\end{theorem}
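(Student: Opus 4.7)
The plan is to extend the edge inequality $L_\lambda(E_\lambda) \leq L_0(E_\lambda - \lambda)$, which comes from operator monotonicity as sketched just before the theorem, to a limsup bound as $E \downarrow E_\lambda$ via a continuity argument for an associated Herglotz function on $\C^+$. My first observation is that the underlying operator inequality $0 \leq H_\lambda - E \leq T - (E - \lambda)$ actually holds for every real $E \leq E_\lambda$, both sides remaining nonnegative since $E - \lambda \leq E_\lambda - \lambda < -2\sqrt{K}$. Operator monotonicity of $x \mapsto x^{-1}$ therefore yields $\Gamma_\lambda(0;E;\omega) \geq \Gamma_0(E - \lambda) > 0$ almost surely, hence $L_\lambda(E) \leq L_0(E - \lambda)$ throughout $(-\infty, E_\lambda]$, with a finite monotone limit $L_\lambda(E_\lambda^-)$ at the edge.

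Next I would introduce the auxiliary function
\be
V_\lambda(\zeta) \ := \ \E\bigl[\log \Gamma_\lambda(0;\zeta)\bigr] \ - \ \log \Gamma_0(\zeta - \lambda), \qquad \zeta \in \C^+,
\ee
with principal branches so that both terms are Herglotz functions of $\zeta$ with imaginary part in $(0,\pi)$. Then $V_\lambda$ is analytic in $\C^+$, real-valued on $(-\infty, E_\lambda)$ with $V_\lambda \geq 0$ there by the previous step, and has a finite nonnegative monotone limit $V_\lambda(E_\lambda^-)$. On the interval $(E_\lambda, -2\sqrt{K} + \lambda)$, where $\Gamma_0(\cdot - \lambda)$ remains real positive, the boundary values satisfy $\Re V_\lambda(E+i0) = L_0(E - \lambda) - L_\lambda(E)$. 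Since $L_0$ is continuous at $E_\lambda - \lambda \notin \sigma(T)$, the theorem reduces to showing $\liminf_{E \downarrow E_\lambda} \Re V_\lambda(E + i0) \geq 0$.

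The core step is to prove that $V_\lambda$ extends continuously to $\zeta = E_\lambda$ from $\C^+$. Granting this, for any $\varepsilon > 0$ one picks $\delta > 0$ small so that $|V_\lambda(\zeta) - V_\lambda(E_\lambda^-)| < \varepsilon$ on $\C^+ \cap B(E_\lambda, \delta)$; taking the Fatou boundary value $\eta \downarrow 0$ at fixed $E \in (E_\lambda, E_\lambda + \delta/2)$ gives $|V_\lambda(E + i0) - V_\lambda(E_\lambda^-)| \leq \varepsilon$ for a.e.\ such $E$, and combined with $V_\lambda(E_\lambda^-) \geq 0$ and $\varepsilon$ arbitrary this yields the required inequality. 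To establish the continuity I would use the Herglotz representation of $W_\lambda(\zeta) := \E[\log \Gamma_\lambda(0;\zeta)]$,
\be
W_\lambda(\zeta) \ = \ a \ + \ \int \Bigl( \frac{1}{E' - \zeta} \ - \ \frac{E'}{1+{E'}^2} \Bigr) h(E')\, dE',
\ee
whose density $h \in L^\infty$ is supported in $[E_\lambda, -E_\lambda]$ (since $W_\lambda$ is real off this set) and satisfies $\int h(E')/(E' - E_\lambda)\, dE' < \infty$, this integrability being exactly equivalent to the finiteness of $W_\lambda(E_\lambda^-)$ already established.

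The principal obstacle is the continuity claim itself: while non-tangential continuity of $W_\lambda$ at $E_\lambda$ follows from a straightforward dominated-convergence argument based on the integrability of $h/(E' - E_\lambda)$, tangential approaches such as $\zeta = E + i\eta$ with $\eta \to 0$ much slower than $E - E_\lambda \to 0$ are more delicate. Handling them would require either some mild regularity of $h$ near the edge---reflecting the Lifshits-tail structure of the mean density of states---or a workaround via energy-averaged boundary values $\delta^{-1} \int_{E_\lambda}^{E_\lambda+\delta} \Re V_\lambda(E+i0)\, dE$, whose limit can be computed by Fubini together with the non-tangential continuity already available, yielding the desired limsup bound after taking $\delta \downarrow 0$.
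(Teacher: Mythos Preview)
Your approach is genuinely different from the paper's, and the gap you flag as the ``principal obstacle'' is real and not closed by either workaround. The local integrability $\int_{E_\lambda}^{E_\lambda+1} h_\lambda(E')/(E'-E_\lambda)\,dE' < \infty$ (which does follow from the finiteness of $W_\lambda(E_\lambda^-)$) yields continuity of $V_\lambda$ at $E_\lambda$ only along non-tangential approaches in $\C^+$. The boundary value $\Re V_\lambda(E+i0)$ for $E > E_\lambda$ is the Hilbert transform of $h_\lambda$, and a bounded nonnegative density supported in $[E_\lambda,\infty)$ with $\int h/(E'-E_\lambda) < \infty$ can have an oscillating Hilbert transform with no limit as $E\downarrow E_\lambda$; so the continuity claim is not automatic. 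Workaround~(a), ``mild regularity of $h$ near the edge'', is precisely Lifshits-tail input---so this route does not bypass that work. Workaround~(b) is also insufficient for the theorem as stated: even granting that $\delta^{-1}\int_{E_\lambda}^{E_\lambda+\delta}\Re V_\lambda(E+i0)\,dE \to V_\lambda(E_\lambda^-) \ge 0$, convergence of local averages does not force the essential $\liminf$ of the integrand to be nonnegative, so you cannot conclude $\limsup_{E\downarrow E_\lambda} L_\lambda(E) \le L_0(E_\lambda-\lambda)$. (A minor aside: $h_\lambda$ is not supported in $[E_\lambda,|E_\lambda|]$ but equals $1$ on $(|E_\lambda|,\infty)$; this does not affect the local analysis, however.)

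For comparison, the paper avoids boundary regularity of the Herglotz function entirely and works directly with the Green function via a finite-volume approximation. On the high-probability event that the restriction $H_\lambda^{(R)}$ to a ball of radius $R$ has spectrum above $E_\lambda+\Delta$ (a weak Lifshits-tail bound, Lemma~\ref{lem:LT}), operator monotonicity applies to the \emph{finite-volume} resolvent at energies in $(E_\lambda, E_\lambda+\Delta)$, yielding $\Gamma_\lambda^{(R)}(0;E) \ge \Gamma_0(0;E_\lambda-\lambda)(1-o(1))$. The passage to infinite volume is handled by a surface-term estimate (Lemma~\ref{lem:propest}), and the contribution of the bad event to $L_\lambda(E) = -\E[\log|\Gamma_\lambda(0;E+i0)|]$ is controlled by Cauchy--Schwarz together with fractional-moment bounds. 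Coupling $R$ and $\Delta$ to $E-E_\lambda$ then gives the $\limsup$ statement. In short, the essential missing ingredient in your argument---quantitative control at the spectral edge---is exactly what the paper supplies through this probabilistic spectral-gap route.
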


Following the arguments above, this theorem in particular implies Corollary~\ref{thm:charcac2}. 

\begin{proof}[Proof of Corollary~\ref{thm:charcac2}]
	Without loss of generality, we restrict the discussion to the region near the lower edge $E_\lambda $ of $ \sigma(H_\lambda) $. 
	For fixed $ \lambda < (\sqrt{K}-1)^2/2$ we may pick $ \varepsilon(\lambda) := \log K - L_0(E_\lambda-\lambda) >0 $ which is strictly positive if and only if  \eqref{eq:weak_disorder} holds. We hence conclude from Theorem~\ref{thm:upper bound} that there is $  \delta(\lambda) > 0 $ such that $ L_\lambda(E) < \log K $ for any $ E \leq E_\lambda +  \delta(\lambda) $. 
\end{proof}
 
\subsubsection{Proof of Theorem~\ref{thm:upper bound}}

In the proof of Theorem~\ref{thm:upper bound}, we consider the finite-volume restriction of the operator to the Hilbert-space over $ B_R := \left\{ x \in \T \, | \, \dist(0,x) < R \right\} $, i.e., 
\begin{equation}
	H_\lambda^{(R)} := \indfct_{B_R} H_\lambda \indfct_{B_R}  \quad\mbox{on $ \ell^2(B_R) $.}
\end{equation}
The relation between the Green function and its finite-volume counterpart is controlled by standard perturbation theory, i.e., for almost every $ E \in \mathbb{R} $:
\begin{equation}\label{eq:pertGR}
	\left| \Gamma_\lambda(0;E+i0) - \Gamma_\lambda^{(R)}(0;E) \right| \ \leq \! \sum_{x \in \mathcal{S}_R} \big| G_\lambda^{(R)}(0,x_- ;E) \big| \, \left| G_\lambda(0,x ;E) \right| \ =: \ S_\lambda^{(R)}(E) \, . 
\end{equation}
The proof idea for Theorem~\ref{thm:upper bound} is to choose $ R $  such that:
\begin{enumerate}[a)]
	\item The following event has a good probability, 
	\begin{equation}
		Z_1 := \ \left\{  E_\lambda +\Delta \leq \inf \sigma(H_\lambda^{(R)} ) \right\}  \, .
	\end{equation} 
	In this event and for any $ E \in [E_\lambda, E_\lambda +\Delta) $ one can use the operator monotonicity of  $(0,\infty) \ni x \mapsto x^{-1}$ together with the bound $ 0 \leq H_\lambda^{(R)}  - E \leq H_0^{(R)}  + \lambda - E $ which implies
	\begin{align}\label{eq:opmon}
		\Gamma_\lambda^{(R)}(0;E) \ & \geq  \ \Gamma_0^{(R)}(0;E-\lambda) \ \geq  \    \Gamma_0^{(R)}(0;E_\lambda-\lambda) \notag \\
		 & \geq \ \Gamma_0(0;E_\lambda-\lambda) - S_0^{(R)}(E_\lambda-\lambda) \notag \\
		 & \geq \ \Gamma_0(0;E_\lambda-\lambda) \left( 1 - K^R e^{-2RL_0(E_\lambda-\lambda) } \right) \, . 
	\end{align}
	Here, the second inequality holds for all $ E_\lambda \leq E <  \inf \sigma(H_\lambda^{(R)} ) $, the third is a special case of \eqref{eq:pertGR}, and the last inequality follows from the fact that 
	\begin{equation}\label{eq:GRvsinfty}
	0 \leq \Gamma_0^{(R)}(x;E) \ \leq \ \Gamma_0(x;E)  \, ,
\end{equation}
	which, using the factorization property of the Green function, implies $ S_0^{(R)}(E)   \leq  K^R  e^{-2R L_0(E)} $ $ \Gamma_0(0;E) $ for any $ E \in \mathbb{R} $.
	\item The error terms on the right side of \eqref{eq:pertGR} and \eqref{eq:opmon} are small compared to $ \Gamma_0(0;E_\lambda-\lambda) \geq 0 $ in the sense that also the event
	\begin{equation}
		Z_2 := \ \left\{ S_\lambda^{(R)}(E) \leq \Gamma_0(0;E_\lambda-\lambda) \,  K^{-\delta R/2} \right\}
	\end{equation}
	occurs with a good probability. For reasons will become clear in the next subsection, we will choose
	\begin{equation}\label{eq:pickdelta}
		 \delta \ := \frac{\log( 1 + \frac{\lambda}{2\sqrt{K}}) }{64\, \|\varrho \|_\infty \, K^2  \log K } 
	\end{equation} 
\end{enumerate}

The probability of failure of the first event $ Z_1 $ is bounded with the help of the following lemma. Due to Lifshits tailing, this estimate is far from optimal and one expects the probability in~\eqref{eq:LT} to 
be exponentially small (see~\cite{BS} and references therein for a precise conjecture).
\begin{lemma}\label{lem:LT}
	There is some $ C >0 $ such that for all $ R > 0 $ and all $ \Delta >0 $ 
	\begin{equation}\label{eq:LT}
		\mathbb{P}\left(   \inf \sigma(H_\lambda^{(R)} ) <   E_\lambda +\Delta  \right)  \ \leq \  C \, K^R \,  \Delta^{3/2} 
	\end{equation}
\end{lemma}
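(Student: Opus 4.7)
The plan is to bound the probability deterministically via the expected number of low eigenvalues, and then reduce this count to free-tree modes near the spectral edge by exploiting the explicit edge behavior of the Bethe lattice density of states.

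First, by Markov's inequality $\mathbb{P}(\inf\sigma(H_\lambda^{(R)}) < E_\lambda + \Delta) \leq \mathbb{E}[N_{H_\lambda^{(R)}}(E_\lambda + \Delta)]$, where $N_H(E)$ denotes the number of eigenvalues of $H$ below $E$. Since $V(x) \geq -1$ almost surely, the operator bound
\[
H_\lambda^{(R)} - E_\lambda \; = \; (T^{(R)} + 2\sqrt{K}) + \lambda(V+1) \; \geq \; T^{(R)} + 2\sqrt{K} \; =: \; A \; \geq \; 0
\]
holds on $\ell^2(B_R)$, and the min-max principle yields the deterministic estimate $N_{H_\lambda^{(R)}}(E_\lambda + \Delta) \leq N_A(\Delta)$. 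The problem hence reduces to a deterministic bound on $N_A(\Delta)$.

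Next I would apply the exponential trace inequality $N_A(\Delta) \leq e^{\mu \Delta}\, \tr_{\ell^2(B_R)} e^{-\mu A}$, valid for any $\mu > 0$, together with Peierls' inequality for the convex function $x \mapsto e^{-\mu x}$. The key structural observation is the operator identity
$T^{(R)} = \indfct_{B_R} T_\mathcal{T} \indfct_{B_R} = \indfct_{B_R} T_\mathcal{B} \indfct_{B_R}$:
the only edge of the Bethe lattice $\mathcal{B}$ absent from the rooted tree $\mathcal{T}$ is the one through the fictitious backward neighbor of the root, which lies outside $B_R$. Peierls' inequality together with the homogeneity of $\mathcal{B}$ then gives
\[
\tr_{\ell^2(B_R)} e^{-\mu T^{(R)}} \; \leq \; \sum_{x \in B_R} \langle \delta_x, e^{-\mu T_\mathcal{B}} \delta_x \rangle \; = \; |B_R| \int_{-2\sqrt{K}}^{2\sqrt{K}} e^{-\mu E}\, D(E)\, dE,
\]
with $D(E) = \sqrt{4K - E^2}/(2\pi K)$ the free Bethe lattice spectral density.

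The $\Delta^{3/2}$ factor then emerges from the edge behavior of $D$. Using $D(E) \leq C_K \sqrt{E + 2\sqrt{K}}$ on $[-2\sqrt{K}, 2\sqrt{K}]$ and extending the $t = E + 2\sqrt{K}$ integration to $(0, \infty)$, I would obtain $\int e^{-\mu E} D(E)\, dE \leq C'_K\, \mu^{-3/2} e^{2\sqrt{K}\mu}$ for all $\mu > 0$. Hence $\tr\, e^{-\mu A} \leq C'_K |B_R|\, \mu^{-3/2}$, and choosing the optimal $\mu = 3/(2\Delta)$ together with $|B_R| \leq \tfrac{K}{K-1} K^R$ produces $N_A(\Delta) \leq C K^R \Delta^{3/2}$, which completes the estimate. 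The main point to verify carefully is the operator identity $T^{(R)} = \indfct_{B_R} T_\mathcal{B} \indfct_{B_R}$, which legitimizes invoking the \emph{homogeneous} Bethe lattice density of states in the Peierls step and avoids the need to control inhomogeneous spectral measures at individual vertices of the rooted tree.
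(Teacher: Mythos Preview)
Your argument is correct and follows essentially the same route as the paper's proof: Markov on the eigenvalue count, the deterministic comparison $H_\lambda^{(R)}-E_\lambda\ge T^{(R)}+2\sqrt K$, an exponential Chebyshev bound, a Jensen/Berezin--Lieb step passing from the finite-volume semigroup to the infinite-volume one, and finally the square-root edge behavior of the free spectral density yielding the factor~$\Delta^{3/2}$.

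One small correction: the density $\sqrt{4K-E^2}/(2\pi K)$ you quote is the spectral measure of $T_{\mathcal T}$ at the \emph{root} of the rooted tree, not the Bethe lattice density; the latter is $D^{\mathcal B}(E)=\tfrac{(K+1)\sqrt{4K-E^2}}{2\pi[(K+1)^2-E^2]}$. This is inconsequential for the proof, since only the edge asymptotics $D^{\mathcal B}(E)\le C_K\sqrt{E+2\sqrt K}$ is used. Your passage to $\mathcal B$ (rather than the paper's $\mathcal T$) is in fact a mild simplification, precisely because homogeneity makes all the diagonal semigroup entries equal.
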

\begin{proof}
By Chebychev's inequality the left side is bounded from above by
\begin{align}
	\mathbb{E}\left[ \tr \indfct_{(-\infty,E)}(H_\lambda^{(R)} ) \right] \ & \leq  \tr \indfct_{(-\infty,E+\lambda)}(H_0^{(R)} )  \leq  e^{t(E+\lambda) }  \tr e^{-t H_0^{(R)} } \notag \\
	& \leq e^{t(E+\lambda) } \, \tr \indfct_{B_R}  \ e^{-t H_0 } \indfct_{B_R}  \leq C \, K^R \, e^{t \Delta } \,  t^{-3/2} \, , 
\end{align}
where $E := E_\lambda + \Delta$ and the last inequality stems form the explicitly known form of the kernel of the (infinite-volume) semigroup. Taking $ t = \Delta^{-1} $ yields the result. 
\end{proof}

Bounds on the probability of failure of the second event $ Z_2 $ are more involved. 
Postponing the details of this probabilistic estimate, which will be the topic of the next subsection, the proof of Theorem~\ref{thm:upper bound} proceeds as follows:

\begin{proof}[Proof of Theorem~\ref{thm:upper bound}]
Abbreviating $ Z := Z_1\cap Z_2 $, we write
\begin{align}\label{eq:startPL}
	L_\lambda(E) \ & = \ - \E\left[ \indfct_{Z} \log | \Gamma_\lambda(0;E+i0) | \right] -  \E\left[ \indfct_{Z^c} \log | \Gamma_\lambda(0;E+i0) | \right]
\end{align}
In the event $ Z $ and assuming $ E \in [E_\lambda, E_\lambda +\Delta) $, one may use~\eqref{eq:pertGR} and \eqref{eq:opmon} to estimate
\begin{align}
	| \Gamma_\lambda(0;E+i0) | \ & \geq \ \Gamma_0^{(R)}(x;E)  - S_\lambda^{(R)}(E) \notag \\
	&  \geq \ \Gamma_0(0;E_\lambda-\lambda) \left( 1 - K^R e^{-2RL_0(E_\lambda-\lambda) }  -  K^{-\delta R/2} \right) \, . 
\end{align}
The right side is strictly positive for any $ \lambda > 0 $ provided $ R $ is large enough. 
In this case, the above bound and the monotonicity of the logarithm yields the following bound on the first term on the right in~\eqref{eq:startPL}:
\begin{equation}
- \E\left[ \indfct_{Z} \log | \Gamma_\lambda(0;E+i0) | \right] 
 \  \leq  \ 	 L_0(E_\lambda-\lambda)  - \log\left( 1 - K^R e^{-2RL_0(E_\lambda-\lambda)} -K^{-\delta R/2} \right) \, . 
\end{equation}
The second term in~\eqref{eq:startPL} is estimated using the Cauchy-Schwarz inequality
\begin{align}
	-   \E\left[ \indfct_{Z^c} \log | \Gamma_\lambda(0;E+i0) | \right]  \ & \leq \ \sqrt{\mathbb{P}\left(Z^c\right) } \sqrt{ \E\left[ \left|  \log | \Gamma_\lambda(0;E+i0) | \right|^2 \right]} 
\end{align}
Since $ |\log |x| | \leq 2 ( |x|^{1/2} + |x|^{-1/2} )$ the second factor is bounded with the help of fractional-moment estimates and~\eqref{eq:recur_gen} by a constant which only depends on $ \lambda $.
The probability of failure of the event $ Z $ is estimated using Lemma~\ref{lem:LT} and Lemma~\ref{lem:propest} which prove that under the condition~\eqref{eq:condE} below:
\begin{align}
\mathbb{P}\left( Z^c\right) \ & \leq \ \mathbb{P}\left( Z_2^c \, | \, Z_1 \right) + \mathbb{P}\left( Z_1^c \right) \notag \\
	& \leq \ C(\lambda) \, K^{- \frac{\delta^2}{4+4\delta} R}  + 2^{-R}  +  C \, K^R \,  \Delta^{3/2} \, . 
\end{align}
We pick $ \Delta := (E-E_\lambda) / c(\lambda) $ with $ c(\lambda ) $ from~\eqref{eq:condE} and $ R := \lceil \frac{\log \Delta^{-1} }{\log K} \rceil $. The proof is completed by noting that for any $ \lambda > 0 $: \emph{i.} $  \Delta \to 0 $ as  $ E \to E_\lambda $ and \emph{ii.} $ R \to \infty $ as $  \Delta \to 0 $. 
\end{proof}

\subsubsection{Auxiliary results}

The remaining task concerns the estimate on the error in \eqref{eq:pertGR}. We will prove
\begin{lemma}\label{lem:propest}
For every $ \lambda > 0 $ there exists a finite $ C(\lambda) $ such that if
\begin{equation}\label{eq:condE}
	E \ \leq \ E_\lambda + \Delta \,\left[ 1 - \exp\left(- \frac{\log(1 + \frac{\lambda}{2\sqrt{K}} ) }{64\, \|\varrho \|_\infty \, K^2  \log K }\right)\right] \quad \left[ =: E_\lambda + c(\lambda) \, \Delta \right] \, . 
\end{equation}
then  $\;
	\mathbb{P}\left( Z_2^c \, \big| \,  Z_1  \right) \  \leq \ C(\lambda) \, K^{- \frac{\delta^2}{4+4\delta} R}  + 2^{-R}  $.
\end{lemma}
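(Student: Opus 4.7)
The plan is to combine a deterministic, operator-theoretic bound on $G_\lambda^{(R)}(0,x_-;E)$ valid on the event $Z_1$ with a fractional-moment (probabilistic) bound on the complementary factor $|G_\lambda(0,x;E)|$, and then to apply Markov's inequality term by term in the definition of $S_\lambda^{(R)}(E)$.

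First, on $Z_1$ one has $E \leq \inf \sigma(H_\lambda^{(R)})$, so $H_\lambda^{(R)} - E \geq 0$. Since $\lambda V(x) \geq -\lambda$ on $\supp \varrho = [-1,1]$, one also has $H_\lambda^{(R)} \geq H_0^{(R)} - \lambda$ on $\ell^2(B_R)$. Provided $E + \lambda$ remains strictly below $\inf \sigma(H_0^{(R)})$, which is precisely what the form of $c(\lambda)$ in~\eqref{eq:condE} encodes, operator monotonicity of $x \mapsto x^{-1}$ yields $0 \leq G_\lambda^{(R)}(0,x_-;E) \leq G_0^{(R)}(0,x_-;E-\lambda)$. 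Combined with $G_0^{(R)} \leq G_0$ (cf.~\eqref{eq:GRvsinfty}) and the factorization of the free tree Green function, this gives, uniformly in $x \in \mathcal{S}_R$ and on $Z_1$,
\begin{equation*}
|G_\lambda^{(R)}(0,x_-;E)| \ \leq \ C\, |\Gamma_0(E-\lambda)|^R \ = \ C\, e^{-R\, L_0(E-\lambda)}.
\end{equation*}

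For $|G_\lambda(0,x;E)|$, Theorem~\ref{thm:phi} together with the universal upper bound $\varphi_\lambda(s;E) \leq -s \log \sqrt{K}$ from~\eqref{eq:exclusion} gives, for any $s \in (0,1)$,
\begin{equation*}
\mathbb{E}\left[\, |G_\lambda(0,x;E)|^s \,\right] \ \leq \ C\, K^{-sR/2}.
\end{equation*}
Using the subadditivity $(\sum_x a_x)^s \leq \sum_x a_x^s$ for $a_x \geq 0$ and $s \in (0,1]$, summing over the $K^R$ vertices of $\mathcal{S}_R$, and applying Markov's inequality, one obtains, for every threshold $t > 0$,
\begin{equation*}
\mathbb{P}\Bigl(\, \sum_{x \in \mathcal{S}_R} |G_\lambda(0,x;E)| > t \,\Bigr) \ \leq \ C\, t^{-s}\, K^{R(1-s/2)}.
\end{equation*}
Combining both ingredients, on $Z_1$ one has $S_\lambda^{(R)}(E) \leq C\, e^{-R L_0(E-\lambda)} \sum_{x \in \mathcal{S}_R} |G_\lambda(0,x;E)|$. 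Choosing $t$ so that the right-hand side matches the threshold $\Gamma_0(0;E_\lambda-\lambda)\, K^{-\delta R/2}$ and optimizing over $s \in (0,1)$ close to~$1$, the probability exponent takes the form $R \, [\, 1 + s\delta/2 - s L_0(E-\lambda)/\log K - s/2 \,]\, \log K$, which under~\eqref{eq:pickdelta} and~\eqref{eq:condE} is at most $-\delta^2 R \log K / (4+4\delta)$. The residual $2^{-R}$ in the bound is expected to absorb the auxiliary errors, namely passing from $\mathbb{P}(Z_2^c \mid Z_1)$ to the joint probability (for which $\mathbb{P}(Z_1) \geq 1/2$ at the appropriate scale follows from Lemma~\ref{lem:LT}) and replacing $L_0(E-\lambda)$ by its limiting value $L_0(E_\lambda - \lambda)$, which introduces a geometrically small boundary correction in~$R$.

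The main obstacle is the tight balance between the deterministic decay rate $L_0(E-\lambda)$, which degrades as $E \to E_\lambda$ (since the operator-monotonicity bound becomes effective only for $E + \lambda$ strictly below $-2\sqrt{K}$), and the fractional-moment growth rate $\varphi_\lambda(s;E)$ at the spectral edge, on which no sharper pointwise information is available beyond $\varphi_\lambda(s;E) \leq -s \log \sqrt{K}$. Ensuring a positive margin between these two rates, uniformly on the parameter window $E \leq E_\lambda + c(\lambda) \Delta$, is exactly what dictates the somewhat intricate expressions for $\delta$ in~\eqref{eq:pickdelta} and for $c(\lambda)$ in~\eqref{eq:condE}.
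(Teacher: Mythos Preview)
Your operator-monotonicity step contains a sign error that breaks the argument. From $H_\lambda^{(R)} \ge H_0^{(R)} - \lambda$ you get $H_\lambda^{(R)} - E \ge H_0^{(R)} - (E+\lambda)$, and hence (when both sides are positive) the resolvent comparison is at the shifted energy $E+\lambda$, not $E-\lambda$. The resulting bound on the path factors is $\Gamma_\lambda^{(R)}(y;E) \le \Gamma_0^{(R)}(y;E+\lambda)$, which after factorization gives
\[
\big| G_\lambda^{(R)}(0,x_-;E)\big| \ \le \ C\, e^{-R\, L_0(E+\lambda)} .
\]
But $E+\lambda \le E_\lambda + \lambda + c(\lambda)\Delta = -2\sqrt{K} + c(\lambda)\Delta$ sits at (or just above) the free spectral edge, where $L_0$ attains its minimum $\log\sqrt{K}$. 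Plugging $L_0(E+\lambda)\approx \log\sqrt{K}$ into your Markov-inequality balance, the exponent
\[
1 + \tfrac{s\delta}{2} - \tfrac{s}{2} - s\, \tfrac{L_0(E+\lambda)}{\log K} \ \approx \ 1 - s + \tfrac{s\delta}{2}
\]
is strictly positive for every $s\in(0,1)$ and $\delta>0$, so no decay in $R$ is obtained. (There is also the separate issue that the positivity hypothesis $E+\lambda < \inf\sigma(H_0^{(R)})$ is marginal here, since the finite-volume gap closes as $R\to\infty$.)

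The paper's proof does not rely on a deterministic comparison with the free operator. Instead it uses Lemma~\ref{lem:aux}(2), an ODE-type estimate obtained by differentiating $\Gamma_\lambda^{(R)}(y;a-\lambda)$ in $\lambda$, which yields the $V$-dependent bound
\[
G_\lambda^{(R)}(0,x_-;E) \ \le \ \frac{(1+\xi_\lambda(E))^R}{K^{R/2}}\, \Big(1+\tfrac{\lambda}{2\sqrt K}\Big)^{-\frac12 \sigma(x)}, \qquad \sigma(x)=\sum_{0\preceq y \prec x}(V(y)+1).
\]
The extra random factor $(1+\lambda/2\sqrt K)^{-\sigma(x)/2}$ is the whole point: on the auxiliary event $Z_0=\{\min_{x\in\mathcal S_R}\sigma(x)\ge \alpha R\}$ (whose failure probability is controlled by Lemma~\ref{lem:probraybound} and accounts for the $2^{-R}$ term) it upgrades the decay from $K^{-R/2}$ to $K^{-R(\delta+1/2)}$. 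That additional $K^{-\delta R}$ margin is precisely what allows the fractional-moment/Chebyshev step (at $s=(2+\delta)/(2+2\delta)$) to close with the stated exponent $\delta^2/(4+4\delta)$; a purely deterministic $K^{-R/2}$ bound cannot do this.
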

For a proof of this auxiliary estimate,  
we need to control the first factor in the right side of~\eqref{eq:pertGR} in case $  E < \inf \sigma(H_\lambda^{(R)} ) $. This is done with the 
help of the following lemma, which might be of independent interest.
 \begin{lemma}\label{lem:aux}
\begin{enumerate}
\item
Assume $ a \leq b  < \inf \sigma(H_\lambda^{(R)}) $, then
\begin{equation}\label{eq:specgammaest}
	\Gamma_\lambda^{(R)}(x;a ) \leq \Gamma_\lambda^{(R)}(x;b)  \leq \left(1 +  \frac{b-a}{ \inf \sigma(H_\lambda^{(R)}) -b}\right) \Gamma_\lambda^{(R)}(x;a ) \, . 
\end{equation}
\item Assume $ a \leq - 2 \sqrt{K} $ and  $ x \in B_R $, then
\begin{equation}\label{eq:speed}
	\Gamma_\lambda^{(R)}(x;a -\lambda ) \leq \Gamma_0^{(R)}(x;a)   \left(1 +  \frac{\lambda}{\sqrt{K} - \frac{a}{2} }\right)^{-\frac{1}{2} (V(x)+1) } \, . 
\end{equation}
    \end{enumerate}
\end{lemma}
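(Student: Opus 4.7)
I would split the argument into the two parts, using different techniques for each.

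For Part 1, I would apply the spectral theorem: since $a \le b < \inf \sigma(H_\lambda^{(R)})$,
\[
\Gamma_\lambda^{(R)}(x;z) \;=\; \int \frac{d\mu_x(E)}{E-z},
\]
where $\mu_x$ is the spectral measure of $H_\lambda^{(R)}$ associated with $\delta_x$, supported in $[\inf \sigma(H_\lambda^{(R)}),\infty)$. The lower bound is immediate from pointwise monotonicity of $z \mapsto (E-z)^{-1}$. For the upper bound, write
\[
\Gamma_\lambda^{(R)}(x;b) - \Gamma_\lambda^{(R)}(x;a) \;=\; \int \frac{b-a}{(E-a)(E-b)}\, d\mu_x(E) \;\le\; \frac{b-a}{\inf \sigma(H_\lambda^{(R)}) - b}\, \Gamma_\lambda^{(R)}(x;a),
\]
using $E - b \ge \inf \sigma(H_\lambda^{(R)}) - b > 0$ on $\supp \mu_x$. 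Rearranging gives the claim.

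For Part 2, let me abbreviate $g_\lambda(x) := \Gamma_\lambda^{(R)}(x;a-\lambda)$, $g_0(x) := \Gamma_0^{(R)}(x;a)$, and $w := \lambda/(\sqrt{K} - a/2) > 0$. I would induct from the leaves of $B_R$ toward the root using the recursion
\[
g(x)^{-1} \;=\; \lambda V(x) - z - \sum_{y \in \mathcal{N}^+_x \cap B_R} g(y),
\]
with an empty sum at the leaves. The target inductive claim is
\[
g_\lambda(x) \;\le\; g_0(x)\, (1+w)^{-(V(x)+1)/2}.
\]
At each step, the hypothesis forces $g_\lambda(y) \le g_0(y)$ on children (since $(1+w)^{-(V(y)+1)/2} \le 1$ as $V(y)+1 \ge 0$ and $w \ge 0$), so substitution into the recursion gives
\[
g_\lambda(x)^{-1} \;\ge\; g_0(x)^{-1} + \lambda(V(x)+1).
\]
A parallel monotone induction on the same recursion (starting with $1/(-a)$ at the leaves; the map $g \mapsto 1/(-a-\sum g_i)$ preserves $g \ge 1/(-a)$, with well-definedness guaranteed by $a \le -2\sqrt{K}$ ensuring $Kg_0 < -a$) yields $g_0(x) \ge 1/(-a)$. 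Combining,
\[
\frac{g_\lambda(x)^{-1}}{g_0(x)^{-1}} \;\ge\; 1 + \frac{\lambda(V(x)+1)}{-a}.
\]
Setting $t := (V(x)+1)/2 \in [0,1]$, convexity of $x \mapsto (1+w)^x$ on $[0,1]$ gives $(1+w)^t \le 1 + tw$, so it suffices that $2\lambda/(-a) \ge w$, i.e., $2(\sqrt{K} - a/2) \ge -a$, which simplifies to $2\sqrt{K} \ge 0$. This closes the induction; the base case at the leaves reduces to exactly the same algebraic inequality, since there $g_0(x)/g_\lambda(x) = 1 + \lambda(V(x)+1)/(-a)$ by direct computation.

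The main delicacy is the precise matching of the exponent $(V(x)+1)/2$: the factor $1/2$ is calibrated so that the linearized bound $1+tw$ is dominated by $1 + 2t\lambda/(-a)$ exactly under the hypothesis $a \le -2\sqrt{K}$. A cruder form of the estimate (e.g.\ exponent $V(x)+1$) would not survive the same inductive scheme without a more refined use of the recursion, and it is this calibration that allows the elementary convexity/concavity inequality to close the loop.
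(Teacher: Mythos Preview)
Your Part~1 argument coincides with the paper's: both invoke the spectral representation and bound the integrand pointwise.

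For Part~2 your route is genuinely different from the paper's. The paper treats $g(\lambda):=\Gamma_\lambda^{(R)}(x;a-\lambda)$ as a function of the disorder strength, differentiates the resolvent to obtain
\[
-\,g'(\lambda)\ \ge\ (V(x)+1)\,g(\lambda)^2\ \ge\ \frac{V(x)+1}{2\sqrt{K}+2\lambda-a}\;g(\lambda),
\]
using $g(\lambda)\ge (2\sqrt{K}+2\lambda-a)^{-1}$, and then integrates this logarithmic differential inequality from $0$ to $\lambda$ to produce exactly the exponent $-(V(x)+1)/2$ and the base $1+\lambda/(\sqrt{K}-a/2)$. Your argument instead runs a finite induction along the forward recursion, extracting at each node the additive gain $g_\lambda(x)^{-1}\ge g_0(x)^{-1}+\lambda(V(x)+1)$, feeding in the elementary bound $g_0(x)^{-1}\le -a$, and closing with the concavity inequality $(1+w)^t\le 1+tw$ for $t\in[0,1]$. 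The paper's approach is insensitive to whether $\Gamma$ denotes the full or the forward diagonal Green function and explains structurally why the factor $\tfrac12$ appears (it is the $\tfrac12$ from $\int d\mu/(2\sqrt{K}+2\mu-a)$); your approach is entirely elementary, avoids calculus, and makes the role of the hypothesis $a\le -2\sqrt{K}$ transparent as the single inequality $2(\sqrt{K}-a/2)\ge -a$. One small point: your parenthetical justification of well-definedness (``$Kg_0<-a$'') is not what actually does the work---positivity of all the $g_0(y)$ is automatic from $a<\inf\sigma(H_0^{(R)})$, and that alone already yields $g_0(x)^{-1}=-a-\sum_y g_0(y)\le -a$, which is the only bound you need.
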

\begin{proof}
The inequalities~\eqref{eq:specgammaest} follow from the spectral representation $  \int (u - \zeta )^{-1}  \mu_{\lambda,\delta_x}^{(R)}(du)=	\Gamma_\lambda^{(R)}(x;\zeta)  $ and elementary inequalities for the integrand.

The second claim is based on the observation that $a -\lambda\leq  \inf \sigma(H_\lambda)  \leq  \inf \sigma(H_\lambda^{(R)})   $ for any $ R >0 $. We may hence differentiate for any $ \lambda \geq 0 $:
\begin{align}
	- \frac{d \Gamma_\lambda^{(R)}(x;a-\lambda)  }{d\lambda} \  \geq \   ( V(x) + 1 ) \; \Gamma_\lambda^{(R)}(x;a-\lambda)^2 \, . 
\end{align}
One of the last factors is estimated by $ \Gamma_\lambda^{(R)}(y,y;a-\lambda)^{-1} \leq \big\langle \delta_y , \, (H_\lambda^{(R)} + \lambda - a ) \, \delta_y \big\rangle  \leq 2 \sqrt{K} + 2 \lambda  -a  $.  Integrating the resulting inequality yields~\eqref{eq:speed}.

\end{proof}

In the following, we suppose $   E_\lambda + \Delta := \inf  \sigma(H_\lambda^{(R)})  > E >  E_\lambda $  such that
\begin{equation}
	 \xi_\lambda(E) \ := \ \frac{E - E_\lambda}{ \inf  \sigma(H_\lambda^{(R)})  -  E}  \ \in (0, \infty) \, .
\end{equation}
Then Lemma~\ref{lem:aux} and the factorization property~\eqref{factorization} of the Green function  imply for all $ x \in \mathcal{S}_R $:
\begin{align}
0\ \leq \ G_\lambda^{(R)}\big(0,x_-;E \big) \ & \leq \  \left(1+ \xi_\lambda(E)\right)^R\, G_\lambda^{(R)}(0,x_-;E_\lambda) \notag \\
					& \leq \ \frac{(1+ \xi_\lambda(E))^R}{K^{R/2}}  \, \left( 1 + \frac{\lambda}{2 \sqrt{K}} \right)^{-\frac{1}{2} \sigma(x)}  \, ,
\end{align}
where $ \sigma(x) :=  \sum_{0\preceq y \prec x} (V(y) + 1 ) \geq 0 $. To further estimate the right side, we will consider the event
\begin{equation}
	Z_0 := \left\{ \min_{x \in \mathcal{S}_R} \sigma(x)  \geq \frac{2\, \delta \log K +2\, \log(1+\xi_\lambda(E) )}{\log( 1 + \frac{\lambda}{2 \sqrt{K}} )} \, R \right\} \, ,
\end{equation}
with $ \delta > 0 $ from~\eqref{eq:pickdelta}.
This event is tailored such that $ G_\lambda^{(R)}\big(0,x_-;E \big)  \leq K^{-R (\delta + \frac{1}{2})} $ and hence
\begin{align}
	\mathbb{E}\left[ \big| S_\lambda^{(R)}(E)  \big|^{\frac{2+\delta}{2+2\delta}} \, \big| \, Z_0 \cap Z_1  \right] \ & \leq \ K^R\,  \E\left[ \left|  G_\lambda^{(R)}\big(0,x_-;E \big) \, G_\lambda(0,x ;E) \right|^{\frac{2+\delta}{2+2\delta}} \, \big| \, Z_0 \cap Z_1  \right] \notag \\
	& \leq \ C_\pm^2  \, K^{- \frac{\delta}{2} R }     \, , 
\end{align}
where the last inequality is based on~\eqref{eq:finitevolume} and the upper bound in~\eqref{eq:exclusion}. The constants $ C_+ , C_- $ depend 
(also through $ \delta $) on $ \lambda $. Chebychev's inequality hence leads to
\begin{equation}
 	\mathbb{P}\left( Z_2^c \,  \big| \, Z_0 \cap Z_1 \right) \ \leq \ C(\lambda) \, K^{- \frac{\delta^2}{4+4\delta} R}  
\end{equation}
with a finite constant $ C(\lambda) $ which only depends on $ \lambda $.  
For an estimate on the probability of the event $ Z_0 $ we use the following
\begin{lemma}\label{lem:probraybound}
For any $ 0 < \alpha \leq (8 \|\varrho\|_\infty K^2)^{-1}$:
\begin{equation}\label{eq:probraybound}
	\mathbb{P}\big(\min_{x \in \mathcal{S}_R} \sigma(x)  < \alpha R \big) 
	\leq \ K^R \, \left( 2 \sqrt{2 \|\varrho\|_\infty  \alpha} \right)^R \, . 
\end{equation}
\end{lemma}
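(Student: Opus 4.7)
The plan is a two-step union bound argument that uses only the fact that, along a fixed path of length $R$ in $\mathcal{T}$, the increments $W_i := V(x_i) + 1$ are i.i.d., non-negative (since $\supp \varrho = [-1,1]$), bounded by~$2$, and have a common density bounded by $\|\varrho\|_\infty$.

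First I will exploit the identical distribution of $\sigma(x)$ over $x \in \mathcal{S}_R$ to reduce to a single vertex:
\[
\mathbb{P}\bigl(\min_{x\in\mathcal{S}_R}\sigma(x) < \alpha R\bigr) \ \leq \ K^R \, \mathbb{P}(\sigma(x_0) < \alpha R)
\]
for any fixed $x_0 \in \mathcal{S}_R$. Notice that independence of different $\sigma(x)$'s is \emph{not} needed here, so the delicate correlations arising from shared ancestor segments never enter the estimate. For the chosen $x_0$, write $\sigma(x_0) = \sum_{i=0}^{R-1} W_i$, a sum of $R$ i.i.d.\ non-negative random variables.

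The core of the argument is then an elementary pigeon-hole observation: if $\sum_i W_i < \alpha R$, then the number of indices with $W_i \geq 2\alpha$ must be strictly less than $R/2$, since otherwise those indices alone would contribute at least~$\alpha R$. Hence $\{\sigma(x_0) < \alpha R\}$ is contained in the event that \emph{some} subset $S \subset \{0,\dots,R-1\}$ of cardinality $\lceil R/2 \rceil$ satisfies $W_i < 2\alpha$ for every $i \in S$. A union bound over the $\binom{R}{\lceil R/2 \rceil} \leq 2^R$ such subsets, combined with the independence of the $W_i$'s along the path and the marginal density bound $\mathbb{P}(W_i < 2\alpha) \leq 2\alpha\|\varrho\|_\infty$, yields
\[
\mathbb{P}(\sigma(x_0) < \alpha R) \ \leq \ 2^R \, (2\alpha\|\varrho\|_\infty)^{R/2} \ = \ \left(2\sqrt{2\alpha\|\varrho\|_\infty}\right)^R,
\]
and multiplying by $K^R$ gives the claimed bound.

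The hypothesis $\alpha \leq (8\|\varrho\|_\infty K^2)^{-1}$ plays no role in the proof itself; it is precisely the condition that makes the right-hand side of \eqref{eq:probraybound} less than one (equivalently, $2K\sqrt{2\alpha\|\varrho\|_\infty} \leq 1$), ensuring the exponential decay in~$R$ required by the subsequent application. I anticipate no serious obstacle: the whole argument is elementary, with the only minor bookkeeping concerning the ceiling $\lceil R/2 \rceil$ and the verification that $2\alpha\|\varrho\|_\infty \leq 1$ in the stated range, both of which are immediate.
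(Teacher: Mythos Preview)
Your proof is correct, but it takes a genuinely different route from the paper's.

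The paper uses the exponential Chebychev (Cram\'er) bound: for any $t>0$,
\[
\mathbb{P}(\sigma(x_0) < \alpha R) \ \leq \ \big(e^{\alpha t}\,\mathbb{E}[e^{-t(V(0)+1)}]\big)^R,
\]
then estimates the moment generating function by splitting on $\{V(0)+1 \ge 2\alpha\}$ versus its complement, obtaining $e^{\alpha t}\mathbb{E}[e^{-t(V(0)+1)}] \le e^{-t\alpha} + 2\alpha\|\varrho\|_\infty e^{t\alpha}$, and finally optimizes over $t$ to reach $2\sqrt{2\alpha\|\varrho\|_\infty}$. Your argument replaces this analytic optimization by a direct combinatorial count: the same threshold $2\alpha$ appears, but instead of bounding a Laplace transform you observe that at least $\lceil R/2\rceil$ of the $W_i$ must fall below $2\alpha$, then union-bound over the $\binom{R}{\lceil R/2\rceil}\le 2^R$ possible index sets. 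Both routes are short and land on exactly the same constant. The Cram\'er approach is the more systematic large-deviations template and would adapt more readily if one wanted sharper constants or a different tail shape; your pigeon-hole argument is arguably more transparent about \emph{why} the bound holds and avoids any calculus. One small remark: you say the hypothesis on $\alpha$ ``plays no role in the proof itself,'' but in fact you do use (a weak consequence of) it, namely $2\alpha\|\varrho\|_\infty \le 1$, to pass from the exponent $\lceil R/2\rceil$ to $R/2$; you note this yourself at the end, so the argument is complete.
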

\begin{proof}
Since there are $ K^R $ vertices with $\dist(0,x)=R$, it suffices for the proof of \eqref{eq:probraybound} to fix $ x $ and estimate 
\begin{equation}
	\mathbb{P}\big( \sigma(x)  <  \alpha R \big) \leq \left( e^{\alpha t} \, \mathbb{E}\left[ e^{-t (V(0) + 1) } \right] \right)^R \, ,
\end{equation}
for any $ t > 0 $, where we employed the help of a Chebychev inequality and the fact that the random variables $ \big( V(y) \big) $ are \emph{iid}.
Inserting indicator functions on the set $ \{ V(0) + 1 \geq 2 \alpha \} $ and its complement, we further bound
$	e^{\alpha t} \, \mathbb{E}\left[ e^{-t (V(0) + 1) } \right]  \leq e^{-t \alpha} + 2 \alpha \,  \|\varrho\|_\infty  \,  e^{t\alpha} $. Choosing $ t = - (2\alpha)^{-1} \log (4 \alpha \|\varrho\|_\infty)  \, > 0 \, $,  yields the result.
\end{proof}

We may now finally give a
\begin{proof}[Proof of Lemma~\ref{lem:propest}]
The choice of $ \delta $ in~\eqref{eq:pickdelta} and the condition~\eqref{eq:condE} together with Lemma~\ref{lem:probraybound} imply that $ \mathbb{P}\left(Z_0^c\right) \leq 2^{-R} $. 
We have thus established that
\begin{align}
	\mathbb{P}\left( Z_2^c \, \big| \,  Z_1  \right) \ & \leq \ \mathbb{P}\left( Z_2^c \, \big| \,  Z_0 \cap Z_1  \right) +  \mathbb{P}\left(Z_0^c\right) \notag \\
	& \leq \ C(\lambda) \, K^{- \frac{\delta^2}{4+4\delta} R}  + 2^{-R}  \, .  
\end{align}
\end{proof}

\appendix  
\noindent{\Large \bf  Appendix}

\section{Fractional-moment bounds}\label{app:L1corr}

The aim of this appendix is to present some basic weak-$L^1 $ bounds on Green functions of random operators, and related  fractional moment estimates.   Theorem~\ref{thm:twoL1}, which presents such bounds for pairs of Green functions,  is a  new result which  is needed here in the proof of our criteria, and which may also be of independent interest.  In the last subsection we discuss the related implications of the regularity Assumption~\ref{assD}.  \\  

The discussion in this appendix is carried  within the somewhat broader context of operators of the form: 
\be\label{eq:opappendix}
H_\lambda(\omega) = H_0 + \lambda\,  V(\omega)  \, , 
\ee
acting in  the Hilbert space $ \ell^2(\mathcal{G} )$, with 
 $\lambda \geq 0 $ the disorder-strength  parameter  and:
\begin{enumerate}[I]
\item\label{as1} $ \mathcal{G} $ the vertex set of some metric graph,
\item\label{as2} $  H_0$ a  self-adjoint operator  in $ \ell^2(\mathcal{G} )$, and 
\item\label{as3} $V(\omega) $ a random potential such that the random variables $ \{ V(x) \, | \, x \in \mathcal{G} \} $ are \emph{iid} with a probability distribution whose density  is (essentially) bounded, $ \varrho \in L^\infty(\R)$.
\end{enumerate}

\subsection{Weak-$L^1 $  bounds}  

We  recall that 
according to the Krein formula, the Green function of $ H_\lambda(\omega) $ restricted to the sites $ x , y  $ is in its dependence on $ V(x) $ and $V(y) $ of the form
\begin{equation}\label{eq:rank2Krein}
	\left(\begin{matrix} G_\lambda(x,x;\zeta) &  G_\lambda(x,y;\zeta) \\
						 G_\lambda(y,x;\zeta) &  G_\lambda(y,y;\zeta) \end{matrix}\right) = \left[\left(\begin{matrix}  \lambda \, V(x) & 0 \\ 0 & \lambda \, V(y) \end{matrix}\right)
						 	+ A_\lambda(\zeta)  \right]^{-1} \, , 
\end{equation}
where $ A_\lambda(\zeta) $ is given by the inverse of the left side for $ V(x) =V(y) = 0 $. In particular, $ G_\lambda(x,x;\zeta) =   (\lambda V(x)  - a )^{-1}  $  with some $ a \in \mathbb{C} $ which is independent of $ V(x) $.

The assumed boundedness of the density $ \varrho $ of the distribution of $ V(x) $ trivially implies bounds on probabilities of weak-$L^1$-type:
\begin{equation}\label{eq:elintwL}
\sup_{a\in\mathbb{C}} \ \int \indfct_{|v-a| < \frac{1}{t}} \varrho(v) \, dv \ \leq \ \frac{2 \|\varrho\|_\infty}{t} \, .
\end{equation} 
Since the dependence of the Green function $ G_\lambda(x,x;\zeta) $ on $ V(x) $ is of the above form, this implies that the following well-known weak-$L^1 $ bound, and hence the boundedness of fractional moments (cf.~\cite{AM}).
\begin{proposition}\label{prop:fracmom}
For a random operator $H_\lambda(\omega) = H_0 + \lambda\,  V(\omega)  $ on $ \ell^2(\mathcal{G}) $ satisfying assumptions~\ref{as1}--\ref{as3}, at any complex energy parameter $ \zeta \in \mathbb{C}^+ $ and for any $ t > 0 $ and $ s \in (0,1) $, the Green function  satisfies:
\begin{align}
	\label{eq:weakL1a}
 & \mathbb{P}\big( \left| G_\lambda(x,x;\zeta) \right| > t \, \big| \; \mathscr{A}_x \big) \ \leq \ \frac{2  \| \varrho \|_\infty}{\lambda\,  t} \, ,   \\
 &	\mathbb{E}\left[|G_\lambda(x,x;\zeta)|^s \, \big| \; \mathscr{A}_x\right] \ \leq \ \frac{2^s \| \varrho\|_\infty^s}{(1-s) \, \lambda^s} \, ,
 \label{eq:fracmonb}
\end{align}
where $ \mathscr{A}_x $  denotes the sigma-algebra generated by $ V(y) $ , $ y \neq x $. 
\end{proposition}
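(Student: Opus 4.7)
My plan is to derive both bounds from a single observation: by the Krein formula~\eqref{eq:rank2Krein} restricted to the diagonal entry at $x$, one has the rank-one representation
\begin{equation}
G_\lambda(x,x;\zeta) \ = \ \bigl( \lambda \, V(x) - a(\zeta;\omega) \bigr)^{-1},
\end{equation}
where $a(\zeta;\omega) \in \mathbb{C}$ is measurable with respect to $\mathscr{A}_x$ (hence independent of $V(x)$). The strategy is then to reduce the two claims to the elementary weak-$L^1$ bound~\eqref{eq:elintwL} for the density $\varrho$, together with a layer-cake decomposition.

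First I would establish \eqref{eq:weakL1a}. The event $\{|G_\lambda(x,x;\zeta)| > t\}$ is equivalent to $\{|\lambda V(x) - a| < 1/t\}$, i.e.\ $\{|V(x) - a/\lambda| < 1/(\lambda t)\}$. Conditioning on $\mathscr{A}_x$, the distribution of $V(x)$ is still $\varrho(v)\,dv$, so
\begin{equation}
\mathbb{P}\bigl( |G_\lambda(x,x;\zeta)| > t \,\big|\, \mathscr{A}_x \bigr) \ = \ \int \indfct_{|v - a/\lambda| < 1/(\lambda t)} \, \varrho(v) \, dv \ \leq \ \frac{2\|\varrho\|_\infty}{\lambda t},
\end{equation}
by the supremum bound~\eqref{eq:elintwL} applied at threshold $\lambda t$.

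For \eqref{eq:fracmonb} I would use the layer-cake formula
\begin{equation}
\mathbb{E}\bigl[|G_\lambda(x,x;\zeta)|^s \,\big|\, \mathscr{A}_x \bigr] \ = \ s \int_0^\infty t^{s-1} \, \mathbb{P}\bigl( |G_\lambda(x,x;\zeta)| > t \,\big|\, \mathscr{A}_x \bigr) \, dt,
\end{equation}
together with the two-sided envelope $\mathbb{P}(\cdot) \leq \min\{1, 2\|\varrho\|_\infty/(\lambda t)\}$, splitting the integral at the crossover point $t_\star := 2\|\varrho\|_\infty/\lambda$. The sub-$t_\star$ part contributes $t_\star^s$, while the super-$t_\star$ part contributes $\frac{2s\|\varrho\|_\infty}{\lambda(1-s)} t_\star^{s-1}$; summing yields $t_\star^s \cdot \frac{1}{1-s} = \frac{2^s \|\varrho\|_\infty^s}{(1-s)\lambda^s}$, which is exactly the stated bound and explains the factor $(1-s)^{-1}$ (the $s\uparrow 1$ blow-up comes from the non-integrability of $t^{s-2}$ at infinity).

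There is no substantial obstacle here; the argument is a textbook rank-one Krein computation. The only care points are (i) checking that $a$ genuinely does not depend on $V(x)$, which is immediate from the structure of~\eqref{eq:rank2Krein} applied to the single site $x$ (one uses the $2\times 2$ version or even just the scalar Krein identity $G(x,x;\zeta)^{-1} = \lambda V(x) - [\text{quantity measurable w.r.t. } \mathscr{A}_x]$), and (ii) noting that~\eqref{eq:elintwL} is a translation-invariant bound, so the argument is insensitive to the value of $a$.
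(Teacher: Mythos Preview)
Your proposal is correct and follows exactly the route the paper indicates: the paper does not spell out a proof but simply notes that the rank-one Krein representation $G_\lambda(x,x;\zeta)=(\lambda V(x)-a)^{-1}$ combined with the elementary bound~\eqref{eq:elintwL} yields the weak-$L^1$ estimate, ``and hence the boundedness of fractional moments.'' Your layer-cake computation with the split at $t_\star=2\|\varrho\|_\infty/\lambda$ correctly recovers the constant $2^s\|\varrho\|_\infty^s/[(1-s)\lambda^s]$ and is the standard way to fill in that step.
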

One trivial, but useful consequence of~\eqref{eq:weakL1a} is that for any $ p \in (0,1 ) $ and $ t \geq \frac{2 \| \varrho \|_\infty }{\lambda (1-p) } $: 
	\begin{equation}\label{eq:pickt}
	 \mathbb{P}\big( \left|G_\lambda(x,x;\zeta) \right| \leq t \, \big| \; \mathscr{A}_x \big) \ \geq \ p \, . 
	\end{equation}

Our  new result, which was vital in our second-moment analysis in Lemma~\ref{lem:upperbound1} and Theorem~\ref{thm:N2test}, concerns the joint conditional probability of events as in~\eqref{eq:weakL1a} associated with two (distinct) sites
\begin{theorem}\label{thm:twoL1}
In the situation of Proposition~\ref{prop:fracmom}, 
consider two sites $ x \neq y $ in a graph.
Then for any $ t > 0 $ and $ \zeta \in \mathbb{C}^+ $:
\begin{multline}\label{eq:twoL1}
	\mathbb{P}\Big(  \left| G_\lambda(x,x;\zeta) \right| > t \; \mbox{\rm and}\; \left| G_\lambda(y,y;\zeta) \right| > t  \; \big| \; \mathscr{A}_{xy} \Big) \\[.5ex] 
	\leq \ 
	\frac{ 2 \| \varrho \|_\infty}{\lambda^2 \, t} \; \min\left\{ 4 \|\varrho\|_\infty  \left(\sqrt{ \big| A_\lambda(x,y;\zeta) \big| \, \big|  A_\lambda(y,x;\zeta) \big|} + t^{-1} \right) \, , \, 1 \right\} \, , 
\end{multline}
where $ A_\lambda(x,y;\zeta) $ are the off-diagonal matrix elements of $ A_\lambda(\zeta) $ in~\eqref{eq:rank2Krein}, and $ \mathscr{A}_{xy} $ is the the sigma-algebra  generated by $ V(\xi) $, $ \xi \not\in \{x,y\} $.
 
\end{theorem}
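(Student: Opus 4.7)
The plan is to start from the rank-$2$ Krein formula~\eqref{eq:rank2Krein}, which exhibits the joint dependence of $G_\lambda(x,x;\zeta)$ and $G_\lambda(y,y;\zeta)$ on $V(x)$ and $V(y)$ explicitly. Writing $A = A_\lambda(\zeta)$ and setting $\tilde u := \lambda V(x) + A_{xx}$, $\tilde w := \lambda V(y) + A_{yy}$, $r := A_{xy} A_{yx}$, and $D := \tilde u \tilde w - r$, inversion of the $2 \times 2$ matrix in~\eqref{eq:rank2Krein} yields $G_\lambda(x,x;\zeta) = \tilde w / D$ and $G_\lambda(y,y;\zeta) = \tilde u / D$, so that the joint event is equivalent to
\[ |\tilde u| > t |D| \quad \text{and} \quad |\tilde w| > t |D|. \]
Multiplying these two inequalities and using $\tilde u \tilde w = D + r$ together with the triangle inequality yields the pivotal scalar bound
\[ |D| \leq \frac{1 + \sqrt{1 + 4 t^2 |r|}}{2 t^2} \leq \frac{1}{t^2} + \frac{\sqrt{|r|}}{t}, \]
which is the quantitative origin of the $\sqrt{|A_\lambda(x,y;\zeta) A_\lambda(y,x;\zeta)|}$ factor appearing in the claim.

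I would then estimate the conditional probability $\mathbb{P}(E \mid \mathscr{A}_{xy})$ by the product of the joint density bound $\|\varrho\|_\infty^2 / \lambda^2$ (the factor $1/\lambda^2$ arising when passing from the law of $(V(x), V(y))$ to the affine parameters $\lambda V(x), \lambda V(y)$ along which $\tilde u, \tilde w$ vary) and the two-dimensional Lebesgue measure of the admissible set. The ``$\min$ with $1$'' branch corresponds to ignoring the refined constraint $|D| \lesssim \sqrt{|r|}/t$ and applying only the single-site weak-$L^1$ bound~\eqref{eq:weakL1a}: conditioning further on $V(y)$, the function $G_\lambda(x,x;\zeta)$ takes the form $(\lambda V(x) - \alpha)^{-1}$ with $\alpha$ independent of $V(x)$, so its probability of exceeding $t$ in modulus is at most $2 \|\varrho\|_\infty / (\lambda t)$; integrating over $V(y)$ and truncating at the universal ceiling $1$ produces the term $\tfrac{2 \|\varrho\|_\infty}{\lambda^2 t} \cdot 1$.

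For the sharper branch I would slice the admissible set by $V(y)$ and exploit both inequalities simultaneously. For fixed $\tilde w \neq 0$, the condition $|\tilde w| > t |D|$ rearranges to $|\tilde u - r / \tilde w| < 1/t$, placing $\tilde u$ in a disk of radius $1/t$ whose intersection with the horizontal line $\Im \tilde u = \Im A_{xx}$ is an interval of length at most $2/t$ in $\tilde u$, hence at most $2/(\lambda t)$ in $V(x)$. For this interval to be nonempty one needs the disk actually to meet the line, i.e.\ $|\Im(r / \tilde w) - \Im A_{xx}| < 1/t$; combined with the pivotal bound on $|D|$ and the symmetric analysis with the roles of $x$ and $y$ exchanged, this restricts $V(y)$ to a set of Lebesgue measure bounded by a constant multiple of $(\sqrt{|r|} + 1/t)/\lambda$ by a direct geometric computation. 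Multiplying the $V(x)$-slice length $2/(\lambda t)$ by this $V(y)$-measure and by the density factor $\|\varrho\|_\infty^2$ then produces the term $\tfrac{2 \|\varrho\|_\infty}{\lambda^2 t} \cdot 4 \|\varrho\|_\infty (\sqrt{|r|} + 1/t)$.

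The principal obstacle will lie in rendering the geometric bound on the admissible $V(y)$-set quantitatively precise. The admissible region in the $(V(x), V(y))$-plane is nonconvex, cut out by a product-type inequality, and the degenerate regimes ($\tilde u$ or $\tilde w$ close to $0$, or $r \to 0$) require separate care. A symmetrization in $x \leftrightarrow y$, combined with an AM-GM-type comparison between the two one-dimensional slice lengths, appears to be the natural route to producing the geometric mean $\sqrt{|A_\lambda(x,y;\zeta)| \cdot |A_\lambda(y,x;\zeta)|}$ as the dominant contribution to the final bound.
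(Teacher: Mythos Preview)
Your setup and your pivotal bound on $|D|$ are correct and essentially coincide with the paper's approach: the paper derives the equivalent inequality $w:=\sqrt{|\tilde u|\,|\tilde w|}\le t^{-1}+\sqrt{|r|}$ from the same quadratic. The difference, and the genuine gap in your sketch, is in the geometric step that follows.

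Your proposed constraint on $V(y)$, namely $|\Im(r/\tilde w)-\Im A_{xx}|<1/t$, does not by itself bound the Lebesgue measure of admissible $\Re\tilde w$ by anything like $\sqrt{|r|}+1/t$: for large $|\tilde w|$ the quantity $|\Im(r/\tilde w)|$ is small, so the constraint is satisfied on an unbounded set. The ``direct geometric computation'' you invoke would have to bring in something else, and you do not say what. The clean route, which your own bound on $|D|$ already yields, is the elementary consequence
\[
\min\{|\tilde u|,|\tilde w|\}^2 \ \le\ |\tilde u\,\tilde w| \ = \ |D+r|\ \le\ |D|+|r|\ \le\ t^{-2}+\sqrt{|r|}\,t^{-1}+|r|\ =\ \bigl(t^{-1}+\sqrt{|r|}\,\bigr)^2,
\]
so that $\min\{|\tilde u|,|\tilde w|\}\le t^{-1}+\sqrt{|r|}$. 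This is exactly what the paper uses: the admissible set in the $(\Re\tilde u,\Re\tilde w)$-plane is then contained in the union of two strips, one with $|\tilde u|\le t^{-1}+\sqrt{|r|}$ and one with $|\tilde w|\le t^{-1}+\sqrt{|r|}$. In the first strip $\Re\tilde u$ ranges over an interval of length at most $2(t^{-1}+\sqrt{|r|})$, and for each fixed $\tilde u$ the condition $|\tilde w-r/\tilde u|<1/t$ confines $\Re\tilde w$ to an interval of length at most $2/t$; the second strip is handled symmetrically. Summing and multiplying by the density bound $\|\varrho\|_\infty^2/\lambda^2$ gives the claimed estimate. No imaginary-part analysis or AM--GM symmetrization is needed.
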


In case of a tree graph, $\mathcal{G} = \mathcal{T} $, the off-diagonal matrix elements of $ A_\lambda(\zeta) $ simplify:
\begin{equation}\label{eq:ratioL1}
A_\lambda(x,y;\zeta) = \frac{G_\lambda(x,y;\zeta)}{G_\lambda(x,x;\zeta) \, G_\lambda(y,y;\zeta) - G_\lambda(x,y;\zeta) G_\lambda(y,x;\zeta)}  =  G_\lambda^{\mathcal{T}_{x,y}}(x_{-},y_{-};\zeta)  \, . 
\end{equation}
This is most easily proven by noting that the ratio does not depend on $ V(x) $ and $V(y) $ so that we may take them to infinity. In this
limit the ratio $$G_\lambda(x,y;\zeta) / [G_\lambda(x,x;\zeta) \, G_\lambda(y,y;\zeta) ] $$ tends to $G_\lambda^{\mathcal{T}_{x,y}}(x_{-},y_{-};\zeta)  $ and its numerator vanishes.

\begin{proof}[Proof of Theorem~\ref{thm:twoL1}]
Let $ A_\lambda(x,y;\zeta) $ denote the matrix elements of $ A_\lambda(\zeta) $ in the rank-two Krein formula~\eqref{eq:rank2Krein} and
abbreviate
 \begin{align} 
u \ & := \ \lambda V(x) + A_\lambda(x,x;\zeta)  \ \notag     \\ 
v \ & := \ \lambda V(y) +  A_\lambda(y,y;\zeta)   \  \notag    \, ,
\end{align}   
and $\alpha := A_\lambda(x,y;\zeta) $, $ \beta := A_\lambda(y,x;\zeta) $.
The lower bounds on $|G_\lambda(x,x;\zeta)$ and $|G_\lambda(y,y;\zeta)|$  translate to: 
 \begin{align}
\left|u - \frac{\alpha \beta}{v}\right| \ & \le \ \frac{1}{t} \label{u}   \\ 
\left|v - \frac{\alpha \beta}{u}\right| \ & \le \ \frac{1}{t}   \, .    \label{v}  
\end{align}
The claim will be proven on the basis of the following two observations:
\begin{enumerate} 
\item For any set of specified values of $\{\alpha,\beta, A(x,x;\zeta), A(y,y,;\zeta)\}$,  and of $v$, 
the set of $\Re u$ for which \eqref{u} holds is an interval of length 
at most $2 / t$, 
and a similar statement holds for  $v$ and $u$ interchanged and Eq.~\eqref{u} replaced by  \eqref{v}.   
\item  For any solution of \eqref{u} and \eqref{v}:
\be  \label{w-bound}
\min\{ |u|,|v|\} \ \le \ |\alpha| + t^{-1} \, . 
\ee 
\end{enumerate} 
The first statement is fairly obvious once one focuses on the condition 
 on the real part in \eqref{u}.  To prove the second assertion, let 
 \be 
w \ := \sqrt{|u|\cdot|v|}  \  \ge \min\{|u|,|v|\}
\ee 
Assuming \eqref{u} and \eqref{v} we have:  
\be 
|u| \, |v| \ - \ |\alpha|\, |\beta| \ \le |u\, v-\alpha \beta| \ \le \frac{\min\{ |u|,|v|\} }{t}  
\ \le \frac{\sqrt{|u| \, |v|} }{t}
\ee 
where the first relation is by the triangle inequality, and the second by \eqref{u} and \eqref{v}.  
Hence, under the assumed condition, the real quantity $w := |u| \, |v| $ satisfies:
\be
w^2 - |\alpha|\, |\beta|  \ \le \   \frac{w}{t}  \, . 
\ee 
Solving the quadratic equation we find: 
\be 
w \le \frac{1}{2t} + \sqrt{ \frac{1}{(2t)^2}   +|\alpha|\, |\beta|   }  \ \le \   
\frac{1}{2t} +  \left(\frac{1}{2t}  + \sqrt{|\alpha|\, |\beta|}  \right) \, ,
\ee 
which implies \eqref{w-bound}.

To bound the probability in \eqref{eq:twoL1}, let us consider the set 
of values of $V(x)$ and $V(y)$ for which the event occurs, at specified values of 
the $2\times 2$ matrix $A_\lambda(\zeta)$.  
Let $S\subset \R^2$ be the corresponding range of values of  $\{\Re u, \Re v\}$.  
Then by {\it 2.}, $S$ is contained within the union of two strips,  
one parallel to the $\Re v$ axis and the other parallel to the $\Re u$ axis.  To bound the 
measure of its intersection with the first one,  we note that the relevant values 
of $\Re u$ are contained in an interval of length at most $ 2\left(\frac{1}{t}  + \sqrt{|\alpha|\, |\beta|}\right)$, and 
for each value of $u$ the range of values of $\Re v$ is of Lebesgue measure 
not exceeding $2/t$ (by {\it 2.}).   Hence the measure of the intersection 
of $S$ with this strip is at most $  \frac{4}{t} \left(\frac{1}{t}  +\sqrt{|\alpha|\, |\beta|}\right)$, 
and a similar bound applies to the intersection of $S$ with the second one.  
Adding the two, one gets the bound claimed in~\eqref{eq:twoL1}.

\end{proof}

\subsection{The regularity assumption~\ref{assD}}

The class of probability densities satisfying Assumption~\ref{assD} (see Eq.~\eqref{eq:regrho}) includes those $ \varrho $ which have a single hump. More precisely, suppose there is some $ m \in \R $ such that 
$\varrho$ is monotone increasing for $ v< m $ and monotone decreasing for $ v > m $.
  If one picks $ \nu_0 > 0 $ such that $ \varrho(m) / \min\{ \varrho(m-\nu_0)\, , \, \varrho(m+\nu_0) \}=: c_0< \infty $, then~\eqref{eq:regrho} is satisfied for all $ v \in \R$ and $ c = 2 \max\{1,c_0/\nu_0\} $
  Examples of single-hump probability densities are Gaussian and the Cauchy densities. 
Similarly as above one sees that any finite linear combination of single-hump functions also lead to probability densities which satisfy~\eqref{eq:regrho}. \\

Our next goal is to  illuminate some of the consequences of~\eqref{eq:regrho}. 
Clearly, if $ \varrho $ satisfies~\eqref{eq:regrho}, then $ \varrho \in L^\infty(\R)$ and~\eqref{eq:elintwL} applies.
In fact, the assumption is tailored to provide the following extension of~\eqref{eq:elintwL}.
\begin{lemma}
	If $ \varrho \geq 0 $ satisfies~\eqref{eq:regrho} (with constant $ c > 0 $), then for any $ s \in (0,1) $, $ a \in \mathbb{C} $ and $ t  \geq 1 $:
	\begin{equation}
		\int \indfct_{|v-a| < \frac{1}{t}} \ \frac{\varrho(v) \, dv}{|v-a|^s}  \ \leq \ \frac{c}{(1-s) \, t^{1-s} }\,   \int\frac{\varrho(v) \, dv}{|v-a|^s} \, . 
	\end{equation}
\end{lemma}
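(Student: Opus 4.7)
The plan is to use the regularity assumption $\varrho(v) \leq c\, M(v)$ pointwise, interchange the order of integration, and reduce to a one-dimensional integral that directly produces the factor $1/((1-s)\, t^{1-s})$.

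Concretely, applying Assumption~\ref{assD} with the specific choice $\nu = 1 \in (0,1]$ in the definition of the minimal function yields the pointwise bound
\begin{equation*}
\varrho(v) \ \leq \ c\, M(v) \ \leq \ \tfrac{c}{2} \int \indfct_{|x-v|\leq 1}\, \varrho(x)\, dx \, .
\end{equation*}
Substituting this into the left-hand side and applying Fubini's theorem leads to
\begin{equation*}
\int \indfct_{|v-a|<1/t}\, \frac{\varrho(v)}{|v-a|^s}\, dv \ \leq \ \tfrac{c}{2} \int \varrho(x) \left[ \int \indfct_{|v-a|<1/t}\, \indfct_{|v-x|\leq 1}\, \frac{dv}{|v-a|^s} \right] dx \, .
\end{equation*}

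The inner bracketed integral can then be estimated by a direct one-dimensional computation. For real $a = \alpha$, one uses the elementary identity $\int_{-1/t}^{1/t} |y|^{-s}\, dy = 2/((1-s)\, t^{1-s})$; for complex $a = \alpha + i\beta$, the region $\{v \in \mathbb{R} : |v-a|<1/t\}$ is empty unless $|\beta|<1/t$, and since $|v-a|^{-s} \leq |v-\alpha|^{-s}$ on that region, the same bound follows by reducing to the real case. The constraint $|v-x| \leq 1$ further forces $|x-a| \leq 1 + 1/t \leq 2$, so the outer integration is over a bounded neighborhood of $a$; on this set, $|x-a|^{-s}$ admits a positive lower bound, which allows $\int_{|x-a|\leq 1+1/t} \varrho(x)\, dx$ to be compared with $\int \varrho(x)/|x-a|^s\, dx$.

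The main technical obstacle is matching the precise constant $c/((1-s)\, t^{1-s})$ in the claim. The crude choice $\nu = 1$ above produces a bound of the form $\frac{c \cdot 2^s}{(1-s)\, t^{1-s}} \int \varrho/|v-a|^s\, dv$, carrying an unwanted factor of $2^s$. To eliminate this, one refines the choice to $\nu = 1 - 1/t$ (which lies in $(0,1]$ for $t > 1$); then $|v-x| \leq \nu$ combined with $|v-a|<1/t$ forces $|x-a|\leq 1$, where $|x-a|^{-s} \geq 1$, so $\int_{|x-a|\leq 1}\varrho(x)\, dx \leq \int \varrho(x)/|x-a|^s\, dx$ with no extra factor and only the harmless factor $(1-1/t)^{-1}$ from the normalisation $(2\nu)^{-1}$. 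The edge case $t = 1$ is immediate since $\indfct_{|v-a|<1} \leq 1$ gives the trivial bound $\int_{|v-a|<1} \varrho/|v-a|^s\, dv \leq \int \varrho/|v-a|^s\, dv \leq \frac{c}{1-s} \int \varrho/|v-a|^s\, dv$, using that $c \geq 1$ (automatic because $M(v) \leq \varrho(v)$ at Lebesgue points).
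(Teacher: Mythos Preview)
Your overall strategy---replace $\varrho(v)$ by $c$ times a local average via Assumption~\ref{assD}, Fubini, and reduce to the explicit integral $\int_{|v-a|<1/t}|v-a|^{-s}\,dv=\tfrac{2}{(1-s)t^{1-s}}$---is the same mechanism the paper uses (the paper phrases it as pulling out $\sup\varrho$ rather than Fubini, but the content is identical). With the choice $\nu=1$ your argument is correct and yields
\[
\int_{|v-a|<1/t}\frac{\varrho(v)}{|v-a|^s}\,dv \ \le\ \frac{c\,(1+1/t)^s}{(1-s)\,t^{1-s}}\int\frac{\varrho(v)}{|v-a|^s}\,dv,
\]
which differs from the stated bound only by the factor $(1+1/t)^s\le 2^s<2$. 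That is entirely sufficient for every use of the lemma in the paper (only the shape $C/t^{1-s}$ matters; see \eqref{eq:wL1weighted}--\eqref{eq:pickt2}).

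The gap is in your ``refinement'' to remove the $2^s$. Choosing $\nu=1-1/t$ produces the normalisation factor $(2\nu)^{-1}=\tfrac{1}{2}(1-1/t)^{-1}=\tfrac{t}{2(t-1)}$, so your final bound carries the factor $\tfrac{t}{t-1}$ in place of $(1+1/t)^s$. This is \emph{not} harmless: it blows up as $t\downarrow 1$, so for $t$ just above $1$ the refined bound is far worse than the one you started with, and it never gives the exact constant in the lemma. (Your separate treatment of $t=1$ is fine but does not rescue the range $t\in(1,2)$.)

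For what it is worth, the paper's own proof has the same slack hidden in it: the asserted intermediate step $\int_{|v-a|\le 1}\varrho\,dv\ge\tfrac{2}{c}\sup_{|v-a|\le 1}\varrho$ can fail (e.g.\ $\varrho=\indfct_{[0,1]}$, $a=-\tfrac12$, $c=2$ gives $\tfrac12\ge 1$). So the sharp constant $c$ in the lemma statement appears to be a minor overreach; what both arguments honestly deliver, and what the paper actually needs, is the bound with $c$ replaced by $(1+1/t)^s\,c\le 2^s c$.
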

\begin{proof}
We start by estimating the left side
\begin{align}
	& \int \indfct_{| v - a| < \frac{1}{t}}  \frac{\varrho(v) \, dv }{|v - a |^s} \  \leq  \sup_{|v - a| \leq \frac{1}{t}} \varrho(v) \; \int\indfct_{| v - a| < \frac{1}{t}}  \frac{ dv }{| v - a |^s} \ =  \ \frac{2}{(1-s)  \, t^{1-s}}\; \sup_{| v - a| < \frac{1}{t}} \varrho(v)  \, . 
\end{align}
Using~\eqref{eq:regrho} we then conclude that the last factor in the right side is bounded from below by
\begin{align}
	\int\frac{\varrho(v) \, dv}{|v-a|^s} \ \geq \ \int \indfct_{| v - a| \leq  1 } \varrho(v) \, dv \ \geq \ \frac{ 2 }{ c } \, \sup_{| v - a| \leq 1 } \varrho(v) \, . 
	\end{align}
The above two estimates imply the assertion. 
\end{proof}

In view of~\eqref{eq:rank2Krein} this lemma bears the following consequences for weighted averages of the following type:
\begin{equation}
	\mathbb{E}_{s}^{(x,y)}\left[ Q  \right] \ := \  \frac{\mathbb{E}\left[ |G_\lambda(x,y;\zeta)|^s \, Q \right] }{\mathbb{E}\left[ |G_\lambda(x,y;\zeta)|^s \right]} \, , 
\end{equation}
where $ x, y \in \mathcal{G} $, $ \zeta \in \mathbb{C}^+ $ and $ s \in (0,1) $. We denote by $\mathbb{P}_{s}^{(x,y)} $ the corresponding probability measure. 

\begin{proposition}
In the situation of Proposition~\ref{prop:fracmom}, assume additionally that $ \varrho $ satisfies~\eqref{eq:regrho} (with constant $  c > 0 $).
Then, at any complex energy parameter $ \zeta \in \mathbb{C}^+ $ and for any $ s \in (0,1) $ and $ t \geq \lambda^{-1} $, the Green function  satisfies:
\begin{equation}\label{eq:wL1weighted}
	\mathbb{P}_{s}^{(x,y)}\left( |G_\lambda(x,x;\zeta)| > t \, | \, \mathscr{A}_x \right) \ \leq \ \frac{ c}{(1-s) \, (\lambda t)^{1-s} } \, ,
\end{equation}
where $ \mathscr{A}_x $  denotes the sigma-algebra generated by $ V(y) $ , $ y \neq x $. 
\end{proposition}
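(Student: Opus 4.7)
The plan is to reduce the weighted conditional probability to the one-dimensional estimate of the preceding Lemma by exploiting the explicit rank-one dependence of $G_\lambda(x,x;\zeta)$ and $G_\lambda(x,y;\zeta)$ on the single random variable $V(x)$. First, I would apply the Woodbury (rank-one Krein) identity to $H_\lambda = H_\lambda^{(V(x)=0)} + \lambda V(x)\,|\delta_x\rangle\langle\delta_x|$ to obtain $\mathscr{A}_x$-measurable quantities $\alpha := G_\lambda^{(V(x)=0)}(x,x;\zeta)^{-1}$ and $\gamma := G_\lambda^{(V(x)=0)}(x,y;\zeta)$ such that, conditionally on $\mathscr{A}_x$,
\begin{equation*}
G_\lambda(x,x;\zeta) = \frac{1}{\lambda V(x) + \alpha}, \qquad G_\lambda(x,y;\zeta) = \frac{\gamma\,\alpha}{\lambda V(x) + \alpha}.
\end{equation*}

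Substituting these expressions into the defining ratio of $\mathbb{P}_s^{(x,y)}$ and observing that the $\mathscr{A}_x$-measurable factor $|\gamma\alpha|^s$ appears identically in the numerator and denominator, one reduces the conditional probability to the purely one-variable ratio
\begin{equation*}
\mathbb{P}_s^{(x,y)}\bigl(|G_\lambda(x,x;\zeta)| > t \,\big|\, \mathscr{A}_x\bigr) \;=\; \frac{\int \indfct\{|\lambda v + \alpha| < 1/t\}\,|\lambda v + \alpha|^{-s}\,\varrho(v)\,dv}{\int |\lambda v + \alpha|^{-s}\,\varrho(v)\,dv}.
\end{equation*}

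Next, I would rescale by writing $|\lambda v + \alpha| = \lambda |v - a|$ with $a := -\alpha/\lambda$, so that the prefactors $\lambda^{-s}$ cancel between numerator and denominator while the cutoff in the indicator transforms to $|v-a| < 1/(\lambda t)$. Setting $t' := \lambda t$, the hypothesis $t \geq \lambda^{-1}$ is equivalent to $t' \geq 1$, which is precisely the regime in which the preceding Lemma applies. Invoking that Lemma with parameters $s$, $a$, and $t'$ gives the bound
\begin{equation*}
\frac{\int \indfct\{|v-a| < 1/t'\}\,|v-a|^{-s}\,\varrho(v)\,dv}{\int |v-a|^{-s}\,\varrho(v)\,dv} \;\leq\; \frac{c}{(1-s)\,(t')^{1-s}} \;=\; \frac{c}{(1-s)\,(\lambda t)^{1-s}},
\end{equation*}
which is the claim.

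The only nontrivial point is the bookkeeping in the rescaling step: the $\lambda$-dilation of the argument of $\varrho$ is absorbed by $a$, but the $\lambda$-scaling of the window $|\lambda v + \alpha| < 1/t$ produces the $(\lambda t)^{1-s}$ dependence (rather than $t^{1-s}$) and explains why the threshold condition is $t \geq \lambda^{-1}$ rather than $t \geq 1$. No regularity of the shifted/scaled density beyond Assumption~\ref{assD} is required, since the Lemma is applied to the \emph{original} density $\varrho$ after the substitution, whose constant $c$ is unchanged.
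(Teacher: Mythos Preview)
Your argument is correct and follows precisely the route the paper indicates (but does not spell out): the rank-one dependence of both $G_\lambda(x,x;\zeta)$ and $G_\lambda(x,y;\zeta)$ on $V(x)$ reduces the conditional weighted probability to the one-variable ratio handled by the preceding Lemma, with the substitution $t'=\lambda t$ accounting for the threshold $t\ge\lambda^{-1}$ and the exponent $(\lambda t)^{1-s}$. The paper merely cites the Krein formula~\eqref{eq:rank2Krein} as the source of this structure; your rank-one version is equivalent once one conditions on $\mathscr{A}_x$ (so that $V(y)$ is frozen).
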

Analogously to ~\eqref{eq:pickt}, we conclude from~\eqref{eq:wL1weighted} that for any $ p \in (0,1 ) $ and  all $ t \geq  \lambda^{-1} (c/[(1-s)(1-p)]^{1/(1-s)} $:
	\begin{equation}\label{eq:pickt2}
	\mathbb{P}_{s}^{(x,y)}\big( \left| G_\lambda(x,x;\zeta) \right| \leq t \, \big| \; \mathscr{A}_x \big) \ \geq \ p \, , 
	\end{equation}	
	uniformly in $ y \in \G $, the choice of the graph $ \G $ and $ \zeta \in \C^+ $.

\section{A large deviation principle for triangular arrays}  \label{App:Ldevs}

In our analysis of the Green function's large deviations we make use of a large deviation principle.  The statement and its proof are similar to large deviation theorems which are familiar in statistical mechanics and probability theory~\cite{DemZeit,DS,Elis}. However since a close enough reference could not be located we enclose the proof here.

\subsection{A general large deviation theorem}

The following theorem should be regarded as  a stand-alone statement.  It is intended to be read disregarding fact that the symbols which appear there  ($\Gamma$ and $\eta$ ) were assigned a specific meaning elsewhere in the paper.    The similarity does however indicate the application of this theory to the main discussion of this  work.

\begin{theorem} \label{thm:ld} 
Let $\{ \Gamma_j^{(N)}(\eta)\}_{j=1}^{N} $ with $N\in \N $,  be a family of a triangular arrays of random variables indexed by $\eta \ge 0$, satisfying the following two conditions, at some $ r_1 < r_2 $ and $C<\infty$:  
\begin{enumerate} 
\item[a.]  The functions 
\be \label{def:psiN}
\vP_N(t;\eta)\  :=\  \frac{1}{N} \log \Ev{ \prod_{j=1}^{N} |\Gamma_j^{(N)}(\eta)|^t} 
\ee 
converge pointwise in $[r_1,r_2] \subset (-1,1)$: 
\be  \label{Psi_conv}
\vP(t)\  :=\   \lim_{\substack{N\to \infty \\[.5ex] \eta \downarrow 0 }}  \vP_N(t;\eta)   \, . 
\ee 
\item[b.] 
For all $1 \le  k < N$,   and $ t_1, t_2  \in [r_1,r_2]$  
\begin{multline}   \label{eq:submult}
\Ev{\prod_{i=1}^k |\Gamma^{(N)}_i(\eta)|^{t_1} \,  \prod_{j=k+1}^N |\Gamma^{(N)}_j(\eta)|^{t_2} }    \\  \le \    C \, e^{(N-k) [\vP_N(t_1,\eta) -  \vP_N(t_2,\eta)]} \  \Ev{\prod_{i=1}^N |\Gamma^{(N)}_i(\eta)|^{t_2} }   \, .  
\end{multline} 
\end{enumerate} 

Then for every $\gamma$ which coincides with $-\varPsi' (s)$ at a point $ s \equiv s(\gamma)\in (r_1,r_2)$ where the function $\varPsi(s)$ is differentiable,  and for any $\varepsilon >0$, 
there are $\widehat N\equiv \widehat N(\varepsilon,\gamma) < \infty$  and $\hat {\eta} \equiv \hat \eta(\varepsilon,\gamma) >0$ such that for all $N\ge \widehat{N}$ and $0< \eta < \hat {\eta}$ the following estimates hold:
\begin{enumerate} 
\item Given the rate function $ I(\gamma) \ := \ - \inf_{ t\in[r_1,r_2]} \left[ \varPsi(t) + t \gamma  \right]$ one has:
\ \begin{align}    \label{eq:ld_upper}
 \Pr{ \prod_{j=1}^N |\Gamma^{(N)}_j(\eta)| \ge e^{-(\gamma + \varepsilon) N} }   \  \le  \ e^{-I(\gamma) N} \, e^{2 \varepsilon  N}  
  \end{align} 
\item With respect to the $s$-tilted probability average defined by
\be \label{eq:defPs}
 \Prs{ Q} \  = 
\ \frac{\Ev{ I_Q\times \prod_{j=1}^N |\Gamma^{(N)}_j(\eta)|^s } }{\Ev{\prod_{j=1}^N |\Gamma^{(N)}_j(\eta)|^s} }   \, , 
\ee 
for  any $\ell \in \{ 0, \dots , N\}$:
\begin{align}  \label{eq:ld_glb} 
 \mathbb{P}_{s}\left(  \prod_{j=1}^\ell |\Gamma^{(N)}_{j}(\eta)|     \ge \   e^ { -(\gamma- \varepsilon) \ell}   \right)  \  &\le \ C  \, e^{-\kappa(\varepsilon,\gamma) \ell/3}  
  \\
\mathbb{P}_{s}\left(  \prod_{j=1}^\ell |\Gamma^{(N)}_{j}(\eta)|     \le \   e^ { -(\gamma+ \varepsilon) \ell}  \right)  \  & \le  \ C  \, e^{-\kappa(\varepsilon,\gamma) \ell/3} 
\label{eq:ld_glb2} 
\end{align} 
where  $ \kappa(\varepsilon,\gamma)  \  := \ \min\left\{ \kappa_-(\varepsilon,\gamma) \, , \,  \kappa_+(\varepsilon,\gamma) \right\} >0 $ and 
\begin{align}\label{eq:defkappa}
	   \kappa_\pm(\varepsilon,\gamma) \ & := \ \sup_{\substack{{\rm sgn} \, \Delta = \pm \\ r_1 < s+\Delta < r_2 }}  \left[ \varPsi(s) +  ( \varPsi'(s) \pm\varepsilon)\,  |\Delta|  - \varPsi(s+\Delta)\right]  \, .
\end{align}   
\item  
For any event $Q $:
\begin{align}    \label{eq:ld_lower}
\Pr{ Q} \ \ge \   \ e^{-I(\gamma) N} \, e^{-2 \varepsilon N} \, 
\left[   {\mathbb P}_{s}\left(  Q\right) - C \, e^{-\kappa(\varepsilon,\gamma) N/3}  
  \right] 
 \end{align} 
\end{enumerate} 
\end{theorem}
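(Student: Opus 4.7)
The plan is to prove Theorem~\ref{thm:ld} via the Cram\'er exponential-tilting method, with condition~(a) providing the limiting cumulant generating function and condition~(b) substituting for the independence that would otherwise enter the classical argument. Throughout, let $s \equiv s(\gamma)\in (r_1,r_2)$ denote the tilt parameter characterised by $\gamma=-\varPsi'(s)$, so that Legendre duality gives $I(\gamma)=-[\varPsi(s)+s\gamma]$. Using the pointwise convergence~\eqref{Psi_conv} together with the convexity of $\varPsi_N(\cdot;\eta)$ (which promotes pointwise to locally uniform convergence on the interior of $[r_1,r_2]$), I fix $\widehat N$ and $\hat\eta$ so that $|\varPsi_N(t;\eta)-\varPsi(t)|<\varepsilon/2$ holds on a neighbourhood of $s$ wide enough to cover all the perturbations that will be introduced.

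For part~(1), I apply the Chernoff inequality at exponent $s$:
\begin{align*}
\mathbb{P}\Bigl(\prod_{j=1}^N|\Gamma_j^{(N)}(\eta)|\ge e^{-(\gamma+\varepsilon)N}\Bigr)\,\le\, e^{s(\gamma+\varepsilon)N}\,\mathbb{E}\Bigl[\prod_{j=1}^N|\Gamma_j^{(N)}(\eta)|^s\Bigr]\,=\,e^{N[\varPsi_N(s;\eta)+s(\gamma+\varepsilon)]}.
\end{align*}
The right-hand exponent is then bounded by $N[-I(\gamma)+s\varepsilon+\varepsilon/2]\le N[-I(\gamma)+2\varepsilon]$, using $|s|<1$, yielding~\eqref{eq:ld_upper}.

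Part~(2) is a Chernoff argument \emph{inside} the tilted probability $\mathbb{P}_s$. For~\eqref{eq:ld_glb}, fix a small $u>0$ and write
\begin{align*}
\mathbb{P}_s\Bigl(\prod_{j=1}^\ell|\Gamma_j^{(N)}|\ge e^{-(\gamma-\varepsilon)\ell}\Bigr)\,\le\, e^{u(\gamma-\varepsilon)\ell}\,\mathbb{E}_s\Bigl[\prod_{j=1}^\ell|\Gamma_j^{(N)}|^u\Bigr].
\end{align*}
Unwinding the definition of $\mathbb{P}_s$ expresses the tilted moment as a ratio
\begin{align*}
\mathbb{E}_s\Bigl[\prod_{j=1}^\ell|\Gamma_j^{(N)}|^u\Bigr]\,=\,\frac{\mathbb{E}\bigl[\prod_{j=1}^\ell|\Gamma_j^{(N)}|^{s+u}\prod_{j=\ell+1}^N|\Gamma_j^{(N)}|^s\bigr]}{\mathbb{E}\bigl[\prod_{j=1}^N|\Gamma_j^{(N)}|^s\bigr]},
\end{align*}
which condition~(b), applied at the cut~$\ell$ with $t_1=s+u$ and $t_2=s$, bounds by $C\,e^{\ell[\varPsi_N(s+u;\eta)-\varPsi_N(s;\eta)]}$. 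Since the convex function $u\mapsto u(\gamma-\varepsilon)+\varPsi(s+u)-\varPsi(s)$ has derivative $-\varepsilon<0$ at $u=0$, its infimum over small $u>0$ is $-\kappa_+(\varepsilon,\gamma)<0$; absorbing the $\varepsilon/2$ approximation error in the passage $\varPsi_N\to\varPsi$ costs a factor of at most three, giving the claimed rate $\kappa_+(\varepsilon,\gamma)\ell/3$. The mirror bound~\eqref{eq:ld_glb2} is obtained by repeating the argument with $u<0$ and $\kappa_-$ in place of $\kappa_+$.

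Part~(3) is the change-of-measure step. The Radon--Nikodym identity
\begin{align*}
\mathbb{P}(Q)\,=\,\mathbb{E}\Bigl[\prod_j|\Gamma_j^{(N)}|^s\Bigr]\cdot\mathbb{E}_s\!\left[\frac{I_Q}{\prod_j|\Gamma_j^{(N)}|^s}\right]
\end{align*}
is combined with the ``typical-value'' event $A:=\{\prod_j|\Gamma_j^{(N)}|\le e^{-(\gamma-\varepsilon)N}\}$. On $A$ one has $\prod_j|\Gamma_j^{(N)}|^s\le e^{-s(\gamma-\varepsilon)N}$, so the integrand is bounded below by $e^{s(\gamma-\varepsilon)N}\,I_{Q\cap A}$; together with $\mathbb{E}[\prod_j|\Gamma_j^{(N)}|^s]\ge e^{N[\varPsi(s)-\varepsilon/2]}$ this yields $\mathbb{P}(Q)\ge e^{-NI(\gamma)-2\varepsilon N}\,\mathbb{P}_s(Q\cap A)$. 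Finally $\mathbb{P}_s(Q\cap A)\ge\mathbb{P}_s(Q)-\mathbb{P}_s(A^c)$, and $\mathbb{P}_s(A^c)$ is controlled by~\eqref{eq:ld_glb} applied with $\ell=N$, delivering~\eqref{eq:ld_lower}.

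The main obstacle lies in part~(2): condition~(b) must be applied so that the coefficient of the free-energy increment is $\ell$ (not $N$), so that the resulting rate is \emph{uniform} in the cut-off $\ell\le N$. This uniformity is exactly what allows the tilted distribution of the partial products to track the deterministic decay rate $e^{-\gamma\ell}$, and it is in turn what makes the change of measure in part~(3) effective at producing matching upper and lower bounds. A secondary technical concern is that the approximations $\varPsi_N\approx\varPsi$ must be controlled uniformly on a neighbourhood of $s$ large enough to cover the optimisers entering $\kappa_\pm(\varepsilon,\gamma)$; this is secured by convexity of $\varPsi_N(\cdot;\eta)$ together with the pointwise hypothesis~\eqref{Psi_conv}.
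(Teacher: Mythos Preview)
Your proof is correct and follows essentially the same route as the paper's: uniform convergence of $\varPsi_N\to\varPsi$ via convexity, then Chebyshev at the tilt parameter $s$ for part~(1), Chebyshev within the tilted measure combined with condition~(b) for part~(2), and the change-of-measure restricted to the typical event for part~(3). The one point that needs tightening is your choice of $\widehat N,\hat\eta$: requiring only $|\varPsi_N-\varPsi|<\varepsilon/2$ is not enough to get the factor $1/3$ in part~(2), since $\kappa_\pm(\varepsilon,\gamma)$ can be much smaller than $\varepsilon$ (of order $\varepsilon^2$ when $\varPsi$ is twice differentiable at $s$); the paper handles this by demanding $\sup_{t\in[r_1,r_2]}|\varPsi_N(t;\eta)-\varPsi(t)|<\min\{\varepsilon,\kappa(\varepsilon,\gamma)/3\}$, which you should do as well.
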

Several remarks apply:
\begin{enumerate}
\item
The function $\varPsi$ is convex, assuming the limit \eqref{Psi_conv} exists, and therefore the above value of $I(\gamma)$ can also be presented as  
\begin{equation}
	I(\gamma)  =  - \left[\varPsi(s) + \gamma s \right] \, .
\end{equation}     
The error margins $ \kappa_\pm(\varepsilon,\gamma) $ defined in~\eqref{eq:defkappa} are strictly positive for any $ \varepsilon > 0$ due to convexity of $\varPsi$. 
\item The proof of Theorem~\ref{thm:ld} follows a standard procedure for such bounds: what is a large deviation for the value of $\frac{1}{N} \sum_{j=1}^{N} \log \Gamma^{(N)}_j$ with respect the the initial probability measure becomes a regular occurrence once the measure is suitably tilted, i.e. modified by the factor $\prod_{j=1}^N |\Gamma^{(N)}_j|^s$ at suitable $s$.    The statement is then derived by relating the original and the tilted probabilities.  In Theorem~\ref{thm:ld} we add to this standard procedure the observation that under the condition \eqref{eq:submult} the global tilt of the measure shifts the typical values of the sample mean of $\log \Gamma_j$  for all the partial sums, to values in the vicinity of $(-\gamma)$.  
\end{enumerate}

In the proof we make use of the following  fact on convergence of convex functions. 
\begin{lemma} \label{lem:uniformest}
Under the condition  \eqref{Psi_conv}, one has the uniform convergence:
\be \label{eq:uniformest}
\lim_{\substack{N\to \infty \\[.5ex] \eta \downarrow 0 }} \sup_{s\in [r_1,r_2]} \, |\varPsi _N (s;\eta) - \varPsi (s) | \, = 0  \, .    
\ee 
\end{lemma}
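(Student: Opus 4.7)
}
The plan is to exploit convexity of $\varPsi_N(\,\cdot\,;\eta)$ combined with pointwise convergence to upgrade to uniform convergence on a compact subinterval. First I would record that for each fixed $N$ and $\eta$, the function $s \mapsto \varPsi_N(s;\eta)$ is convex in $s$: indeed, it is a scaled logarithmic moment generating function of the random variable $\sum_{j=1}^N \log|\Gamma_j^{(N)}(\eta)|$, namely
\begin{equation*}
\varPsi_N(s;\eta) \,=\, \tfrac{1}{N} \log \E\!\left[\exp\!\Big(s\sum_{j=1}^N \log|\Gamma_j^{(N)}(\eta)|\Big)\right],
\end{equation*}
and H\"older's inequality applied in $s$ provides convexity.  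The pointwise limit $\varPsi$ inherits convexity from $\varPsi_N(\,\cdot\,;\eta)$, and therefore is continuous on the interior of any interval on which it is finite.

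Next I would enlarge the interval slightly. Since $[r_1,r_2] \subset (-1,1)$ strictly, one may pick $\delta>0$ small enough so that $[r_1-\delta,r_2+\delta]\subset(-1,1)$ remains inside the range on which hypothesis~(a) supplies pointwise convergence.  Pointwise convergence at the two endpoints $r_1-\delta$ and $r_2+\delta$ gives a uniform bound $\sup_{N,\eta}\,|\varPsi_N(r_i\pm\delta;\eta)| \le M$ for some finite $M$, once $N$ is large and $\eta$ small.  The standard two-point convexity estimate then yields an equi-Lipschitz bound
\begin{equation*}
\sup_{s,s'\in[r_1,r_2]}\, \frac{|\varPsi_N(s;\eta) - \varPsi_N(s';\eta)|}{|s-s'|} \,\le\, \frac{2M}{\delta},
\end{equation*}
uniformly in $N$ and $\eta$ (small enough).

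Finally, I would combine equi-Lipschitz with pointwise convergence via a net argument: for given $\varepsilon>0$, pick a finite grid $r_1 = t_0<t_1<\dots<t_K = r_2$ with mesh smaller than $\varepsilon\delta/(4M)$, apply pointwise convergence at each of the $K+1$ grid points to get $|\varPsi_N(t_k;\eta)-\varPsi(t_k)| < \varepsilon/2$ uniformly in $k$ for $N$ large and $\eta$ small, and then interpolate using the equi-Lipschitz bound together with the Lipschitz continuity of the convex limit $\varPsi$ on $[r_1,r_2]$ (which is controlled by the same constant $2M/\delta$ by passing to the limit). This yields $\sup_{s\in[r_1,r_2]} |\varPsi_N(s;\eta)-\varPsi(s)| < \varepsilon$, which is~\eqref{eq:uniformest}.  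There is no serious obstacle here: the result is essentially the standard fact that pointwise convergence of convex functions on an open set is automatically uniform on compact subsets of the interior, and the only mild point is to use the strict inclusion $[r_1,r_2]\subset(-1,1)$ to secure the equi-Lipschitz bound up to the endpoints.
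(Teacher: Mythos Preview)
Your approach is essentially the paper's: the paper's proof is a one-line citation of the standard fact that pointwise convergence of convex functions on an open interval implies uniform convergence on compact subsets, and you have written out the usual equi-Lipschitz argument behind that fact.

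One point deserves a flag. In your second step you enlarge the interval to $[r_1-\delta,r_2+\delta]$ and assert that hypothesis~(a) still supplies pointwise convergence there. As stated, hypothesis~(a) only gives convergence on $[r_1,r_2]$; the strict inclusion $[r_1,r_2]\subset(-1,1)$ tells you the functions are defined (and convex) on a larger interval, but not that they converge there. Without control beyond the endpoints you cannot get an equi-Lipschitz bound valid up to $r_1$ and $r_2$, and indeed one can build convex functions on $[0,1]$ converging pointwise but not uniformly (e.g.\ $f_n(x)=\max\{0,n(x-1+1/n)\}$). The paper's one-line proof has the same tacit gap: it invokes the open-interval theorem with $[r_1,r_2]$ as the compact subset, which presupposes convergence on a slightly larger open set. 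In the paper's actual applications (see the proofs of Theorems~\ref{thm:regGbehav} and~\ref{cor:ldev1}) the limit \eqref{eq:limpsiphi} exists for all $t\in(-\varsigma,1)$ while $r_1,r_2$ are chosen strictly inside, so the issue is harmless; but if you want your write-up to stand on its own you should either (i) assume pointwise convergence on an open interval containing $[r_1,r_2]$, or (ii) retreat to uniform convergence on compact subsets of $(r_1,r_2)$, which is all that is actually used downstream.
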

\begin{proof}
This follows from the fact that 
if a family of convex functions converges pointwise over an open interval, then its convergence is uniform on compact subsets, cf. \cite{Simon_convex}. 
\end{proof}

\begin{proof} [Proof of Theorem~\ref{thm:ld}]  Since the superscript of $\Gamma^{(N)}_j$ 
 is somewhat redundant  it will be occasionally omitted  (it takes a common value for all terms within each statement).    
 
   We will choose $\widehat{N} \equiv\widehat{N}(\varepsilon,\gamma) <  \infty $ and $ \hat {\eta} \equiv  \hat {\eta}(\varepsilon,\gamma) > 0$ using Lemma~\ref{lem:uniformest} such that for all $N\ge \widehat{N}(\varepsilon,\gamma)$ and $0< \eta < \hat {\eta}(\varepsilon,\gamma)$:
  \begin{equation}\label{eq:uniformest2}
  	R_N(\eta) \ := \ \sup_{s\in [r_1,r_2]} \, |\varPsi _N (s;\eta) - \varPsi (s) | \, < \,  \min\left\{\varepsilon \, , \, \tfrac{1}{3} \, \kappa(\varepsilon,\gamma)\right\} \, ,
\end{equation}

The proof of~\eqref{eq:ld_upper} relies on an elementary Chebychev estimate with $ s \in(r_1,r_2)$:
	\begin{align}    \label{eq:ld_final2}
& \Pr{ \prod_{j=1}^N |\Gamma_j(\eta) | \ge e^{-(\gamma + \varepsilon) N} }   \   \le  \ e^{N \left[ s (\gamma + \varepsilon) + \varPsi _N(s;\eta)\right]} \notag \\
& = e^{\varepsilon s N } \, e^{-N I(\gamma) } \, e^{N\left[\varPsi _N(s;\eta) - \varPsi(s)\right]} 
\ \leq \ e^{2 \varepsilon  N } \, e^{-N I(\gamma) } 
 \end{align} 
for any $N\ge \widehat{N}$ and $0< \eta < \hat {\eta}$ by \eqref{eq:uniformest2}.

For a proof of~\eqref{eq:ld_glb} we again employ the Chebychev inequality and~ \eqref{eq:submult}  to conclude for any $ \Delta $ such that $ s + \Delta \in (r_1,r_2) $:
\begin{align} \label{eq:proofld_glb1}
  \mathbb{P}_{s}\left(  \prod_{j=1}^\ell |\Gamma_{j}(\eta)|    \ge \   e^ { -(\gamma- \varepsilon) \ell}  \right)   & \le     \ 
\E_s\Big[  \prod_{j=1}^\ell |\Gamma_j(\eta)|^{\Delta } \Big] \,  e^{ \Delta  (\gamma - \varepsilon) \ell}  \notag \\
&  \le \ C\;  e^{ \left[\varPsi _N(s+\Delta;\eta) - \varPsi _N(s;\eta)\right] \, \ell}\, e^{ \Delta (\gamma-\varepsilon)\ell} 
\end{align}
Infimizing over $ \Delta  $, we hence conclude that the  left side in~\eqref{eq:proofld_glb1} is bounded by
\begin{equation}
  C \ e^{- \kappa_+(\varepsilon,\gamma)\, \ell} \, e^{2 \ell \, R_N(\eta) } \ \leq \ C \, e^{- \kappa_+(\varepsilon,\gamma) \, \ell / 3 } 
\end{equation}
 for any $N\ge \widehat{N}$ and $0< \eta < \hat {\eta}$ by~\eqref{eq:uniformest2}. 

The proof of~\eqref{eq:ld_glb2}  proceeds similarly. It starts from the observation that 
\begin{align} \label{eq:proofld_glb2}
\mathbb{P}_{s}\left(  \prod_{j=1}^\ell |\Gamma_{j}(\eta)|     \le  e^ { -(\gamma+ \varepsilon) \ell}  \right)   & \leq
\E_s\Big[  \prod_{j=\ell+1}^N |\Gamma_j(\eta)|^{-\Delta } \Big]  e^{- \Delta  (\gamma + \varepsilon) \ell} \notag \\
& \leq \, C \,  e^{ \left[\varPsi _N(s-\Delta;\eta) - \varPsi _N(s;\eta)\right] \, \ell}\, e^{- \Delta (\gamma+\varepsilon)\ell} 
\end{align}
for any $ \Delta $ such that $ s - \Delta \in (r_1,r_2) $. Infimizing over this parameter, we hence conclude that the left side in~\eqref{eq:proofld_glb2} is bounded by $ C \ e^{- \kappa_-(\varepsilon,\gamma)\, \ell} \, e^{2 \ell \, R_N(\eta) }  \leq  C \  e^{- \kappa_- (\varepsilon,\gamma) \, \ell / 3 } $ by~\eqref{eq:uniformest2}.\\

For a proof of~\eqref{eq:ld_lower} we estimate the regular probability of in terms of the one defined via the tilted measure:
\begin{align}
	\mathbb{P}(Q) \ &  \geq \ e^{N \varPsi _N(s;\eta)} \, e^{s (\gamma-\varepsilon) N } \; \mathbb{P}_s\left( Q \; \mbox{and} \;  \prod_{j=1}^N |\Gamma_j(\eta) | \le e^{-(\gamma - \varepsilon) N} \right) \notag \\ 
	& \geq   \ e^{N \varPsi _N(s;\eta)} \, e^{s (\gamma-\varepsilon) N } \, \left(
 \mathbb{P}_s\left( Q \right) -  \mathbb{P}_s\left( \prod_{j=1}^N |\Gamma_j(\eta) | \ge e^{-(\gamma - \varepsilon) N} \right) \right)\, . 
\end{align} 
The first terms are estimated from below similarly as in~\eqref{eq:ld_final2} by $ e^{-I(\gamma) } e^{-2 \varepsilon N } $. 
The second term in the bracket is bounded by $ C e^{- \kappa(\varepsilon,\gamma) \, N / 3 } $  for any $N\ge \widehat{N}$ and $0< \eta < \hat {\eta}$ according to~~\eqref{eq:ld_glb}. 
\end{proof}

\subsection{Applications to Green function's large deviations}

The aim of this subsection is to establish  the two main  large-deviation statements which are used in this paper, which were asserted in Theorems~\ref{thm:regGbehav} and~\ref{cor:ldev1}.  We start with the latter.


\begin{proof}[Proof of Theorem~\ref{cor:ldev1}] We first check the applicability of Theorem~\ref{thm:ld}.
By construction, the variables $ \{ \Gamma_\pm(j;\eta)\}_{ j = 1 }^{ N_\kappa} $, which were defined in~\eqref{eq:deftriangulara},   
are two families of triangular arrays. They satisfy the consistency condition~\eqref{eq:consist}. As a consequence,  the quantity defined in~\eqref{def:psiN} agrees for both cases:
\begin{equation}
 \varPsi_{N_\kappa}(s;\eta) \ = \ \frac{1}{N_{\kappa}} \log\,  \E\left[\left| G^{ \widehat{\mathcal{T}}_x }(x_{n_\kappa}, x_{N-1} ; E +i\eta)\right|^s \right] \, . 
 \end{equation}
Lemma~\ref{lem:GG2}  and Theorem~\ref{thm:phi} imply that for any $ t \in (-\varsigma,1) $:
\begin{equation}~\label{eq:limpsiphi}
	\varphi(t;E) \ \equiv \ \varphi(t)\ = \ \lim_{\substack{N_{\kappa} \to \infty\\ \eta\downarrow 0}} \,  \varPsi_{N_\kappa}(t;\eta) \,  . 
\end{equation}
Moreover, these bound ensure the validity of~\eqref{eq:submult} with $ r_1 = -\varsigma $ and arbitrary $ r_2 \in (0, 1) $. For a proof of this assertion, one integrates out the random variable associated with the first vertex on which $ t_2 $ occurs, cf.~\eqref{eq:surface}. \\

 The upper bound~\eqref{eq:Llbound2} is hence a consequence of~\eqref{eq:ld_upper}.
For a proof of the lower bound~\eqref{eq:Llbound1} we employ~\eqref{eq:ld_lower}. We first note that the choice of $ b $ is tailored to ensure 
$
\mathbb{P}_s\left( L^{(\rm bc)}_x \right) \geq \tfrac{7}{8} $. Furthermore, using~\eqref{eq:ld_glb}  and~\eqref{eq:ld_glb2} we conclude that there are $ \widehat{N} \equiv  \widehat{N} (\epsilon,\gamma) $ and $  \widehat{\eta}  \equiv \widehat{\eta} (\epsilon, \gamma) $ such that for all  $ N_\kappa \geq \widehat{N} $ and $ \eta \in (0,  \widehat{\eta} ) $:
\begin{align}
&	1 -  \mathbb{P}_s\Big( \bigcap_{k=\tfrac{1}{2} n_\kappa}^{N_\kappa}  L^{(k,\pm)}_x(\eta;\epsilon) \Big) \ \notag \\ 
&	\leq \ \sum_{k=\tfrac{1}{2} n_\kappa}^{N_\kappa}  \left[  \mathbb{P}_{s}\Big( \prod_{j= 1}^k |  \Gamma_\pm(j;\eta)|   \ge \   e^ { -(\gamma- \varepsilon) \ell} \Big) + 
	  \mathbb{P}_{s}\Big(   \prod_{j= 1}^k |  \Gamma_\pm(j;\eta)|   \le \   e^ { -(\gamma+\varepsilon) \ell}   \Big) \right] \notag \\
& \leq \ 2 \, C \, \sum_{k=\tfrac{1}{2} n_\kappa}^{N_\kappa}  e^{-\kappa(\varepsilon,\gamma) k /3}  \ \leq \ \frac{6 \, C }{\kappa(\varepsilon,\gamma)} \, e^{-\kappa(\varepsilon,\gamma) n_\kappa /6} \, . 
\end{align}
By choosing $ n_\kappa $ sufficiently large, this term can be made arbitrarily small since $ \kappa(\varepsilon,\gamma) > 0 $. As a consequence, we conclude that there is some 
$ n_0 $ and $ \eta_0 $ such that for all $ |x| \geq n_0 $ and $ \eta \in (0,\eta_0) $:
\begin{equation}
	\mathbb{P}_s\left( L_x(\eta;\epsilon)\right) \geq \tfrac{1}{2} \, . 
\end{equation}
Using this estimate in~\eqref{eq:ld_lower}  concludes the proof of~\eqref{eq:Llbound2}, since the second term in~\eqref{eq:ld_lower}  is seen to be arbitrarily small for $ n $ large enough and any factor may be absorbed for sufficiently large $ N_\kappa$ by decreasing the prefactor $ e^{-N_k (I(\gamma) + 2 \epsilon)} $ in~\eqref{eq:ld_lower}. 
\end{proof}  


\begin{proof}[Proof of Theorem~\ref{thm:regGbehav}]
 In a similar way as in the proof of Theorem~\ref{cor:ldev1}, the assertion follows from Theorem~\ref{thm:ld} in the special case of $ s = 0 $. This choice is admissible since, according to~\eqref{eq:derphi}, 
the free energy function $ \varphi(s;E) $, which emerges in the limit~\eqref{eq:limpsiphi}, is differentiable at $ s = 0 $ with derivative given by the negative Lyapunov exponent. 
\end{proof}

\bigskip
\footnotesize
\noindent\textit{Acknowledgments.}
It is a pleasure to thank Mira Shamis for bringing Proposition~\ref{prop:shamis} to our attention.    
We thank the  Departments of Physics and Mathematics at the  Weizmann Institute of Science for hospitality at visits during which some of the work was done.   
This research was supported in part by NSF grants PHY-1104596 and DMS-0602360 (MA), DMS-0701181 (SW), and a Sloan Fellowship~(SW).

\end{document}